\documentclass[12pt]{amsart} 
\usepackage[english]{babel}
\usepackage[utf8]{inputenc}
\usepackage{amsmath}
\usepackage{graphicx}
\usepackage{subfigure}
\usepackage{amssymb}
\usepackage{amsthm}
\usepackage{bm}
\newcommand{\nm}{\noalign{\smallskip}}
\newcommand{\ds}{\displaystyle}
\usepackage{tikz-cd}
\usepackage{mathrsfs}
\usepackage[colorinlistoftodos]{todonotes}
\usepackage{enumitem}
\usepackage{yfonts}
\usepackage{ dsfont }
\usepackage{bbm}
\usepackage{geometry}
\usepackage{setspace}
\usepackage{tikz}
\usepackage{pgfplots}
\usepackage{cite}
\pgfplotsset{compat=newest}
\usepgfplotslibrary{fillbetween}
\usetikzlibrary{positioning}
\usetikzlibrary{decorations.pathreplacing}
\usetikzlibrary{decorations,arrows}
\usetikzlibrary{decorations.markings}
\usetikzlibrary{patterns}
\usetikzlibrary{quotes,angles}
\usetikzlibrary{arrows}

\numberwithin{equation}{section}
\newtheorem{theorem}{Theorem}[section]
\newtheorem{lemma}[theorem]{Lemma}

\newtheorem{definition}[theorem]{Definition}
\newtheorem{example}[theorem]{Example}

\newtheorem{corollary}[theorem]{Corollary}

\newtheorem{remark}[theorem]{Remark}
\newtheorem{proposition}[theorem]{Proposition}
\newtheorem{assumption}[theorem]{Assumption}
\allowdisplaybreaks[4]

\newcommand*{\rom}[1]{\expandafter\@slowromancap\romannumeral #1@}

\title[Symmetry-protected interface modes]{Symmetry-protected interface modes bifurcated from double Dirac cones}

\author{Habib Ammari} %
\address[H. Ammari]{Department of Mathematics, ETH Z\"{u}rich, R\"{a}mistrasse 101, CH-8092 Z\"{u}rich, Switzerland}
\email{habib.ammari@math.ethz.ch}

\author{Jiayu Qiu} %
\address[J. Qiu]{Department of Mathematics, ETH Z\"{u}rich, R\"{a}mistrasse 101, CH-8092 Z\"{u}rich, Switzerland}
\email{jiayu.qiu@sam.math.ethz.ch}

\date{}

\begin{document}

\begin{abstract}
We rigorously prove the existence of interface modes in a sharp interface model, which bifurcate from the double Dirac cone as a consequence of the band inversion induced by super-symmetry breaking. The exact number of interface modes are determined. The proof is based on a discrete version of the layer-potential framework. Moreover, we prove that such interface modes are symmetry-protected against perturbations that respect the reflection symmetry.
\end{abstract}

\maketitle

\smallskip

\noindent \textbf{Keywords.}  Interface mode, band inversion, double Dirac cone, symmetry protection, robustness, symmetry breaking, discrete layer potential method \par 

\smallskip

\noindent \textbf{AMS subject classifications.} 35Q40, 35C20, 35P99, 82D25 
\\

\tableofcontents

\section{Introduction}

\subsection{Background} \label{sec_background}

Recent developments in topological insulators have opened new avenues for guiding waves robustly. For example, it is known from the bulk-edge correspondence (BEC) principle that the interface between two insulating media with distinct topological invariants, such as the gap Chern number or the Kane-Mele index, supports unidirectional propagating modes \cite{qizhang11topo_insulator,bernivig13topo_insulator}. As a consequence of their topological origin, such interface modes exhibit high robustness against disorder and impurities present in the system as long as the topological indices remain unchanged; therefore, these interface modes are referred to as \textit{topologically protected}. Remarkably, this unidirectional guiding of waves is also achieved in classical wave systems, not only limited to condensed matter materials, such as the photonic system \cite{ozawa19topological_photonics}.

However, the application of BEC to achieve robust guiding of waves is limited by the scarcity of nontrivial topological indices in practical systems. For example, in photonic media, the emergence of a nonvanishing Chern number generally requires breaking time-reversal symmetry through magneto-optical effects, which in turn necessitates strong external magnetic fields \cite{ozawa19topological_photonics}. This difficulty has motivated the investigation of alternative mechanisms that may produce robust interface modes without relying on conventional topological indices.

Remarkably, a mechanism of this type arises in the study of the quantum valley Hall effect. Its basic structure may be summarized as follows:
\begin{itemize}
    \item We begin with a periodic medium described by a single-particle Hamiltonian $\mathcal{H}$ that is invariant under a symmetry group $\mathcal{S}$. The presence of symmetry may enforce special spectral features, including degeneracies between adjacent Bloch bands. For instance, when $\mathcal{S}=C_{3v}$ (hexagonal symmetry), the band structure may exhibit conical degeneracies, i.e. the so-called Dirac points, at high-symmetry quasi-momenta \cite{fefferman12dirac}.
    \item Next, if we perturb the system by a symmetry-breaking term by replacing $\mathcal{H}$ with
    $$
    \mathcal{H}+\delta \mathcal{H}_{per},
    $$ 
    where $\mathcal{H}_{per}$ does not commute with (part of) $\mathcal{S}$, the degeneracy is lifted and a local spectral gap opens in a neighborhood of the former touching point (see Figure \ref{fig_gap_open} for an illustration). Importantly, for suitably chosen perturbations, the Bloch eigenspaces associated with the upper and lower bands near the gap exhibit opposite symmetry character for the Hamiltonians $\mathcal{H}+\delta \mathcal{H}_{per}$ and $\mathcal{H}-\delta \mathcal{H}_{per}$, as measured by the residual symmetry in $\mathcal{S}$. This phenomenon is commonly referred to as \textit{band inversion} \cite{vanderbilt2018berry}.
    
    \item Then we consider a heterostructure formed by adjoining the two bulk media governed by $\mathcal{H}+\delta \mathcal{H}_{per}$ and $\mathcal{H}-\delta \mathcal{H}_{per}$ along an interface. It is conjectured and supported by both formal and rigorous analyzes in specific settings that a weaker form of bulk–edge correspondence holds in this context: namely, that interface modes arise when the adjoining media exhibit band inversion. In this regime, the interface states may be viewed as bifurcating from the degenerate spectral point at which the original band touching occurs.
\end{itemize}
Compared with the classical BEC based on with nontrivial topological indices, the band-inversion mechanism is often more accessible in concrete models. For example, in photonic systems it may be realized through suitable geometric deformations of a periodic structure, without requiring the breaking of time-reversal symmetry by external magnetic fields \cite{ozawa19topological_photonics}. Although the resulting interface modes do not arise from a global topological invariant in the usual sense, they nevertheless exhibit a form of robustness. Of particular interest is the \textit{symmetry protection}: suppose that the bulk Hamiltonians $\mathcal{H}\pm\delta \mathcal{H}_{per}$, as well as the interface configuration interpolating between them, are invariant under a symmetry $\mathcal{O}$ belonging to the original symmetry group $\mathcal{S}$. Then the associated interface modes are expected to be robust against perturbations that preserve the symmetry $\mathcal{O}$ \cite{fu2011crystalline}. The purpose of the present work is to provide a rigorous analysis of this mechanism. We establish the existence of interface modes induced by band inversion for a class of discrete Hamiltonians and prove their robustness under symmetry-preserving perturbations.

From a mathematical perspective, the analysis of interface modes induced by band inversion originates in the work of Fefferman and Weinstein \cite{fefferman2017topologically}. In that paper, the authors establish the existence of interface modes in a one-dimensional \textit{domain-wall model}, which smoothly interpolates between two periodic Schrödinger operators exhibiting band inversion near a Dirac point. Their multiscale analysis has subsequently been extended to various two-dimensional settings, proving the existence interface modes bifurcating from Dirac points in honeycomb-type structures; see, for example, \cite{fefferman2016honeycomb_edge,lee2019elliptic,drouot2020edge}. These works remain within the framework of domain-wall models. In the context of chains of subwavelength resonators, Ammari et al. have adapted the fictitious source method and Toeplitz theory to study the interface modes \cite{SWP3,SWP4}. More recently, Qiu et al. have developed a layer-potential framework which applies to studying interface modes in \textit{sharp interface models}, in which two bulk media are directly adjoined without adiabatic modulation. This approach applies to interface modes bifurcating from Dirac or quadratic degeneracies in one- and two-dimensional systems \cite{qiu2026waveguide_localized,li2024interface_mode_honeycomb,qiu2024square_lattice}. We emphasize that, in the above works, the interface modes are derived solely from the band-inversion mechanism, without assuming the presence of a nontrivial global topological invariant. In particular, a rigorous analysis of the robustness of two-dimensional interface modes in this non-topological setting appears to be absent from the literature. By contrast, in certain special situations, band inversion gives rise to a nontrivial topological invariant. In such cases, the existence and robustness of interface modes follow from the (strong) BEC principle. The mathematical theory of this principle has been extensively developed in a variety of settings, including Schrödinger operators \cite{kellendonk2004quantization,taarabt2014equality,Combes2005edge_impurity,Bourne2016ChernNL,cornean2021landau+functional,ludewig2020shortrange+coarse,gontier2023edge_channel}, Dirac Hamiltonians \cite{bal2019dirac+functional,bal2023dirac+microlocal}, tight-binding Hamiltonians \cite{graf2018shortrange+transfer,avila2013shortrange+transfer,graf2013shortrange+scattering,graf2005equality,tauber2022chiral_finite_chain,qiu2025generalized,bourne2017ktheory,kubota2017ktheory,qiu2025bec_disorder_finite,drouot2024bec_curvedinterfaces,bols2018quantization}, and classical wave operators \cite{lin2022transfer,thiang2023transfer,ammari2024toeplitz_1,ammari2024toeplitz_2,
ammari2024toeplitz_3,SWP1,SWP2,SWP3,SWP4,qiu2025bec_finite,craster}; we also refer to the excellent monograph \cite{prodan2016ktheory} and the recent review \cite{bal2024review} for comprehensive accounts.

In this paper, we establish the existence of interface modes bifurcating from a particular type of spectral degeneracy in which two Dirac cones coincide due to an additional symmetry of the system, forming a \textit{double Dirac cone}. We show that, upon breaking this symmetry, a band inversion occurs and gives rise to localized interface states. Our analysis is carried out for sharp interface models governed by discrete Hamiltonians. The proof relies on a discrete version of the layer-potential framework, which enables us not only to construct the interface modes but also to determine their precise multiplicities. A principal result of this work is a rigorous proof of symmetry protection: we show that the interface modes are robust against perturbations that preserve the reflection symmetry of the interface configuration. To our knowledge, this provides the first fully rigorous analysis of robustness for two-dimensional interface modes arising from band inversion in the absence of a nontrivial global topological invariant. The supersymmetric structure supporting a double Dirac cone was originally proposed in the physics literature as a photonic analogue of the quantum spin Hall effect \cite{WuHu15scheme,ozawa19topological_photonics,leykam2026limitations}. From a mathematical standpoint, interface modes bifurcating from a double Dirac cone were recently analyzed in the context of domain-wall models by Zhu et al. \cite{cao2025edge_double_cone}, using multiscale techniques in the spirit of Fefferman and Weinstein, and in \cite{miao2025zero} for nearest-neighbor-hopping Hamiltonians. In contrast, our approach based on a discrete layer-potential method applies to sharp interface configurations and any range of hopping. We expect that this framework, and in particular the mechanism used to establish symmetry-protected robustness, may provide new insight into this field.

\subsection{Outline}

This paper is organized as follows:

\begin{itemize}
    \item In Section \ref{sec_main_result}, we first present the detailed setup, including the lattice model and the governing discrete Hamiltonians. We will illustrate the emergence of double Dirac cone when the Hamiltonian is super-symmetric (Theorem \ref{thm_double_cone}). Then we present our main results,  including the existence and precise number of interface modes (Theorem \ref{thm_existence_interface_modes}) as a consequence of the band inversion induced by super-symmetry breaking, and the robustness of such interface modes against the symmetry-preserving perturbations (Theorem \ref{thm_robustness_interface_modes}).
    \item In Section \ref{sec_representation_analysis}, we prove the emergence of the double Dirac cone stated in Theorem \ref{thm_double_cone} using the representation theory of symmetric groups.
    \item In Section \ref{sec_gap_open}, we present a detailed asymptotic expansion of the Floquet-Bloch eigenpairs of the symmetry-broken bulk Hamiltonians $\mathcal{H}_{\pm\delta}$; see Theorem \ref{thm_asymptotics_eigenpairs}. The implication of the asymptotic expansion is two-fold. First, it follows immediately that the double Dirac cone is lifted upon the symmetry-breaking perturbation, and a common band gap is opened between $\mathcal{H}_{\pm\delta}$ (see Theorem \ref{thm_gap_open}). More importantly, from the asymptotic expansion of Floquet-Bloch eigenfunctions, it will be clear that $\mathcal{H}_{\delta}$ switches its eigenspace with that of $\mathcal{H}_{-\delta}$ at the end points of the common band gap, i.e., a band inversion occurs. This phenomenon lies in the center of the bifurcation of interface modes and will be illustrated in more detail in the next section.
    \item In Section \ref{sec_sec5}, we prove the first main result, Theorem \ref{thm_existence_interface_modes}, on the existence of interface modes by a discrete version of the layer-potential framework. We first present some preliminaries, mainly on the properties of the physical Green operator and discrete energy flux; see Section \ref{sec_prelim_interface_modes}. Then in Section \ref{sec_layer_potential_framework} we establish  the layer-potential formulation, which translates the study of the interface modes into solving the characteristic value problem of boundary matching operators; see Propositions \ref{prop_existence_equivalence} and \ref{prop_number_equivalence}. To do this in the small-gap regime ($\delta\ll 1$), the key step is to investigate the limit of boundary matching operators, as achieved in Section \ref{sec_boundary_operator_limit}. Remarkably, the limit of boundary matching operators clearly manifests the band inversion phenomenon; see Remark \ref{rmk_absence_band_inversion}. Based on the information obtained in Section \ref{sec_boundary_operator_limit}, we solve the characteristic value of the boundary matching operators in Sections \ref{sec_existence_interface_mode} and \ref{sec_number_interface_mode}, which completes the proof of Theorem \ref{thm_existence_interface_modes}.
    \item In Section \ref{sec_robustness}, we prove Theorem \ref{thm_robustness_interface_modes} on the robustness of interface modes against symmetry-preserving perturbations. The main idea of the proof, as outlined in Section \ref{sec_robustness_main_idea}, is based on the periodic approximation method. Specifically, we will first construct the perturbed interface modes on a sequence of strips with increasing width using the Lyapunov-Schmidt reduction argument, which is valid due to the symmetry of perturbation Hamiltonian. Then we prove their weak convergence at infinity. The key of this procedure is a uniform estimate of the perturbed interface modes on strips, as proved in Section \ref{sec_uniform_norm_estimate}. Based on that, the weak convergence follows from a standard compactness argument adapted to our strip geometry, as detailed in Section \ref{sec_weak_compact}.
\end{itemize}

\section{Setup and Main Results} \label{sec_main_result}

We consider single-particle Hamiltonians on a triangular lattice with internal degrees of freedom (DOF) equal to six. Specifically, the lattice is given by
$$
\Lambda:=\mathbb{Z}\bm{\ell}_1\oplus \mathbb{Z}\bm{\ell}_2,\quad \bm{\ell}_1:=(\frac{\sqrt{3}}{2},\frac{1}{2})^{\top},\, \bm{\ell}_2:=(0,1)^{\top} .
$$

\begin{figure}
\begin{center}
\begin{tikzpicture}[scale=1.2]
\draw[dashed] ({-1},{0})--({1},{0});
\draw[dashed] ({-1/2},{sqrt(3)/2})--({1/2},{sqrt(3)/2});
\draw[dashed] ({-1/2},{-sqrt(3)/2})--({1/2},{-sqrt(3)/2});
\draw[dashed] ({-1},{0})--({-1/2},{sqrt(3)/2});
\draw[dashed] ({-1/2},{-sqrt(3)/2})--({1/2},{sqrt(3)/2});
\draw[dashed] ({1/2},{-sqrt(3)/2})--({1},{0});
\draw[dashed] ({1},{0})--({1/2},{sqrt(3)/2});
\draw[dashed] ({1/2},{-sqrt(3)/2})--({-1/2},{sqrt(3)/2});
\draw[dashed] ({-1/2},{-sqrt(3)/2})--({-1},{0});
\draw[fill=black,opacity=0.1] ({1/2},{sqrt(3)/6}) ellipse(0.15 and 0.15);
\node[scale=1.5] at ({1/2},{sqrt(3)/6}) {3};

\draw[fill=black,opacity=0.1] ({1/2},{-sqrt(3)/6}) ellipse(0.15 and 0.15);
\node[scale=1.5] at ({1/2},{-sqrt(3)/6}) {5};

\draw[fill=black,opacity=0.1] ({-1/2},{sqrt(3)/6}) ellipse(0.15 and 0.15);
\node[scale=1.5] at ({-1/2},{sqrt(3)/6}) {2};

\draw[fill=black,opacity=0.1] ({-1/2},{-sqrt(3)/6}) ellipse(0.15 and 0.15);
\node[scale=1.5] at ({-1/2},{-sqrt(3)/6}) {4};

\draw[fill=black,opacity=0.1] ({0},{sqrt(3)/3}) ellipse(0.15 and 0.15);
\node[scale=1.5] at ({0},{sqrt(3)/3}) {1};

\draw[fill=black,opacity=0.1] ({0},{-sqrt(3)/3}) ellipse(0.15 and 0.15);
\node[scale=1.5] at ({0},{-sqrt(3)/3}) {6};

\draw[dashed] ({-1+3/2},{0+sqrt(3)/2})--({1+3/2},{0+sqrt(3)/2});
\draw[dashed] ({-1/2+3/2},{sqrt(3)/2+sqrt(3)/2})--({1/2+3/2},{sqrt(3)/2+sqrt(3)/2});
\draw[dashed] ({-1/2+3/2},{-sqrt(3)/2+sqrt(3)/2})--({1/2+3/2},{-sqrt(3)/2+sqrt(3)/2});
\draw[dashed] ({-1+3/2},{0+sqrt(3)/2})--({-1/2+3/2},{sqrt(3)/2+sqrt(3)/2});
\draw[dashed] ({-1/2+3/2},{-sqrt(3)/2+sqrt(3)/2})--({1/2+3/2},{sqrt(3)/2+sqrt(3)/2});
\draw[dashed] ({1/2+3/2},{-sqrt(3)/2+sqrt(3)/2})--({1+3/2},{0+sqrt(3)/2});
\draw[dashed] ({1+3/2},{0+sqrt(3)/2})--({1/2+3/2},{sqrt(3)/2+sqrt(3)/2});
\draw[dashed] ({1/2+3/2},{-sqrt(3)/2+sqrt(3)/2})--({-1/2+3/2},{sqrt(3)/2+sqrt(3)/2});
\draw[dashed] ({-1/2+3/2},{-sqrt(3)/2+sqrt(3)/2})--({-1+3/2},{0+sqrt(3)/2});
\draw[fill=black,opacity=1] ({1/2+3/2},{sqrt(3)/6+sqrt(3)/2}) ellipse(0.15 and 0.15);
\draw[fill=black,opacity=1] ({1/2+3/2},{-sqrt(3)/6+sqrt(3)/2}) ellipse(0.15 and 0.15);
\draw[fill=black,opacity=1] ({-1/2+3/2},{sqrt(3)/6+sqrt(3)/2}) ellipse(0.15 and 0.15);
\draw[fill=black,opacity=1] ({-1/2+3/2},{-sqrt(3)/6+sqrt(3)/2}) ellipse(0.15 and 0.15);
\draw[fill=black,opacity=1] ({0+3/2},{sqrt(3)/3+sqrt(3)/2}) ellipse(0.15 and 0.15);
\draw[fill=black,opacity=1] ({0+3/2},{-sqrt(3)/3+sqrt(3)/2}) ellipse(0.15 and 0.15);

\draw[dashed] ({-1-3/2},{0-sqrt(3)/2})--({1-3/2},{0-sqrt(3)/2});
\draw[dashed] ({-1/2-3/2},{sqrt(3)/2-sqrt(3)/2})--({1/2-3/2},{sqrt(3)/2-sqrt(3)/2});
\draw[dashed] ({-1/2-3/2},{-sqrt(3)/2-sqrt(3)/2})--({1/2-3/2},{-sqrt(3)/2-sqrt(3)/2});
\draw[dashed] ({-1-3/2},{0-sqrt(3)/2})--({-1/2-3/2},{sqrt(3)/2-sqrt(3)/2});
\draw[dashed] ({-1/2-3/2},{-sqrt(3)/2-sqrt(3)/2})--({1/2-3/2},{sqrt(3)/2-sqrt(3)/2});
\draw[dashed] ({1/2-3/2},{-sqrt(3)/2-sqrt(3)/2})--({1-3/2},{0-sqrt(3)/2});
\draw[dashed] ({1-3/2},{0-sqrt(3)/2})--({1/2-3/2},{sqrt(3)/2-sqrt(3)/2});
\draw[dashed] ({1/2-3/2},{-sqrt(3)/2-sqrt(3)/2})--({-1/2-3/2},{sqrt(3)/2-sqrt(3)/2});
\draw[dashed] ({-1/2-3/2},{-sqrt(3)/2-sqrt(3)/2})--({-1-3/2},{0-sqrt(3)/2});
\draw[fill=black,opacity=1] ({1/2-3/2},{sqrt(3)/6-sqrt(3)/2}) ellipse(0.15 and 0.15);
\draw[fill=black,opacity=1] ({1/2-3/2},{-sqrt(3)/6-sqrt(3)/2}) ellipse(0.15 and 0.15);
\draw[fill=black,opacity=1] ({-1/2-3/2},{sqrt(3)/6-sqrt(3)/2}) ellipse(0.15 and 0.15);
\draw[fill=black,opacity=1] ({-1/2-3/2},{-sqrt(3)/6-sqrt(3)/2}) ellipse(0.15 and 0.15);
\draw[fill=black,opacity=1] ({0-3/2},{sqrt(3)/3-sqrt(3)/2}) ellipse(0.15 and 0.15);
\draw[fill=black,opacity=1] ({0-3/2},{-sqrt(3)/3-sqrt(3)/2}) ellipse(0.15 and 0.15);

\draw[dashed] ({-1-3/2},{0+sqrt(3)/2})--({1-3/2},{0+sqrt(3)/2});
\draw[dashed] ({-1/2-3/2},{sqrt(3)/2+sqrt(3)/2})--({1/2-3/2},{sqrt(3)/2+sqrt(3)/2});
\draw[dashed] ({-1/2-3/2},{-sqrt(3)/2+sqrt(3)/2})--({1/2-3/2},{-sqrt(3)/2+sqrt(3)/2});
\draw[dashed] ({-1-3/2},{0+sqrt(3)/2})--({-1/2-3/2},{sqrt(3)/2+sqrt(3)/2});
\draw[dashed] ({-1/2-3/2},{-sqrt(3)/2+sqrt(3)/2})--({1/2-3/2},{sqrt(3)/2+sqrt(3)/2});
\draw[dashed] ({1/2-3/2},{-sqrt(3)/2+sqrt(3)/2})--({1-3/2},{0+sqrt(3)/2});
\draw[dashed] ({1-3/2},{0+sqrt(3)/2})--({1/2-3/2},{sqrt(3)/2+sqrt(3)/2});
\draw[dashed] ({1/2-3/2},{-sqrt(3)/2+sqrt(3)/2})--({-1/2-3/2},{sqrt(3)/2+sqrt(3)/2});
\draw[dashed] ({-1/2-3/2},{-sqrt(3)/2+sqrt(3)/2})--({-1-3/2},{0+sqrt(3)/2});
\draw[fill=black,opacity=1] ({1/2-3/2},{sqrt(3)/6+sqrt(3)/2}) ellipse(0.15 and 0.15);
\draw[fill=black,opacity=1] ({1/2-3/2},{-sqrt(3)/6+sqrt(3)/2}) ellipse(0.15 and 0.15);
\draw[fill=black,opacity=1] ({-1/2-3/2},{sqrt(3)/6+sqrt(3)/2}) ellipse(0.15 and 0.15);
\draw[fill=black,opacity=1] ({-1/2-3/2},{-sqrt(3)/6+sqrt(3)/2}) ellipse(0.15 and 0.15);
\draw[fill=black,opacity=1] ({0-3/2},{sqrt(3)/3+sqrt(3)/2}) ellipse(0.15 and 0.15);
\draw[fill=black,opacity=1] ({0-3/2},{-sqrt(3)/3+sqrt(3)/2}) ellipse(0.15 and 0.15);

\draw[dashed] ({-1+3/2},{0-sqrt(3)/2})--({1+3/2},{0-sqrt(3)/2});
\draw[dashed] ({-1/2+3/2},{sqrt(3)/2-sqrt(3)/2})--({1/2+3/2},{sqrt(3)/2-sqrt(3)/2});
\draw[dashed] ({-1/2+3/2},{-sqrt(3)/2-sqrt(3)/2})--({1/2+3/2},{-sqrt(3)/2-sqrt(3)/2});
\draw[dashed] ({-1+3/2},{0-sqrt(3)/2})--({-1/2+3/2},{sqrt(3)/2-sqrt(3)/2});
\draw[dashed] ({-1/2+3/2},{-sqrt(3)/2-sqrt(3)/2})--({1/2+3/2},{sqrt(3)/2-sqrt(3)/2});
\draw[dashed] ({1/2+3/2},{-sqrt(3)/2-sqrt(3)/2})--({1+3/2},{0-sqrt(3)/2});
\draw[dashed] ({1+3/2},{0-sqrt(3)/2})--({1/2+3/2},{sqrt(3)/2-sqrt(3)/2});
\draw[dashed] ({1/2+3/2},{-sqrt(3)/2-sqrt(3)/2})--({-1/2+3/2},{sqrt(3)/2-sqrt(3)/2});
\draw[dashed] ({-1/2+3/2},{-sqrt(3)/2-sqrt(3)/2})--({-1+3/2},{0-sqrt(3)/2});
\draw[fill=black,opacity=1] ({1/2+3/2},{sqrt(3)/6-sqrt(3)/2}) ellipse(0.15 and 0.15);
\draw[fill=black,opacity=1] ({1/2+3/2},{-sqrt(3)/6-sqrt(3)/2}) ellipse(0.15 and 0.15);
\draw[fill=black,opacity=1] ({-1/2+3/2},{sqrt(3)/6-sqrt(3)/2}) ellipse(0.15 and 0.15);
\draw[fill=black,opacity=1] ({-1/2+3/2},{-sqrt(3)/6-sqrt(3)/2}) ellipse(0.15 and 0.15);
\draw[fill=black,opacity=1] ({0+3/2},{sqrt(3)/3-sqrt(3)/2}) ellipse(0.15 and 0.15);
\draw[fill=black,opacity=1] ({0+3/2},{-sqrt(3)/3-sqrt(3)/2}) ellipse(0.15 and 0.15);

\draw[dashed] ({-1},{0+sqrt(3)})--({1},{0+sqrt(3)});
\draw[dashed] ({-1/2},{sqrt(3)/2+sqrt(3)})--({1/2},{sqrt(3)/2+sqrt(3)});
\draw[dashed] ({-1/2},{-sqrt(3)/2+sqrt(3)})--({1/2},{-sqrt(3)/2+sqrt(3)});
\draw[dashed] ({-1},{0+sqrt(3)})--({-1/2},{sqrt(3)/2+sqrt(3)});
\draw[dashed] ({-1/2},{-sqrt(3)/2+sqrt(3)})--({1/2},{sqrt(3)/2+sqrt(3)});
\draw[dashed] ({1/2},{-sqrt(3)/2+sqrt(3)})--({1},{0+sqrt(3)});
\draw[dashed] ({1},{0+sqrt(3)})--({1/2},{sqrt(3)/2+sqrt(3)});
\draw[dashed] ({1/2},{-sqrt(3)/2+sqrt(3)})--({-1/2},{sqrt(3)/2+sqrt(3)});
\draw[dashed] ({-1/2},{-sqrt(3)/2+sqrt(3)})--({-1},{0+sqrt(3)});
\draw[fill=black,opacity=1] ({1/2},{sqrt(3)/6+sqrt(3)}) ellipse(0.15 and 0.15);
\draw[fill=black,opacity=1] ({1/2},{-sqrt(3)/6+sqrt(3)}) ellipse(0.15 and 0.15);
\draw[fill=black,opacity=1] ({-1/2},{sqrt(3)/6+sqrt(3)}) ellipse(0.15 and 0.15);
\draw[fill=black,opacity=1] ({-1/2},{-sqrt(3)/6+sqrt(3)}) ellipse(0.15 and 0.15);
\draw[fill=black,opacity=1] ({0},{sqrt(3)/3+sqrt(3)}) ellipse(0.15 and 0.15);
\draw[fill=black,opacity=1] ({0},{-sqrt(3)/3+sqrt(3)}) ellipse(0.15 and 0.15);

\draw[dashed] ({-1},{0-sqrt(3)})--({1},{0-sqrt(3)});
\draw[dashed] ({-1/2},{sqrt(3)/2-sqrt(3)})--({1/2},{sqrt(3)/2-sqrt(3)});
\draw[dashed] ({-1/2},{-sqrt(3)/2-sqrt(3)})--({1/2},{-sqrt(3)/2-sqrt(3)});
\draw[dashed] ({-1},{0-sqrt(3)})--({-1/2},{sqrt(3)/2-sqrt(3)});
\draw[dashed] ({-1/2},{-sqrt(3)/2-sqrt(3)})--({1/2},{sqrt(3)/2-sqrt(3)});
\draw[dashed] ({1/2},{-sqrt(3)/2-sqrt(3)})--({1},{0-sqrt(3)});
\draw[dashed] ({1},{0-sqrt(3)})--({1/2},{sqrt(3)/2-sqrt(3)});
\draw[dashed] ({1/2},{-sqrt(3)/2-sqrt(3)})--({-1/2},{sqrt(3)/2-sqrt(3)});
\draw[dashed] ({-1/2},{-sqrt(3)/2-sqrt(3)})--({-1},{0-sqrt(3)});
\draw[fill=black,opacity=1] ({1/2},{sqrt(3)/6-sqrt(3)}) ellipse(0.15 and 0.15);
\draw[fill=black,opacity=1] ({1/2},{-sqrt(3)/6-sqrt(3)}) ellipse(0.15 and 0.15);
\draw[fill=black,opacity=1] ({-1/2},{sqrt(3)/6-sqrt(3)}) ellipse(0.15 and 0.15);
\draw[fill=black,opacity=1] ({-1/2},{-sqrt(3)/6-sqrt(3)}) ellipse(0.15 and 0.15);
\draw[fill=black,opacity=1] ({0},{sqrt(3)/3-sqrt(3)}) ellipse(0.15 and 0.15);
\draw[fill=black,opacity=1] ({0},{-sqrt(3)/3-sqrt(3)}) ellipse(0.15 and 0.15);

\draw[thick,blue] ({-3/4},{-3*sqrt(3)/4})--({-3/4},{sqrt(3)/4});
\draw[thick,blue] ({3/4},{3*sqrt(3)/4})--({3/4},{-sqrt(3)/4});
\draw[thick,blue] ({3/4},{3*sqrt(3)/4})--({-3/4},{sqrt(3)/4});
\draw[thick,blue] ({-3/4},{-3*sqrt(3)/4})--({3/4},{-sqrt(3)/4});

\draw[very thick,->,red] ({0},{0})--({0},{sqrt(3)});
\draw[very thick,->,red] ({0},{0})--({3/2},{sqrt(3)/2});
\draw[very thick,->,yellow] ({0},{0})--({1},{0});
\node[above,red,scale=0.9] at ({0},{sqrt(3)}) {$\ell_2$};
\node[above,red,scale=0.9] at ({3/2},{sqrt(3)/2}) {$\ell_1$};
\node[right,red,scale=0.9] at ({1},{0}) {$\tilde{\ell}$};
\end{tikzpicture}
\caption{A super-symmetric periodic structure, which contains six sublattices (illustrated by the six sites in a single unit cell). The numbers in the first cell label the ordering of the sublattices.}
\label{fig_periodic_structure}
\end{center}
\end{figure}
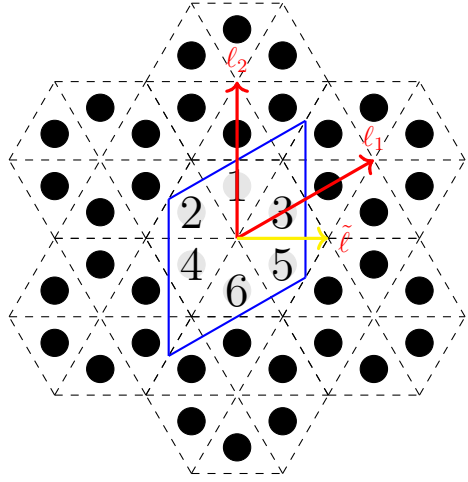

\noindent Naturally, the state space is
\begin{equation*}
\mathcal{X}:=\ell^2(\Lambda)\otimes \mathbb{C}^6.
\end{equation*}
Throughout this paper, we refer to the internal DOF as the sublattice DOF, as illustrated pictorially in Figure \ref{fig_periodic_structure}.\footnote{In fact, it can also stand for the spin DOF. The reason we prefer to interpret the internal DOF as the sublattices is that, in this case, a super-symmetry breaking perturbation is easily designed by geometrically deforming the sublattices; see Figure \ref{fig_unit_structure} for details.} The Hamilonians are bounded self-adjoint operators on $\mathcal{X}$. Throughout this paper, we assume that the Hamiltonians are short-range in the following sense. 
\begin{assumption} \label{asmp_short_range}
The Hamiltonian $\mathcal{H}$ is said to be \textit{of range $N$} for some $N>0$ if its kernel satisfies
\begin{equation*}
\mathcal{H}(\bm{n},\bm{m})=0\quad \text{for any $\bm{n},\bm{m}\in\Lambda$ with $\|\bm{n}-\bm{m}\|_{2}> N$},
\end{equation*}
where $\|\cdot\|_2$ denotes the $2-$norm on $\mathbb{R}^2$. All the Hamiltonians appearing in this paper are assumed to be of the same range $N$.
\end{assumption}
Note that the discussion in this paper is not confined to electronic systems: even though the operators are called ``Hamiltonians", they can also be referred to as the short-range governing operators for general physical systems, such as the capacitance operator in subwavelength phononic and photonic systems \cite{ammari2025from_conden_to_subwave}.

Now we consider a Hamiltonian, $\mathcal{H}_{b}$, that is symmetric with respect to the $C_{6v}$ point group. To be precise, we consider the symmetry group generated by two elements:
\begin{equation*}
\mathcal{S}=\langle \mathcal{R}_6,\mathcal{F}_x :\mathcal{R}_6^6=F_x^2=id,\,\mathcal{R}_6\mathcal{F}_x=\mathcal{F}_x\mathcal{R}_6^{-1} \rangle ,
\end{equation*}
where $\mathcal{R}_6$ and $\mathcal{F}_x$ denote the $\frac{\pi}{3}-$rotation and $x-$axis reflection operators (acting both on the lattice and internal DOFs). By introducing the following matrices, 
\begin{equation*}
\begin{aligned}
&R_6^{ext}:=
\begin{pmatrix}
\frac{1}{2} & -\frac{\sqrt{3}}{2} \\
\frac{\sqrt{3}}{2} & \frac{1}{2} 
\end{pmatrix}\in\mathbb{R}^{2\times 2},\quad
R_6^{int}:= e_{13}+e_{21}+e_{35}+e_{42}+e_{56}+e_{64} \in\mathbb{R}^{6\times 6}, \\
&F_x^{ext}:=
\begin{pmatrix}
1 & 0 \\
0 & -1 
\end{pmatrix}\in\mathbb{R}^{2\times 2},\quad
F_x^{int}:= e_{16}+e_{24}+e_{35}+e_{42}+e_{53}+e_{61} \in\mathbb{R}^{6\times 6},
\end{aligned}
\end{equation*}
where the element $(i,j)$ of $e_{i,j}$ is equal to one, while all the others are zero, $\mathcal{R}_6$ and $\mathcal{F}_x$ are defined as
\begin{equation*}
\begin{aligned}
(\mathcal{R}_6u)(\bm{n}):=R_6^{int}\big[u(R_6^{ext}\bm{n})\big],\quad
(\mathcal{F}_xu)(\bm{n}):=F_x^{int}\big[u(F_x^{ext}\bm{n})\big].
\end{aligned}
\end{equation*}
Then the symmetry of $\mathcal{H}_b$ is summarized as the following commutation relations.

\begin{assumption} \label{assum_C6v_symmetry}
$\mathcal{H}_b$ is translational invariant with respect to $\Lambda$, that is, $$\mathcal{H}_b(\bm{n}+\bm{e},\bm{m}+\bm{e}) = \mathcal{H}_b(\bm{n},\bm{m})$$ for all $\bm{e}\in\Lambda$. Moreover, for each $\mathcal{O}\in \mathcal{S}$, it holds that $[\mathcal{O},\mathcal{H}_b]=0$.
\end{assumption}

\begin{example} \label{examp_super_symmetric_Hamiltonian}
A naive construction of a Hamiltonian that satisfies Assumption \ref{assum_C6v_symmetry} is by taking the constant nearest-neighbor hopping model on the periodic structure shown in Figure \ref{fig_periodic_structure}. To be precise, the following map is introduced:
\begin{equation*}
\Xi:\Lambda\times \{1,2,3,4,5,6\} \to \mathbb{R}^2,\quad \Xi(\bm{n},i):= \bm{n}+\bm{d}_i
\end{equation*}
with
\begin{equation*}
\bm{d}_1:=\frac{1}{3}\bm{\ell}_2,\quad \bm{d}_2:=-\frac{1}{3}\bm{\ell}_1+\frac{1}{3}\bm{\ell}_2,\quad \bm{d}_3:=\frac{1}{3}\bm{\ell}_1,\quad
\bm{d}_4:=-\frac{1}{3}\bm{\ell}_1,\quad \bm{d}_5:=\frac{1}{3}\bm{\ell}_1-\frac{1}{3}\bm{\ell}_2 ,\quad
\bm{d}_6:=-\frac{1}{3}\bm{\ell}_2 .
\end{equation*}
Then the Hamiltonian with its kernel $\mathcal{H}(\bm{n},\bm{m})=\Big(\mathcal{H}(\bm{n},\bm{m})_{i,j}\Big)_{1\leq i,j\leq 6}$ specified as follows clearly satisfies Assumption \ref{assum_C6v_symmetry}:
\begin{equation*}
\mathcal{H}_b(\bm{n},\bm{m})_{i,j}=\left\{
\begin{aligned}
& 1,\quad \|\Xi(\bm{n},i)-\Xi(\bm{m},j)\|_{2}=\frac{1}{3}, \\
&0,\quad \text{otherwise}.
\end{aligned}
\right.
\end{equation*}
\end{example}

The symmetry property of $\mathcal{H}_b$ leads to its special spectral structure, which is discussed in detail in the following paragraphs. Let $\Lambda^*:=\mathbb{Z}\bm{\ell}_1^*\oplus \mathbb{Z}\bm{\ell}_2^*$ be the dual lattice of $\Lambda$, where $\bm{\ell}_1^*:=2\pi(\frac{2\sqrt{3}}{3},0)^{\top}$, $\bm{\ell}_2^*:=2\pi(-\frac{\sqrt{3}}{3},1)^{\top}$. The unit cell of $\Lambda^*$ (Brillouin zone) is taken as $$Y^*:=\{s\cdot \bm{\ell}_1^*+t\cdot \bm{\ell}_2^*: -\frac{1}{2}\leq s,t\leq \frac{1}{2}\}.$$ For each $\bm{\kappa}=\kappa_1\bm{\ell}_1^*+\kappa_2\bm{\ell}_2^*$, we denote the Floquet transform of $\mathcal{H}_b$ by $\mathcal{H}_b(\bm{\kappa})$, i.e., the restriction of $\mathcal{H}_b$ to the space
\begin{equation} \label{eq_quasi_periodic_space_unit_cell}
\mathcal{X}_{\bm{\kappa}}:=\{u\in \ell^2_{loc}(\Lambda)\otimes \mathbb{C}^6:\, u(\bm{n}+\bm{e})=e^{i\bm{e}\cdot\bm{\kappa}}u(\bm{n}),\, \forall \bm{e}\in\Lambda\}\simeq \mathbb{C}^6,
\end{equation}
equipped with the inner product
\begin{equation*}
(u,v)_{\mathcal{X}_{\bm{\kappa}}}:=(u(\bm{0}),v(\bm{0}))_{\mathbb{C}^6}.
\end{equation*}
Then the symmetry properties of $\mathcal{H}_b$ find their counterparts in the momentum space:

\begin{equation} \label{eq_symm_momen_sapce_1}
\mathcal{H}_b(\bm{\kappa}+n_1\cdot \bm{\ell}_1^*+n_2\cdot \bm{\ell}_2^*)=\mathcal{H}_b(\bm{\kappa}),\quad 
n_1,n_2\in\mathbb{Z},
\end{equation}

\begin{equation} \label{eq_symm_momen_sapce_2}
\mathcal{R}_6\mathcal{H}_b(\bm{\kappa})\mathcal{R}_6^{-1}=\mathcal{H}_b((R_6^{ext})^{-1}\bm{\kappa}),
\end{equation}
and
\begin{equation} \label{eq_symm_momen_sapce_3}
\mathcal{F}_x\mathcal{H}_b(\bm{\kappa})\mathcal{F}_x^{-1}=\mathcal{H}_b((F_x^{ext})^{-1}\bm{\kappa}).
\end{equation}

The fixed points of the symmetry operations \eqref{eq_symm_momen_sapce_1}--\eqref{eq_symm_momen_sapce_3} in momentum space are the so-called \textit{high symmetry points}, where interesting spectral properties arise. We focus on the high symmetry point $(0,0)$ (i.e., the $\Gamma$ point in physics literature) and the band structure nearby. Since the internal DOF equals six, the spectrum of $\mathcal{H}_b$ consists of six bands, i.e., $\text{Spec}(\mathcal{H}_b(\bm{\kappa}))=\{\lambda_j(\bm{\kappa})\}_{1\leq j\leq 6}$. Remarkably, since $\mathcal{X}_{\bm{0}}$ is invariant under the group $\mathcal{S}$ and due to Assumption \ref{assum_C6v_symmetry}, the eigenspaces of $\mathcal{H}_{b}(\bm{0})$ are decomposed corresponding to different representations of $\mathcal{S}$. In the physics literature, a degenerate eigenvalue $\lambda_*$ of $\mathcal{H}_{b}(\bm{0})$ is called \textit{generically degenerate} if its eigenfunctions form a basis of a multi-dimensional irreducible representations (irrep) of $\mathcal{S}$. It is well-known that $\mathcal{S}$ has two distinct two-dimensional irreps, which indicates the even degeneracy in $\text{Spec}(\mathcal{H}_b(\bm{0}))$. The following proposition, which is proved in Section \ref{sec_local_flat} following the standard perturbation theory (see, e.g., \cite{berkolaiko2018symmetry}), claims that the dispersion surface is locally flat near a doubly degenerate eigenvalue.
\begin{proposition} \label{prop_local_flat_double_eigenvalue}
Assume that $\lambda_*\in \text{Spec}(\mathcal{H}_b(\bm{0}))$ with multiplicity two. Let $\lambda_1(\bm{\kappa})$ and $\lambda_2(\bm{\kappa})$ be two branches of dispersion surfaces such that $\lambda_*=\lambda_1(\bm{\kappa}_*)=\lambda_2(\bm{\kappa}_*)$, and let $u_{n}(\cdot;\bm{\kappa})$ ($n=1,2$) be the corresponding Floquet-Bloch eigenfunctions. If $\{u_{n}(\cdot;\bm{0})\}_{n=1,2}$ form a basis of a two-dimensional irrep of $\mathcal{S}$, then we have
\begin{equation*}
\nabla \lambda_n\big|_{\bm{\kappa}=\bm{0}}=0,\quad n=1,2.
\end{equation*}
\end{proposition}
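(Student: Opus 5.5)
Since the eigenvalue is degenerate, I will not differentiate the branches $\lambda_n$ directly. The plan is degenerate first-order perturbation theory. Write $\mathcal{H}_b(\bm{\kappa})=\mathcal{H}_b(\bm{0})+\bm{\kappa}\cdot\nabla_{\bm{\kappa}}\mathcal{H}_b(\bm{0})+O(|\bm{\kappa}|^2)$; this expansion exists because, by Assumption \ref{asmp_short_range}, $\mathcal{H}_b(\bm{\kappa})$ is a $6\times 6$ matrix whose entries are trigonometric polynomials in $\bm{\kappa}$. Let $P$ be the orthogonal projection of $\mathcal{X}_{\bm 0}\simeq\mathbb{C}^6$ onto the two-dimensional eigenspace $E:=\mathrm{span}\{u_1(\cdot;\bm 0),u_2(\cdot;\bm 0)\}$. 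Standard degenerate perturbation theory (Kato, or \cite{berkolaiko2018symmetry}) gives, along each ray $\bm\kappa=t\bm{e}$ with $|\bm e|=1$,
\begin{equation*}
\lambda_n(t\bm{e})=\lambda_*+t\,\mu_n(\bm{e})+O(t^2),
\end{equation*}
where $\mu_1(\bm e),\mu_2(\bm e)$ are the eigenvalues of the $2\times2$ matrix $M(\bm e):=P\,(\bm e\cdot\nabla_{\bm\kappa}\mathcal{H}_b(\bm 0))\,P|_{E}$. So it suffices to show $M(\bm e)=0$ for every direction $\bm e$. This gives that every directional derivative of each branch at $\bm 0$ vanishes, which is the meaning of $\nabla\lambda_n|_{\bm\kappa=\bm 0}=0$ for possibly non-smooth branches. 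It also shows that the surfaces touch quadratically, that is, they are locally flat.

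\textbf{Key step (symmetry).} Differentiate \eqref{eq_symm_momen_sapce_2} and \eqref{eq_symm_momen_sapce_3} at $\bm\kappa=\bm 0$. Writing $A_j:=\partial_{\kappa_j}\mathcal{H}_b(\bm 0)$ and $\bm A=(A_1,A_2)$, we get for every $g\in\mathcal{S}$ with external part $g^{ext}\in O(2)$
\begin{equation*}
\mathcal{O}_g\,(\bm e\cdot \bm A)\,\mathcal{O}_g^{-1}=\big((g^{ext})\bm e\big)\cdot\bm A .
\end{equation*}
The exact placement of the inverse depends on the convention, but the conclusion is the same. Hence $\bm A=(A_1,A_2)$ is a vector operator that transforms under $\mathcal{S}$ according to the two-dimensional vector representation $V$ of $C_{6v}$, i.e.\ the irrep $E_1$. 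Since $E$ is $\mathcal{S}$-invariant and $P$ commutes with $\mathcal{S}$, the map $\bm e\mapsto M(\bm e)$ is an $\mathcal{S}$-equivariant linear map $V\to\mathrm{End}(E)\simeq \rho\otimes\rho^*$, where $\rho$ is the two-dimensional irrep carried by $E$. By Schur's lemma, $M$ vanishes unless $V$ occurs in $\rho\otimes\rho^*$. For $C_{6v}$ and $\rho\in\{E_1,E_2\}$ the character table gives $\rho\otimes\rho^*=A_1\oplus A_2\oplus E_2$. This contains $E_2$ but not the vector irrep $E_1$. Therefore $M\equiv 0$.

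A more elementary route, which also guards against identifying the wrong irrep, uses the rotation $\mathcal{R}_6^3$, i.e.\ rotation by $\pi$. Its external part is $-I$, so $\mathcal{R}_6^3(\bm e\cdot\bm A)\mathcal{R}_6^{-3}=-\bm e\cdot\bm A$. In both two-dimensional irreps of $C_{6v}$ the element $\mathcal{R}_6^3$ acts as a scalar $\pm I$ on $E$: it is central, so Schur's lemma applies. Hence
\begin{equation*}
M(\bm e)=P\mathcal{R}_6^{3}(\bm e\cdot\bm A)\mathcal{R}_6^{-3}P=-M(\bm e),
\end{equation*}
so $M(\bm e)=0$. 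I would present this argument, since it is short and self-contained.

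\textbf{Main obstacle.} The delicate point is the bookkeeping of how $\mathcal{R}_6^3$ acts. One has to check that $\mathcal{R}_6^3$, acting on $\mathcal{X}_{\bm 0}$ including the internal permutation $(R_6^{int})^3$, is central in $\mathcal{S}$, so that it acts as a scalar on the irrep $E$. One also has to check that differentiating \eqref{eq_symm_momen_sapce_2} indeed produces the sign flip, because $(R_6^{ext})^{-3}=-I$. The perturbation-theory step itself is routine once we note that $\mathcal{H}_b(\bm\kappa)$ is analytic in $\bm\kappa$.
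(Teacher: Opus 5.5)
Your proposal is correct, but it reaches $H_1=H_2=0$ by a different mechanism than the paper.

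\textbf{The paper's route.} It stays inside the framework of Section~\ref{sec_prelim_representation}. It imposes $\rho(g)^{-1}H_j\rho(g)=\sum_i[\hat g]_{ij}H_i$ for the two generators $g=\mathcal{R}_6,\mathcal{F}_x$, using the explicit matrices of $\rho_1$ (or $\rho_2$) and of $[\hat{\mathcal{R}}_6]$, $[\hat{\mathcal{F}}_x]$. It then solves the resulting linear system for Hermitian $H_j$ and reads off flatness from \eqref{eq_first_order_spectrum_det}.

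\textbf{Your route.} You use only the central element $\mathcal{R}_6^3$. Since $[\hat{\mathcal{R}}_6]^3=-I$ while $\rho_1(\mathcal{R}_6^3)=-I$ and $\rho_2(\mathcal{R}_6^3)=I$, the compressed first-order operator equals its own negative. Your announced ``main obstacle'' is therefore harmless. Centrality of $\mathcal{R}_6^3$ follows from the defining relations of $\mathcal{S}$ alone, and the scalars can be read off the given irrep matrices. Your Schur-lemma variant ($E_1\not\subset\rho\otimes\rho^*=A_1\oplus A_2\oplus E_2$) is also right.

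\textbf{What each buys.} Your argument is shorter and basis-free. It also explains exactly why the conclusion breaks down in $\tilde{\mathcal{S}}$. There $\mathcal{R}_6^3\tilde{\mathcal{T}}=\tilde{\mathcal{T}}^{-1}\mathcal{R}_6^3$, so $\mathcal{R}_6^3$ is no longer central and $\tilde\rho(\mathcal{R}_6^3)=\mathrm{diag}(-1,-1,1,1)$. The same argument then only forces $H_j$ to be block off-diagonal, which is precisely the shape of \eqref{eq_sec2_7}. The paper's computational approach buys uniformity. The identical machinery, run on $\tilde\rho$ in Section~\ref{sec_double_cone}, produces the nonzero $H_1,H_2$ and the cone slope $\frac{\sqrt{3}}{2}|\alpha_*|$, which no vanishing argument can give.

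\textbf{A small tightening.} Your attention to possibly non-smooth branches improves on the paper's one-line appeal to \eqref{eq_first_order_spectrum_det}. To upgrade vanishing directional derivatives to $\nabla\lambda_n(\bm 0)=0$, note that the ordered eigenvalues are Lipschitz in $\bm\kappa$ by Weyl's inequality. Alternatively, use a Schur-complement bound $|\lambda_n(\bm\kappa)-\lambda_*|\le C|\bm\kappa|^2$ that is uniform in direction.
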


According to Proposition \ref{prop_local_flat_double_eigenvalue}, the possibility of some special band structure is ruled out: for example, a single Dirac cone (i.e., two adjacent dispersion surfaces touch conically) cannot exist in the bands of $\mathcal{H}_b$ at the $\Gamma$ point. Hence, a natural question is whether other types of interesting band structures could exist. In the sequel, we show that it is indeed possible: by imposing a further symmetry on the Hamiltonian $\mathcal{H}_b$, \textit{a double-cone structure exists when the eigenvalue degeneracy is greater than two.} To be more specific, we introduce the following \textit{super-symmetry operator $\tilde{\mathcal{T}}$} defined as the linear extension corresponding to the following rules on the basis of $\mathcal{X}$:
\begin{equation} \label{eq_super_symmetry}
\tilde{\mathcal{T}}\big(\mathbbm{1}_{\bm{n}}\otimes \bm{v}\big)
:=\mathbbm{1}_{\bm{n}}\otimes (e_{32}+e_{54})\bm{v}+\mathbbm{1}_{\bm{n}+\bm{\ell}_1}\otimes (e_{41}+e_{63})\bm{v}+\mathbbm{1}_{\bm{n}+\bm{\ell}_1-\bm{\ell}_2}\otimes (e_{15}+e_{26})\bm{v}.
\end{equation}
Intuitively, $\tilde{\mathcal{T}}$ can be visualized as a translation operator along an additional vector $\tilde{\bm{\ell}}=(\frac{\sqrt{3}}{3},0)^{\top}$, as shown in Figure \ref{fig_periodic_structure}. In particular, thanks to the $\mathbb{Z}\tilde{\bm{\ell}}$-translation invariance of the structure in Figure \ref{fig_periodic_structure}, one can check that the toy Hamiltonian constructed in Example \ref{examp_super_symmetric_Hamiltonian} is actually $\tilde{\mathcal{T}}$-symmetric.

Remarkably, as we will see in Section \ref{sec_double_cone}, \textit{the symmetry group generated by the $C_{6v}$ group $\mathcal{S}$ and the super-symmetry operator $\tilde{\mathcal{T}}$ possesses a four-dimensional irrep}, which is built by pinching the two-dimensional irreps of $\mathcal{S}$. This leads to the possibility of a degenerate eigenvalue $\lambda_*$ in $\text{Spec}(\mathcal{H}_b)$ with multiplicity equal to four if $\mathcal{H}_b$ is super-symmetric. Near such a degenerate point, a double Dirac cone structure can actually appear.
\begin{theorem} \label{thm_double_cone}
Assume that $[\mathcal{O},\mathcal{H}_b]=0$ for $\mathcal{O}\in\mathcal{S}\cup \{\tilde{\mathcal{T}}\}$ and $\lambda_*\in \text{Spec}(\mathcal{H}_b(\bm{0}))$ with multiplicity equal to four. Let $\lambda_n(\bm{\kappa})$ ($n=1,2,3,4$) be the four branches of dispersion surfaces such that $\lambda_*=\lambda_n(\bm{\kappa}_*)$ and $u_{n}(\cdot;\bm{\kappa})$ ($n=1,2$) be the corresponding Floquet-Bloch eigenfunctions. If the eigenfunctions $u_{n}(\cdot;\bm{0})$ form a basis of the four-dimensional irrep of $\tilde{\mathcal{S}}$ and the following condition holds:
\begin{equation} \label{eq_slope_cond}
\alpha_*:=\Big(u_{1}(\cdot;\bm{0}),\frac{\partial \mathcal{H}_b}{\partial \kappa_1}(\bm{0})u_{3}(\cdot;\bm{0}) \Big)_{\mathcal{X}_{\bm{0}}}\neq 0,
\end{equation}
then we have
\begin{equation*}
\begin{aligned}
&\lambda_1(\bm{\kappa})=-\frac{\sqrt{3}}{2}|\alpha_*||\bm{\kappa}|+\mathcal{O}(|\bm{\kappa}|^2), \quad \lambda_2(\bm{\kappa})=-\frac{\sqrt{3}}{2}|\alpha_*||\bm{\kappa}|+\mathcal{O}(|\bm{\kappa}|^2), \\
&\lambda_3(\bm{\kappa})=\frac{\sqrt{3}}{2}|\alpha_*||\bm{\kappa}|+\mathcal{O}(|\bm{\kappa}|^2), 
\quad\lambda_4(\bm{\kappa})=\frac{\sqrt{3}}{2}|\alpha_*||\bm{\kappa}|+\mathcal{O}(|\bm{\kappa}|^2).
\end{aligned}
\end{equation*}
\end{theorem}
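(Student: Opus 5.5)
The plan is degenerate perturbation theory at $\bm{\kappa}=\bm{0}$ on the four-dimensional eigenspace $E:=\mathrm{span}\{u_n(\cdot;\bm{0})\}_{n=1}^4$. Since $\mathcal{H}_b(\bm{\kappa})$ is analytic in $\bm{\kappa}$ and $\lambda_*$ is an isolated eigenvalue of multiplicity four, the standard Lyapunov--Schmidt/Schur-complement reduction gives that the four eigenvalues near $\lambda_*$ are $\lambda_*+\mu_j(\bm{\kappa})+\mathcal{O}(|\bm{\kappa}|^2)$. Here the $\mu_j(\bm{\kappa})$ are the eigenvalues of the $4\times 4$ Hermitian matrix
\[
M(\bm{\kappa}):=\kappa_1 M_1+\kappa_2 M_2,\qquad (M_i)_{mn}:=\Big(u_m(\cdot;\bm{0}),\frac{\partial \mathcal{H}_b}{\partial \kappa_i}(\bm{0})u_n(\cdot;\bm{0})\Big)_{\mathcal{X}_{\bm{0}}}.
\]
Eigenvalues of a Hermitian matrix are Lipschitz, so this error bound is legitimate. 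The statement normalises $\lambda_*=0$, and the claimed formulas follow by showing that $M(\bm{\kappa})$ has eigenvalues $\pm\frac{\sqrt3}{2}|\alpha_*||\bm{\kappa}|$, each with multiplicity two. Throughout I work with the Cartesian norm $|\bm{\kappa}|$, translating from the coordinates $(\kappa_1,\kappa_2)$ via the dual basis $\bm{\ell}_1^*,\bm{\ell}_2^*$.

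The structure of $M_1,M_2$ is dictated by symmetry. First, differentiate \eqref{eq_symm_momen_sapce_2}--\eqref{eq_symm_momen_sapce_3}, together with the analogous relation for $\tilde{\mathcal{T}}$, at $\bm{\kappa}=0$. The relation for $\tilde{\mathcal{T}}$ holds because $\tilde{\mathcal{T}}^3$ is a lattice translation and $\tilde{\mathcal{T}}$ commutes with lattice translations, so it preserves $\mathcal{X}_{\bm{\kappa}}$ up to a phase. This shows that the vector operator $\nabla_{\bm{\kappa}}\mathcal{H}_b(\bm{0})$ transforms under $\mathcal{S}$ as the two-dimensional vector (irrep $E_1$) representation, and that it commutes with $\tilde{\mathcal{T}}$ up to a $\bm{\kappa}$-independent phase factor. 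Second, choose the basis $u_1,\dots,u_4$ adapted to the decomposition of the four-dimensional irrep of $\tilde{\mathcal{S}}$ into the two two-dimensional irreps of $\mathcal{S}$, taking $\{u_1,u_2\}$ in one and $\{u_3,u_4\}$ in the other. Proposition \ref{prop_local_flat_double_eigenvalue} (more precisely, its proof, which is a Wigner--Eckart/selection-rule argument) shows that the diagonal $2\times2$ blocks of $M_i$ vanish, since $E_1\otimes E_j$ does not contain $E_j$ for the relevant irreps. Hence
\[
M(\bm{\kappa})=\begin{pmatrix}0 & B(\bm{\kappa})\\ B(\bm{\kappa})^* & 0\end{pmatrix}.
\]

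The off-diagonal block $B(\bm{\kappa})$ is the intertwining part. By Schur's lemma applied to $E_1\otimes E_1\cong A_1\oplus A_2\oplus E_2$, and using the explicit matrices of the irreps in the chosen bases, $B$ is determined by a single constant, namely $\alpha_*=(M_1)_{13}$. Rotation covariance then forces $B(\bm{\kappa})B(\bm{\kappa})^*=c|\bm{\kappa}|^2 I_2$. The eigenvalues of $M$ are therefore $\pm\sqrt{c}|\bm{\kappa}|$, each doubly degenerate, and one computes $c=\frac34|\alpha_*|^2$; the factor $\frac{\sqrt3}{2}$ comes from the non-orthonormality of $\bm{\ell}_1^*,\bm{\ell}_2^*$ when $\kappa_i$ are the dual-basis coordinates. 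The super-symmetry $\tilde{\mathcal{T}}$ is what links the two $\mathcal{S}$-irreps into one four-dimensional irrep, and it is what fixes the relative phases so that $B$ is proportional to a unitary matrix rather than having two independent constants.

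I expect the main obstacle to be this explicit determination of $B(\bm{\kappa})$. It requires writing the four-dimensional irrep of $\tilde{\mathcal{S}}$ concretely, as in Section \ref{sec_double_cone}, with the images of $\mathcal{R}_6$, $\mathcal{F}_x$ and $\tilde{\mathcal{T}}$ in the basis $u_1,\dots,u_4$. From there one solves the equivariance equations $D(g)^*M(\bm{\kappa})D(g)=M((g^{ext})^{-1}\bm{\kappa})$ for the entries of $M_1,M_2$. My first attempt would be to use $\mathcal{R}_6$ to express $M_2$ in terms of $M_1$, via $\bm{\kappa}$-rotation by $\pi/3$, and to use $\mathcal{F}_x$ and $\tilde{\mathcal{T}}$ to kill all entries of $M_1$ except $\alpha_*$ and its symmetry-related copies. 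Computing $BB^*$ is then routine, and the condition $\alpha_*\neq0$ ensures that the cones are nondegenerate.
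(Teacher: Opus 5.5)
Your proposal is correct and is essentially the paper's argument: the equivariance relations \eqref{eq_sec2_3}--\eqref{eq_sec2_5} for the explicit irrep $\tilde{\rho}$ pin the first-order matrices down to the form \eqref{eq_sec2_7}, and $\kappa_1H_1+\kappa_2H_2$ then has the doubly degenerate eigenvalues $\pm|\alpha_*|(\kappa_1^2-\kappa_1\kappa_2+\kappa_2^2)^{1/2}$. This equals $\pm\frac{\sqrt{3}}{2}|\alpha_*||\bm{\kappa}|$ only if the dual vectors are normalized by $\bm{\ell}_i\cdot\bm{\ell}_j^*=\delta_{ij}$; with the $2\pi$ in the paper's $\bm{\ell}_j^*$ the constant is $\frac{\sqrt{3}}{4\pi}$, so your $c=\frac34|\alpha_*|^2$ silently assumes the former normalization.

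One correction to your commentary: the equal slopes ($BB^*\propto I_2$) come from the $\mathcal{F}_x$ relation $(H_1)_{13}=(H_1)_{24}$, i.e.\ the one-constant Schur structure for the full $C_{6v}$ group. They do not come from $\mathcal{R}_6$-covariance alone, which permits $BB^*=\mathrm{diag}(|a|^2,|b|^2)\,(\kappa_1^2-\kappa_1\kappa_2+\kappa_2^2)$. Nor do they come from $\tilde{\mathcal{T}}$, whose only first-order effect is to make $\alpha_*$ real.
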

Note that Theorem \ref{thm_double_cone}, especially the validity of condition \eqref{eq_slope_cond}, has been proved for generic continuous Schrödinger Hamiltonians in \cite{cao2023double_cone} based on an analytic extension argument and also for the capacitance operator in subwavelength physics \cite{miao2024subwave_double_cone}. Our main focus in Theorem \ref{thm_double_cone} is the role played by symmetries, without making any effort to verify \eqref{eq_slope_cond}; see the details in Section \ref{sec_double_cone}.

The main focus of this paper is \textit{the emergence of in-gap localized modes if we break the $\tilde{\mathcal{T}}$ symmetry of $\mathcal{H}_b$ at the two sides of an interface}. To start, we introduce the following perturbed Hamiltonians with $\delta>0$:
\begin{equation} \label{eq_perturbed_bulk_Hamiltonian}
\mathcal{H}_{\pm\delta}:=\mathcal{H}_{b}\pm\delta \mathcal{H}_{per} \in \mathcal{B}(\mathcal{X}),
\end{equation}
where the perturbation $\mathcal{H}_{per}$ satisfies the following property.
\begin{assumption} \label{assum_symmetry_break}
$\mathcal{H}_{per}$ is translational invariant with respect to $\Lambda$. Moreover, it holds that
\begin{equation*}
[\mathcal{O},\mathcal{H}_{per}]=0,\quad
[\tilde{\mathcal{T}},\mathcal{H}_{per}]\neq 0,\quad
\forall \mathcal{O}\in \mathcal{S}.
\end{equation*}
\end{assumption}
Breaking of the $\tilde{\mathcal{T}}$-symmetry can be achieved by manipulating the relative position between the sublattices, such as the expansion and shrinking operations shown in Figure \ref{fig_unit_structure}, and adjusting the hopping coefficient (of the toy Hamiltonian in Example \ref{examp_super_symmetric_Hamiltonian}) according to the change of relative spacing between lattice sites; we refer to \cite{miao2025zero} for a detailed construction.

\begin{figure}
\centering
\subfigure[]{
\label{fig_unit_structure}     
\begin{tikzpicture}[scale=1]
\draw[dashed] ({-1/2},{sqrt(3)/2})--({1/2},{sqrt(3)/2});
\draw[dashed] ({-1/2},{-sqrt(3)/2})--({1/2},{-sqrt(3)/2});
\draw[dashed] ({-1},{0})--({-1/2},{sqrt(3)/2});
\draw[dashed] ({1/2},{-sqrt(3)/2})--({1},{0});
\draw[dashed] ({1},{0})--({1/2},{sqrt(3)/2});
\draw[dashed] ({-1/2},{-sqrt(3)/2})--({-1},{0});
\draw[fill=black,opacity=0.1] ({1/2},{sqrt(3)/6}) ellipse(0.1 and 0.1);
\draw[fill=black,opacity=0.1] ({1/2},{-sqrt(3)/6}) ellipse(0.1 and 0.1);
\draw[fill=black,opacity=0.1] ({-1/2},{sqrt(3)/6}) ellipse(0.1 and 0.1);
\draw[fill=black,opacity=0.1] ({-1/2},{-sqrt(3)/6}) ellipse(0.1 and 0.1);
\draw[fill=black,opacity=0.1] ({0},{sqrt(3)/3}) ellipse(0.1 and 0.1);
\draw[fill=black,opacity=0.1] ({0},{-sqrt(3)/3}) ellipse(0.1 and 0.1);
\draw[fill=black,opacity=1] ({1.2*1/2},{1.2*sqrt(3)/6}) ellipse(0.1 and 0.1);
\draw[fill=black,opacity=1] ({1.2*1/2},{-1.2*sqrt(3)/6}) ellipse(0.1 and 0.1);
\draw[fill=black,opacity=1] ({-1.2*1/2},{1.2*sqrt(3)/6}) ellipse(0.1 and 0.1);
\draw[fill=black,opacity=1] ({-1.2*1/2},{-1.2*sqrt(3)/6}) ellipse(0.1 and 0.1);
\draw[fill=black,opacity=1] ({0},{1.2*sqrt(3)/3}) ellipse(0.1 and 0.1);
\draw[fill=black,opacity=1] ({0},{-1.2*sqrt(3)/3}) ellipse(0.1 and 0.1);
\draw[decorate,decoration={brace}] ({0.12},{1.2*sqrt(3)/3}) -- ({0.12},{sqrt(3)/3});
\node[right,scale=0.5] at ({0.15},{1.1*sqrt(3)/3}) {$d_+$};

\draw[dashed] ({-1/2},{sqrt(3)/2-3})--({1/2},{sqrt(3)/2-3});
\draw[dashed] ({-1/2},{-sqrt(3)/2-3})--({1/2},{-sqrt(3)/2-3});
\draw[dashed] ({-1},{0-3})--({-1/2},{sqrt(3)/2-3});
\draw[dashed] ({1/2},{-sqrt(3)/2-3})--({1},{0-3});
\draw[dashed] ({1},{0-3})--({1/2},{sqrt(3)/2-3});
\draw[dashed] ({-1/2},{-sqrt(3)/2-3})--({-1},{0-3});
\draw[fill=black,opacity=0.1] ({1/2},{sqrt(3)/6-3}) ellipse(0.1 and 0.1);
\draw[fill=black,opacity=0.1] ({1/2},{-sqrt(3)/6-3}) ellipse(0.1 and 0.1);
\draw[fill=black,opacity=0.1] ({-1/2},{sqrt(3)/6-3}) ellipse(0.1 and 0.1);
\draw[fill=black,opacity=0.1] ({-1/2},{-sqrt(3)/6-3}) ellipse(0.1 and 0.1);
\draw[fill=black,opacity=0.1] ({0},{sqrt(3)/3-3}) ellipse(0.1 and 0.1);
\draw[fill=black,opacity=0.1] ({0},{-sqrt(3)/3-3}) ellipse(0.1 and 0.1);
\draw[fill=black,opacity=1] ({0.6*1/2},{0.6*sqrt(3)/6-3}) ellipse(0.1 and 0.1);
\draw[fill=black,opacity=1] ({0.6*1/2},{-0.6*sqrt(3)/6-3}) ellipse(0.1 and 0.1);
\draw[fill=black,opacity=1] ({-0.6*1/2},{0.6*sqrt(3)/6-3}) ellipse(0.1 and 0.1);
\draw[fill=black,opacity=1] ({-0.6*1/2},{-0.6*sqrt(3)/6-3}) ellipse(0.1 and 0.1);
\draw[fill=black,opacity=1] ({0},{0.6*sqrt(3)/3-3}) ellipse(0.1 and 0.1);
\draw[fill=black,opacity=1] ({0},{-0.6*sqrt(3)/3-3}) ellipse(0.1 and 0.1);
\draw[decorate,decoration={brace,mirror}] ({0.1},{0.6*sqrt(3)/3-3}) -- ({0.1},{sqrt(3)/3-3});
\node[right,scale=0.5] at ({0.12},{0.8*sqrt(3)/3-3}) {$d_-$};
\end{tikzpicture}
}
\subfigure[] { 
\label{fig_unpertrubed_interface}
\begin{tikzpicture}[scale=0.6]

\foreach \x in {0,1,2} { 
        \foreach \y in {0,-1} {
            \draw[dashed] ({-1/2+\x*3/2},{sqrt(3)/2+\x*sqrt(3)/2+\y*sqrt(3)})--({1/2+\x*3/2},{sqrt(3)/2+\x*sqrt(3)/2+\y*sqrt(3)});
            \draw[dashed] ({-1/2+\x*3/2},{-sqrt(3)/2+\x*sqrt(3)/2+\y*sqrt(3)})--({1/2+\x*3/2},{-sqrt(3)/2+\x*sqrt(3)/2+\y*sqrt(3)});
            \draw[dashed] ({-1+\x*3/2},{0+\x*sqrt(3)/2+\y*sqrt(3)})--({-1/2+\x*3/2},{sqrt(3)/2+\x*sqrt(3)/2+\y*sqrt(3)});
            \draw[dashed] ({1/2+\x*3/2},{-sqrt(3)/2+\x*sqrt(3)/2+\y*sqrt(3)})--({1+\x*3/2},{0+\x*sqrt(3)/2+\y*sqrt(3)});
            \draw[dashed] ({1+\x*3/2},{0+\x*sqrt(3)/2+\y*sqrt(3)})--({1/2+\x*3/2},{sqrt(3)/2+\x*sqrt(3)/2+\y*sqrt(3)});
            \draw[dashed] ({-1/2+\x*3/2},{-sqrt(3)/2+\x*sqrt(3)/2+\y*sqrt(3)})--({-1+\x*3/2},{0+\x*sqrt(3)/2+\y*sqrt(3)});

            \node[scale=1] at ({0+\x*3/2},{0+\x*sqrt(3)/2+\y*sqrt(3)}) {$d_+$};
        }
    }

\foreach \x in {0,1} { 
        \foreach \y in {1} {
            \draw[dashed] ({-1/2+\x*3/2},{sqrt(3)/2+\x*sqrt(3)/2+\y*sqrt(3)})--({1/2+\x*3/2},{sqrt(3)/2+\x*sqrt(3)/2+\y*sqrt(3)});
            \draw[dashed] ({-1/2+\x*3/2},{-sqrt(3)/2+\x*sqrt(3)/2+\y*sqrt(3)})--({1/2+\x*3/2},{-sqrt(3)/2+\x*sqrt(3)/2+\y*sqrt(3)});
            \draw[dashed] ({-1+\x*3/2},{0+\x*sqrt(3)/2+\y*sqrt(3)})--({-1/2+\x*3/2},{sqrt(3)/2+\x*sqrt(3)/2+\y*sqrt(3)});
            \draw[dashed] ({1/2+\x*3/2},{-sqrt(3)/2+\x*sqrt(3)/2+\y*sqrt(3)})--({1+\x*3/2},{0+\x*sqrt(3)/2+\y*sqrt(3)});
            \draw[dashed] ({1+\x*3/2},{0+\x*sqrt(3)/2+\y*sqrt(3)})--({1/2+\x*3/2},{sqrt(3)/2+\x*sqrt(3)/2+\y*sqrt(3)});
            \draw[dashed] ({-1/2+\x*3/2},{-sqrt(3)/2+\x*sqrt(3)/2+\y*sqrt(3)})--({-1+\x*3/2},{0+\x*sqrt(3)/2+\y*sqrt(3)});

            \node[scale=1] at ({0+\x*3/2},{0+\x*sqrt(3)/2+\y*sqrt(3)}) {$d_+$};
        }
    }

\foreach \x in {1,2} { 
        \foreach \y in {-2} {
            \draw[dashed] ({-1/2+\x*3/2},{sqrt(3)/2+\x*sqrt(3)/2+\y*sqrt(3)})--({1/2+\x*3/2},{sqrt(3)/2+\x*sqrt(3)/2+\y*sqrt(3)});
            \draw[dashed] ({-1/2+\x*3/2},{-sqrt(3)/2+\x*sqrt(3)/2+\y*sqrt(3)})--({1/2+\x*3/2},{-sqrt(3)/2+\x*sqrt(3)/2+\y*sqrt(3)});
            \draw[dashed] ({-1+\x*3/2},{0+\x*sqrt(3)/2+\y*sqrt(3)})--({-1/2+\x*3/2},{sqrt(3)/2+\x*sqrt(3)/2+\y*sqrt(3)});
            \draw[dashed] ({1/2+\x*3/2},{-sqrt(3)/2+\x*sqrt(3)/2+\y*sqrt(3)})--({1+\x*3/2},{0+\x*sqrt(3)/2+\y*sqrt(3)});
            \draw[dashed] ({1+\x*3/2},{0+\x*sqrt(3)/2+\y*sqrt(3)})--({1/2+\x*3/2},{sqrt(3)/2+\x*sqrt(3)/2+\y*sqrt(3)});
            \draw[dashed] ({-1/2+\x*3/2},{-sqrt(3)/2+\x*sqrt(3)/2+\y*sqrt(3)})--({-1+\x*3/2},{0+\x*sqrt(3)/2+\y*sqrt(3)});

            \node[scale=1] at ({0+\x*3/2},{0+\x*sqrt(3)/2+\y*sqrt(3)}) {$d_+$};
        }
    }

\foreach \x in {3} { 
        \foreach \y in {-1,-2} {
            \draw[dashed] ({-1/2+\x*3/2},{sqrt(3)/2+\x*sqrt(3)/2+\y*sqrt(3)})--({1/2+\x*3/2},{sqrt(3)/2+\x*sqrt(3)/2+\y*sqrt(3)});
            \draw[dashed] ({-1/2+\x*3/2},{-sqrt(3)/2+\x*sqrt(3)/2+\y*sqrt(3)})--({1/2+\x*3/2},{-sqrt(3)/2+\x*sqrt(3)/2+\y*sqrt(3)});
            \draw[dashed] ({-1+\x*3/2},{0+\x*sqrt(3)/2+\y*sqrt(3)})--({-1/2+\x*3/2},{sqrt(3)/2+\x*sqrt(3)/2+\y*sqrt(3)});
            \draw[dashed] ({1/2+\x*3/2},{-sqrt(3)/2+\x*sqrt(3)/2+\y*sqrt(3)})--({1+\x*3/2},{0+\x*sqrt(3)/2+\y*sqrt(3)});
            \draw[dashed] ({1+\x*3/2},{0+\x*sqrt(3)/2+\y*sqrt(3)})--({1/2+\x*3/2},{sqrt(3)/2+\x*sqrt(3)/2+\y*sqrt(3)});
            \draw[dashed] ({-1/2+\x*3/2},{-sqrt(3)/2+\x*sqrt(3)/2+\y*sqrt(3)})--({-1+\x*3/2},{0+\x*sqrt(3)/2+\y*sqrt(3)});

            \node[scale=1] at ({0+\x*3/2},{0+\x*sqrt(3)/2+\y*sqrt(3)}) {$d_+$};
        }
    }

\foreach \x in {-1,-2} { 
        \foreach \y in {0,1} {
            \draw[dashed] ({-1/2+\x*3/2},{sqrt(3)/2+\x*sqrt(3)/2+\y*sqrt(3)})--({1/2+\x*3/2},{sqrt(3)/2+\x*sqrt(3)/2+\y*sqrt(3)});
            \draw[dashed] ({-1/2+\x*3/2},{-sqrt(3)/2+\x*sqrt(3)/2+\y*sqrt(3)})--({1/2+\x*3/2},{-sqrt(3)/2+\x*sqrt(3)/2+\y*sqrt(3)});
            \draw[dashed] ({-1+\x*3/2},{0+\x*sqrt(3)/2+\y*sqrt(3)})--({-1/2+\x*3/2},{sqrt(3)/2+\x*sqrt(3)/2+\y*sqrt(3)});
            \draw[dashed] ({1/2+\x*3/2},{-sqrt(3)/2+\x*sqrt(3)/2+\y*sqrt(3)})--({1+\x*3/2},{0+\x*sqrt(3)/2+\y*sqrt(3)});
            \draw[dashed] ({1+\x*3/2},{0+\x*sqrt(3)/2+\y*sqrt(3)})--({1/2+\x*3/2},{sqrt(3)/2+\x*sqrt(3)/2+\y*sqrt(3)});
            \draw[dashed] ({-1/2+\x*3/2},{-sqrt(3)/2+\x*sqrt(3)/2+\y*sqrt(3)})--({-1+\x*3/2},{0+\x*sqrt(3)/2+\y*sqrt(3)});

            \node[scale=1] at ({0+\x*3/2},{0+\x*sqrt(3)/2+\y*sqrt(3)}) {$d_-$};
        }
    }

\foreach \x in {-1} { 
        \foreach \y in {-1,2} {
            \draw[dashed] ({-1/2+\x*3/2},{sqrt(3)/2+\x*sqrt(3)/2+\y*sqrt(3)})--({1/2+\x*3/2},{sqrt(3)/2+\x*sqrt(3)/2+\y*sqrt(3)});
            \draw[dashed] ({-1/2+\x*3/2},{-sqrt(3)/2+\x*sqrt(3)/2+\y*sqrt(3)})--({1/2+\x*3/2},{-sqrt(3)/2+\x*sqrt(3)/2+\y*sqrt(3)});
            \draw[dashed] ({-1+\x*3/2},{0+\x*sqrt(3)/2+\y*sqrt(3)})--({-1/2+\x*3/2},{sqrt(3)/2+\x*sqrt(3)/2+\y*sqrt(3)});
            \draw[dashed] ({1/2+\x*3/2},{-sqrt(3)/2+\x*sqrt(3)/2+\y*sqrt(3)})--({1+\x*3/2},{0+\x*sqrt(3)/2+\y*sqrt(3)});
            \draw[dashed] ({1+\x*3/2},{0+\x*sqrt(3)/2+\y*sqrt(3)})--({1/2+\x*3/2},{sqrt(3)/2+\x*sqrt(3)/2+\y*sqrt(3)});
            \draw[dashed] ({-1/2+\x*3/2},{-sqrt(3)/2+\x*sqrt(3)/2+\y*sqrt(3)})--({-1+\x*3/2},{0+\x*sqrt(3)/2+\y*sqrt(3)});

            \node[scale=1] at ({0+\x*3/2},{0+\x*sqrt(3)/2+\y*sqrt(3)}) {$d_-$};
        }
    }

\foreach \x in {-2} { 
        \foreach \y in {2} {
            \draw[dashed] ({-1/2+\x*3/2},{sqrt(3)/2+\x*sqrt(3)/2+\y*sqrt(3)})--({1/2+\x*3/2},{sqrt(3)/2+\x*sqrt(3)/2+\y*sqrt(3)});
            \draw[dashed] ({-1/2+\x*3/2},{-sqrt(3)/2+\x*sqrt(3)/2+\y*sqrt(3)})--({1/2+\x*3/2},{-sqrt(3)/2+\x*sqrt(3)/2+\y*sqrt(3)});
            \draw[dashed] ({-1+\x*3/2},{0+\x*sqrt(3)/2+\y*sqrt(3)})--({-1/2+\x*3/2},{sqrt(3)/2+\x*sqrt(3)/2+\y*sqrt(3)});
            \draw[dashed] ({1/2+\x*3/2},{-sqrt(3)/2+\x*sqrt(3)/2+\y*sqrt(3)})--({1+\x*3/2},{0+\x*sqrt(3)/2+\y*sqrt(3)});
            \draw[dashed] ({1+\x*3/2},{0+\x*sqrt(3)/2+\y*sqrt(3)})--({1/2+\x*3/2},{sqrt(3)/2+\x*sqrt(3)/2+\y*sqrt(3)});
            \draw[dashed] ({-1/2+\x*3/2},{-sqrt(3)/2+\x*sqrt(3)/2+\y*sqrt(3)})--({-1+\x*3/2},{0+\x*sqrt(3)/2+\y*sqrt(3)});

            \node[scale=1] at ({0+\x*3/2},{0+\x*sqrt(3)/2+\y*sqrt(3)}) {$d_-$};
        }
    }

\foreach \x in {-3} { 
        \foreach \y in {1,2} {
            \draw[dashed] ({-1/2+\x*3/2},{sqrt(3)/2+\x*sqrt(3)/2+\y*sqrt(3)})--({1/2+\x*3/2},{sqrt(3)/2+\x*sqrt(3)/2+\y*sqrt(3)});
            \draw[dashed] ({-1/2+\x*3/2},{-sqrt(3)/2+\x*sqrt(3)/2+\y*sqrt(3)})--({1/2+\x*3/2},{-sqrt(3)/2+\x*sqrt(3)/2+\y*sqrt(3)});
            \draw[dashed] ({-1+\x*3/2},{0+\x*sqrt(3)/2+\y*sqrt(3)})--({-1/2+\x*3/2},{sqrt(3)/2+\x*sqrt(3)/2+\y*sqrt(3)});
            \draw[dashed] ({1/2+\x*3/2},{-sqrt(3)/2+\x*sqrt(3)/2+\y*sqrt(3)})--({1+\x*3/2},{0+\x*sqrt(3)/2+\y*sqrt(3)});
            \draw[dashed] ({1+\x*3/2},{0+\x*sqrt(3)/2+\y*sqrt(3)})--({1/2+\x*3/2},{sqrt(3)/2+\x*sqrt(3)/2+\y*sqrt(3)});
            \draw[dashed] ({-1/2+\x*3/2},{-sqrt(3)/2+\x*sqrt(3)/2+\y*sqrt(3)})--({-1+\x*3/2},{0+\x*sqrt(3)/2+\y*sqrt(3)});

            \node[scale=1] at ({0+\x*3/2},{0+\x*sqrt(3)/2+\y*sqrt(3)}) {$d_-$};
        }
    }

\draw[thick,red,dashed] ({-3/4},{-2.5*sqrt(3)})--({-3/4},{2.5*sqrt(3)});
\end{tikzpicture}
}
\caption{(a) $\tilde{\mathcal{T}}-$breaking structures by dislocating the sublattices (upper panel) outward with a distance $d_+>0$ or (lower panel) inward with a distance $d_->0$. In that case, the dislocated lattice is no longer translational invariant with respect to the $\mathbb{Z}\tilde{\bm{\ell}}$ translation. (b) a zigzag interface model: aside the interface, the $\tilde{\mathcal{T}}-$symmetry is broken by dislocating the sublattice oppositely, where the letter $d_+$ ($d_-$, resp.) in each cell indicates the first (second, resp.) type of deformation in Figure \ref{fig_unit_structure} is implemented. We note that this interface structure is translational invariant along $\mathbb{Z}\bm{\ell}_2$ (the zigzag interface) and reflectional symmetric about the $x-$axis.}
\label{fig_interface structure}
\end{figure}
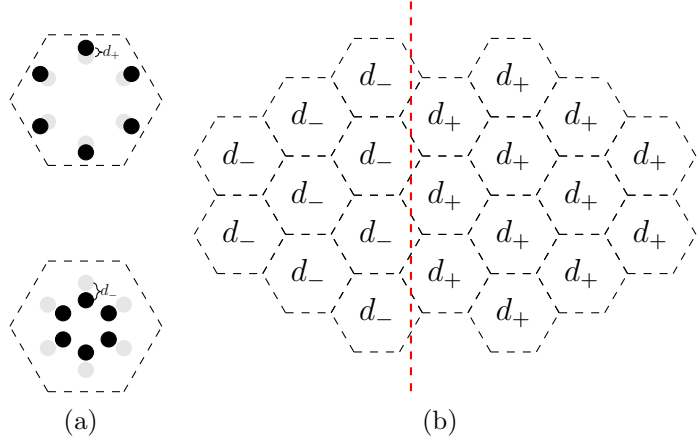

As a consequence of $\tilde{\mathcal{T}}$-symmetry breaking, the perturbed Hamiltonian $\mathcal{H}_{\pm\delta}$ does not possess a generic degeneracy of dimensional four. Thus, one can expect the double Dirac cone to vanish and a band gap to open. The following result is proved in Section \ref{sec_gap_open}:

\begin{theorem} \label{thm_gap_open}
Suppose that the conditions in Theorem \ref{thm_double_cone}, Assumption \ref{assum_symmetry_break}, and the following spectral no-fold condition hold
\begin{equation} \label{eq_no_fold}
\lambda_n(\bm{\kappa})=\lambda_*
\Longrightarrow n\in\{1,2,3,4\}\,\text{ and }\, \bm{\kappa}=\bm{0}.
\end{equation}
Assume also that the first four bands are isolated
\begin{equation} \label{eq_isolated_bands}
\max_{\bm{\kappa}\in Y^*}\lambda_{4}(\bm{\kappa})<\min_{\bm{\kappa}\in Y^*}\lambda_{5}(\bm{\kappa}).
\end{equation}
Define $\beta_{*,k}:=\big(u_{k}(\cdot;\bm{0}),\mathcal{H}_{per}(\bm{0}) u_{k}(\cdot;\bm{0}) \big)_{\mathcal{X}_{\bm{0}}}\in\mathbb{R}$ ($k=1,3$). We assume that
\begin{equation} \label{eq_gap_open_criterion}
\beta_{*}:=\beta_{*,1}=- \beta_{*,3}>0.
\end{equation}
Then, there exists $\Delta_0>0$ such that for any $c_*\in (0,1)$ and $\delta\in (0,\Delta_0)$, there is a band gap opened in $\text{Spec}(\mathcal{H}_{\pm\delta})$ in the sense that
\begin{equation} \label{eq_gap_open}
\Big(\lambda_*+ c_*\delta\cdot\beta_{*},\lambda_*-c_*\delta\cdot\beta_{*}
\Big)\bigcap \text{Spec}(\mathcal{H}_{\pm\delta})=\emptyset .
\end{equation}
\end{theorem}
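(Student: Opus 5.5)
The plan is to split the Brillouin zone into a small neighbourhood of $\Gamma$ and its complement. Near $\Gamma$ we reduce $\mathcal{H}_{\pm\delta}(\bm{\kappa})$ to an effective $4\times 4$ matrix on the four-dimensional degenerate eigenspace. The interval in \eqref{eq_gap_open} is read as $(\lambda_*-c_*\delta\beta_*,\lambda_*+c_*\delta\beta_*)$, since $\beta_*>0$. It suffices to treat $\mathcal{H}_{+\delta}$: replacing $\delta$ by $-\delta$ only swaps the roles of $\beta_{*,1}$ and $\beta_{*,3}$ below, and the estimate depends on $\delta^2\beta_*^2$ only.

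\emph{Step 1 (away from $\Gamma$).} Fix a small $r>0$, to be chosen in Step 3. Conditions \eqref{eq_no_fold} and \eqref{eq_isolated_bands}, together with continuity of the bands on the compact set $\{\bm{\kappa}\in Y^*: |\bm{\kappa}|\ge r\}$, give a gap estimate there. Specifically, there is $d(r)>0$ with $|\lambda_n(\bm{\kappa})-\lambda_*|\ge d(r)$ for all $n$ and all such $\bm{\kappa}$. Since $\|\delta\mathcal{H}_{per}(\bm{\kappa})\|\le C\delta$ uniformly in $\bm{\kappa}$ (finite range), Weyl's inequality keeps the eigenvalues of $\mathcal{H}_{\pm\delta}(\bm{\kappa})$ at distance at least $d(r)-C\delta>\delta\beta_*$ from $\lambda_*$ once $\delta$ is small relative to $r$.

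\emph{Step 2 (effective matrix near $\Gamma$).} For $|\bm{\kappa}|<r$, let $P$ be the spectral projection of $\mathcal{H}_b(\bm{0})$ onto $\ker(\mathcal{H}_b(\bm{0})-\lambda_*)=\mathrm{span}\{u_k(\cdot;\bm{0})\}_{k=1}^4$. We perform a Schur complement (Lyapunov--Schmidt) reduction, using that $\mathcal{H}_b(\bm{0})-\lambda$ is uniformly invertible on $\mathrm{Ran}(I-P)$ for $\lambda$ near $\lambda_*$. The reduction shows that $\lambda$ near $\lambda_*$ is an eigenvalue iff $\det\big(M(\bm{\kappa},\delta)-\lambda\big)=0$, where
$$
M(\bm{\kappa},\delta)=\lambda_*+P\big(\kappa_1\partial_{\kappa_1}\mathcal{H}_b(\bm{0})+\kappa_2\partial_{\kappa_2}\mathcal{H}_b(\bm{0})\big)P+\delta P\mathcal{H}_{per}(\bm{0})P+\mathcal{O}\big((|\bm{\kappa}|+\delta)^2\big).
$$

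\emph{Step 3 (structure of the leading terms).} The linear-in-$\bm{\kappa}$ block is exactly the one analysed in the proof of Theorem \ref{thm_double_cone}. In the basis $\{u_1,u_2,u_3,u_4\}$ it is block off-diagonal, of the form $\begin{pmatrix}0&A(\bm{\kappa})\\ A(\bm{\kappa})^*&0\end{pmatrix}$ with $A(\bm{\kappa})A(\bm{\kappa})^*=\frac34|\alpha_*|^2|\bm{\kappa}|^2 I_2$; this is what produces the double cone with slope $\frac{\sqrt3}{2}|\alpha_*|$.

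For the $\delta$-block we use that $\mathcal{H}_{per}(\bm{0})$ commutes with $\mathcal{S}$ (Assumption \ref{assum_symmetry_break}). The four-dimensional irrep of $\tilde{\mathcal{S}}$ restricts to $\mathcal{S}$ as the sum of the two inequivalent two-dimensional irreps, spanned by $\{u_1,u_2\}$ and $\{u_3,u_4\}$. By Schur's lemma, $P\mathcal{H}_{per}(\bm{0})P$ is therefore $\mathrm{diag}(\beta_{*,1}I_2,\beta_{*,3}I_2)=\beta_*\,\mathrm{diag}(I_2,-I_2)$ by \eqref{eq_gap_open_criterion}. The principal part $\begin{pmatrix}\delta\beta_* I&A\\A^*&-\delta\beta_* I\end{pmatrix}$ squares to a scalar matrix, so its eigenvalues are $\pm\sqrt{\delta^2\beta_*^2+\frac34|\alpha_*|^2|\bm{\kappa}|^2}$, each of multiplicity two.

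Weyl's inequality then gives, for every eigenvalue $\lambda$ of $M$,
$$
|\lambda-\lambda_*|\ge \sqrt{\delta^2\beta_*^2+\tfrac34|\alpha_*|^2|\bm{\kappa}|^2}-C\big(|\bm{\kappa}|^2+\delta^2\big).
$$
Using $\sqrt{a^2+b^2}\ge \max(a,b)\ge (a+b)/2$ and choosing $r$ small so that $C|\bm{\kappa}|^2\le Cr|\bm{\kappa}|$ is absorbed by the $|\alpha_*||\bm{\kappa}|$ term, it suffices to handle the remaining $\delta$-error. Taking $\delta<\Delta_0$ with $C\delta^2\le(1-c_*)\delta\beta_*$ (after a refined split into the regimes $|\bm{\kappa}|\lesssim\delta$ and $|\bm{\kappa}|\gtrsim\delta$), we get $|\lambda-\lambda_*|\ge c_*\delta\beta_*$.

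\emph{Main obstacle.} The delicate point is controlling the remainder uniformly in both small parameters. The cross term $\delta|\bm{\kappa}|$ coming from the second-order Schur complement must not spoil the $\delta\beta_*$ lower bound in the regime $|\bm{\kappa}|\sim\delta$. I would handle it by bounding $\delta|\bm{\kappa}|\le\frac12(\delta^2+|\bm{\kappa}|^2)$ and absorbing each piece as above. The constant $\Delta_0$ must be fixed after $r$, and it must also satisfy the smallness condition of Step 1.
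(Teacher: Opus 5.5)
Your proposal is correct and follows essentially the paper's route: a Lyapunov--Schmidt reduction at $\Gamma$ onto $\mathrm{span}\{u_1,\dots,u_4\}$ with the symmetry-determined first-order blocks, combined with the no-fold condition away from $\Gamma$. The first-order blocks are the off-diagonal $\bm{\kappa}$-part from Theorem \ref{thm_double_cone} and the diagonal $\delta$-part of Lemma \ref{lem_delta_perturbation_matrix}; your Schur's-lemma argument justifies the latter more cleanly than the paper's ``direct computation''.

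The only real difference is that the paper first derives the eigenvalue asymptotics of Theorem \ref{thm_asymptotics_eigenpairs} from the determinant equation and then reads off the gap, whereas you bound the reduced Hermitian matrix directly with Weyl's inequality, which is all the gap requires and avoids the splitting of the double roots. In both routes $\Delta_0$ depends on $c_*$: in yours through $C\delta^2\le(1-c_*)\delta\beta_*$, in the paper's through the factor $1+\mathcal{O}(|\bm{\kappa}|+\delta)$. So what is actually proved is the statement with the quantifiers on $\Delta_0$ and $c_*$ interchanged.
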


In the sequel, we denote the band gap in \eqref{eq_gap_open} by $\mathcal{I}_{\delta}:=(\lambda_*+ c_*\delta\cdot\beta_{*},\lambda_*-c_*\delta\cdot\beta_{*})$.

\begin{remark} \label{rmk_isolated_bands}
The conditions \eqref{eq_no_fold} and \eqref{eq_gap_open_criterion} are necessary for the gap opening and hence the in-gap interface modes; we refer the reader to \cite{qiu2023no_gap} for results on the bifurcation of Dirac points when \eqref{eq_no_fold} fails. The assumption \eqref{eq_isolated_bands} is imposed only for technical reasons. In particular, \eqref{eq_gap_open_criterion} indicates that the perturbation $\mathcal{H}_{per}$ has \textit{nontrivial and opposite} effects on the Floquet-Bloch modes $u_{1}(\bm{n};\bm{0})$ and $u_{3}(\bm{n};\bm{0})$ (which possess different symmetries). This  is analogous to the effect of external magnetic fields on electrons with opposite spins. In one aspect, the interaction with external fields changes the energy of electrons, with the sign of energy change depending on the spin directions. This energy fluctuation is manifested by the lifting of double Dirac cone degeneracy and the appearance of a band gap (hence \eqref{eq_gap_open_criterion} is sometimes called \textit{the non-degeneracy condition}); see Section \ref{sec_gap_open} for details. Based on this analogy, the interface modes studied later in this paper demonstrate an analog of the quantum spin Hall effect \cite{WuHu15scheme}.

We also note that condition \eqref{eq_gap_open_criterion} can be generalized to the case $\beta_{*,1}\cdot\beta_{*,3}<0$, with Theorem \ref{thm_gap_open}, and more importantly, Theorems \ref{thm_existence_interface_modes} and \ref{thm_robustness_interface_modes} in the sequel still hold; nonetheless, we focus on the simple case \eqref{eq_gap_open_criterion} to minimize the technical details in this paper.
\end{remark}

\begin{remark}
In addition to the common band gap between $\mathcal{H}_{\delta}$ and $\mathcal{H}_{-\delta}$, a more important consequence of the symmetry-breaking perturbation is that it induces a band inversion between $\mathcal{H}_{\delta}$ and $\mathcal{H}_{-\delta}$, which are characterized by the switching of Bloch eigenspaces at the end points of the gap $\mathcal{I}_{\delta}$; see Remark \ref{rmk_phase_transition} and \ref{rmk_absence_band_inversion} for a detailed discussion.
\end{remark}

Finally, we are ready to introduce the interface model as the main focus of this paper. It is described by the tight-binding Hamiltonian $\mathcal{H}_{zig}$, specified as follows:
\begin{equation} \label{eq_interface_Hamiltonian}
\mathcal{H}_{zig}(\bm{n},\bm{m})=
\left\{
\begin{aligned}
&\mathcal{H}_{\delta}(\bm{n},\bm{m}),\quad n_1,m_1\geq 0, \\
&\mathcal{H}_{-\delta}(\bm{n},\bm{m}),\quad n_1,m_1< 0, \\
&(\mathcal{H}_{b}+\delta\mathcal{E})(\bm{n},\bm{m}),\quad \text{otherwise},
\end{aligned}
\right.
\end{equation}
where the additional Hamiltonian $\mathcal{E}$ brought by the interface satisfies the following condition.

\begin{assumption} \label{asmp_interface_hamiltonian}
$\mathcal{E}$ is translational invariant with respect to the zig-zag interface $\mathbb{Z}\bm{\ell}_2$, i.e.,
$$\mathcal{E}(\bm{n}+\bm{\ell}_2,\bm{m}+\bm{\ell}_2) = \mathcal{E}(\bm{n},\bm{m}) $$ for all $\bm{n},\bm{m}$. Moreover, $\mathcal{E}$ is $\mathcal{F}_x$ symmetric, i.e., $[\mathcal{E},\mathcal{F}_x]=0$.
\end{assumption}

The interface structure with which we are concerned is illustrated in Figure \ref{fig_unpertrubed_interface}. We emphasize that, under Assumption \ref{asmp_interface_hamiltonian}, the governing Hamiltonian $\mathcal{H}_{zig}$ is $\mathcal{F}_x$ (i.e., reflectional) symmetric. This property is critical for the robustness of interface modes, as we will show later.

The interface modes are generalized eigenfunctions of $\mathcal{H}_{zig}$ that are localized transversally, while extended along the interface $\mathbb{Z}\bm{\ell}_2$. To be precise, we introduce the following space imposed with a quasi-periodic boundary condition along $\mathbb{Z}\bm{\ell}_2$:
\begin{equation} \label{eq_quasi-periodic_strip_space_def}
\mathcal{X}_{\kappa_{\parallel}}:=\{u\in \ell^2_{loc}(\Lambda)\otimes \mathbb{C}^6:\, u(\bm{n}+\bm{\ell}_2)=e^{i\kappa_{\parallel}}u(\bm{n}),\, \sum_{n_1\in\mathbb{Z}}\|u(n_1\bm{e}_1)\|^2<\infty \},
\end{equation}
with $\kappa_{\parallel}\in [-\pi,\pi]$.\footnote{Caution of notation: the space $\mathcal{X}_{\bm{\kappa}}$ defined in \eqref{eq_quasi_periodic_space_unit_cell} (subscript with a vector $\bm{\kappa}$) is imposed with quasi-periodic boundary conditions in two directions; hence the unit cell is the parallelogram depicted in Figure \ref{fig_periodic_structure}. On the other hand, we only impose quasi-periodic boundary conditions along the single direction $\bm{\ell}_2$ in \eqref{eq_quasi-periodic_strip_space_def}; thus the unit structure of $\mathcal{X}_{\kappa_{\parallel}}$ (scalar in the subscript) is a strip.} When $\kappa_{\parallel}=0$ (the most frequently used case in this paper), we write $\mathcal{X}_{\sharp}:=\mathcal{X}_{\kappa_{\parallel}=0}$.

As the first main result of this paper, we prove the existence and determine the precise number of in-gap interface modes.

\begin{theorem}[Existence of interface modes] \label{thm_existence_interface_modes}
Suppose that the conditions in Theorem \ref{thm_gap_open} and Assumption \ref{asmp_interface_hamiltonian} hold true. On the other hand, we assume that the unperturbed bulk Hamiltonian has non-singular $N-$neighbor hopping when restricted on $\mathcal{X}_{\sharp}$ in the sense that
\begin{equation} \label{eq_non_singular_hopping}
\sum_{n_2}\mathcal{H}_{b}(\bm{0},N\bm{\ell}_1+n_2\bm{\ell}_2)\in \text{GL}(6,\mathbb{C}).
\end{equation}
Then, there exists $\Delta_1>0$ such that if $0<\delta<\Delta_1$, the Hamiltonian $\mathcal{H}_{zig}\Big|_{\mathcal{X}_{\sharp}}$ has exactly two in-gap eigenvalues (counted with multiplicity) $\lambda_{zig,n}\in\mathcal{I}_{\delta}$ ($n=1,2$) which are near the mid-gap energy, that is, 
\begin{equation} \label{eq_interface_eigenvalue_estimate}
\lambda_{zig,n}-\lambda_*=o(\mathcal{\delta}).
\end{equation}
The associated eigenmodes, denoted by $u_{zig,n}$, attain opposite parity with respect to the $x-$axis reflection in the sense that
\begin{equation} \label{eq_interface_mode_parity}
\mathcal{F}_x u_{zig,1}=u_{zig,1},\quad \mathcal{F}_x u_{zig,2}=-u_{zig,2}.
\end{equation}
\end{theorem}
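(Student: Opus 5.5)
We follow the discrete layer-potential strategy outlined in the introduction, working throughout on $\mathcal{X}_{\sharp}$.

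\textbf{Step 1: reduction to a boundary matching problem.} Since $\mathcal{H}_{\pm\delta}$ are translation invariant along $\mathbb{Z}\bm{\ell}_2$, their restrictions to $\mathcal{X}_{\sharp}$ become one-dimensional block operators in $n_1$ with $6\times6$ blocks and range $N$. For $\lambda\in\mathcal{I}_{\delta}$, Theorem~\ref{thm_gap_open} gives $\lambda\notin\text{Spec}(\mathcal{H}_{\pm\delta}|_{\mathcal{X}_{\sharp}})$. Hence the physical Green operators $G_{\pm\delta}(\lambda)=(\mathcal{H}_{\pm\delta}-\lambda)^{-1}$ exist and decay exponentially in $n_1$.

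An eigenfunction $u$ of $\mathcal{H}_{zig}$ solves the bulk equation on each half-space away from a layer of width $N$ around $n_1=0$. Green's identity, with the discrete energy flux, therefore represents $u$ on each side through its boundary data on the $N$ columns next to the interface, i.e.\ a vector in $\mathbb{C}^{6N}$. Condition \eqref{eq_non_singular_hopping} makes the $N$-th hopping block invertible. This guarantees that the boundary data determine $u$ and that the flux pairing is nondegenerate.

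Matching the two representations, together with the interface equations involving $\mathcal{H}_b+\delta\mathcal{E}$, produces a boundary matching operator $\mathbb{T}_{\delta}(\lambda)$ acting on a finite-dimensional space. We then show two equivalences. First, $\lambda$ is an eigenvalue if and only if $\mathbb{T}_{\delta}(\lambda)$ is singular. Second, the multiplicity equals the multiplicity of $\lambda$ as a characteristic value of $\mathbb{T}_{\delta}$.

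\textbf{Step 2: limit of $\mathbb{T}_{\delta}$.} Set $\lambda=\lambda_*+\delta h$ with $|h|<c_*\beta_*$. We use the asymptotics of the Floquet--Bloch eigenpairs of $\mathcal{H}_{\pm\delta}$ near $\bm\kappa=\bm0$, obtained from the expansions in the proofs of Theorems~\ref{thm_double_cone} and~\ref{thm_gap_open}. From these we split the Green operator as
\[
G_{\pm\delta}(\lambda)=\delta^{-1}G^{sing}_{\pm}(h)+G^{reg}+o(1).
\]
The singular part comes from the four near-degenerate bands, whose effective operator is a $4\times4$ massive Dirac operator with masses $\pm\beta_*$. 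The regular part comes from the remaining bands, where \eqref{eq_isolated_bands} is used.

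The singular part is a rank-$\le 4$ operator onto the span of the boundary traces of $u_k(\cdot;\bm0)$, with explicit coefficients. Because of the band inversion, the sign of the mass is swapped between $\pm\delta$. After Schur complementing out the regular (invertible) part, $\mathbb{T}_\delta$ reduces to a $4\times4$ matrix $M(h)+o(1)$, and $M(h)$ can be computed from $\alpha_*$ and $\beta_*$.

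\textbf{Step 3: counting via parity.} Since $\mathcal{H}_{zig}$ commutes with $\mathcal{F}_x$ by Assumption~\ref{asmp_interface_hamiltonian}, everything splits into the $\pm1$ eigenspaces of $\mathcal{F}_x$. The four-dimensional irrep restricted to $\mathcal{F}_x$ splits as $2+2$, so $M(h)$ becomes block diagonal with two $2\times2$ blocks. We expect each block to be, up to invertible factors, of the form $h\cdot(\text{invertible})$ plus a rank-one piece. Consequently $\det$ of each block has exactly one simple zero at $h=0$ in $|h|<c_*\beta_*$ and is nonzero elsewhere.

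Gohberg--Sigal theory (Rouché's theorem for operator-valued functions) then gives exactly one characteristic value $h_\pm(\delta)=o(1)$ in each parity sector for small $\delta$. Via the equivalence of Step~1 this yields \eqref{eq_interface_eigenvalue_estimate} and \eqref{eq_interface_mode_parity}, with the total count equal to two.

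\textbf{Main obstacle.} The hardest step is Step~2: extracting the leading singular part of the Green operator uniformly in $h$ up to the gap edges, and verifying that the limit matrix is nondegenerate except at $h=0$. The difficulty is that integrating the Dirac resolvent over $\kappa_1$ gives $\sqrt{\beta_*^2-h^2}$-type factors. I would first compute the limit explicitly with residue calculus on the $2\times2$ effective blocks per parity, and then check that their determinants vanish only at $h=0$ with nonzero derivative. Near the gap edges I would restrict to $|h|\le c_*\beta_*$ with $c_*<1$ to stay away from these branch points.
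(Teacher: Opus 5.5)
Your architecture (layer-potential reduction on $\mathcal{X}_\sharp$, the rescaling $\lambda=\lambda_*+\delta h$, a $\delta\to0^+$ limit, Gohberg--Sigal, and the parity splitting) is the paper's. However, the two steps that actually carry the proof do not work as stated.

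\textbf{Step 2.} The ansatz $\delta^{-1}G^{sing}_\pm$ has the wrong scaling. On the strip the problem is one-dimensional, so the four near-degenerate bands contribute only $O(1)$ to $\tilde{\mathcal G}_{\pm\delta,\sharp}(\lambda_*+\delta h)(n,m)$. Their contribution comes through $\int\delta\,d\kappa_1/\big((\beta_*^2-h^2)\delta^2+|\alpha_*|^2\kappa_1^2\big)\to\pi/(|\alpha_*|\sqrt{\beta_*^2-h^2})$, and the logarithmically divergent remainder converges to the principal-value Green operator $\tilde{\mathcal G}^{p.v.}_{b,\sharp}$. Hence the limit of the boundary operator is $\mathcal M^{p.v.}+\xi(h)\mathcal A$, with both terms of order one; the $\eta(h)$-terms cancel exactly because of the band inversion.

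Crucially, the ``regular part'' $\mathcal M^{p.v.}$ is \emph{not} invertible. Its kernel is exactly the span of $\psi_k:=(\tilde v_k(0),\tilde v_k(-1))^{\top}$, $1\le k\le 4$, the traces of the propagating Bloch modes at the double cone. This is the key lemma your plan lacks (Proposition~\ref{prop_M_PV_kernel}). Proving it requires the far-field asymptotics of $\mathcal G^{p.v.}_{b,\sharp}$, the flux orthogonality of Proposition~\ref{prop_sesqui_form}, and the absence of $\ell^2$ eigenvalues of the periodic strip operator. Only at the very end does \eqref{eq_non_singular_hopping} enter, to conclude from $\tilde{\mathcal H}_{b,\sharp}(0,-1)\phi_{-1}=0$.

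So you cannot Schur-complement out an invertible regular part. You must instead reduce onto $\ker\mathcal M^{p.v.}$, where the compressed operator is $\xi(h)$ times an invertible matrix (via $\mathfrak a(\tilde v_k,\tilde v_k)=\pm i|\alpha_*|$). The characteristic value then sits at $h=0$ because $\xi(0)=0$.

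\textbf{Counting.} For the operator your Green-identity representation produces, the multiplicity claim of Step 1 and the ``one simple zero per block'' expectation of Step 3 are both false. At $h=0$ the limit has the four-dimensional kernel above, with two vectors in each parity sector ($\psi_1,\psi_3$ even and $\psi_2,\psi_4$ odd). The determinant of each parity block therefore vanishes to second order, not first. Gohberg--Sigal gives four characteristic values near $h=0$, yet there are only two interface modes.

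The reason is that the parametrization by $(u(0),u(-1))$ is redundant. Moreover, singularity of $\mathcal M$ is only a necessary condition; you must also check $\mathcal M^{aux}(a,b)^{\top}\neq 0$ to obtain a nonzero mode. The paper resolves this with the auxiliary relation $\mathcal M^{aux}(a,b)^{\top}=(a,b)^{\top}$. Using $\mathcal M^{aux,p.v.}\psi_k=\frac12\psi_k$, its limit at $h=0$ has only the two solutions $\psi_1+i\,\text{sgn}(\beta_*)\psi_3$ and $\psi_2-i\,\text{sgn}(\beta_*)\psi_4$, which gives at most two modes. Conversely, the leading terms of $u^{(1)}_\delta(0)$ and $u^{(2)}_\delta(0)$ are independent and carry the stated parities, which gives at least two.

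Your simple-zero count would need either this extra step or an explicitly constructed faithful boundary operator, and the proposal supplies neither. The parity splitting itself is fine.
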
 

We note that, by combining Theorem \ref{thm_existence_interface_modes} and the analytic perturbation theory, it follows immediately that the in-gap interface modes also exist in $\mathcal{X}_{\kappa_{\parallel}}$ for $\kappa_{\parallel}$ being sufficiently close to zero:

\begin{corollary} \label{corol_interface_eigenvalue_curve}
Suppose that the conditions in Theorem \ref{thm_existence_interface_modes} hold true. Then, there exists a neighborhood of $\kappa_{\parallel}=0$ in which $\mathcal{H}_{zig}\Big|_{\mathcal{X}_{\kappa_{\parallel}}}$ has exactly two eigenvalues $\lambda_{zig,n}(\kappa_{\parallel})$ ($n=1,2$) inside $\mathcal{I}_{\delta}$.
\end{corollary}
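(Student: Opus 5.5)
The plan is to view $\kappa_{\parallel}\mapsto \mathcal{H}_{zig}|_{\mathcal{X}_{\kappa_{\parallel}}}$ as an analytic family of bounded self-adjoint operators on a fixed Hilbert space and apply Kato's analytic perturbation theory to the isolated in-gap eigenvalues supplied by Theorem \ref{thm_existence_interface_modes}. Concretely, I would conjugate by the multiplication $u\mapsto e^{-i\kappa_{\parallel} n_2}u$, with $\bm{n}=n_1\bm{\ell}_1+n_2\bm{\ell}_2$. This identifies every $\mathcal{X}_{\kappa_{\parallel}}$ with $\mathcal{X}_{\sharp}\simeq \ell^2(\mathbb{Z})\otimes\mathbb{C}^6$, and $\mathcal{H}_{zig}$ becomes an operator $\mathcal{H}_{zig}(\kappa_{\parallel})$ on that fixed space. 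Its kernel is $\sum_{n_2}\mathcal{H}_{zig}(n_1\bm{\ell}_1,m_1\bm{\ell}_1+n_2\bm{\ell}_2)e^{i\kappa_{\parallel}n_2}$. The hopping has finite range $N$ by Assumption \ref{asmp_short_range}, and $\mathcal{H}_{zig}$ is $\mathbb{Z}\bm{\ell}_2$-invariant by Assumption \ref{asmp_interface_hamiltonian}. So this is a finite sum of trigonometric polynomials times fixed bounded operators, and $\kappa_{\parallel}\mapsto\mathcal{H}_{zig}(\kappa_{\parallel})$ is an entire, self-adjoint-for-real-$\kappa_{\parallel}$ family in the operator norm. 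In particular $\|\mathcal{H}_{zig}(\kappa_{\parallel})-\mathcal{H}_{zig}(0)\|\le C|\kappa_{\parallel}|$.

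Fix $\delta\in(0,\Delta_1)$. By Theorem \ref{thm_existence_interface_modes}, $\mathcal{H}_{zig}(0)$ has exactly two eigenvalues in $\mathcal{I}_{\delta}$, counted with multiplicity, and both satisfy $\lambda_{zig,n}-\lambda_*=o(\delta)$. Since $\mathcal{I}_\delta$ has width of order $\delta$, for $\delta$ small both eigenvalues lie in a compact subinterval $J=[\lambda_*-\eta,\lambda_*+\eta]\subset\mathcal{I}_\delta$ with $\eta=o(\delta)$. Next I would identify the spectrum of $\mathcal{H}_{zig}(0)$ in $\mathcal{I}_\delta$. The perturbation $\mathcal{H}_{zig}-(\mathcal{H}_{\delta}\chi_{n_1\ge0}\oplus\mathcal{H}_{-\delta}\chi_{n_1<0})$ is supported in a band of width $N$ around the interface, so it is compact on $\mathcal{X}_\sharp$. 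Hence the essential spectrum of $\mathcal{H}_{zig}(0)$ is contained in $\text{Spec}(\mathcal{H}_{\delta})\cup\text{Spec}(\mathcal{H}_{-\delta})$ and, by Theorem \ref{thm_gap_open}, misses $\mathcal{I}_\delta$. The spectrum in $\mathcal{I}_\delta$ is therefore discrete, and by the theorem it consists of exactly the two points in $J$. Let $\Gamma$ be a contour in $\mathbb{C}$ enclosing $J$, crossing the real axis inside $\mathcal{I}_\delta\setminus J$, and at positive distance $d$ from $\text{Spec}(\mathcal{H}_{zig}(0))$.

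For $|\kappa_{\parallel}|<d/C$, a Neumann series shows that $\Gamma$ stays in the resolvent set. The Riesz projection $P(\kappa_{\parallel})=-\frac{1}{2\pi i}\oint_\Gamma(\mathcal{H}_{zig}(\kappa_{\parallel})-z)^{-1}dz$ is then analytic. Its rank is constant, hence equal to $\text{rank}\,P(0)=2$, and Kato's theory gives exactly two eigenvalues $\lambda_{zig,n}(\kappa_{\parallel})$ inside $\Gamma$, counted with multiplicity, depending continuously on $\kappa_{\parallel}$.

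The step I expect to be the main obstacle is ruling out \emph{additional} eigenvalues entering the rest of $\mathcal{I}_\delta$, outside $\Gamma$, as $\kappa_{\parallel}$ varies. The corollary says exactly two eigenvalues in $\mathcal{I}_\delta$, so this must be excluded. I would use the same resolvent argument on the closed set $\overline{\mathcal{I}_\delta}\setminus\text{int}(\Gamma\cap\mathbb{R})$. This requires $\mathcal{H}_{zig}(0)$ to have no spectrum there, including at the endpoints of $\mathcal{I}_\delta$. That holds because the gap in Theorem \ref{thm_gap_open} holds for any $c_*\in(0,1)$: applying it with a slightly larger $c_*'$ puts $\overline{\mathcal{I}_\delta}$ inside a larger gap, on which the only spectrum is the two interface eigenvalues. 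This presumes the count in Theorem \ref{thm_existence_interface_modes} is uniform in $c_*$, which I expect from its proof, since the eigenvalues sit at $o(\delta)$ from $\lambda_*$. The distance of $\overline{\mathcal{I}_\delta}\setminus\Gamma$ to $\text{Spec}(\mathcal{H}_{zig}(0))$ is then some $d'>0$, and $\|\mathcal{H}_{zig}(\kappa_{\parallel})-\mathcal{H}_{zig}(0)\|<d'$ keeps that set spectrum-free.

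Taking $|\kappa_{\parallel}|<\min(d,d')/C$ then yields the neighborhood claimed in Corollary \ref{corol_interface_eigenvalue_curve}. If eigenvalue curves rather than just counts are wanted, the reflection symmetry gives more. For $\kappa_{\parallel}\neq0$, $\mathcal{F}_x$ maps $\mathcal{X}_{\kappa_{\parallel}}$ to $\mathcal{X}_{-\kappa_{\parallel}}$, so it no longer acts within one fiber and the parity splitting of Theorem \ref{thm_existence_interface_modes} cannot be used directly. Still, the two eigenvalues at $\kappa_{\parallel}=0$ have opposite $\mathcal{F}_x$-parity. By Rellich's theorem for self-adjoint analytic families, they extend to two real-analytic branches $\lambda_{zig,n}(\kappa_{\parallel})$.
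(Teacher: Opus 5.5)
Your proposal is correct and follows the same route as the paper. The paper says only that the corollary follows from Theorem \ref{thm_existence_interface_modes} combined with analytic perturbation theory in $\kappa_{\parallel}$; your conjugation to a fixed space, Riesz-projection argument and norm-perturbation bound supply exactly the details behind that one-line claim.

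The endpoint issue you flag, namely eigenvalues entering $\mathcal{I}_\delta$ through its boundary, is passed over in the paper. Your fix via a slightly larger $c_*$ is legitimate: after possibly shrinking $\Delta_1$, the Gohberg--Sigal argument of Section \ref{sec_existence_interface_mode} works on any disc $|h|<c\,|\beta_*|$ with $c<1$.
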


\begin{remark} \label{rmk_absence_eigenvalue_pi}
Thanks to the spectral no-fold condition \eqref{eq_no_fold}, we know that the unperturbed Hamiltonian $\mathcal{H}_{b}$ is already ``locally gapped" near $\kappa_{\parallel}=\pi$. To be more precise, there exists a neighborhood of $\kappa_{\parallel}=\pi$ in which the spectrum of $\mathcal{H}_{b}\big|_{\mathcal{X}_{\kappa_{\parallel}}}$ is empty near $\lambda=\lambda_*$. As a consequence, for sufficiently small $\delta$, we can see by the standard perturbation theory that the interface Hamiltonian with parallel momentum $\kappa_{\parallel}=\pm \pi$, i.e., $\mathcal{H}_{zig}\big|_{\mathcal{X}_{\kappa_{\parallel}}}$, does not have eigenvalues within $\mathcal{I}_{\delta}$. In other words, the range of continuation of the interface eigenvalue stated in Corollary \ref{corol_interface_eigenvalue_curve} does not contain $\kappa_{\parallel}=\pm \pi$ if we further restrict the size of the parameter $\delta$.
\end{remark}

\begin{remark} \label{rmk_non_singular_hopping}
The assumption \eqref{eq_non_singular_hopping} is imposed for technical reasons; see the discussion following \eqref{eq_M_PV_kernel_proof_7} and \eqref{eq_M_PV_kernel_proof_8}. It is satisfied if we consider the nearest-neighborhood non-singular hopping, i.e., $N=1$ in Assumption \ref{asmp_short_range} and $\mathcal{H}_{b}(\bm{n},\bm{m}+\bm{e})\in \text{GL}(6,\mathbb{C})$ with $\bm{e}\in \{\pm \bm{\ell}_1,\pm \bm{\ell}_2\}$. For example, one can improve the toy Hamiltonian in Example \ref{examp_super_symmetric_Hamiltonian} by including hoppings between lattice sites with a distance smaller than one:
\begin{equation*}
\mathcal{H}_b(\bm{n},\bm{m})_{i,j}=\left\{
\begin{aligned}
& \frac{1}{3}\|\Xi(\bm{n},i)-\Xi(\bm{m},j)\|_{2}^{-1},\quad \|\Xi(\bm{n},i)-\Xi(\bm{m},j)\|_{2}\in [\frac{1}{3},1], \\
&0,\quad \text{otherwise}.
\end{aligned}
\right.
\end{equation*}
Then it can be checked that \eqref{eq_non_singular_hopping} is satisfied. We note that such a technical assumption is not needed for studying interface modes in a continuous system because the differential operator naturally encodes all long-range hoppings; see \cite{qiu2026waveguide_localized,qiu2024square_lattice,li2024interface_mode_honeycomb}.
\end{remark}

The proof of Theorem \ref{thm_existence_interface_modes} is based on a layer-potential framework, which is first established in previous works for continuous systems, as reviewed in the introduction, and generalized to discrete structures in this paper. We also note that, with this framework and following the lines of \cite[Section 9]{li2024interface_mode_honeycomb}, one can even obtain more information on the in-gap spectrum of $\mathcal{H}_{zig}$, such as the asymptotic expansion of in-gap eigenvalues in a neighborhood of $\kappa_{\parallel}=0$; this is left to the interested reader.

The second main result of this paper demonstrates the robustness of the interface modes, found in Theorem \ref{thm_existence_interface_modes}, against symmetry-preserving perturbations. To state the result precisely, we first introduce some notation. We denote the isolation distance from the unperturbed interface eigenvalue $\lambda_{zig,n}$ to the complement of the bulk spectral gap by
\begin{equation} \label{eq_isolation_distance_interface_eigenvalue}
d_{zig,n}:=\text{dist}(\lambda_{zig,n},\mathcal{I}_{\delta}^{c})>0 .
\end{equation}
Clearly, by estimate \eqref{eq_interface_eigenvalue_estimate}, $d_{zig,n}$ is on the same scale as the bulk spectral gap, i.e., $\mathcal{O}(\delta)$. For the remaining paragraphs of this section, we will fix the parameter $\delta$. Now, we introduce the following class of perturbation Hamiltonians (examples are provided later).
\begin{definition} \label{def_class_A_perturbation}
The Hamiltonian $\mathcal{W}$ is in class (A) if the following conditions hold:
\begin{itemize}
    \item[(i)] (reflectional symmetry) $[\mathcal{W},\mathcal{F}_{x}]=0$;
    \item[(ii)] (longitudinal localization) there exists $M_{\mathcal{W}}\in (0,\infty)$ such that
    \begin{equation}  \label{eq_longi_loc_cond}
    \sup_{n_1\in\mathbb{Z}}\sum_{\substack{n_2\in\mathbb{Z} ,\bm{m}\in \Lambda}} \|\mathcal{W}(n_1\bm{\ell}_1+n_2\bm{\ell}_2,\bm{m})\|_{\mathbb{C}^{6\times 6}}<M_{\mathcal{W}}.
    \end{equation}
\end{itemize}
\end{definition}
Applying the perturbation to the interface model \eqref{eq_interface_Hamiltonian} produces the following new Hamiltonian:
\begin{equation} \label{eq_perturbed_interface_Hamiltonian}
    \mathcal{H}_{zig,\mathcal{W}}:=\mathcal{H}_{zig}+\mathcal{W}.
\end{equation}
Then the robustness of interface modes is stated as follows.
\begin{theorem}[Symmetry-protection of interface modes] \label{thm_robustness_interface_modes}
Suppose that the conditions in Theorem \ref{thm_existence_interface_modes} hold true, and $\mathcal{W}$ is of class (A). Assume that the unperturbed interface Hamiltonian with parallel momentum $\kappa_{\parallel}=\pm \pi$ does not have eigenvalue within $\mathcal{I}_{\delta}$ (i.e., the conclusions of Remark \ref{rmk_absence_eigenvalue_pi} hold) in the sense that
\begin{equation} \label{eq_absence_eigenvalue_pi} 
\text{Spec}\big(\mathcal{H}_{zig}\big|_{\mathcal{X}_{\pm \pi}}\big)\cap \mathcal{I}_{\delta}=\emptyset .
\end{equation}
Then, there exists $c_{\mathcal{W}}\in (0,\frac{1}{2})$ such that whenever
\begin{equation} \label{eq_perturbation_potential_isolation_distance}
M_{\mathcal{W}}<c_{\mathcal{W}}\min\{d_{zig,1},d_{zig,2}\},
\end{equation}
the perturbed Hamiltonian $\mathcal{H}_{zig,\mathcal{W}}$ has two in-gap generalized eigenvalues $\lambda_{zig,\mathcal{W},n}\in\mathcal{I}_{\delta}$ ($n=1,2$) with the associated eigenmodes $u_{zig,\mathcal{W},n}\in \ell^2_{loc}(\Lambda)\otimes \mathbb{C}^6$. For each $n$, the eigenmode $u_{zig,\mathcal{W},n}$ attains the same parity as $u_{zig,n}$ in the sense of \eqref{eq_interface_mode_parity}. Moreover, they admit the following decomposition:
\begin{equation*}
u_{zig,\mathcal{W},n}=u_{zig,n}+u_{zig,\mathcal{W},n}^{(1)}\quad \text{for some }u_{zig,\mathcal{W},n}^{(1)}\in \ell^2(\Lambda)\otimes \mathbb{C}^6.
\end{equation*}
In other words, the far-field profile of the perturbed interface mode $u_{zig,\mathcal{W},n}$ is the same as the unperturbed mode $u_{zig,n}$, modulo an $\ell^2$ localized scattering part $u_{zig,\mathcal{W},n}^{(1)}$.
\end{theorem}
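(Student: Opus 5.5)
The plan follows the periodic approximation strategy from the outline: build perturbed modes on strips that are periodic along $\bm{\ell}_2$ with growing period, bound them uniformly in the period, and pass to a limit. Throughout, fix $\delta$ and $n\in\{1,2\}$, and write $\lambda:=\lambda_{zig,n}$ and $u:=u_{zig,n}$. Let $P_\pm$ be the projections onto the $\mathcal{F}_x$-even and $\mathcal{F}_x$-odd subspaces. Since $\mathcal{H}_{zig}$ and $\mathcal{W}$ both commute with $\mathcal{F}_x$, the whole problem splits into these two sectors. I will work only in the sector containing $u$, as fixed by \eqref{eq_interface_mode_parity}; that $u_{zig,\mathcal{W},n}$ has the same parity as $u_{zig,n}$ then holds by construction.

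\textbf{Step 1 (truncation and periodization).} For $L\in\mathbb{N}$, let $\mathcal{W}_L$ be $\mathcal{W}$ cut off to $|n_2|\le L$ and extended $2L\bm{\ell}_2$-periodically. Because the cut is symmetric in $n_2$ and $F_x^{ext}$ reverses the $\bm{\ell}_2$-direction, $\mathcal{W}_L$ still commutes with $\mathcal{F}_x$, and by \eqref{eq_longi_loc_cond} we have $\|\mathcal{W}_L\|\le M_{\mathcal{W}}$ uniformly in $L$. On the space of $2L\bm{\ell}_2$-periodic functions that are $\ell^2$ in $n_1$ (a strip of width $2L$), the operator $\mathcal{H}_{zig}$ is Floquet-decomposed into the momenta $\kappa_\parallel=\pi j/L$. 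Its spectrum in $\mathcal{I}_\delta$ therefore consists of the samples $\lambda_{zig,m}(\pi j/L)$ of the interface curves from Corollary \ref{corol_interface_eigenvalue_curve}. The hypothesis \eqref{eq_absence_eigenvalue_pi}, together with continuity in $\kappa_\parallel$, keeps these samples away from the gap edges. The vector $u$, viewed as a $2L$-periodic function, is an eigenvector at $j=0$.

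\textbf{Step 2 (Lyapunov--Schmidt on the strip).} Work in the chosen parity sector. Decompose the strip space as $\mathrm{span}\{u\}\oplus\{u\}^\perp$ and seek $u_L=u+w_L$ with $w_L\perp u$ and eigenvalue $\lambda_L$. The equation for $w_L$ is
\begin{equation*}
(\mathcal{H}_{zig}+Q\mathcal{W}_LQ-\lambda_L)w_L=-Q\mathcal{W}_Lu,
\end{equation*}
where $Q$ is the projection onto $\{u\}^\perp$. It is coupled to the scalar equation $\lambda_L-\lambda=(u,\mathcal{W}_L(u+w_L))/\|u\|^2$, which I would solve by a contraction argument. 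The smallness \eqref{eq_perturbation_potential_isolation_distance} in terms of $d_{zig,n}$ is what keeps $\lambda_L$ inside $\mathcal{I}_\delta$ and away from its edges.

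\textbf{Step 3 (uniform estimate --- the main obstacle).} The delicate point is that $\{u\}^\perp$ still contains interface Bloch modes at momenta $\pi j/L$ with $j\neq0$. Their energies accumulate near $\lambda$ as $L\to\infty$, so the reduced resolvent is not uniformly bounded in operator norm. I would try to exploit the structure of the source: $\mathcal{W}u$ is summable along $\bm{\ell}_2$ by \eqref{eq_longi_loc_cond}, and $u$ is exponentially localized in $n_1$, so $Q\mathcal{W}_Lu$ is uniformly bounded in an $\ell^1$-type norm along the strip. I would then prove a uniform-in-$L$ bound on $w_L$ in a weighted (weak) norm, not an $\ell^2$ bound on the full strip. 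The approach is a limiting-absorption-type estimate on the one-dimensional interface channel combined with the bulk gap estimate in the transverse direction. The parity restriction matters here: $\mathcal{F}_x$ pairs the momenta $\kappa_\parallel$ and $-\kappa_\parallel$, and within a fixed parity sector only one combination of each pair survives. I expect this, together with the small size of $M_{\mathcal{W}}$ relative to $d_{zig,n}$, to control the near-resonant interface contributions. This is the step I expect to need the most work.

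\textbf{Step 4 (compactness and limit).} With $\sup_L\|w_L\|$ bounded in the weighted norm and $\lambda_L$ bounded, extract a subsequence with $\lambda_L\to\lambda_{zig,\mathcal{W},n}\in\mathcal{I}_\delta$ and $w_L\to u^{(1)}$ pointwise (a diagonal argument on finite windows). Since every operator has finite range $N$, we can pass to the limit in the equation locally, which gives $(\mathcal{H}_{zig}+\mathcal{W}-\lambda_{zig,\mathcal{W},n})(u+u^{(1)})=0$. The parity condition persists in the limit. It remains to upgrade the weak bound to $u^{(1)}\in\ell^2(\Lambda)\otimes\mathbb{C}^6$. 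For this I would use that the source $-\mathcal{W}u$ is localized. Away from the support of $\mathcal{W}$, the equation forces $u^{(1)}$ to decay: exponentially in the $n_1$-direction because of the bulk gap, and along the interface through the uniform estimate of Step 3 combined with Fatou's lemma.
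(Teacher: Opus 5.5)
Your architecture matches the paper's: periodize $\mathcal{W}$ along $\bm{\ell}_2$, restrict to one $\mathcal{F}_x$-parity sector, do a Lyapunov--Schmidt reduction about $u_{zig,n}$, prove an $L$-uniform bound, extract a limit by a diagonal argument, and pass to the limit using finite range. The paper gets its uniform bound from Lemma~\ref{lem_isolation_strip_spectrum}. That lemma asserts that $\lambda_{zig,e}$ is the \emph{only} in-gap eigenvalue of $\mathcal{H}_{zig}^{L,e}$, so $\|(\mathbb{Q}_{zig}^{L,e})^{-1}\|\le d_{zig,e}^{-1}$ and the correction in Proposition~\ref{prop_strip_interface_modes} is bounded in $\ell^2$ uniformly in $L$.

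Your Step 3 asserts the opposite, and you are right. The lemma's argument only shows that an eigenfunction lying in a \emph{single} $\mathcal{X}_{\kappa_{\parallel}}$ cannot be even unless $\kappa_{\parallel}\in\{0,\pm\pi\}$. But $\mathcal{F}_x$ maps $\mathcal{X}_{\kappa_{\parallel}}$ onto $\mathcal{X}_{-\kappa_{\parallel}}$ and commutes with $\mathcal{H}_{zig}$. So for every sampled $\kappa_{\parallel}\neq 0,\pm\pi$ and every in-gap eigenfunction $\psi\in\mathcal{X}_{\kappa_{\parallel}}$, the functions $\psi\pm\mathcal{F}_x\psi$ are nonzero even/odd eigenfunctions with the same eigenvalue. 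In particular, in each parity sector the samples $\lambda_{zig,n}(\kappa_{\parallel})$ at the smallest nonzero momenta lie within $O(1/L)$ of $\lambda_{zig,n}$.

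So the paper's route does not supply the estimate you are missing, and Step 3 is a genuine gap. Nothing is proved there. The parity restriction only halves multiplicities. The $L$-independent smallness $M_{\mathcal{W}}<c_{\mathcal{W}}d_{zig,n}$ cannot beat an isolation distance that tends to zero, which also breaks the uniform-in-$L$ contraction of Step 2. Finally, Step 4 uses Fatou's lemma to get $u^{(1)}\in\ell^2$, and that needs exactly the uniform $\ell^2$ bound Step 3 disclaims.

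The gap is not merely technical. Suppose $(\mathcal{H}_{zig,\mathcal{W}}-\lambda')(u_{zig,n}+u^{(1)})=0$ with $u^{(1)}\in\ell^2$ and $\mathcal{W}u_{zig,n}\in\ell^2$ (e.g.\ for finitely supported $\mathcal{W}$). Then $(\lambda_{zig,n}-\lambda')u_{zig,n}\in\ell^2$, which forces $\lambda'=\lambda_{zig,n}$. You must therefore solve $(\mathcal{H}_{zig}+\mathcal{W}-\lambda_{zig,n})u^{(1)}=-\mathcal{W}u_{zig,n}$ in $\ell^2$ at an energy lying on the interface band.

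A limiting-absorption estimate of the kind you sketch would at best give a correction that either carries outgoing interface waves (if the interface curves cross at $\kappa_{\parallel}=0$) or grows linearly along the interface (if $\kappa_{\parallel}=0$ is an extremum). Such terms vanish only under a non-generic non-scattering condition, which reflection symmetry and smallness do not provide. A one-dimensional toy case shows this: take $-\Delta+\epsilon\delta_0$ on $\ell^2(\mathbb{Z})$ with $u\equiv 1$ at the band edge $E=0$. The perturbation is symmetric and arbitrarily small, yet the even solutions at $E=0$ are the multiples of $1+\frac{\epsilon}{2}|n|$, so no even solution of the form $1+\ell^2$ exists at any energy.

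One should therefore expect the strip corrections to blow up in $\ell^2$ as $L\to\infty$. To close the argument you need either an extra hypothesis of this non-scattering type, or a weaker conclusion that lets the far field change.
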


\begin{remark}
One can remove \eqref{eq_absence_eigenvalue_pi} by paying the price of including the isolation distance of the in-gap eigenvalue at $\kappa_{\parallel}=\pm \pi$, i.e., $\text{dist}\big(\text{Spec}\big(\mathcal{H}_{zig}\big|_{\mathcal{X}_{\pm \pi}}\big)\cap \mathcal{I}_{\delta},\mathcal{I}_{\delta}^{c}\big)$, to the right side of \eqref{eq_perturbation_potential_isolation_distance}.
\end{remark}

\begin{example} \label{examp_symmetric_perturbation}
The simplest class (A) Hamiltonian is the one that describes the compact perturbation, as shown in Figure \ref{fig_pertrubed_interface_compact}, with the kernel specified as
\begin{equation*}
\mathcal{W}(\bm{n},\bm{m})_{i,j}=\left\{
\begin{aligned}
&\delta^2,\quad \|\bm{n}-\bm{m}\|_2\leq 1,\, \|\bm{n}\|_2\leq 1 \text{ or }\|\bm{m}\|_2\leq 1, \\
&0,\quad \text{otherwise}.
\end{aligned}
\right.
\end{equation*}
Or, more generally, the one that describes a line defect as in Figure \ref{fig_pertrubed_interface_line}:
\begin{equation*}
\mathcal{W}(\bm{n},\bm{m})_{i,j}=\left\{
\begin{aligned}
&\delta^2,\quad \|\bm{n}-\bm{m}\|_2\leq 1,\, |\bm{n}\cdot\bm{\ell}_2|\leq 1 \text{ or }|\bm{m}\cdot\bm{\ell}_2|\leq 1, \\
&0,\quad \text{otherwise}.
\end{aligned}
\right.
\end{equation*}
In both cases, $\mathcal{W}$ are localized longitudinally along $\mathbb{Z}\bm{\ell}_1$, as indicated by the $n_2$ summation in \eqref{eq_longi_loc_cond} in Definition \ref{def_class_A_perturbation}.
\end{example}

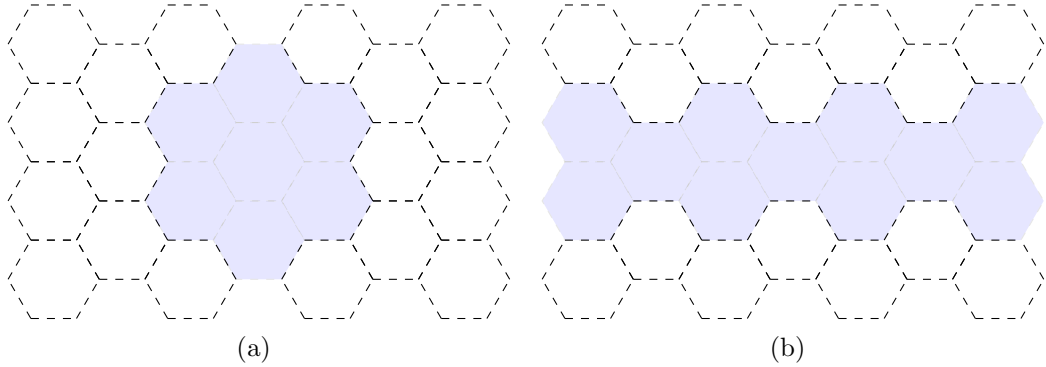
\begin{figure}
\centering
\subfigure[] { 
\label{fig_pertrubed_interface_compact}
\begin{tikzpicture}[scale=0.6]

\foreach \x in {1} { 
        \foreach \y in {1,-2} {
            \draw[dashed] ({-1/2+\x*3/2},{sqrt(3)/2+\x*sqrt(3)/2+\y*sqrt(3)})--({1/2+\x*3/2},{sqrt(3)/2+\x*sqrt(3)/2+\y*sqrt(3)});
            \draw[dashed] ({-1/2+\x*3/2},{-sqrt(3)/2+\x*sqrt(3)/2+\y*sqrt(3)})--({1/2+\x*3/2},{-sqrt(3)/2+\x*sqrt(3)/2+\y*sqrt(3)});
            \draw[dashed] ({-1+\x*3/2},{0+\x*sqrt(3)/2+\y*sqrt(3)})--({-1/2+\x*3/2},{sqrt(3)/2+\x*sqrt(3)/2+\y*sqrt(3)});
            \draw[dashed] ({1/2+\x*3/2},{-sqrt(3)/2+\x*sqrt(3)/2+\y*sqrt(3)})--({1+\x*3/2},{0+\x*sqrt(3)/2+\y*sqrt(3)});
            \draw[dashed] ({1+\x*3/2},{0+\x*sqrt(3)/2+\y*sqrt(3)})--({1/2+\x*3/2},{sqrt(3)/2+\x*sqrt(3)/2+\y*sqrt(3)});
            \draw[dashed] ({-1/2+\x*3/2},{-sqrt(3)/2+\x*sqrt(3)/2+\y*sqrt(3)})--({-1+\x*3/2},{0+\x*sqrt(3)/2+\y*sqrt(3)});

        }
    }

\foreach \x in {2} { 
        \foreach \y in {0,-1,-2} {
            \draw[dashed] ({-1/2+\x*3/2},{sqrt(3)/2+\x*sqrt(3)/2+\y*sqrt(3)})--({1/2+\x*3/2},{sqrt(3)/2+\x*sqrt(3)/2+\y*sqrt(3)});
            \draw[dashed] ({-1/2+\x*3/2},{-sqrt(3)/2+\x*sqrt(3)/2+\y*sqrt(3)})--({1/2+\x*3/2},{-sqrt(3)/2+\x*sqrt(3)/2+\y*sqrt(3)});
            \draw[dashed] ({-1+\x*3/2},{0+\x*sqrt(3)/2+\y*sqrt(3)})--({-1/2+\x*3/2},{sqrt(3)/2+\x*sqrt(3)/2+\y*sqrt(3)});
            \draw[dashed] ({1/2+\x*3/2},{-sqrt(3)/2+\x*sqrt(3)/2+\y*sqrt(3)})--({1+\x*3/2},{0+\x*sqrt(3)/2+\y*sqrt(3)});
            \draw[dashed] ({1+\x*3/2},{0+\x*sqrt(3)/2+\y*sqrt(3)})--({1/2+\x*3/2},{sqrt(3)/2+\x*sqrt(3)/2+\y*sqrt(3)});
            \draw[dashed] ({-1/2+\x*3/2},{-sqrt(3)/2+\x*sqrt(3)/2+\y*sqrt(3)})--({-1+\x*3/2},{0+\x*sqrt(3)/2+\y*sqrt(3)});

        }
    }

\foreach \x in {3} { 
        \foreach \y in {0,-1,-2,-3} {
            \draw[dashed] ({-1/2+\x*3/2},{sqrt(3)/2+\x*sqrt(3)/2+\y*sqrt(3)})--({1/2+\x*3/2},{sqrt(3)/2+\x*sqrt(3)/2+\y*sqrt(3)});
            \draw[dashed] ({-1/2+\x*3/2},{-sqrt(3)/2+\x*sqrt(3)/2+\y*sqrt(3)})--({1/2+\x*3/2},{-sqrt(3)/2+\x*sqrt(3)/2+\y*sqrt(3)});
            \draw[dashed] ({-1+\x*3/2},{0+\x*sqrt(3)/2+\y*sqrt(3)})--({-1/2+\x*3/2},{sqrt(3)/2+\x*sqrt(3)/2+\y*sqrt(3)});
            \draw[dashed] ({1/2+\x*3/2},{-sqrt(3)/2+\x*sqrt(3)/2+\y*sqrt(3)})--({1+\x*3/2},{0+\x*sqrt(3)/2+\y*sqrt(3)});
            \draw[dashed] ({1+\x*3/2},{0+\x*sqrt(3)/2+\y*sqrt(3)})--({1/2+\x*3/2},{sqrt(3)/2+\x*sqrt(3)/2+\y*sqrt(3)});
            \draw[dashed] ({-1/2+\x*3/2},{-sqrt(3)/2+\x*sqrt(3)/2+\y*sqrt(3)})--({-1+\x*3/2},{0+\x*sqrt(3)/2+\y*sqrt(3)});

        }
    }

\foreach \x in {0} { 
        \foreach \y in {0,-1,1} {
            \draw[dashed,fill=blue,opacity=0.1] ({-1/2+\x*3/2},{sqrt(3)/2+\x*sqrt(3)/2+\y*sqrt(3)})--({1/2+\x*3/2},{sqrt(3)/2+\x*sqrt(3)/2+\y*sqrt(3)})--({1+\x*3/2},{0+\x*sqrt(3)/2+\y*sqrt(3)})--({1/2+\x*3/2},{-sqrt(3)/2+\x*sqrt(3)/2+\y*sqrt(3)})--({-1/2+\x*3/2},{-sqrt(3)/2+\x*sqrt(3)/2+\y*sqrt(3)})--({-1+\x*3/2},{0+\x*sqrt(3)/2+\y*sqrt(3)})--({-1/2+\x*3/2},{sqrt(3)/2+\x*sqrt(3)/2+\y*sqrt(3)});

        }
    }

\foreach \x in {1} { 
        \foreach \y in {0,-1} {
            \draw[dashed,fill=blue,opacity=0.1] ({-1/2+\x*3/2},{sqrt(3)/2+\x*sqrt(3)/2+\y*sqrt(3)})--({1/2+\x*3/2},{sqrt(3)/2+\x*sqrt(3)/2+\y*sqrt(3)})--({1+\x*3/2},{0+\x*sqrt(3)/2+\y*sqrt(3)})--({1/2+\x*3/2},{-sqrt(3)/2+\x*sqrt(3)/2+\y*sqrt(3)})--({-1/2+\x*3/2},{-sqrt(3)/2+\x*sqrt(3)/2+\y*sqrt(3)})--({-1+\x*3/2},{0+\x*sqrt(3)/2+\y*sqrt(3)})--({-1/2+\x*3/2},{sqrt(3)/2+\x*sqrt(3)/2+\y*sqrt(3)});

        }
    }


\foreach \x in {-1} { 
        \foreach \y in {-1,2} {
            \draw[dashed] ({-1/2+\x*3/2},{sqrt(3)/2+\x*sqrt(3)/2+\y*sqrt(3)})--({1/2+\x*3/2},{sqrt(3)/2+\x*sqrt(3)/2+\y*sqrt(3)});
            \draw[dashed] ({-1/2+\x*3/2},{-sqrt(3)/2+\x*sqrt(3)/2+\y*sqrt(3)})--({1/2+\x*3/2},{-sqrt(3)/2+\x*sqrt(3)/2+\y*sqrt(3)});
            \draw[dashed] ({-1+\x*3/2},{0+\x*sqrt(3)/2+\y*sqrt(3)})--({-1/2+\x*3/2},{sqrt(3)/2+\x*sqrt(3)/2+\y*sqrt(3)});
            \draw[dashed] ({1/2+\x*3/2},{-sqrt(3)/2+\x*sqrt(3)/2+\y*sqrt(3)})--({1+\x*3/2},{0+\x*sqrt(3)/2+\y*sqrt(3)});
            \draw[dashed] ({1+\x*3/2},{0+\x*sqrt(3)/2+\y*sqrt(3)})--({1/2+\x*3/2},{sqrt(3)/2+\x*sqrt(3)/2+\y*sqrt(3)});
            \draw[dashed] ({-1/2+\x*3/2},{-sqrt(3)/2+\x*sqrt(3)/2+\y*sqrt(3)})--({-1+\x*3/2},{0+\x*sqrt(3)/2+\y*sqrt(3)});

        }
    }

\foreach \x in {-2} { 
        \foreach \y in {0,1,2} {
            \draw[dashed] ({-1/2+\x*3/2},{sqrt(3)/2+\x*sqrt(3)/2+\y*sqrt(3)})--({1/2+\x*3/2},{sqrt(3)/2+\x*sqrt(3)/2+\y*sqrt(3)});
            \draw[dashed] ({-1/2+\x*3/2},{-sqrt(3)/2+\x*sqrt(3)/2+\y*sqrt(3)})--({1/2+\x*3/2},{-sqrt(3)/2+\x*sqrt(3)/2+\y*sqrt(3)});
            \draw[dashed] ({-1+\x*3/2},{0+\x*sqrt(3)/2+\y*sqrt(3)})--({-1/2+\x*3/2},{sqrt(3)/2+\x*sqrt(3)/2+\y*sqrt(3)});
            \draw[dashed] ({1/2+\x*3/2},{-sqrt(3)/2+\x*sqrt(3)/2+\y*sqrt(3)})--({1+\x*3/2},{0+\x*sqrt(3)/2+\y*sqrt(3)});
            \draw[dashed] ({1+\x*3/2},{0+\x*sqrt(3)/2+\y*sqrt(3)})--({1/2+\x*3/2},{sqrt(3)/2+\x*sqrt(3)/2+\y*sqrt(3)});
            \draw[dashed] ({-1/2+\x*3/2},{-sqrt(3)/2+\x*sqrt(3)/2+\y*sqrt(3)})--({-1+\x*3/2},{0+\x*sqrt(3)/2+\y*sqrt(3)});

        }
    }

\foreach \x in {-3} { 
        \foreach \y in {0,1,2,3} {
            \draw[dashed] ({-1/2+\x*3/2},{sqrt(3)/2+\x*sqrt(3)/2+\y*sqrt(3)})--({1/2+\x*3/2},{sqrt(3)/2+\x*sqrt(3)/2+\y*sqrt(3)});
            \draw[dashed] ({-1/2+\x*3/2},{-sqrt(3)/2+\x*sqrt(3)/2+\y*sqrt(3)})--({1/2+\x*3/2},{-sqrt(3)/2+\x*sqrt(3)/2+\y*sqrt(3)});
            \draw[dashed] ({-1+\x*3/2},{0+\x*sqrt(3)/2+\y*sqrt(3)})--({-1/2+\x*3/2},{sqrt(3)/2+\x*sqrt(3)/2+\y*sqrt(3)});
            \draw[dashed] ({1/2+\x*3/2},{-sqrt(3)/2+\x*sqrt(3)/2+\y*sqrt(3)})--({1+\x*3/2},{0+\x*sqrt(3)/2+\y*sqrt(3)});
            \draw[dashed] ({1+\x*3/2},{0+\x*sqrt(3)/2+\y*sqrt(3)})--({1/2+\x*3/2},{sqrt(3)/2+\x*sqrt(3)/2+\y*sqrt(3)});
            \draw[dashed] ({-1/2+\x*3/2},{-sqrt(3)/2+\x*sqrt(3)/2+\y*sqrt(3)})--({-1+\x*3/2},{0+\x*sqrt(3)/2+\y*sqrt(3)});

        }
    }

\foreach \x in {-1} { 
        \foreach \y in {0,1} {
            \draw[dashed,fill=blue,opacity=0.1] ({-1/2+\x*3/2},{sqrt(3)/2+\x*sqrt(3)/2+\y*sqrt(3)})--({1/2+\x*3/2},{sqrt(3)/2+\x*sqrt(3)/2+\y*sqrt(3)})--({1+\x*3/2},{0+\x*sqrt(3)/2+\y*sqrt(3)})--({1/2+\x*3/2},{-sqrt(3)/2+\x*sqrt(3)/2+\y*sqrt(3)})--({-1/2+\x*3/2},{-sqrt(3)/2+\x*sqrt(3)/2+\y*sqrt(3)})--({-1+\x*3/2},{0+\x*sqrt(3)/2+\y*sqrt(3)})--({-1/2+\x*3/2},{sqrt(3)/2+\x*sqrt(3)/2+\y*sqrt(3)});

        }
    }
\end{tikzpicture}
}
\subfigure[] { 
\label{fig_pertrubed_interface_line}
\begin{tikzpicture}[scale=0.6]

\foreach \x in {0} { 
        \foreach \y in {1,-1} {
            \draw[dashed] ({-1/2+\x*3/2},{sqrt(3)/2+\x*sqrt(3)/2+\y*sqrt(3)})--({1/2+\x*3/2},{sqrt(3)/2+\x*sqrt(3)/2+\y*sqrt(3)});
            \draw[dashed] ({-1/2+\x*3/2},{-sqrt(3)/2+\x*sqrt(3)/2+\y*sqrt(3)})--({1/2+\x*3/2},{-sqrt(3)/2+\x*sqrt(3)/2+\y*sqrt(3)});
            \draw[dashed] ({-1+\x*3/2},{0+\x*sqrt(3)/2+\y*sqrt(3)})--({-1/2+\x*3/2},{sqrt(3)/2+\x*sqrt(3)/2+\y*sqrt(3)});
            \draw[dashed] ({1/2+\x*3/2},{-sqrt(3)/2+\x*sqrt(3)/2+\y*sqrt(3)})--({1+\x*3/2},{0+\x*sqrt(3)/2+\y*sqrt(3)});
            \draw[dashed] ({1+\x*3/2},{0+\x*sqrt(3)/2+\y*sqrt(3)})--({1/2+\x*3/2},{sqrt(3)/2+\x*sqrt(3)/2+\y*sqrt(3)});
            \draw[dashed] ({-1/2+\x*3/2},{-sqrt(3)/2+\x*sqrt(3)/2+\y*sqrt(3)})--({-1+\x*3/2},{0+\x*sqrt(3)/2+\y*sqrt(3)});

        }
    }
\foreach \x in {0} { 
        \foreach \y in {0} {
            \draw[dashed,fill=blue,opacity=0.1] ({-1/2+\x*3/2},{sqrt(3)/2+\x*sqrt(3)/2+\y*sqrt(3)})--({1/2+\x*3/2},{sqrt(3)/2+\x*sqrt(3)/2+\y*sqrt(3)})--({1+\x*3/2},{0+\x*sqrt(3)/2+\y*sqrt(3)})--({1/2+\x*3/2},{-sqrt(3)/2+\x*sqrt(3)/2+\y*sqrt(3)})--({-1/2+\x*3/2},{-sqrt(3)/2+\x*sqrt(3)/2+\y*sqrt(3)})--({-1+\x*3/2},{0+\x*sqrt(3)/2+\y*sqrt(3)})--({-1/2+\x*3/2},{sqrt(3)/2+\x*sqrt(3)/2+\y*sqrt(3)});

        }
    }

\foreach \x in {1} { 
        \foreach \y in {1,-2} {
            \draw[dashed] ({-1/2+\x*3/2},{sqrt(3)/2+\x*sqrt(3)/2+\y*sqrt(3)})--({1/2+\x*3/2},{sqrt(3)/2+\x*sqrt(3)/2+\y*sqrt(3)});
            \draw[dashed] ({-1/2+\x*3/2},{-sqrt(3)/2+\x*sqrt(3)/2+\y*sqrt(3)})--({1/2+\x*3/2},{-sqrt(3)/2+\x*sqrt(3)/2+\y*sqrt(3)});
            \draw[dashed] ({-1+\x*3/2},{0+\x*sqrt(3)/2+\y*sqrt(3)})--({-1/2+\x*3/2},{sqrt(3)/2+\x*sqrt(3)/2+\y*sqrt(3)});
            \draw[dashed] ({1/2+\x*3/2},{-sqrt(3)/2+\x*sqrt(3)/2+\y*sqrt(3)})--({1+\x*3/2},{0+\x*sqrt(3)/2+\y*sqrt(3)});
            \draw[dashed] ({1+\x*3/2},{0+\x*sqrt(3)/2+\y*sqrt(3)})--({1/2+\x*3/2},{sqrt(3)/2+\x*sqrt(3)/2+\y*sqrt(3)});
            \draw[dashed] ({-1/2+\x*3/2},{-sqrt(3)/2+\x*sqrt(3)/2+\y*sqrt(3)})--({-1+\x*3/2},{0+\x*sqrt(3)/2+\y*sqrt(3)});

        }
    }
\foreach \x in {1} { 
        \foreach \y in {0,-1} {
            \draw[dashed,fill=blue,opacity=0.1] ({-1/2+\x*3/2},{sqrt(3)/2+\x*sqrt(3)/2+\y*sqrt(3)})--({1/2+\x*3/2},{sqrt(3)/2+\x*sqrt(3)/2+\y*sqrt(3)})--({1+\x*3/2},{0+\x*sqrt(3)/2+\y*sqrt(3)})--({1/2+\x*3/2},{-sqrt(3)/2+\x*sqrt(3)/2+\y*sqrt(3)})--({-1/2+\x*3/2},{-sqrt(3)/2+\x*sqrt(3)/2+\y*sqrt(3)})--({-1+\x*3/2},{0+\x*sqrt(3)/2+\y*sqrt(3)})--({-1/2+\x*3/2},{sqrt(3)/2+\x*sqrt(3)/2+\y*sqrt(3)});

        }
    }

\foreach \x in {2} { 
        \foreach \y in {0,-2} {
            \draw[dashed] ({-1/2+\x*3/2},{sqrt(3)/2+\x*sqrt(3)/2+\y*sqrt(3)})--({1/2+\x*3/2},{sqrt(3)/2+\x*sqrt(3)/2+\y*sqrt(3)});
            \draw[dashed] ({-1/2+\x*3/2},{-sqrt(3)/2+\x*sqrt(3)/2+\y*sqrt(3)})--({1/2+\x*3/2},{-sqrt(3)/2+\x*sqrt(3)/2+\y*sqrt(3)});
            \draw[dashed] ({-1+\x*3/2},{0+\x*sqrt(3)/2+\y*sqrt(3)})--({-1/2+\x*3/2},{sqrt(3)/2+\x*sqrt(3)/2+\y*sqrt(3)});
            \draw[dashed] ({1/2+\x*3/2},{-sqrt(3)/2+\x*sqrt(3)/2+\y*sqrt(3)})--({1+\x*3/2},{0+\x*sqrt(3)/2+\y*sqrt(3)});
            \draw[dashed] ({1+\x*3/2},{0+\x*sqrt(3)/2+\y*sqrt(3)})--({1/2+\x*3/2},{sqrt(3)/2+\x*sqrt(3)/2+\y*sqrt(3)});
            \draw[dashed] ({-1/2+\x*3/2},{-sqrt(3)/2+\x*sqrt(3)/2+\y*sqrt(3)})--({-1+\x*3/2},{0+\x*sqrt(3)/2+\y*sqrt(3)});

        }
    }
\foreach \x in {2} { 
        \foreach \y in {-1} {
            \draw[dashed,fill=blue,opacity=0.1] ({-1/2+\x*3/2},{sqrt(3)/2+\x*sqrt(3)/2+\y*sqrt(3)})--({1/2+\x*3/2},{sqrt(3)/2+\x*sqrt(3)/2+\y*sqrt(3)})--({1+\x*3/2},{0+\x*sqrt(3)/2+\y*sqrt(3)})--({1/2+\x*3/2},{-sqrt(3)/2+\x*sqrt(3)/2+\y*sqrt(3)})--({-1/2+\x*3/2},{-sqrt(3)/2+\x*sqrt(3)/2+\y*sqrt(3)})--({-1+\x*3/2},{0+\x*sqrt(3)/2+\y*sqrt(3)})--({-1/2+\x*3/2},{sqrt(3)/2+\x*sqrt(3)/2+\y*sqrt(3)});

        }
    }

\foreach \x in {3} { 
        \foreach \y in {0,-3} {
            \draw[dashed] ({-1/2+\x*3/2},{sqrt(3)/2+\x*sqrt(3)/2+\y*sqrt(3)})--({1/2+\x*3/2},{sqrt(3)/2+\x*sqrt(3)/2+\y*sqrt(3)});
            \draw[dashed] ({-1/2+\x*3/2},{-sqrt(3)/2+\x*sqrt(3)/2+\y*sqrt(3)})--({1/2+\x*3/2},{-sqrt(3)/2+\x*sqrt(3)/2+\y*sqrt(3)});
            \draw[dashed] ({-1+\x*3/2},{0+\x*sqrt(3)/2+\y*sqrt(3)})--({-1/2+\x*3/2},{sqrt(3)/2+\x*sqrt(3)/2+\y*sqrt(3)});
            \draw[dashed] ({1/2+\x*3/2},{-sqrt(3)/2+\x*sqrt(3)/2+\y*sqrt(3)})--({1+\x*3/2},{0+\x*sqrt(3)/2+\y*sqrt(3)});
            \draw[dashed] ({1+\x*3/2},{0+\x*sqrt(3)/2+\y*sqrt(3)})--({1/2+\x*3/2},{sqrt(3)/2+\x*sqrt(3)/2+\y*sqrt(3)});
            \draw[dashed] ({-1/2+\x*3/2},{-sqrt(3)/2+\x*sqrt(3)/2+\y*sqrt(3)})--({-1+\x*3/2},{0+\x*sqrt(3)/2+\y*sqrt(3)});

        }
    }
\foreach \x in {3} { 
        \foreach \y in {-1,-2} {
            \draw[dashed,fill=blue,opacity=0.1] ({-1/2+\x*3/2},{sqrt(3)/2+\x*sqrt(3)/2+\y*sqrt(3)})--({1/2+\x*3/2},{sqrt(3)/2+\x*sqrt(3)/2+\y*sqrt(3)})--({1+\x*3/2},{0+\x*sqrt(3)/2+\y*sqrt(3)})--({1/2+\x*3/2},{-sqrt(3)/2+\x*sqrt(3)/2+\y*sqrt(3)})--({-1/2+\x*3/2},{-sqrt(3)/2+\x*sqrt(3)/2+\y*sqrt(3)})--({-1+\x*3/2},{0+\x*sqrt(3)/2+\y*sqrt(3)})--({-1/2+\x*3/2},{sqrt(3)/2+\x*sqrt(3)/2+\y*sqrt(3)});

        }
    }


\foreach \x in {-1} { 
        \foreach \y in {-1,2} {
            \draw[dashed] ({-1/2+\x*3/2},{sqrt(3)/2+\x*sqrt(3)/2+\y*sqrt(3)})--({1/2+\x*3/2},{sqrt(3)/2+\x*sqrt(3)/2+\y*sqrt(3)});
            \draw[dashed] ({-1/2+\x*3/2},{-sqrt(3)/2+\x*sqrt(3)/2+\y*sqrt(3)})--({1/2+\x*3/2},{-sqrt(3)/2+\x*sqrt(3)/2+\y*sqrt(3)});
            \draw[dashed] ({-1+\x*3/2},{0+\x*sqrt(3)/2+\y*sqrt(3)})--({-1/2+\x*3/2},{sqrt(3)/2+\x*sqrt(3)/2+\y*sqrt(3)});
            \draw[dashed] ({1/2+\x*3/2},{-sqrt(3)/2+\x*sqrt(3)/2+\y*sqrt(3)})--({1+\x*3/2},{0+\x*sqrt(3)/2+\y*sqrt(3)});
            \draw[dashed] ({1+\x*3/2},{0+\x*sqrt(3)/2+\y*sqrt(3)})--({1/2+\x*3/2},{sqrt(3)/2+\x*sqrt(3)/2+\y*sqrt(3)});
            \draw[dashed] ({-1/2+\x*3/2},{-sqrt(3)/2+\x*sqrt(3)/2+\y*sqrt(3)})--({-1+\x*3/2},{0+\x*sqrt(3)/2+\y*sqrt(3)});

        }
    }
\foreach \x in {-1} { 
        \foreach \y in {0,1} {
            \draw[dashed,fill=blue,opacity=0.1] ({-1/2+\x*3/2},{sqrt(3)/2+\x*sqrt(3)/2+\y*sqrt(3)})--({1/2+\x*3/2},{sqrt(3)/2+\x*sqrt(3)/2+\y*sqrt(3)})--({1+\x*3/2},{0+\x*sqrt(3)/2+\y*sqrt(3)})--({1/2+\x*3/2},{-sqrt(3)/2+\x*sqrt(3)/2+\y*sqrt(3)})--({-1/2+\x*3/2},{-sqrt(3)/2+\x*sqrt(3)/2+\y*sqrt(3)})--({-1+\x*3/2},{0+\x*sqrt(3)/2+\y*sqrt(3)})--({-1/2+\x*3/2},{sqrt(3)/2+\x*sqrt(3)/2+\y*sqrt(3)});

        }
    }

\foreach \x in {-2} { 
        \foreach \y in {0,2} {
            \draw[dashed] ({-1/2+\x*3/2},{sqrt(3)/2+\x*sqrt(3)/2+\y*sqrt(3)})--({1/2+\x*3/2},{sqrt(3)/2+\x*sqrt(3)/2+\y*sqrt(3)});
            \draw[dashed] ({-1/2+\x*3/2},{-sqrt(3)/2+\x*sqrt(3)/2+\y*sqrt(3)})--({1/2+\x*3/2},{-sqrt(3)/2+\x*sqrt(3)/2+\y*sqrt(3)});
            \draw[dashed] ({-1+\x*3/2},{0+\x*sqrt(3)/2+\y*sqrt(3)})--({-1/2+\x*3/2},{sqrt(3)/2+\x*sqrt(3)/2+\y*sqrt(3)});
            \draw[dashed] ({1/2+\x*3/2},{-sqrt(3)/2+\x*sqrt(3)/2+\y*sqrt(3)})--({1+\x*3/2},{0+\x*sqrt(3)/2+\y*sqrt(3)});
            \draw[dashed] ({1+\x*3/2},{0+\x*sqrt(3)/2+\y*sqrt(3)})--({1/2+\x*3/2},{sqrt(3)/2+\x*sqrt(3)/2+\y*sqrt(3)});
            \draw[dashed] ({-1/2+\x*3/2},{-sqrt(3)/2+\x*sqrt(3)/2+\y*sqrt(3)})--({-1+\x*3/2},{0+\x*sqrt(3)/2+\y*sqrt(3)});

        }
    }
\foreach \x in {-2} { 
        \foreach \y in {1} {
            \draw[dashed,fill=blue,opacity=0.1] ({-1/2+\x*3/2},{sqrt(3)/2+\x*sqrt(3)/2+\y*sqrt(3)})--({1/2+\x*3/2},{sqrt(3)/2+\x*sqrt(3)/2+\y*sqrt(3)})--({1+\x*3/2},{0+\x*sqrt(3)/2+\y*sqrt(3)})--({1/2+\x*3/2},{-sqrt(3)/2+\x*sqrt(3)/2+\y*sqrt(3)})--({-1/2+\x*3/2},{-sqrt(3)/2+\x*sqrt(3)/2+\y*sqrt(3)})--({-1+\x*3/2},{0+\x*sqrt(3)/2+\y*sqrt(3)})--({-1/2+\x*3/2},{sqrt(3)/2+\x*sqrt(3)/2+\y*sqrt(3)});

        }
    }

\foreach \x in {-3} { 
        \foreach \y in {0,3} {
            \draw[dashed] ({-1/2+\x*3/2},{sqrt(3)/2+\x*sqrt(3)/2+\y*sqrt(3)})--({1/2+\x*3/2},{sqrt(3)/2+\x*sqrt(3)/2+\y*sqrt(3)});
            \draw[dashed] ({-1/2+\x*3/2},{-sqrt(3)/2+\x*sqrt(3)/2+\y*sqrt(3)})--({1/2+\x*3/2},{-sqrt(3)/2+\x*sqrt(3)/2+\y*sqrt(3)});
            \draw[dashed] ({-1+\x*3/2},{0+\x*sqrt(3)/2+\y*sqrt(3)})--({-1/2+\x*3/2},{sqrt(3)/2+\x*sqrt(3)/2+\y*sqrt(3)});
            \draw[dashed] ({1/2+\x*3/2},{-sqrt(3)/2+\x*sqrt(3)/2+\y*sqrt(3)})--({1+\x*3/2},{0+\x*sqrt(3)/2+\y*sqrt(3)});
            \draw[dashed] ({1+\x*3/2},{0+\x*sqrt(3)/2+\y*sqrt(3)})--({1/2+\x*3/2},{sqrt(3)/2+\x*sqrt(3)/2+\y*sqrt(3)});
            \draw[dashed] ({-1/2+\x*3/2},{-sqrt(3)/2+\x*sqrt(3)/2+\y*sqrt(3)})--({-1+\x*3/2},{0+\x*sqrt(3)/2+\y*sqrt(3)});

        }
    }
\foreach \x in {-3} { 
        \foreach \y in {1,2} {
            \draw[dashed,fill=blue,opacity=0.1] ({-1/2+\x*3/2},{sqrt(3)/2+\x*sqrt(3)/2+\y*sqrt(3)})--({1/2+\x*3/2},{sqrt(3)/2+\x*sqrt(3)/2+\y*sqrt(3)})--({1+\x*3/2},{0+\x*sqrt(3)/2+\y*sqrt(3)})--({1/2+\x*3/2},{-sqrt(3)/2+\x*sqrt(3)/2+\y*sqrt(3)})--({-1/2+\x*3/2},{-sqrt(3)/2+\x*sqrt(3)/2+\y*sqrt(3)})--({-1+\x*3/2},{0+\x*sqrt(3)/2+\y*sqrt(3)})--({-1/2+\x*3/2},{sqrt(3)/2+\x*sqrt(3)/2+\y*sqrt(3)});

        }
    }
\end{tikzpicture}
}
\caption{Two examples of perturbed interface structures. (a) a compact defect, (b) a line defect. Both perturbations are $\mathcal{F}_x$ symmetric and localized in the longitudinal direction.}
\label{fig_interface structure_perturbed}
\end{figure}

Based on the knowledge from topological insulator theory, especially the bulk-edge correspondence, it is well-known that if the bulk Hamiltonians $\mathcal{H}_{\pm\delta}$ possess distinct topological characters, then the interface modes are topologically protected in the sense that they are robust against a wide range of perturbations. However, as we emphasized in the introduction, a nontrivial topological character is not easy to achieve. Importantly, \textit{the construction of interface modes in Theorem \ref{thm_existence_interface_modes} does not assume the existence of a nontrivial topological character for the bulk Hamiltonian}. In such cases, it is expected that the interface modes resulting from the band inversion mechanism are only robust against a limited class of perturbations \cite{fu2011crystalline}. This idea is proved rigorously by Theorem \ref{thm_robustness_interface_modes}, where we demonstrate that the interface modes arising in our system are \textit{symmetry-protected}, in the sense that they are robust if the perturbation is reflectional symmetric. To the best of our knowledge, Theorem \ref{thm_robustness_interface_modes} is the first rigorous result on the robustness of interface modes in the absence of topological protection.

Remarkably, as indicated in Example \ref{examp_symmetric_perturbation}, Theorem \ref{thm_robustness_interface_modes} shows that the interface modes proved in Theorem \ref{thm_existence_interface_modes} have the ability to pass a line-barrier and are free of scattering in the far field; we hope that this theoretical result will be manifested in experiments.

The main difficulty in proving Theorem \ref{thm_robustness_interface_modes} lies in the fact that the interface eigenvalue generally belongs to the continuous spectrum of $\mathcal{H}_{zig}$, following from the periodicity along the interface \cite{Sobolev02absolute}, which prevents a direct implementation of eigenvalue perturbation theory. This problem is handled by a periodic approximation argument, which has been extensively applied to study solitary waves in nonlinear systems \cite{pankov2010soliton_discrete,pelinovsky2011localization,Pankov2005soliton_continuous,qiu2025nonlinear}. The idea is that we first impose the periodic boundary condition on a strip that transects the interface and has a width equal to $L>0$, and then, due to the periodic boundary condition, the perturbation theory is available to determine the interface modes inside such strips. Finally, by a uniform estimate on the norm of the interface modes, we can prove the convergence of strip-interface modes as $L\to\infty$ and hence conclude the proof of Theorem \ref{thm_robustness_interface_modes}. The details are presented in Section \ref{sec_robustness}.

\begin{remark}
Although the Hamiltonians in this paper are assumed to be finite-range, we remark that the main results, such as Theorem \ref{thm_existence_interface_modes}, can be generalized to include Hamiltonians with sufficiently fast (e.g., exponential) off-diagonal decay. This can be achieved by 1) truncating the Hamiltonians to be $N-$range, 2) utilizes Theorem \ref{thm_existence_interface_modes} to show the existence of interface modes for each $N$, and finally 3) proves the convergence of interface modes as $N\to \infty$. The key point in this approximation argument is an $N-$uniform estimate of gap size, i.e., Theorem \ref{thm_gap_open} and thus the error induced by truncating Hamiltonians, which is indeed available by estimating the constant \eqref{eq_gap_open_criterion}. We leave this extension to the interested reader.
\end{remark}

\section{Double Dirac Cone at the $\Gamma$ Point}
\label{sec_representation_analysis}

In this section, we prove Proposition \ref{prop_local_flat_double_eigenvalue} and Theorem \ref{thm_double_cone} by exploiting the symmetry properties of the bulk Hamiltonian $\mathcal{H}_b$.

\subsection{Preliminaries}
\label{sec_prelim_representation}

We briefly review some aspects of the representation theory and perturbation theory used in the proof, mainly following the lines of \cite{berkolaiko2018symmetry}.

Consider a general Hamiltonian $\mathcal{H}$ acting on $\mathcal{X}$. Let $\mathcal{G}$ be a finite group of unitary operators that act on $\mathcal{X}$ and commute with $\mathcal{H}$. It is well known that there is an isotypic decomposition of $\mathcal{X}$ into a finite orthogonal sum of subspaces, each carrying copies of an irrep $\rho$ of $\mathcal{G}$. To be specific, 
\begin{equation*}
\mathcal{X}=\oplus_{\rho}\mathcal{X}(\rho).
\end{equation*}
For each $u\in \mathcal{X}(\rho)$, the dimension of orbit $\mathcal{G}u$ is equal to the dimension of $\rho$. Since $\mathcal{H}$ commutes with $\mathcal{G}$, $\mathcal{H}$ is invariant on each isotypic component $\mathcal{X}(\rho)$. Consequently, the multiplicities of the eigenvalues of $\mathcal{H}\big|_{\mathcal{X}(\rho)}$ are divided by the dimension of $\rho$. We note that the above arguments also work on the quasi-periodic function space $\mathcal{X}_{\bm{\kappa}}$, whenever i) the group $\mathcal{G}$ and the Hamiltonian $\mathcal{H}$ are both invariant on $\mathcal{X}_{\bm{\kappa}}$ and ii) $[\mathcal{H},\mathcal{G}]=0$.

Let us now fix the discussion to the bulk Hamiltonian $\mathcal{H}_b$. Suppose that the group $\mathcal{G}$ satisfies conditions i) and ii). For each $g\in\mathcal{G}$, its counterpart $\hat{g}$ in the momentum space is defined as the unitary operator on $\mathbb{R}^2$ that satisfies
\begin{equation} \label{eq_general_momentum_symmetry}
    g\mathcal{H}_b(\bm{\kappa})g^{-1}=\mathcal{H}_b(\hat{g}\bm{\kappa}).
\end{equation}
For example, $\hat{\mathcal{R}}_6=(R^{ext}_6)^{-1}$ by \eqref{eq_symm_momen_sapce_2}. Let $[\hat{g}]=([\hat{g}]_{ij})$ be the matrix representation of $\hat{g}$ under the dual basis $\{\bm{\ell}_1^*,\bm{\ell}_2^*\}$. For $\bm{\kappa}_*$ being a fixed point of $\hat{g}$, we set $\bm{\kappa}=\bm{\kappa}_*+\kappa_1\cdot \bm{\ell}_1^*+\kappa_2\cdot \bm{\ell}_2^*$ and expand \eqref{eq_general_momentum_symmetry} to the first order term,
\begin{equation*}
g\big(\kappa_{1}\cdot \frac{\partial \mathcal{H}_{b}(\bm{\kappa}_*)}{\partial \kappa_1}+\kappa_{2}\cdot \frac{\partial \mathcal{H}_{b}(\bm{\kappa}_*)}{\partial \kappa_2}\big)g^{-1}
=\kappa_1\cdot \sum_{i=1,2}[\hat{g}]_{i1}\frac{\partial \mathcal{H}_{b}(\bm{\kappa}_*)}{\partial \kappa_i}+\kappa_2\cdot \sum_{i=1,2}[\hat{g}]_{i2}\frac{\partial \mathcal{H}_{b}(\bm{\kappa}_*)}{\partial \kappa_i}.
\end{equation*}
Separation of variables leads to
\begin{equation} \label{eq_momentum_symmetry_derivative}
g\cdot \frac{\partial \mathcal{H}_{b}(\bm{\kappa}_*)}{\partial \kappa_j}\cdot g^{-1}
=\sum_{i=1,2}[\hat{g}]_{ij}\frac{\partial \mathcal{H}_{b}(\bm{\kappa}_*)}{\partial \kappa_i},\quad j=1,2.
\end{equation}
Let $u_{1},\cdots,u_{n}$ be a basis of the representation $\rho(g)$ on $\mathcal{X}_{\bm{\kappa}}(\rho)$, i.e.,
\begin{equation} \label{eq_partner_definition}
g(u_{1}\,\cdots\,u_{n})^{\top}=\rho(g)(u_{1}\,\cdots\,u_{n})^{\top}.
\end{equation}
Then, by conjugating both sides of \eqref{eq_momentum_symmetry_derivative} with $(u_{1}\,\cdots\,u_{n})$, we obtain
\begin{equation} \label{eq_peturb_matrix_symmetry}
\rho(g)^{-1}H_j\rho(g)=\sum_{i=1,2}[\hat{g}]_{ij}H_i,\quad j=1,2,
\end{equation}
where the matrix $H_j$ is defined as
\begin{equation} \label{eq_sec2_6}
H_j:= \Big[\big( u_k,\frac{\partial \mathcal{H}_{b}(\bm{\kappa}_*)}{\partial \kappa_j}u_p\big)_{\mathcal{X}_{\bm{\kappa}_*}}\Big]_{1\leq k,p\leq n} .
\end{equation}
Note that equation \eqref{eq_peturb_matrix_symmetry} holds for all $g\in\mathcal{G}$. In the following section, we will see how the symmetry property \eqref{eq_peturb_matrix_symmetry} simplifies the expression of matrices $H_i$, which is essential to apply the perturbation theory. In fact, the first-order expansion of the dispersion surface near $(\bm{\kappa}_*,\lambda_*)$ is calculated by solving the following equation (see, e.g. \cite[Section 5]{qiu2024square_lattice}):
\begin{equation} \label{eq_first_order_spectrum_det}
\det(\kappa_1\cdot H_1+ \kappa_2\cdot H_2-\lambda\cdot\mathbb{I}_{2\times 2})=0.
\end{equation}
Thus, our methodology for studying the band structure of $\mathcal{H}_b$ near a high symmetry point $\bm{\kappa}_*$ is summarized as follows:
\begin{enumerate}
    \item Find the symmetry group $\mathcal{G}$ that commutes with $\mathcal{H}_b$;
    \item Fix an eigenvalue $\lambda_*\in\text{Spec}(\mathcal{H}_{b}(\bm{\kappa}_*))$. Then, find \textit{a unique irrep $\rho$ of $\mathcal{G}$ that fits the symmetry of eigenfunctions of 
    $\mathcal{H}_{b}(\bm{\kappa}_*)$ with the eigenvalue $\lambda_*$} in the sense of \eqref{eq_partner_definition};
    \item Apply \eqref{eq_peturb_matrix_symmetry} for all $g\in\mathcal{G}$ to solve the Hermitian matrices $H_i$;
    \item Substitute $H_i$ obtained in the third step into equation \eqref{eq_first_order_spectrum_det}. Then, the solution gives the desired first-order expansion of the dispersion surface.
\end{enumerate}

\subsection{Proof of Proposition \ref{prop_local_flat_double_eigenvalue}}
\label{sec_local_flat}
Now we apply the framework outlined in Section \ref{sec_prelim_representation} to prove Proposition \ref{prop_local_flat_double_eigenvalue}. For the high-symmetry point $\bm{\kappa}_*=\bm{0}$ and the choice of the $C_{6v}$ symmetry group $\mathcal{G}=\mathcal{S}$, it is direct to check that all the assumptions listed in Section \ref{sec_prelim_representation} are satisfied.

To proceed, we note that $\mathcal{S}$ has two distinct two-dimensional irreps on $\mathcal{X}_{\bm{0}}$:
\begin{equation*}
\rho_1:\quad \mathcal{R}_6\mapsto
\begin{pmatrix}
\tau & 0 \\ 0 & \overline{\tau}
\end{pmatrix},\quad
\mathcal{F}_x\mapsto
\begin{pmatrix}
0 & 1 \\ 1 & 0
\end{pmatrix},
\end{equation*}
and
\begin{equation*}
\rho_2:\quad \mathcal{R}_6\mapsto
\begin{pmatrix}
\tau^2 & 0 \\ 0 & \overline{\tau}^2
\end{pmatrix},\quad
\mathcal{F}_x\mapsto
\begin{pmatrix}
0 & 1 \\ 1 & 0
\end{pmatrix},
\end{equation*}
where $\tau=e^{i\frac{\pi}{3}}$. Next, suppose that $\lambda_*\in \text{Spec}(\mathcal{H}_{b}(\bm{0}))$ is an eigenvalue with multiplicity two, with associated eigenfunctions denoted by $u_1(\cdot;\bm{0})$ and $u_2(\cdot;\bm{0})$. Under the assumptions of Proposition \ref{prop_local_flat_double_eigenvalue}, these two eigenfunctions form a basis of either $\rho_1$ or $\rho_2$. We will show that in the former case, the dispersion surfaces must be locally flat near $\lambda_*$; the proof of the latter is similar.

To apply \eqref{eq_peturb_matrix_symmetry}, note that the matrices of $\hat{\mathcal{R}}_6$ and $\hat{\mathcal{F}}_x$ under the dual basis $\{\bm{\ell}_1^*,\bm{\ell}_2^*\}$ are given by
\begin{equation*}
[\hat{\mathcal{R}}_6]=
\begin{pmatrix}
0 & 1 \\
-1 & 1 
\end{pmatrix},\quad
[\hat{\mathcal{F}}_x]=
\begin{pmatrix}
1 & -1 \\
0 & -1
\end{pmatrix}.
\end{equation*}
With $\rho=\rho_1$, \eqref{eq_peturb_matrix_symmetry} becomes
\begin{equation} \label{eq_sec2_1}
\begin{pmatrix}
\overline{\tau} & 0 \\ 0 & \tau
\end{pmatrix}
H_1
\begin{pmatrix}
\tau & 0 \\ 0 & \overline{\tau}
\end{pmatrix}
=-H_2,\quad
\begin{pmatrix}
\overline{\tau} & 0 \\ 0 & \tau
\end{pmatrix}
H_2
\begin{pmatrix}
\tau & 0 \\ 0 & \overline{\tau}
\end{pmatrix}
=H_1+H_2,
\end{equation}
and
\begin{equation} \label{eq_sec2_2}
\begin{pmatrix}
0 & 1 \\ 1 & 0
\end{pmatrix}
H_1
\begin{pmatrix}
0 & 1 \\ 1 & 0
\end{pmatrix}
=H_1,\quad
\begin{pmatrix}
0 & 1 \\ 1 & 0
\end{pmatrix}
H_2
\begin{pmatrix}
0 & 1 \\ 1 & 0
\end{pmatrix}
=-H_1-H_2.
\end{equation}
Since $H_i$ are Hermitian, one directly calculates from \eqref{eq_sec2_1} and \eqref{eq_sec2_2} that $H_i=0$. Hence, \eqref{eq_first_order_spectrum_det} indicates that the dispersion surfaces are locally flat near $(\bm{0},\lambda_*)$. This concludes the proof of Proposition \ref{prop_local_flat_double_eigenvalue}.

\subsection{Proof of Theorem \ref{thm_double_cone}}
\label{sec_double_cone}
Under the assumptions of Theorem \ref{thm_double_cone}, one can check that the Bloch Hamiltonian $\mathcal{H}_{b}(\bm{0})$ is invariant under the following group 
\begin{equation} \label{eq_super_symmetry_group}
\begin{aligned}
\tilde{\mathcal{S}}=\langle \mathcal{R}_6,\mathcal{F}_x,\tilde{\mathcal{T}}\big|_{\mathcal{X}_{\bm{0}}} :&\mathcal{R}_6^6=F^2_x=\Big(\tilde{\mathcal{T}}\big|_{\mathcal{X}_{\bm{0}}}\Big)^3=id,\\
&\mathcal{R}_6\mathcal{F}_x=\mathcal{F}_x\mathcal{R}_6^{-1},\, \mathcal{F}_x\tilde{\mathcal{T}}\big|_{\mathcal{X}_{\bm{0}}}=\tilde{\mathcal{T}}\big|_{\mathcal{X}_{\bm{0}}}\mathcal{F}_x,\,
\mathcal{R}_6\tilde{\mathcal{T}}\big|_{\mathcal{X}_{\bm{0}}}=\Big(\tilde{\mathcal{T}}\big|_{\mathcal{X}_{\bm{0}}}\Big)^{-1}\mathcal{R}_6\rangle ,
\end{aligned}
\end{equation}
with the super-symmetry operator $\tilde{\mathcal{T}}$ defined in \eqref{eq_super_symmetry}, and $\tilde{\mathcal{T}}\big|_{\mathcal{X}_{\bm{0}}}$ denoting its restriction to the periodic function space $\mathcal{X}_{\bm{\kappa}=\bm{0}}$. Remarkably, one can check that the finite group $\tilde{\mathcal{S}}$ has a unique four-dimensional irrep built by pinching the two-dimensional irreps $\rho_1$ and $\rho_2$ of $\mathcal{S}$:
\begin{equation} \label{eq_4d_irrep}
\tilde{\rho}:\mathcal{R}_6\mapsto
\begin{pmatrix}
\tau & 0 & 0 & 0 \\
0 & \overline{\tau} & 0 & 0 \\
0 & 0 & \tau^2 & 0 \\
0 & 0 & 0 & \overline{\tau}^2
\end{pmatrix},\quad
\mathcal{F}_x\mapsto
\begin{pmatrix}
0 & 1 & 0 & 0 \\
1 & 0 & 0 & 0 \\
0 & 0 & 0 & 1 \\
0 & 0 & 1 & 0 \\
\end{pmatrix},\quad
\tilde{\mathcal{T}}\big|_{\mathcal{X}_{\bm{0}}}\mapsto
\begin{pmatrix}
-\frac{1}{2} & 0 & 0 & \frac{\sqrt{3}}{2}i \\
0 & -\frac{1}{2} & \frac{\sqrt{3}}{2}i & 0 \\
0 & \frac{\sqrt{3}}{2}i & -\frac{1}{2} & 0 \\
\frac{\sqrt{3}}{2}i & 0 & 0 & -\frac{1}{2} \\
\end{pmatrix}.
\end{equation}
Suppose that $\lambda_*\in \text{Spec}(\mathcal{H}_{b}(\bm{\kappa_*}))$ is an eigenvalue with multiplicity four, with the associated eigenfunctions denoted by $u_j(\cdot;\bm{\kappa_*})$ for $1\leq j\leq 4$, which form a basis of the representation $\tilde{\rho}$. In that case, \eqref{eq_peturb_matrix_symmetry} becomes
\begin{equation} \label{eq_sec2_3}
\tilde{\rho}(\mathcal{R})^{-1}
H_1
\tilde{\rho}(\mathcal{R})
=-H_2,\quad
\tilde{\rho}(\mathcal{R})^{-1}
H_2
\tilde{\rho}(\mathcal{R})
=H_1+H_2,
\end{equation}
\begin{equation} \label{eq_sec2_4}
\tilde{\rho}(\mathcal{F})^{-1}
H_1
\tilde{\rho}(\mathcal{F})
=H_1,\quad
\tilde{\rho}(\mathcal{F})^{-1}
H_2
\tilde{\rho}(\mathcal{F})
=-H_1-H_2,
\end{equation}
\begin{equation} \label{eq_sec2_5}
\tilde{\rho}\big(\tilde{\mathcal{T}}\big|_{\mathcal{X}_{\bm{0}}}\big)^{-1}
H_1
\tilde{\rho}\big(\tilde{\mathcal{T}}\big|_{\mathcal{X}_{\bm{0}}}\big)
=H_1,\quad
\tilde{\rho}\big(\tilde{\mathcal{T}}\big|_{\mathcal{X}_{\bm{0}}}\big)^{-1}
H_2
\tilde{\rho}\big(\tilde{\mathcal{T}}\big|_{\mathcal{X}_{\bm{0}}}\big)
=H_2.
\end{equation}
With \eqref{eq_4d_irrep}, one can solve from \eqref{eq_sec2_3}-\eqref{eq_sec2_5} that
\begin{equation} \label{eq_sec2_7}
H_1=\begin{pmatrix}
0 & 0 & \alpha_* & 0 \\
0 & 0 & 0 & \alpha_* \\
\alpha_* & 0 & 0 & 0 \\
0 & \alpha_* & 0 & 0 \\
\end{pmatrix},\quad
H_2=\begin{pmatrix}
0 & 0 & \overline{\tau}^2\alpha_* & 0 \\
0 & 0 & 0 & \tau^2\alpha_* \\
\tau^2\alpha_* & 0 & 0 & 0 \\
0 & \overline{\tau}^2\alpha_* & 0 & 0 \\
\end{pmatrix}.
\end{equation}
with $\alpha_*:=\Big(u_{1}(\cdot;\bm{0}),\frac{\partial \mathcal{H}_b}{\partial \kappa_1}(\bm{0})u_{3}(\cdot;\bm{0}) \Big)_{\mathcal{X}_{\bm{0}}}\in \mathbb{R}$. Hence, when $\alpha_*\neq 0$, \eqref{eq_first_order_spectrum_det} has four solutions
\begin{equation*}
\begin{aligned}
\lambda_1(\bm{\kappa})=-\frac{\sqrt{3}}{2}|\alpha_*||\bm{\kappa}|, \quad \lambda_2(\bm{\kappa})=-\frac{\sqrt{3}}{2}|\alpha_*||\bm{\kappa}|,\quad
\lambda_3(\bm{\kappa})=\frac{\sqrt{3}}{2}|\alpha_*||\bm{\kappa}|,\quad
\lambda_4(\bm{\kappa})=\frac{\sqrt{3}}{2}|\alpha_*||\bm{\kappa}|,
\end{aligned}
\end{equation*}
which are the linear approximation of four dispersion surfaces that intersect conically.

\section{Band-gap Opening and Asymptotic Expansions of Floquet-Bloch Eigenpairs}
\label{sec_gap_open}

The main result of this paper is the asymptotic expansions of Floquet-Bloch eigenpairs of the perturbed bulk Hamiltonian $\mathcal{H}_{\pm\delta}(\bm{\kappa})$ for small $\bm{\kappa}$ and $\delta$, as shown below in Theorem \ref{thm_asymptotics_eigenpairs}. As we point out in Remark \ref{rmk_corallary_local_gap}, the band gap opening claimed in Theorem \ref{thm_gap_open} follows directly from Theorem \ref{thm_asymptotics_eigenpairs}. More importantly, the asymptotic expansions of Floquet-Bloch eigenmodes demonstrate the appearance of band inversion in the bulk Hamiltonians when we tune $-\delta\to \delta$, which is the fundamental mechanism leading to the interface modes that separates such two bulk Hamiltonians in distinct phases; see Remark \ref{rmk_phase_transition} for details.

\begin{theorem}
\label{thm_asymptotics_eigenpairs}
Suppose that the conditions in Theorem \ref{thm_double_cone}, Assumption \ref{assum_symmetry_break} and the non-degeneracy condition \eqref{eq_gap_open_criterion} hold true. Then there exist $K_0,\Delta_0>0$ such that if $|\bm{\kappa}|<K_0$ and $\delta<\Delta_0$, the first four Floquet-Bloch eigenvalues of $\mathcal{H}_{\pm\delta}(\bm{\kappa})$ admit the following expansions:
\begin{equation} \label{eq_asymptotics_eigenvalue}
\begin{aligned}
\lambda_{1,\pm\delta}(\bm{\kappa})&=\lambda_*-\sqrt{\beta_{*}^2\delta^2+\frac{3}{4}|\alpha_*|^2|\bm{\kappa}|^2}\cdot(1+\mathcal{O}(|\bm{\kappa}|+\delta)), \\
\lambda_{2,\pm\delta}(\bm{\kappa})&=\lambda_*-\sqrt{\beta_{*}^2\delta^2+\frac{3}{4}|\alpha_*|^2|\bm{\kappa}|^2}\cdot(1+\mathcal{O}(|\bm{\kappa}|+\delta)), \\
\lambda_{3,\pm\delta}(\bm{\kappa})&=\lambda_*+\sqrt{\beta_{*}^2\delta^2+\frac{3}{4}|\alpha_*|^2|\bm{\kappa}|^2}\cdot(1+\mathcal{O}(|\bm{\kappa}|+\delta)), \\
\lambda_{4,\pm\delta}(\bm{\kappa})&=\lambda_*+\sqrt{\beta_{*}^2\delta^2+\frac{3}{4}|\alpha_*|^2|\bm{\kappa}|^2}\cdot(1+\mathcal{O}(|\bm{\kappa}|+\delta)).
\end{aligned}
\end{equation}
Moreover, there exist $r_{n,\pm\delta}(\cdot;\bm{\kappa})\in \mathcal{X}_{\bm{\kappa}}$ ($n=1,2,3,4$) with the estimate
\begin{equation*}
\|r_{n,\pm\delta}(\cdot;\bm{\kappa})\|_{\mathcal{X}_{\bm{\kappa}}}=\mathcal{O}(|\bm{\kappa}|+\delta)
\end{equation*}
such that with in the same range $|\bm{\kappa}|<K_0$ and $\delta<\Delta_0$, the Floquet-Bloch eigenfunctions associated with \eqref{eq_asymptotics_eigenvalue} admit the following asymptotics:
\begin{equation} \label{eq_asymptotics_eigenfunction_positive}
\begin{aligned}
u_{1,\delta}(\cdot;\bm{\kappa})&=-\frac{(\kappa_1+\overline{\tau}^2\kappa_2)\alpha_*}{\beta_{*}\delta+\sqrt{\beta_{*}^2\delta^2+\frac{3}{4}|\alpha_*|^2|\bm{\kappa}|^2}}u_1(\cdot)+u_3(\cdot)+r_{1,\delta}(\cdot;\bm{\kappa}), \\
u_{2,\delta}(\cdot;\bm{\kappa})&=-\frac{(\kappa_1+\tau^2\kappa_2)\alpha_*}{\beta_{*}\delta+\sqrt{\beta_{*}^2\delta^2+\frac{3}{4}|\alpha_*|^2|\bm{\kappa}|^2}}u_2(\cdot)+u_4(\cdot)+r_{2,\delta}(\cdot;\bm{\kappa}), \\
u_{3,\delta}(\cdot;\bm{\kappa})&=
u_2(\cdot)+\frac{(\kappa_1+\overline{\tau}^2\kappa_2)\alpha_*}{\beta_{*}\delta+\sqrt{\beta_{*}^2\delta^2+\frac{3}{4}|\alpha_*|^2|\bm{\kappa}|^2}}u_4(\cdot)+r_{3,\delta}(\cdot;\bm{\kappa}), \\
u_{4,\delta}(\cdot;\bm{\kappa})&=u_1(\cdot)+\frac{(\kappa_1+\tau^2\kappa_2)\alpha_*}{\beta_{*}\delta+\sqrt{\beta_{*}^2\delta^2+\frac{3}{4}|\alpha_*|^2|\bm{\kappa}|^2}}u_3(\cdot)+r_{4,\delta}(\cdot;\bm{\kappa}),
\end{aligned}
\end{equation}
and
\begin{equation} \label{eq_asymptotics_eigenfunction_negative}
\begin{aligned}
u_{1,-\delta}(\cdot;\bm{\kappa})&=
u_1(\cdot)-\frac{(\kappa_1+\tau^2\kappa_2)\alpha_*}{\beta_{*}\delta+\sqrt{\beta_{*}^2\delta^2+\frac{3}{4}|\alpha_*|^2|\bm{\kappa}|^2}}u_3(\cdot)+r_{1,-\delta}(\cdot;\bm{\kappa}), \\
u_{2,-\delta}(\cdot;\bm{\kappa})&=u_2(\cdot)-\frac{(\kappa_1+\overline{\tau}^2\kappa_2)\alpha_*}{\beta_{*}\delta+\sqrt{\beta_{*}^2\delta^2+\frac{3}{4}|\alpha_*|^2|\bm{\kappa}|^2}}u_4(\cdot)+r_{2,-\delta}(\cdot;\bm{\kappa})), \\
u_{3,-\delta}(\cdot;\bm{\kappa})&=
\frac{(\kappa_1+\tau^2\kappa_2)\alpha_*}{\beta_{*}\delta+\sqrt{\beta_{*}^2\delta^2+\frac{3}{4}|\alpha_*|^2|\bm{\kappa}|^2}}u_2(\cdot)+u_4(\cdot)+r_{3,-\delta}(\cdot;\bm{\kappa}), \\
u_{4,-\delta}(\cdot;\bm{\kappa})&=\frac{(\kappa_1+\overline{\tau}^2\kappa_2)\alpha_*}{\beta_{*}\delta+\sqrt{\beta_{*}^2\delta^2+\frac{3}{4}|\alpha_*|^2|\bm{\kappa}|^2}}u_1(\cdot)+u_3(\cdot)+r_{4,-\delta}(\cdot;\bm{\kappa}),
\end{aligned}
\end{equation}
where $\bm{\kappa}=\kappa_1\cdot\bm{\ell}_1^*+\kappa_2\cdot\bm{\ell}_2^*$ is expanded in the dual vectors, and the eigenfunctions $\{u_1,u_2,u_3,u_4\}$ form a basis of the four-dimensional representation $\tilde{\rho}$ introduced in \eqref{eq_4d_irrep}.
\end{theorem}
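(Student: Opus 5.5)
The plan is a Lyapunov--Schmidt (Schur complement) reduction of $\mathcal{H}_{\pm\delta}(\bm{\kappa})-\lambda$ onto the four-dimensional eigenspace $V_*:=\text{span}\{u_1,u_2,u_3,u_4\}$ of $\mathcal{H}_b(\bm{0})$ at $\lambda_*$. Write $\mathcal{X}_{\bm{\kappa}}\simeq\mathbb{C}^6=V_*\oplus V_*^{\perp}$, with $P$ and $Q=I-P$ the orthogonal projections. The plan relies on \eqref{eq_no_fold} and \eqref{eq_isolated_bands}, which give that $Q(\mathcal{H}_b(\bm{0})-\lambda)Q$ is invertible on $V_*^{\perp}$ with a uniformly bounded inverse for $|\lambda-\lambda_*|$ small. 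The first step is to expand
$$
\mathcal{H}_{\pm\delta}(\bm{\kappa})=\mathcal{H}_b(\bm{0})+\kappa_1\tfrac{\partial\mathcal{H}_b}{\partial\kappa_1}(\bm{0})+\kappa_2\tfrac{\partial\mathcal{H}_b}{\partial\kappa_2}(\bm{0})\pm\delta\mathcal{H}_{per}(\bm{0})+\mathcal{O}(|\bm{\kappa}|^2+\delta|\bm{\kappa}|),
$$
which is smooth in $\bm{\kappa}$ since the Hamiltonians have finite range. By a Neumann series, for $|\bm{\kappa}|+\delta$ small the eigenvalue problem reduces to $\det M(\lambda,\bm{\kappa},\delta)=0$, where the $4\times4$ matrix is
$$
M=(\lambda_*-\lambda)I+\kappa_1H_1+\kappa_2H_2\pm\delta B+\mathcal{O}\big((|\bm{\kappa}|+\delta)^2\big).
$$
Here $H_1,H_2$ are given by \eqref{eq_sec2_7} and $B:=[(u_k,\mathcal{H}_{per}(\bm{0})u_p)]_{k,p}$.

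The second step computes $B$ by symmetry. By Assumption \ref{assum_symmetry_break}, $\mathcal{H}_{per}(\bm{0})$ commutes with $\mathcal{R}_6$ and $\mathcal{F}_x$, so $\tilde\rho(g)^{-1}B\tilde\rho(g)=B$ for $g\in\mathcal{S}$. Since $\{u_1,u_2\}$ and $\{u_3,u_4\}$ carry the inequivalent irreps $\rho_1,\rho_2$, Schur's lemma forces $B=\text{diag}(\beta_{*,1},\beta_{*,1},\beta_{*,3},\beta_{*,3})$. Hypothesis \eqref{eq_gap_open_criterion} then gives $B=\beta_*\,\text{diag}(1,1,-1,-1)$. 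The leading-order matrix therefore decouples into two $2\times2$ blocks, one on $\{u_1,u_3\}$ and one on $\{u_2,u_4\}$, of the form
$$
\begin{pmatrix}\pm\beta_*\delta & (\kappa_1+\overline{\tau}^2\kappa_2)\alpha_*\\ (\kappa_1+\tau^2\kappa_2)\alpha_* & \mp\beta_*\delta\end{pmatrix},
$$
and similarly for the other pair with $\tau\leftrightarrow\overline\tau$. Using $|\kappa_1+\tau^2\kappa_2|^2=\frac34|\bm{\kappa}|^2$ in the normalization implicit in Theorem \ref{thm_double_cone}, each block has eigenvalues $\pm\sqrt{\beta_*^2\delta^2+\frac34|\alpha_*|^2|\bm{\kappa}|^2}$. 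The leading-order eigenvectors come out explicitly. For the lower eigenvalue of the $+\delta$ block, the eigenvector is proportional to $-\frac{(\kappa_1+\overline\tau^2\kappa_2)\alpha_*}{\beta_*\delta+\sqrt{\cdot}}u_1+u_3$. The upper one is $u_1+\frac{(\ldots)\alpha_*}{\beta_*\delta+\sqrt{\cdot}}u_3$. The $-\delta$ case swaps these roles, which is the band inversion. Both eigenvectors are normalized so that the coefficient ratio stays bounded by $1$, since $|(\kappa_1+\tau^2\kappa_2)\alpha_*|\le\sqrt{\cdot}$.

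The third and hardest step is to control the remainder uniformly in the ratio $|\bm{\kappa}|/\delta$. The naive error $\mathcal{O}((|\bm{\kappa}|+\delta)^2)$ must be relative to the gap scale $\sqrt{\beta_*^2\delta^2+\frac34|\alpha_*|^2|\bm{\kappa}|^2}\asymp|\bm{\kappa}|+\delta$, so that one gets the factor $(1+\mathcal{O}(|\bm{\kappa}|+\delta))$. I would rescale by $s:=|\bm{\kappa}|+\delta$. Writing $M=s\,(\tilde M_0(\hat{\bm{\kappa}},\hat\delta)+\mathcal{O}(s))-(\lambda-\lambda_*)I$, where $(\hat{\bm{\kappa}},\hat\delta)$ ranges over a compact set, the leading matrix $\tilde M_0$ has two doubly degenerate eigenvalues separated by a gap bounded below uniformly on that compact set (this uses $\beta_*,\alpha_*\ne0$).

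The obstacle is the residual double degeneracy within each pair. One must check that the $\mathcal{O}(s)$ correction does not mix the $\{u_1,u_3\}$ block with the $\{u_2,u_4\}$ block at a rate that spoils the eigenvector formulas. I would handle this with Riesz projections for the two clusters. This gives the eigenvalues with relative error $\mathcal{O}(s)$, and the cluster projections to $\mathcal{O}(s)$. Within each two-dimensional cluster I would then choose the basis by projecting the explicit leading vectors above. This yields the stated forms of $u_{n,\pm\delta}$ with $\|r_{n,\pm\delta}\|=\mathcal{O}(|\bm{\kappa}|+\delta)$, including the $Q$-component $-(Q(\mathcal{H}-\lambda)Q)^{-1}Q(\mathcal{H}-\mathcal{H}_b(\bm 0))P$, which is also $\mathcal{O}(s)$. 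Labeling the Bloch bands of a cluster by the particular basis vectors, rather than by ordered eigenvalues, is exactly what the statement permits.
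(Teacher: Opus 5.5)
Your reduction is the paper's. It is a Lyapunov--Schmidt (Schur-complement) step onto $\text{span}\{u_1,\dots,u_4\}$ with a Neumann series on the complement, and it takes $H_1,H_2$ from \eqref{eq_sec2_7}. The diagonal form of the $\mathcal{H}_{per}$ block also comes from $\mathcal{S}$-invariance, as in the paper: Lemma~\ref{lem_delta_perturbation_matrix} does a direct computation, and your Schur's-lemma argument for the inequivalent irreps $\rho_1,\rho_2$ is the same fact.

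You genuinely differ in how the remainder is controlled. Set $s=|\bm{\kappa}|+\delta$ and $E=\sqrt{\beta_*^2\delta^2+\frac34|\alpha_*|^2|\bm{\kappa}|^2}$. The paper writes $\det(\mathcal{N}^{(0)}+\mathcal{N}^{(1)})=((\lambda^{(1)})^2-E^2)^2+g$ with $g=\mathcal{O}(s^5)$. It then appeals to an implicit-function argument from earlier work and ``solves for the $c_i$''. Since $\pm E$ are double roots, the bound on $g$ by itself only gives $|\lambda^{(1)}\mp E|=\mathcal{O}(s^{3/2})$. It also does not determine eigenvectors inside a two-dimensional near-kernel. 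Reaching the stated $\mathcal{O}(s^2)$ error needs the self-adjoint structure. Your route rescales by $s$ and applies Weyl and Riesz-projection arguments to the two clusters, made uniform by compactness of the set of directions $(\hat{\bm{\kappa}},\hat\delta)$ (using $\alpha_*\neq0$, $\beta_*>0$). This uses exactly that structure. It delivers the relative $\mathcal{O}(s)$ eigenvalue error and $\mathcal{O}(s)$ cluster projections uniformly in $|\bm{\kappa}|/\delta$, so it is the more robust argument.

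Your reading of the eigenfunction part is also the right one. Nothing in the hypotheses forces the two eigenvalues of a cluster to coincide, and when they do not, the eigenvectors are fixed. For example, on the mirror line $\kappa_2=0$ they are, to leading order, $\mathcal{F}_x$-parity combinations of the two displayed vectors, as in the parity argument of Remark~\ref{rmk_construction_analytic_eigenfunction}. So the formulas can only describe a basis of each two-dimensional cluster subspace, and that is all Section~\ref{sec_sec5} uses.

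One minor correction: \eqref{eq_no_fold} and \eqref{eq_isolated_bands} are not hypotheses of this theorem, and you do not need them. Invertibility of $Q(\mathcal{H}_b(\bm{0})-\lambda)Q$ on $V_*^{\perp}$ near $\lambda_*$ already follows from $\lambda_*$ having multiplicity exactly four.
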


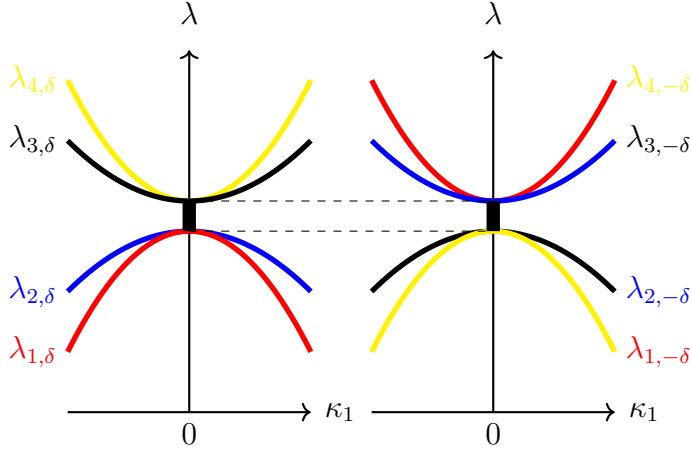
\begin{figure}
\begin{center}
\begin{tikzpicture}[scale=0.2]
\draw[thick,->] (-8,0)--(8,0);
\draw[thick,->] (0,0)--(0,24);
\node[below] at (0,0) {$0$};
\node[right] at (8.2,0) {$\kappa_1$};
\node[above] at (0,25) {$\lambda$};

\draw[yellow,line width=2pt,domain=-8:8, samples=50] plot(\x,{14+(\x)^2/8});
\node[left,yellow] at (-8,22) {$\lambda_{4,\delta}$};
\draw[line width=2pt,domain=-8:8, samples=50] plot(\x,{14+(\x)^2/16});
\node[left] at (-8,18) {$\lambda_{3,\delta}$};
\draw[blue,line width=2pt,domain=-8:8, samples=50] plot(\x,{12-(\x)^2/16});
\node[left,blue] at (-8,8) {$\lambda_{2,\delta}$};
\draw[red,line width=2pt,domain=-8:8, samples=50] plot(\x,{12-(\x)^2/8});
\node[left,red] at (-8,4) {$\lambda_{1,\delta}$};

\draw[thick,->] (12,0)--(28,0);
\draw[thick,->] (20,0)--(20,24);
\node[below] at (20,0) {$0$};
\node[right] at (28.2,0) {$\kappa_1$};
\node[above] at (20,25) {$\lambda$};

\draw[red,line width=2pt,domain=12:28, samples=50] plot(\x,{14+(\x-20)^2/8});
\node[right,yellow] at (28,22) {$\lambda_{4,-\delta}$};
\draw[blue,line width=2pt,domain=12:28, samples=50] plot(\x,{14+(\x-20)^2/16});
\node[right] at (28,18) {$\lambda_{3,-\delta}$};
\draw[line width=2pt,domain=12:28, samples=50] plot(\x,{12-(\x-20)^2/16});
\node[right,blue] at (28,8) {$\lambda_{2,-\delta}$};
\draw[yellow,line width=2pt,domain=12:28, samples=50] plot(\x,{12-(\x-20)^2/8});
\node[right,red] at (28,4) {$\lambda_{1,-\delta}$};

\draw[dashed] (0,14)--(20,14);
\draw[dashed] (0,12)--(20,12);
\draw[line width=5pt] (0,14)--(0,12);
\draw[line width=5pt] (20,14)--(20,12);
\end{tikzpicture}
\caption{Dispersion surfaces of $\mathcal{H}_{\pm\delta}$ (sliced along $\kappa_2=0$). A common band gap (the bold part on the $\lambda-$axis) is opened for both $\mathcal{H}_{\delta}$ and $\mathcal{H}_{-\delta}$. The switching of colours indicates the appearance of band inversion.}
\label{fig_gap_open}
\end{center}
\end{figure}

\begin{remark} \label{rmk_corallary_local_gap}
In Figure \ref{fig_gap_open}, we sketch the dispersion surfaces of $\mathcal{H}_{\pm\delta}$ as described in Theorem \ref{thm_asymptotics_eigenpairs}. One sees that there is a ``local" band gap $\mathcal{I}_\delta$ (defined in Theorem \ref{thm_gap_open}) opened near $\bm{\kappa}=\bm{0}$. Consequently, by the spectral no-fold condition \eqref{eq_no_fold}, it is direct to see that $\mathcal{I}_\delta$ is indeed a `global' band gap of $\mathcal{H}_{\pm\delta}$ as we claimed in Theorem \ref{thm_gap_open}.
\end{remark}

\begin{remark} \label{rmk_phase_transition}
Modulo the higher-order terms, we see from \eqref{eq_asymptotics_eigenfunction_positive} and \eqref{eq_asymptotics_eigenfunction_negative} that
\begin{equation*}
\begin{aligned}
&\text{span}\{u_{1,\delta}(\cdot;\bm{0}),u_{2,\delta}(\cdot;\bm{0})\}=\text{span}\{u_{3,-\delta}(\cdot;\bm{0}),u_{4,-\delta}(\cdot;\bm{0})\}\in \mathcal{X}_{\bm{0}}(\rho_{2}),\\
&\text{span}\{u_{3,\delta}(\cdot;\bm{0}),u_{4,\delta}(\cdot;\bm{0})\}=\text{span}\{u_{1,-\delta}(\cdot;\bm{0}),u_{2,-\delta}(\cdot;\bm{0})\}\in \mathcal{X}_{\bm{0}}(\rho_{1}),
\end{aligned}
\end{equation*}
where $\rho_i$ are the two-dimensional irreps of $C_{6v}$ introduced in Section \ref{sec_local_flat}. This symmetry-switching of eigenspaces between two perturbed systems is called \textit{band inversion} in the physics literature, which typically leads to a sign-changing of Berry curvature and hence is usually deemed as a phase transition (see, e.g., \cite[Section 5.1.1.]{vanderbilt2018berry}). For the interface model with band inversion across the interface, 
the emergence of interface modes as a manifestation of a phase transition is generally expected. This has been rigorously proved for a variety of systems; see the references listed in Section \ref{sec_background}. In particular, as we will see later, the band inversion in our setup is manifested clearly by the special expression of the boundary matching operator to solve the interface modes; see Remark \ref{rmk_absence_band_inversion}.
\end{remark}

\subsection{Proof of Theorem \ref{thm_asymptotics_eigenpairs}}

We aim to solve the following eigenvalue problem in a neighborhood of $(\bm{\kappa},\lambda)=(\bm{0},\lambda_*)$ using perturbation theory
\begin{equation} \label{eq_sec3_1}
\mathcal{H}_{\pm\delta}(\bm{\kappa})u=\lambda u,\quad u\in \mathcal{X}_{\bm{\kappa}}.
\end{equation}
In order to do so, we introduce the following unitary transformation of $\mathcal{H}_{\pm\delta}$, which has the advantages of having a $\bm{\kappa}-$independent domain and hence is a more appropriate candidate for perturbation analysis
\begin{equation} \label{eq_sec3_2}
\begin{aligned}
\underline{\mathcal{H}}_{\pm\delta}(\bm{\kappa}):=e^{-i\bm{\kappa}\cdot \bm{n}} \circ \mathcal{H}_{\pm\delta}(\bm{\kappa})\circ e^{i\bm{\kappa}\cdot \bm{n}}  : \quad &\mathcal{X}_{\bm{0}} \to \mathcal{X}_{\bm{0}}, \\
&u(\bm{n})\mapsto \sum_{\bm{m}\in\Lambda}e^{i\bm{\kappa}\cdot (\bm{n}-\bm{m})}\mathcal{H}_{\pm\delta}(\bm{n},\bm{m})u(\bm{m}).
\end{aligned}
\end{equation}
By unitary equivalence, it is now sufficient to solve the eigenvalue problem of the transformed Hamiltonian instead of \eqref{eq_sec3_1}
\begin{equation} \label{eq_sec3_3}
\underline{\mathcal{H}}_{\pm\delta}(\bm{\kappa})u=\lambda u,\quad u\in \mathcal{X}_{\bm{0}}.
\end{equation}
We note that the transformed Hamiltonians $\underline{\mathcal{H}}_{\pm\delta}(\bm{\kappa})$ depend analytically on both $\bm{\kappa}$ and $\delta$. As a consequence, the stability theorem of the eigenvalue problem for self-adjoint operators indicates that $\underline{\mathcal{H}}_{\pm\delta}(\bm{\kappa})$ has the same number of eigenvalues (counted with their multiplicities) as $\underline{\mathcal{H}}_{\delta=0}(\bm{0})=\underline{\mathcal{H}}_{b}(\bm{0})$ near $\lambda_*$ whenever $\bm{\kappa}$ and $\delta$ are sufficiently small. Hence, we conclude that \eqref{eq_sec3_3} has four solutions because $\lambda_*$ is an eigenvalue of $\underline{\mathcal{H}}_{\pm\delta}(\bm{\kappa})$ with multiplicity four by assumption. In the sequel, we calculate explicit expressions of the solutions to \eqref{eq_sec3_3}, which then completes the proof of Theorem \ref{thm_asymptotics_eigenpairs}. This proceeds in several steps.

{\color{blue}Step 1}. As the first step of perturbation analysis, we expand $\underline{\mathcal{H}}_{\pm\delta,\bm{\kappa}}$ to the first order:
\begin{equation} \label{eq_sec3_4}
\underline{\mathcal{H}}_{\pm\delta}(\bm{\kappa})
=\underline{\mathcal{H}}^{(0,0)}+\kappa_1 \underline{\mathcal{H}}^{(1,0)}_{1}+\kappa_2 \underline{\mathcal{H}}^{(1,0)}_{2} \pm \delta \underline{\mathcal{H}}^{(0,1)} + \underline{\mathcal{H}}_{\pm\delta}^{rem}(\bm{\kappa}),
\end{equation}
where
\begin{equation*}
\underline{\mathcal{H}}^{(0,0)}:=\underline{\mathcal{H}}_{b}(\bm{0}),\quad
\underline{\mathcal{H}}^{(1,0)}_{j}=\partial_{\kappa_j}\underline{\mathcal{H}}_{b}(\bm{0}),\quad
\underline{\mathcal{H}}^{(0,1)}=\underline{\mathcal{H}}_{per},
\end{equation*}
and the remainder $\underline{\mathcal{H}}_{\pm\delta,\bm{\kappa}}^{rem}$ includes all higher-order terms with the following estimate:\footnote{This bound follows directly from the expression \eqref{eq_sec3_2} and the fact that all involved Hamiltonians are finite-range.}
\begin{equation*}
\|\underline{\mathcal{H}}_{\pm\delta}^{rem}(\bm{\kappa})\|_{\mathcal{B}(\mathcal{X}_{\bm{0}})}=\mathcal{O}(|\bm{\kappa}|^2+\delta|\bm{\kappa}|).
\end{equation*}
As we did in Section \ref{sec_representation_analysis}, we introduce the following first-order perturbation matrices:
\begin{equation*}
\underline{H}^{(1,0)}_{j}:=\Big[\big(u_i,\underline{\mathcal{H}}^{(1,0)}_{j}u_k \big)_{\mathbb{C}^6}\Big]_{1\leq i,k\leq 4},\quad
\underline{H}^{(0,1)}:=\Big[\big(u_i,\underline{\mathcal{H}}^{(0,1)}u_k \big)_{\mathbb{C}^6} \Big]_{1\leq i,k\leq 4}
\end{equation*}
with $u_i:=u_i(\cdot;\bm{\kappa}=0)$ being the unperturbed Floquet-Bloch eigenmodes. Note that, by the unitary equivalence \eqref{eq_sec3_2}, the matrices $\underline{H}^{(1,0)}_{j}$ are equal to those $H_j$ introduced in \eqref{eq_sec2_7}; in other words,
\begin{equation} \label{eq_sec3_5}
\underline{H}^{(1,0)}_{1}=\begin{pmatrix}
0 & 0 & \alpha_* & 0 \\
0 & 0 & 0 & \alpha_* \\
\alpha_* & 0 & 0 & 0 \\
0 & \alpha_* & 0 & 0 \\
\end{pmatrix},\quad
\underline{H}^{(1,0)}_{2}=\begin{pmatrix}
0 & 0 & \overline{\tau}^2\alpha_* & 0 \\
0 & 0 & 0 & \tau^2\alpha_* \\
\tau^2\alpha_* & 0 & 0 & 0 \\
0 & \overline{\tau}^2\alpha_* & 0 & 0 \\
\end{pmatrix},\quad
\alpha_*\in \mathbb{R}.
\end{equation}
We also need an explicit form of $\underline{H}^{(0,1)}$, which is given as follows.
\begin{lemma}
\label{lem_delta_perturbation_matrix}
Under Assumption \ref{assum_symmetry_break}, we have
\begin{equation} \label{eq_sec3_6}
\underline{H}^{(0,1)}=
\begin{pmatrix}
\beta_{*,1} & 0 & 0 & 0 \\
0 & \beta_{*,1} & 0 & 0 \\
0 & 0 & \beta_{*,3} & 0 \\
0 & 0 & 0 & \beta_{*,3} \\
\end{pmatrix}
,\quad \beta_{*,k}:=\big(u_{k},\mathcal{H}_{per}(\bm{0}) u_{k} \big)_{\mathcal{X}_{\bm{0}}}\in\mathbb{R}. 
\end{equation}
\end{lemma}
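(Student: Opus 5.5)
The approach is to run the same symmetry argument as in Section \ref{sec_prelim_representation}, but now for the $\bm{\kappa}$-independent perturbation $\mathcal{H}_{per}(\bm{0})$. By Assumption \ref{assum_symmetry_break}, $\mathcal{H}_{per}$ is $\Lambda$-periodic and commutes with every $\mathcal{O}\in\mathcal{S}$. Hence its Floquet restriction $\mathcal{H}_{per}(\bm{0})=\underline{\mathcal{H}}^{(0,1)}$ commutes with $\mathcal{R}_6$ and $\mathcal{F}_x$ acting on $\mathcal{X}_{\bm{0}}$. It does not, in general, commute with $\tilde{\mathcal{T}}|_{\mathcal{X}_{\bm{0}}}$. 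The analogue of \eqref{eq_peturb_matrix_symmetry} with $\hat g=\mathrm{id}$ therefore reads
\begin{equation*}
\tilde{\rho}(g)^{-1}\,\underline{H}^{(0,1)}\,\tilde{\rho}(g)=\underline{H}^{(0,1)},\qquad g\in\{\mathcal{R}_6,\mathcal{F}_x\}.
\end{equation*}
Here $\tilde\rho$ restricted to $\mathcal{S}$ is $\rho_1\oplus\rho_2$, read off from \eqref{eq_4d_irrep}. The basis $u_1,\dots,u_4$ is the one adapted to \eqref{eq_4d_irrep}: $(u_1,u_2)$ carries $\rho_1$ and $(u_3,u_4)$ carries $\rho_2$.

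The first step uses $\mathcal{R}_6$. Since $\tilde\rho(\mathcal{R}_6)=\mathrm{diag}(\tau,\overline\tau,\tau^2,\overline\tau^2)$ has four distinct eigenvalues (as sixth roots of unity, $\tau,\tau^5,\tau^2,\tau^4$ are distinct), any matrix commuting with it must be diagonal. This is just Schur's lemma in concrete form, because $\rho_1\not\simeq\rho_2$ kill the off-block entries. It also kills the within-block entries, since the characters $\tau,\overline\tau$ differ. So $\underline{H}^{(0,1)}=\mathrm{diag}(b_1,b_2,b_3,b_4)$.

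The second step uses $\mathcal{F}_x$. Conjugating by $\tilde\rho(\mathcal{F}_x)$, which swaps $1\leftrightarrow2$ and $3\leftrightarrow4$, gives $b_1=b_2$ and $b_3=b_4$. Hermiticity of $\mathcal{H}_{per}(\bm{0})$, which holds because $\mathcal{H}_{per}$ is self-adjoint, makes the diagonal entries real. Finally, the diagonal entries are $(u_k,\underline{\mathcal{H}}^{(0,1)}u_k)_{\mathbb{C}^6}$. This coincides with $(u_k,\mathcal{H}_{per}(\bm{0})u_k)_{\mathcal{X}_{\bm{0}}}$ by the definition of the inner product on $\mathcal{X}_{\bm{0}}$ and the fact that the gauge transform \eqref{eq_sec3_2} is the identity at $\bm{\kappa}=\bm{0}$. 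This identifies $b_1=\beta_{*,1}$ and $b_3=\beta_{*,3}$, giving \eqref{eq_sec3_6}.

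The only delicate point is bookkeeping. One must check that the precise group action in \eqref{eq_partner_definition} (action on the row of basis vectors versus the transpose) turns the commutation $[g,\mathcal{H}_{per}(\bm{0})]=0$ into commutation with $\tilde\rho(g)$ rather than with $\overline{\tilde\rho(g)}$. In either convention $\mathrm{diag}(\tau,\overline\tau,\tau^2,\overline\tau^2)$ and its conjugate have distinct diagonal entries, so the conclusion is insensitive to this choice. I would also remark that no $\tilde{\mathcal{T}}$ relation is used. This is why $\beta_{*,1}\neq\beta_{*,3}$ is permitted, consistent with \eqref{eq_gap_open_criterion}.
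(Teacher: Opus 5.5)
Your proof is correct and follows the paper's argument. The paper likewise reduces the lemma to the invariance of $\underline{H}^{(0,1)}$ under conjugation by $\tilde{\rho}(\mathcal{R}_6)$ and $\tilde{\rho}(\mathcal{F}_x)$ (its \eqref{eq_sec3_7}) and then cites a ``direct computation''; your three steps (distinct eigenvalues of $\tilde{\rho}(\mathcal{R}_6)$ force a diagonal matrix, the $\tilde{\rho}(\mathcal{F}_x)$ swap forces the paired equalities, and Hermiticity makes the entries real) are exactly that computation written out.
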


\begin{proof}
Since the operator $\mathcal{H}_{per}(\bm{0})$ is Hermitian, it suffices to show that 1) the off-diagonal elements vanish and 2) $\underline{H}^{(0,1)}_{11}=\underline{H}^{(0,1)}_{22}$, $\underline{H}^{(0,1)}_{33}=\underline{H}^{(0,1)}_{44}$. This, again, is proved by exploiting the symmetry property of $\mathcal{H}_{per}$, following the lines in Section \ref{sec_representation_analysis}. In fact, thanks to the symmetry of $\mathcal{H}_{per}$ supposed in Assumption \ref{assum_symmetry_break} and the unitary equivalence \eqref{eq_sec3_2}, we see that the perturbation matrix $\underline{H}^{(0,1)}$ is invariant under representation of rotation and reflection operators. That is,
\begin{equation} \label{eq_sec3_7}
\tilde{\rho}(\mathcal{R}_6)^{-1}
\underline{H}^{(0,1)}
\tilde{\rho}(\mathcal{R}_6)
=\underline{H}^{(0,1)},\quad
\tilde{\rho}(\mathcal{F}_x)^{-1}
\underline{H}^{(0,1)}
\tilde{\rho}(\mathcal{F}_x)
=\underline{H}^{(0,1)},
\end{equation}
where $\tilde{\rho}$ is given in \eqref{eq_4d_irrep}. A direct computation shows that any Hermitian matrix respecting \eqref{eq_sec3_7} satisfies properties 1) and 2). This concludes the proof.
\end{proof}

{\color{blue}Step 2}. Now we are prepared to solve \eqref{eq_sec3_3} perturbatively, which follows a standard Lyapunov-Schmidt reduction argument as applied extensively to study the perturbed band structure of periodic operators. We show only the proof for the positively perturbed case, i.e., setting $\underline{\mathcal{H}}_{\delta}(\bm{\kappa})$ in \eqref{eq_sec3_3}; the proof for the other case is similar. Let us write
\begin{equation} \label{eq_sec3_8}
\lambda=\lambda_*+\lambda^{(1)},\quad
u=u^{(0)}+u^{(1)},
\end{equation}
with
\begin{equation} \label{eq_sec3_9}
u^{(0)}=\sum_{k=1}^{4}c_k\cdot u_k \in \text{Ker}(\underline{\mathcal{H}}^{(0,0)}-\lambda_*),\quad
u^{(1)}\in \text{Ker}(\underline{\mathcal{H}}^{(0,0)}-\lambda_*)^{\perp},
\end{equation}
where $\perp$ denotes the orthogonal complement in $\mathcal{X}_{\bm{0}}$. Substituting these ansatz into \eqref{eq_sec3_3} produces the equation:
\begin{equation*}
\begin{aligned}
(\underline{\mathcal{H}}^{(0,0)}-\lambda_*)u^{(1)}
=&\big( \lambda^{(1)}-\underline{\mathcal{H}}^{(0,0)}-\kappa_1 \underline{\mathcal{H}}^{(1,0)}_{1}-\kappa_2 \underline{\mathcal{H}}^{(1,0)}_{2}-\delta \underline{\mathcal{H}}^{(0,1)} - \underline{\mathcal{H}}_{\delta}^{rem}(\bm{\kappa}) \big)u^{(0)} \\
&+\big( \lambda^{(1)}-\underline{\mathcal{H}}^{(0,0)}-\kappa_1 \underline{\mathcal{H}}^{(1,0)}_{1}-\kappa_2 \underline{\mathcal{H}}^{(1,0)}_{2}-\delta \underline{\mathcal{H}}^{(0,1)} - \underline{\mathcal{H}}_{\delta}^{rem}(\bm{\kappa}) \big)u^{(1)} .
\end{aligned}
\end{equation*}
It is suggestive to decompose the above identity into orthogonal components: letting $Q_{\parallel}$ be the projection to $\text{Ker}(\underline{\mathcal{H}}^{(0,0)}-\lambda_*)$ and $Q_{\perp}:=\mathbbm{1}-Q_{\parallel}$, we obtain the equivalent equations
\begin{equation} \label{eq_sec3_10}
\begin{aligned}
(\underline{\mathcal{H}}^{(0,0)}-\lambda_*)u^{(1)}
=&Q_{\perp}\big( \lambda^{(1)}-\underline{\mathcal{H}}^{(0,0)}-\kappa_1 \underline{\mathcal{H}}^{(1,0)}_{1}-\kappa_2 \underline{\mathcal{H}}^{(1,0)}_{2}-\delta \underline{\mathcal{H}}^{(0,1)} - \underline{\mathcal{H}}_{\delta}^{rem}(\bm{\kappa}) \big)u^{(0)} \\
&+Q_{\perp}\big( \lambda^{(1)}-\underline{\mathcal{H}}^{(0,0)}-\kappa_1 \underline{\mathcal{H}}^{(1,0)}_{1}-\kappa_2 \underline{\mathcal{H}}^{(1,0)}_{2}-\delta \underline{\mathcal{H}}^{(0,1)} - \underline{\mathcal{H}}_{\delta}^{rem}(\bm{\kappa}) \big)u^{(1)}, 
\end{aligned}
\end{equation}
and
\begin{equation} \label{eq_sec3_11}
\begin{aligned}
0
=&Q_{\parallel}\big( \lambda^{(1)}-\underline{\mathcal{H}}^{(0,0)}-\kappa_1 \underline{\mathcal{H}}^{(1,0)}_{1}-\kappa_2 \underline{\mathcal{H}}^{(1,0)}_{2}-\delta \underline{\mathcal{H}}^{(0,1)} - \underline{\mathcal{H}}_{\delta}^{rem}(\bm{\kappa}) \big)u^{(0)} \\
&+Q_{\parallel}\big( \lambda^{(1)}-\underline{\mathcal{H}}^{(0,0)}-\kappa_1 \underline{\mathcal{H}}^{(1,0)}_{1}-\kappa_2 \underline{\mathcal{H}}^{(1,0)}_{2}-\delta \underline{\mathcal{H}}^{(0,1)} - \underline{\mathcal{H}}_{\delta}^{rem}(\bm{\kappa}) \big)u^{(1)} .
\end{aligned}
\end{equation}

{\color{blue}Step 3}. Now we show that $u^{(1)}$ is uniquely determined by $u^{(0)}$ for small parameters ($\kappa_i,\delta,\lambda^{(1)}$) using \eqref{eq_sec3_10}. Note that the self-adjoint operator $\underline{\mathcal{H}}^{(0,0)}-\lambda_*$ is invertible on the range of $Q_{\perp}$. Hence, we can write \eqref{eq_sec3_10} as
\begin{equation} \label{eq_sec3_12}
(\mathbbm{1}-T)u^{(1)}=Tu^{(1)},
\end{equation}
where
\begin{equation*}
T=T(\bm{\kappa},\delta,\lambda^{(1)})
:=(\underline{\mathcal{H}}^{(0,0)}-\lambda_*)^{-1}Q_{\perp}\big( \lambda^{(1)}-\underline{\mathcal{H}}^{(0,0)}-\kappa_1 \underline{\mathcal{H}}^{(1,0)}_{1}-\kappa_2 \underline{\mathcal{H}}^{(1,0)}_{2}-\delta \underline{\mathcal{H}}^{(0,1)} - \underline{\mathcal{H}}_{\delta}^{rem}(\bm{\kappa}) \big).
\end{equation*}
For sufficiently small parameters, it holds that $\|T\|_{\mathcal{B}(\mathcal{X}_{\bm{0}})}=\mathcal{O}(|\bm{\kappa}|+|\delta|+|\lambda^{(1)}|)$. This implies that there exist $K_0,\Delta_0,E_0>0$ such that whenever $|\bm{\kappa}|<K_0,\delta<\Delta_0,|\lambda^{(1)}|<E_0$, it holds that $\|T\|_{\mathcal{B}(\mathcal{X}_{\bm{0}})}<1$ and therefore, $\mathbbm{1}-T$ is invertible. In conclusion, we can solve $u^{(1)}$ from \eqref{eq_sec3_12}:
\begin{equation} \label{eq_sec3_13}
u^{(1)}=(\mathbbm{1}-T)^{-1}Tu^{(1)}=\sum_{k=1}^{4}c_k\cdot (\mathbbm{1}-T)^{-1}Tu_k .
\end{equation}

{\color{blue}Step 4}. Next, we solve \eqref{eq_sec3_11} to obtain a close expression of the Floquet-Bloch eigenvalues $\lambda^{(1)}=\lambda^{(1)}(\bm{\kappa},\delta)$ and the associated eigenmodes. Taking \eqref{eq_sec3_13} into \eqref{eq_sec3_11} leads to
\begin{equation*}
\begin{aligned}
&\sum_{k=1}^{4}c_k\cdot Q_{\parallel}\big( \lambda^{(1)}-\underline{\mathcal{H}}^{(0,0)}-\kappa_1 \underline{\mathcal{H}}^{(1,0)}_{1}-\kappa_2 \underline{\mathcal{H}}^{(1,0)}_{2}-\delta \underline{\mathcal{H}}^{(0,1)} - \underline{\mathcal{H}}_{\delta}^{rem}(\bm{\kappa}) \big)u_k \\
&+\sum_{k=1}^{4}c_k\cdot Q_{\parallel}\big( \lambda^{(1)}-\underline{\mathcal{H}}^{(0,0)}-\kappa_1 \underline{\mathcal{H}}^{(1,0)}_{1}-\kappa_2 \underline{\mathcal{H}}^{(1,0)}_{2}-\delta \underline{\mathcal{H}}^{(0,1)} - \underline{\mathcal{H}}_{\delta}^{rem}(\bm{\kappa}) \big)(\mathbbm{1}-T)^{-1}Tu_k=0.
\end{aligned}
\end{equation*}
Conjugating both sides with $u_k$ and applying the identities \eqref{eq_sec3_5}-\eqref{eq_sec3_6} gives the following linear equations:
\begin{equation} \label{eq_sec3_14}
\Big(\mathcal{N}^{(0)}(\bm{\kappa},\delta,\lambda^{(1)})+\mathcal{N}^{(1)}(\bm{\kappa},\delta,\lambda^{(1)})\Big)
\begin{pmatrix}
c_1 \\ c_2 \\ c_3 \\ c_4
\end{pmatrix}
=0,
\end{equation}
where
\begin{equation*}
\mathcal{N}^{(0)}=
\begin{pmatrix}
\lambda^{(1)}-\beta_{*,1}\delta & 0 & -\alpha_*(\kappa_1+\overline{\tau}^2\kappa_2) & 0 \\
0 & \lambda^{(1)}-\beta_{*,1}\delta & 0 & -\alpha_*(\kappa_1+\tau^2\kappa_2) \\
-\alpha_*(\kappa_1+\tau^2\kappa_2) & 0 & \lambda^{(1)}-\beta_{*,3}\delta & 0 \\
0 & -\alpha_*(\kappa_1+\overline{\tau}^2\kappa_2) & 0 & \lambda^{(1)}-\beta_{*,3}\delta \\
\end{pmatrix}
\end{equation*}
and
\begin{equation*}
\mathcal{N}^{(1)}=\big(u_i,(-\kappa_{1}\cdot \underline{\mathcal{H}}^{(1,0)}_{1}
-\kappa_{2}\cdot \underline{\mathcal{H}}^{(1,0)}_{2}
-\delta\cdot \underline{\mathcal{H}}^{(0,1)}
-\underline{\mathcal{H}}_{\delta}^{rem}(\bm{\kappa}))(\mathbbm{1}-T)^{-1}T u_j\big)_{1\leq i,j\leq 4}.
\end{equation*}
Note that each entry of $\mathcal{N}^{(1)}$ is of order $\mathcal{O}(|\bm{\kappa}|^2+|\delta|^2+|\lambda^{(1)}|^2)$. Hence, \eqref{eq_sec3_14} is solvable if and only if
\begin{equation} \label{eq_sec3_15}
\det(\mathcal{N}^{(0)}+\mathcal{N}^{(1)})
=\Big((\lambda^{(1)})^2-\beta_{*}^2\delta^2-\frac{3}{4}|\alpha_*|^2|\bm{\kappa}|^2 \Big)^2+g(\bm{\kappa},\delta,\lambda^{(1)})=0,
\end{equation}
where the function $g$ is analytic in all parameters (within a neighborhood of the origin) with the bound $g=\mathcal{O}(|\bm{\kappa}|^5+|\delta|^5+|\lambda^{(1)}|^5)$, and we have applied the assumption $\beta_{*,1}=-\beta_{*,3}=:\beta_{*}$. Note that the leading order equation of \eqref{eq_sec3_15} admits four branches of solutions:
\begin{equation} \label{eq_sec3_16}
\lambda^{(1)}_{1}=\lambda^{(1)}_{2}=-\sqrt{\beta_{*}^2\delta^2+\frac{3}{4}|\alpha_*|^2|\bm{\kappa}|^2}, \quad
\lambda^{(1)}_{3}=\lambda^{(1)}_{4}=\sqrt{\beta_{*}^2\delta^2+\frac{3}{4}|\alpha_*|^2|\bm{\kappa}|^2},
\end{equation}
which are exactly the leading order terms in \eqref{eq_asymptotics_eigenvalue}. When the remainder $g$ is taken into account, the four branches of solutions bifurcate into the four solutions of \eqref{eq_sec3_15}, where the remainders induced by the appearance of $g$ are estimated by the implicit function theorem, and are shown to take the form of \eqref{eq_asymptotics_eigenvalue}; here we omit the technical calculations on this point and refer the interested reader to the previous work \cite[Theorem 3.6]{qiu2026waveguide_localized} for a detailed discussion. After solving $\lambda^{(1)}_k=\lambda^{(1)}_k(\bm{\kappa},\delta)$, the size $K_0,\Delta_0$ imposed in Step 3 is further restricted so that if $|\bm{\kappa}|<K_0,\delta<\Delta_0$, it holds that $|\lambda^{(1)}_k(\bm{\kappa},\delta)|<E_0$; this makes the whole calculation consistent.

{\color{blue}Step 5}. Finally, by taking the eigenvalue $\lambda^{(1)}=\lambda^{(1)}(\bm{\kappa},\delta)$ solved from Step 4 into \eqref{eq_sec3_14}, one can solve the coefficient $c_i$, which then gives the full expansion of the Floquet-Bloch eigenfunction $u=u^{(0)}+u^{(1)}$ using \eqref{eq_sec3_9} and \eqref{eq_sec3_13}. This completes the proof of Theorem \ref{thm_asymptotics_eigenpairs}.

\section{Existence of Interface Modes} \label{sec_sec5}

This section is devoted to the proof of the first main result of this paper, Theorem \ref{thm_existence_interface_modes}. Our proof is based on a layer-potential framework, which is established and exploited in various aspects for continuous systems and now generalized to discrete structures in this paper. The key idea of this framework for studying interface modes lies in the impedance matching method, unveiling the band inversion phenomenon in terms of boundary matching operators\footnote{Specifically \eqref{eq_boundary_matching} and \eqref{eq_boundary_matching_auxliary}.}, and understanding the physical Green operator and its far-field asymptotics at the spectral degenerate points. 

The proof begins in Section \ref{sec_layer_potential_framework}, where a discrete layer potential formulation of the interface modes is established. Prior to that, we will first list some necessary preliminaries in Section \ref{sec_prelim_interface_modes}, especially the definition of the physical Green operator, its asymptotics at the double Dirac cone, and the propagating Floquet-Bloch modes appearing in the asymptotics. After establishing the framework in Section \ref{sec_layer_potential_framework}, we solve the boundary matching problem in the perturbation regime (i.e., for small $\delta$ in \eqref{eq_interface_Hamiltonian}), for doing which the asymptotic expansion of Floquet-Bloch eigenpairs we obtain in Theorem \ref{thm_asymptotics_eigenpairs} and the Green operator asymptotics presented in Section \ref{sec_physical_green} are critical\footnote{Shortly speaking, the asymptotic expansions of Floquet-Bloch eigenpairs are used to derive the limits of the boundary matching equations \eqref{eq_boundary_matching} and \eqref{eq_boundary_matching_auxliary}, see Proposition \ref{prop_boundary_operator_limit}, while the Green operator asymptotics help to solve these limiting equations.}; see Section \ref{sec_existence_interface_mode} for details.

\subsection{Preliminaries}
\label{sec_prelim_interface_modes}

\subsubsection{Analytic Labelling of Floquet-Bloch Eigenpairs}

In order to apply the limiting absorption principle to discuss the far-field Green operator asymptotics at the double Dirac cone, we need a smooth parametrization of the Floquet-Bloch eigenpairs. This is described in this section.

Let us first review the Floquet theory for the Hamiltonian restricted to the cylinder, that is, $\mathcal{H}_{b,\sharp}:=\mathcal{H}_{b}\Big|_{\mathcal{X}_{\sharp}}$. Clearly, $\mathcal{H}_{b,\sharp}$ is the section of the Floquet-Bloch Hamiltonian $\mathcal{H}_{b}(\bm{\kappa})$ along the plane $\kappa_2=0$. As a consequence of the $\mathbb{Z}\bm{\ell}_1$-translation invariance of $\mathcal{H}_{b,\sharp}$, its spectrum is decomposed by Floquet transform, i.e.,
\begin{equation*}
 \text{Spec}(\mathcal{H}_{b,\sharp})=\bigcup_{-\pi\leq \kappa_1<\pi}\text{Spec}(\mathcal{H}_{b,\sharp,\kappa_1}),
\end{equation*}
where $\mathcal{H}_{b,\sharp,\kappa_1}:=\mathcal{H}_{b}(\kappa_1\bm{\ell}_1^*)$. From this relation, if we define
\begin{equation*}
\lambda_{n,\sharp}(\kappa_1):=\lambda_{n}(\bm{\kappa})\quad \text{with }\bm{\kappa}=\kappa_1\bm{\ell}_1^*,
\end{equation*}
where $\lambda_{n}(\bm{\kappa})$ are the eigenvalues of $\mathcal{H}_{b}(\bm{\kappa})$, then it directly follows that
\begin{equation*}
\text{Spec}(\mathcal{H}_{b,\sharp})=\cup_{-\pi\leq \kappa_1<\pi}\text{Spec}(\mathcal{H}_{b,\sharp,\kappa_1})
=\cup_{1\leq n\leq 6}\big\{\lambda_{n,\sharp}(\kappa_1):\, -\pi\leq \kappa_1<\pi \big\}
\end{equation*}
and
\begin{equation*}
\lambda_{1,\sharp}(\kappa_1)\leq\cdots\leq \lambda_{6,\sharp}(\kappa_1).
\end{equation*}
Hence, we say that $\{\lambda_{n,\sharp}(\kappa_1)\}_{1\leq n\leq 6}$ is \textit{the increasingly labeled Floquet-Bloch eigenvalues of $\mathcal{H}_{b,\sharp}$}. Apparently, this increasingly labeling leads to a singularity of $\lambda_{n,\sharp}(\kappa_1)$ for $1\leq n\leq 4$ at $\kappa_1=0$, manifested by the jump of derivatives as seen in Figure \ref{fig_ascending_label}, due to the existence of a double Dirac cone. This singularity is settled by a proper rearrangement of these eigenvalues. In fact, since the Hamiltonian $\mathcal{H}_{b,\sharp,\kappa_1}$ is a six-dimensional matrix and depends analytically on $\kappa_1$, the Kato-Rellich theorem states that there exist six functions $\{\mu_{n,\sharp}(\kappa_1)\}_{1\leq n\neq 6}$ that are analytic in a neighborhood of the real line, and exhaust the eigenvalues of $\mathcal{H}_{b,\sharp,\kappa_1}$ for each $\kappa_1\in [-\pi,\pi]$. We call $\{\mu_{n,\sharp}(\kappa_1)\}_{n\geq 1}$ \textit{the analytically labeled Floquet-Bloch eigenvalues of $\mathcal{H}_{b,\sharp}$}. For each momentum $\kappa_1$, the set of analytically labeled eigenvalues forms a rearrangement of the increasingly labeled ones:
\begin{equation} \label{eq_rearragenment_dispersion}
\cup_{1\leq n\leq 6}\{\mu_{n,\sharp}(\kappa_1)\}
=\cup_{1\leq n\leq 6}\{\lambda_{n,\sharp}(\kappa_1)\},\quad -\pi\leq \kappa_1<\pi.
\end{equation}
We illustrate this rearrangement near the double Dirac cone in Figure \ref{fig_ascending_analytic_labeling}, from which it is clear that the analytical labeling mollifies the singularity arising from the increasing labeling. Remarkably, the Kato-Rellich theorem states that the normalized Floquet-Bloch eigenfunctions associated with $\mu_{n}(\kappa_1)$ can also be chosen analytically\footnote{See \cite[Theorem 3.9, Chapter VII]{kato2013perturbation}. We also refer the reader to \cite[Section 3.3]{joly2016solutions}, the content of which is closely related to this paper; in fact, we learn the method of exploiting Green function asymptotics from \cite{joly2016solutions}, which is fundamental for our study of interface modes.}. One should keep in mind that the choice of the eigenfunctions is not unique (in fact, it is only unique up to a global gauge choice). In the sequel, we will fix \textit{a particular gauge} and denote the analytic eigenfunction associated with $\mu_{n,\sharp}(\kappa_1)$ as $v_{n,\sharp}(\bm{n};\kappa_1)$. This particular choice of gauge is specified by the conditions \eqref{eq_sec4_12} and \eqref{eq_relation_un_vn} that will be introduced later. Nonetheless, we point out that this choice of gauge is only for our convenience in calculation, and the main results of this paper are  gauge-independent.

\begin{figure}
\begin{center}
\subfigure[]{
\label{fig_ascending_label}
\begin{tikzpicture}[scale=0.3]

\draw[thick,->] (-8,0)--(8,0);
\draw[thick,->] (0,0)--(0,24);
\node[below] at (0,0) {$0$};
\node[right] at (8.2,0) {$\kappa_1$};
\node[above] at (0,25) {$\lambda$};

\draw[yellow,line width=2pt,domain=0:8, samples=50] plot(\x,{10+(\x)+(\x)^2/16});
\draw[line width=2pt,domain=0:8, samples=50] plot(\x,{10+(\x)+(\x)^2/64});
\draw[red,line width=2pt,domain=0:8, samples=50] plot(\x,{10-(\x)+(\x)^2/64});
\draw[blue,line width=2pt,domain=0:8, samples=50] plot(\x,{10-(\x)+(\x)^2/16});

\draw[blue,line width=2pt,domain=-8:0, samples=50] plot(\x,{10+(\x)+(\x)^2/16});
\node[left,blue] at (-8,6) {$\lambda_{2,\sharp}(\kappa_1)$};
\draw[red,line width=2pt,domain=-8:0, samples=50] plot(\x,{10+(\x)+(\x)^2/64});
\node[left,red] at (-8,3) {$\lambda_{1,\sharp}(\kappa_1)$};
\draw[line width=2pt,domain=-8:0, samples=50] plot(\x,{10-(\x)+(\x)^2/64});
\node[left] at (-8,19) {$\lambda_{3,\sharp}(\kappa_1)$};
\draw[yellow,line width=2pt,domain=-8:0, samples=50] plot(\x,{10-(\x)+(\x)^2/16});
\node[left,yellow] at (-8,22) {$\lambda_{4,\sharp}(\kappa_1)$};
\end{tikzpicture}
}
\subfigure[]{
\label{fig_analytic_label}
\begin{tikzpicture}[scale=0.3]

\draw[thick,->] (-8,0)--(8,0);
\draw[thick,->] (0,0)--(0,24);
\node[below] at (0,0) {$0$};
\node[right] at (8.2,0) {$\kappa_1$};
\node[above] at (0,25) {$\mu$};

\draw[red,line width=2pt,domain=0:8, samples=50] plot(\x,{10+(\x)+(\x)^2/16});
\node[right,red] at (8,22) {$\mu_{1,\sharp}(\kappa_1)$};
\draw[blue,line width=2pt,domain=0:8, samples=50] plot(\x,{10+(\x)+(\x)^2/64});
\node[right,blue] at (8,19) {$\mu_{2,\sharp}(\kappa_1)$};
\draw[yellow,line width=2pt,domain=0:8, samples=50] plot(\x,{10-(\x)+(\x)^2/64});
\node[right,yellow] at (8,3) {$\mu_{3,\sharp}(\kappa_1)$};
\draw[line width=2pt,domain=0:8, samples=50] plot(\x,{10-(\x)+(\x)^2/16});
\node[right] at (8,6) {$\mu_{4,\sharp}(\kappa_1)$};

\draw[blue,line width=2pt,domain=-8:0, samples=50] plot(\x,{10+(\x)+(\x)^2/16});
\draw[red,line width=2pt,domain=-8:0, samples=50] plot(\x,{10+(\x)+(\x)^2/64});
\draw[line width=2pt,domain=-8:0, samples=50] plot(\x,{10-(\x)+(\x)^2/64});
\draw[yellow,line width=2pt,domain=-8:0, samples=50] plot(\x,{10-(\x)+(\x)^2/16});
\end{tikzpicture}
}
\caption{(a) Increasing labeling. Each coloured curve represents a single branch of Floquet-Bloch eigenvalue $\lambda_{n,\sharp}$. Clearly, the derivatives of $\lambda_{n,\sharp}$ jump at $\kappa_1=0$ due to the conic structure. (b) Analytic labeling. Each colored curve is smooth in $\kappa_1$. Note that this rearrangement of eigenvalues can be written as, for example, $\mu_{1,\sharp}(\kappa_1)=\lambda_{1,\sharp}(\kappa_1)$ for $\kappa_1\leq 0$ and $\mu_{1,\sharp}(\kappa_1)=\lambda_{4,\sharp}(\kappa_1)$ for $\kappa_1> 0$.}
\label{fig_ascending_analytic_labeling}
\end{center}
\end{figure}
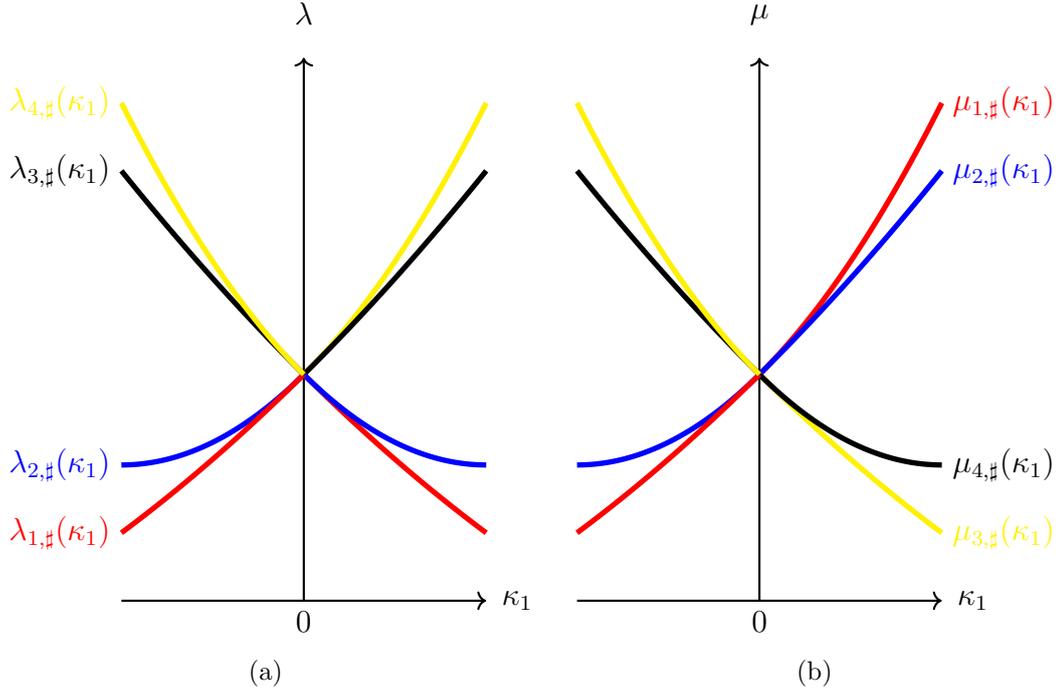

We summarize the above discussions in the following proposition.
\begin{proposition} \label{prop_analytic_label}
For each $1\leq n\leq 6$, there exists a complex neighborhood $\mathcal{D}_n\supset \mathbb{R}$, and two analytic maps 
\begin{equation*}
\kappa_1\in \mathcal{D}_n \mapsto \mu_{n,\sharp}(\kappa_1)\in\mathbb{C},\quad
\kappa_1\in \mathcal{D}_n \mapsto v_{n,\sharp}(\cdot;\kappa_1)\in \mathcal{X}_{\kappa_1\bm{\ell}_1^*}
\end{equation*}
such that $\{(\mu_{n,\sharp}(\kappa_1),v_{n,\sharp}(\cdot;\kappa_1))\}_{1\leq n\leq 6}$ represent all Floquet-Bloch eigenpairs of $\mathcal{H}_{b,\sharp,\kappa_1}$ for $-\pi\leq \kappa_1<\pi$. In particular, $\{v_{n,\sharp}\}$ satisfy conditions \eqref{eq_sec4_12} and \eqref{eq_relation_un_vn}.
\end{proposition}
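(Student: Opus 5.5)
The approach is to reduce Proposition \ref{prop_analytic_label} to the analytic perturbation theory of a finite-dimensional self-adjoint analytic family, and then to fix the gauge. First, I would check that the map $\kappa_1\mapsto \mathcal{H}_{b,\sharp,\kappa_1}=\mathcal{H}_b(\kappa_1\bm{\ell}_1^*)$ extends to an entire matrix-valued function. Identifying $\mathcal{X}_{\kappa_1\bm{\ell}_1^*}\simeq\mathbb{C}^6$ via $u\mapsto u(\bm{0})$, Assumption \ref{asmp_short_range} gives the representation
\begin{equation*}
\mathcal{H}_b(\kappa_1\bm{\ell}_1^*)=\sum_{\bm{m}\in\Lambda}\mathcal{H}_b(\bm{0},\bm{m})e^{i\kappa_1\bm{m}\cdot\bm{\ell}_1^*}.
\end{equation*}
This is a finite trigonometric sum, so it is entire in $\kappa_1$. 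Moreover, it is Hermitian for real $\kappa_1$ and $2\pi$-periodic. It is more convenient to use the conjugated form $\underline{\mathcal{H}}_b(\kappa_1\bm{\ell}_1^*)$ from \eqref{eq_sec3_2}, which acts on the fixed space $\mathcal{X}_{\bm{0}}$. Its adjoint satisfies $\underline{\mathcal{H}}_b(\overline{\kappa_1})^*=\underline{\mathcal{H}}_b(\kappa_1)$, so it is a self-adjoint holomorphic family in the sense of Kato.

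Second, I would invoke Rellich's theorem \cite[Theorem 3.9, Chapter VII]{kato2013perturbation}. A holomorphic family of finite-dimensional matrices that is Hermitian on $\mathbb{R}$ admits eigenvalues $\mu_{n,\sharp}(\kappa_1)$ and an orthonormal eigenbasis $\underline{v}_{n}(\kappa_1)$, both real-analytic on all of $\mathbb{R}$. Each of these extends holomorphically to a complex neighborhood. Because the family is defined on the whole real line, this neighborhood can be taken to be a strip-like domain $\mathcal{D}_n\supset\mathbb{R}$, obtained as the union of the local discs of analyticity.

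Undoing the conjugation, I set $v_{n,\sharp}(\bm{n};\kappa_1):=e^{i\kappa_1\bm{\ell}_1^*\cdot\bm{n}}\underline{v}_n(\bm{n};\kappa_1)$. This gives analytic maps into $\mathcal{X}_{\kappa_1\bm{\ell}_1^*}$. For each real $\kappa_1$ the six vectors form a complete eigenbasis, which yields \eqref{eq_rearragenment_dispersion}.

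The remaining issue is the gauge conditions \eqref{eq_sec4_12} and \eqref{eq_relation_un_vn}, which appear later. I expect these to prescribe the values $v_{n,\sharp}(\cdot;0)$ for $n\le4$ as specific combinations of $u_1,\dots,u_4$, namely the limits of the analytic branches through the double cone. They may also fix a normalization. The hard point is that the Rellich basis at $\kappa_1=0$ is forced to consist of the eigenvectors of the reduced first-order matrix $\underline{H}^{(1,0)}_1$ from \eqref{eq_sec3_5}, restricted to $\mathrm{Ker}(\mathcal{H}_b(\bm{0})-\lambda_*)$. I would compute that these eigenvectors are $u_1\pm u_3$ and $u_2\pm u_4$ (normalized), with slopes $\pm\alpha_*$.

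However, each slope $\pm\alpha_*$ is doubly degenerate at first order. Within each pair, the splitting would have to be resolved at higher order, or else the degeneracy persists by symmetry. I would use $\mathcal{F}_x$ to settle this. On the line $\kappa_2=0$, the relation $[\hat{\mathcal{F}}_x]$ fixes $\kappa_1\bm{\ell}_1^*$, so $\mathcal{F}_x$ commutes with $\mathcal{H}_{b,\sharp,\kappa_1}$. Then $\mathcal{F}_x$-parity splits each pair into even and odd branches, and any residual degeneracy is harmless, because an analytic basis can still be chosen inside each parity sector.

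Once the analytic basis is fixed, the gauge is imposed by multiplying each $v_{n,\sharp}$ by a constant unimodular phase, or by a unitary constant within a degenerate eigenspace. This preserves analyticity, so the conditions \eqref{eq_sec4_12} and \eqref{eq_relation_un_vn} can be met. The main obstacle is precisely this matching at the double cone: verifying that the analytic branches at $\kappa_1=0$ coincide with the prescribed combinations of the $u_k$, which relies on the explicit structure of \eqref{eq_sec3_5} together with the $\mathcal{F}_x$ symmetry.
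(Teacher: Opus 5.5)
Your analytic part is the same route as the paper's Remark \ref{rmk_construction_analytic_eigenfunction}. That part consists of three steps: Rellich's theorem for the entire family $\kappa_1\mapsto\underline{\mathcal{H}}_b(\kappa_1\bm{\ell}_1^*)$, which is Hermitian on $\mathbb{R}$; the identification of the slope-$\pm\alpha_*$ subspaces $\mathrm{span}\{u_1\pm u_3,u_2\pm u_4\}$ from \eqref{eq_sec3_5}; and the use of $\mathcal{F}_x$, which fixes $\kappa_1\bm{\ell}_1^*$, to split each pair. The paper instead solves \eqref{eq_sec3_14} at $\delta=\kappa_2=0$ and obtains $a^2=b^2$, $c^2=d^2$, $ac+bd=0$ from $\mathcal{F}_x$ and orthogonality. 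Your parity-sector version is in fact tighter. On the even sector $\mathrm{span}\{u_1+u_2,u_3+u_4\}$ and on the odd sector $\mathrm{span}\{u_1-u_2,u_3-u_4\}$, the reduced first-order matrix has simple eigenvalues $\pm\alpha_*$. Hence each $v_{n,\sharp}(\cdot;0)$ is forced up to a phase. This holds even if $\mu_{1,\sharp}\equiv\mu_{2,\sharp}$, and it gives the rows of \eqref{eq_relation_un_vn} up to constant phases.

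The gap is in your last step. Condition \eqref{eq_sec4_12} is not a normalization at $\kappa_1=0$. It is the identity $v_{3,\sharp}(\cdot;\kappa_1)=\mathcal{F}_y v_{1,\sharp}(\cdot;-\kappa_1)$, $v_{4,\sharp}(\cdot;\kappa_1)=\mathcal{F}_y v_{2,\sharp}(\cdot;-\kappa_1)$ for every $\kappa_1$ near $0$, and it ties the negative-slope branches to the positive-slope ones.

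Rellich eigenvectors of a branch are determined only up to a factor $e^{i\theta(\kappa_1)}$ with $\theta$ real-analytic, and up to an analytic unitary if branches are identically degenerate. So independently chosen branches will in general violate \eqref{eq_sec4_12}. No constant phase or constant unitary can remove a $\kappa_1$-dependent discrepancy.

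The missing idea is to \emph{define} the negative-slope branches through the $y$-axis reflection $\mathcal{F}_y=\mathcal{R}_6^3\mathcal{F}_x\in\mathcal{S}$. This operator is unitary and maps $\mathcal{X}_{-\kappa_1\bm{\ell}_1^*}$ onto $\mathcal{X}_{\kappa_1\bm{\ell}_1^*}$. It satisfies $\mathcal{F}_y\mathcal{H}_b(-\kappa_1\bm{\ell}_1^*)\mathcal{F}_y^{-1}=\mathcal{H}_b(\kappa_1\bm{\ell}_1^*)$ and commutes with $\mathcal{F}_x$.

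Consequently $\kappa_1\mapsto\mathcal{F}_y v_{1,\sharp}(\cdot;-\kappa_1)$ is an analytic normalized eigenbranch with the same parity and eigenvalue $\mu_{1,\sharp}(-\kappa_1)$, of slope $-|\alpha_*|$. It is orthogonal to $v_{1,\sharp}$ and $v_{2,\sharp}$: first for small $\kappa_1\neq0$, since the eigenvalues differ, and then everywhere by analytic continuation. The same holds for $v_{4,\sharp}$.

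You must then check consistency with \eqref{eq_relation_un_vn}. From \eqref{eq_4d_irrep} one gets $\mathcal{F}_y u_1=-u_2$, $\mathcal{F}_y u_2=-u_1$, $\mathcal{F}_y u_3=u_4$, and $\mathcal{F}_y u_4=u_3$. Therefore rows three and four of \eqref{eq_relation_un_vn} are exactly $\mathcal{F}_y$ applied to rows one and two. Those rows are thus consequences of \eqref{eq_sec4_12}, not independent constant-phase choices. That redefinition, not a constant gauge change, is what the proof of the stated proposition requires.
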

Moreover, as a manifestation of the double Dirac cone described in Theorem \ref{thm_double_cone} and the $\mathcal{F}_x$-symmetry of the system, the analytically labeled eiegnavalues have the following expansions.
\begin{proposition} \label{prop_mu1234_asymptotic_even}
The analytically labeled eigenvalues are symmetric and admit the following asymptotics for $\kappa_1$ near zero:
\begin{equation} \label{eq_mu1234_dispersion}
 \left\{
 \begin{aligned}
    &\mu_{1,\sharp}(\kappa_1)=\mu_{3,\sharp}(-\kappa_1)=\lambda_*+\frac{\sqrt{3}}{2}|\alpha_*|\kappa_1+\mathcal{O}(|\kappa_1|^2), \\
    &\mu_{2,\sharp}(\kappa_1)=\mu_{4,\sharp}(-\kappa_1)=\lambda_*+\frac{\sqrt{3}}{2}|\alpha_*|\kappa_1+\mathcal{O}(|\kappa_1|^2).
    \end{aligned}
    \right.
\end{equation}
\end{proposition}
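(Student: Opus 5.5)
The plan is to restrict the four-fold degeneracy analysis of Theorem \ref{thm_double_cone} to the line $\kappa_2=0$, and then use the reflection $\mathcal{F}_x$ to obtain the symmetry relations between the branches.

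\emph{First-order slopes.} At $\kappa_2=0$ the first-order perturbation matrix is $\kappa_1 H_1$, with $H_1$ given in \eqref{eq_sec2_7}. Its eigenvalues are $\pm\alpha_*$, each of multiplicity two, with eigenvectors $u_1\pm u_3$ and $u_2\pm u_4$. Since $|\bm{\kappa}|=|\kappa_1\bm{\ell}_1^*|=\frac{4\pi}{\sqrt3}|\kappa_1|$, the normalization in Theorem \ref{thm_double_cone} has to be matched to the slope $\frac{\sqrt3}{2}|\alpha_*|$ in \eqref{eq_mu1234_dispersion}; this is bookkeeping of the $\kappa$-scaling convention and I would simply adopt the paper's normalization. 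The block structure of $H_1$ splits $\mathcal{X}_{\bm 0}$-degenerate space into $\mathrm{span}\{u_1,u_3\}$ and $\mathrm{span}\{u_2,u_4\}$. Each $2\times 2$ block has distinct eigenvalues $\pm\alpha_*$, so by Kato--Rellich the analytic branches are $\lambda_*\pm\alpha_*\kappa_1+\mathcal{O}(\kappa_1^2)$. I label them so that $\mu_{1,\sharp},\mu_{2,\sharp}$ have slope $+\frac{\sqrt3}{2}|\alpha_*|$ and $\mu_{3,\sharp},\mu_{4,\sharp}$ have slope $-\frac{\sqrt3}{2}|\alpha_*|$.

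\emph{Symmetry relations.} The reflection $F_x^{ext}$ maps $\kappa_1\bm{\ell}_1^*$ to $-\kappa_1\bm{\ell}_1^*$, since $\bm{\ell}_1^*$ is horizontal. By \eqref{eq_symm_momen_sapce_3}, $\mathcal{F}_x\mathcal{H}_{b,\sharp,\kappa_1}\mathcal{F}_x^{-1}=\mathcal{H}_{b,\sharp,-\kappa_1}$. Hence $\kappa_1\mapsto\mu_{n,\sharp}(-\kappa_1)$ is again an analytic eigenvalue branch, with eigenfunction $\mathcal{F}_x v_{n,\sharp}(\cdot;-\kappa_1)$. Its slope at $0$ is the negative of the slope of $\mu_{n,\sharp}$, so it cannot coincide with $\mu_{n,\sharp}$ itself.

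By uniqueness of analytic branches (the analytic eigenvalue functions are unique as a multiset), $\mu_{1,\sharp}(-\cdot)$ must equal one of $\mu_{3,\sharp}$ or $\mu_{4,\sharp}$. To choose between them I would use the internal structure. The matrix $\tilde\rho(\mathcal F_x)$ swaps $u_1\leftrightarrow u_2$ and $u_3\leftrightarrow u_4$, so it maps the $\{u_1,u_3\}$ block to the $\{u_2,u_4\}$ block. The slope-$+\alpha_*$ vector $u_1+u_3$ in the first block is therefore sent to $u_2+u_4$, which under $\kappa_1\to-\kappa_1$ becomes a slope-$(-\alpha_*)$ direction in the second block. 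I would fix the labeling (and the gauge \eqref{eq_sec4_12}) so that this pairing gives exactly $\mu_{1,\sharp}(\kappa_1)=\mu_{3,\sharp}(-\kappa_1)$ and $\mu_{2,\sharp}(\kappa_1)=\mu_{4,\sharp}(-\kappa_1)$.

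\emph{Main obstacle.} The delicate point is when two branches coincide identically, for example $\mu_{3,\sharp}\equiv\mu_{4,\sharp}$. Then uniqueness of branches does not by itself decide the pairing, and one must instead construct the labeling by applying $\mathcal{F}_x$ to the chosen eigenfunctions. In every case the relations hold after an appropriate relabeling, which the statement allows.
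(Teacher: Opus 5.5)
There is a genuine error in your symmetry step: you use the wrong reflection. $F_x^{ext}=\mathrm{diag}(1,-1)$ is the map $(x,y)\mapsto(x,-y)$, so it \emph{fixes} the horizontal vector $\bm{\ell}_1^*$ rather than reversing it. The paper records this in $[\hat{\mathcal{F}}_x]$, whose first column is $(1,0)^{\top}$. It also relies on it in Remark~\ref{rmk_construction_analytic_eigenfunction}, where $\mathcal{F}_x v_{k,\sharp}(\cdot;\kappa_1)$ is again an eigenfunction of $\mathcal{H}_{b,\sharp,\kappa_1}$ at the \emph{same} $\kappa_1$. Thus \eqref{eq_symm_momen_sapce_3} gives $\mathcal{F}_x\mathcal{H}_{b,\sharp,\kappa_1}\mathcal{F}_x^{-1}=\mathcal{H}_{b,\sharp,\kappa_1}$, not $\mathcal{H}_{b,\sharp,-\kappa_1}$. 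Consequently $\mathcal{F}_x v_{n,\sharp}(\cdot;-\kappa_1)$ is an eigenfunction at $-\kappa_1$, and your derivation that $\mu_{n,\sharp}(-\cdot)$ is again a branch has no support. Your own computation shows the problem: $\tilde\rho(\mathcal{F}_x)$ sends the slope-$(+\alpha_*)$ vector $u_1+u_3$ to $u_2+u_4$, which is again a slope-$(+\alpha_*)$ eigenvector of $H_1$. It must be, since $\tilde\rho(\mathcal{F}_x)^{-1}H_1\tilde\rho(\mathcal{F}_x)=H_1$ by \eqref{eq_sec2_4}. The step ``which under $\kappa_1\to-\kappa_1$ becomes a slope-$(-\alpha_*)$ direction'' has nothing behind it, so neither $\mu_{1,\sharp}(\kappa_1)=\mu_{3,\sharp}(-\kappa_1)$ nor the pairing is established.

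The fix is to use $\mathcal{F}_y:=\mathcal{R}_6^3\mathcal{F}_x\in\mathcal{S}$, whose $\hat{\mathcal{F}}_y=-F_x^{ext}=\mathrm{diag}(-1,1)$ reverses $\kappa_1\bm{\ell}_1^*$; the inversion $\mathcal{R}_6^3$, with $\hat{\mathcal{R}}_6^3=-I$, also works. This is what the paper uses in \eqref{eq_sec4_12}. $\mathcal{F}_x$ then plays a different role. It commutes with $\mathcal{H}_{b,\sharp,\kappa_1}$ for every $\kappa_1$, so parity splits the degenerate space into $\mathrm{span}\{u_1+u_2,u_3+u_4\}$ (even) and $\mathrm{span}\{u_1-u_2,u_3-u_4\}$ (odd). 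On each of these, $H_1$ has the \emph{simple} eigenvalues $\pm|\alpha_*|$. Your blocks $\mathrm{span}\{u_1,u_3\}$ and $\mathrm{span}\{u_2,u_4\}$ are only a first-order artifact: $\mathcal{F}_x$ interchanges them, so they cannot fix a labeling, although the slopes themselves do follow from Kato's reduction process. Because $\mathcal{R}_6^3$ is central in $\mathcal{S}$, $\mathcal{F}_y$ commutes with $\mathcal{F}_x$. It therefore preserves parity while reversing slopes, so within each sector it must exchange the two branches. This gives $\mu_{1,\sharp}(\kappa_1)=\mu_{3,\sharp}(-\kappa_1)$ (even) and $\mu_{2,\sharp}(\kappa_1)=\mu_{4,\sharp}(-\kappa_1)$ (odd) canonically, consistent with \eqref{eq_relation_un_vn}, and your ``main obstacle'' disappears. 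For the eigenvalue identities alone, uniqueness of analytic branches as a multiset already gives $\{\mu_{1,\sharp}(-\cdot),\mu_{2,\sharp}(-\cdot)\}=\{\mu_{3,\sharp},\mu_{4,\sharp}\}$ without any case analysis. On the constant, you are right that something is off, but ``adopting the paper's normalization'' is not an argument. The matrix $\kappa_1H_1$ gives slopes $\pm|\alpha_*|$ in the variable of \eqref{eq_sec2_6}, which is also what Section~\ref{sec_sec5} uses (cf.\ the coefficient $\frac{i}{2|\alpha_*|}$ in \eqref{eq_sec_l.a._3}). The factor $\frac{\sqrt3}{2}$ is an inconsistency in the paper's conventions.
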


\begin{remark}
\label{rmk_construction_analytic_eigenfunction}
As pointed out in the preceding paragraphs, the analytic choice of Floquet-Bloch eigenfunctions is not unique. Here, we give an explicit construction of $v_{n,\sharp}(\cdot;\kappa_1)$ ($n=1,2,3,4$) near $\kappa_1=0$ satisfying the following symmetry properties:
\begin{equation} \label{eq_sec4_12}
v_{3,\sharp}(\cdot;\kappa_1)=\mathcal{F}_y v_{1,\sharp}(\cdot;-\kappa_1),\quad
v_{4,\sharp}(\cdot;\kappa_1)=\mathcal{F}_y v_{2,\sharp}(\cdot;-\kappa_1),
\end{equation}
and
\begin{equation}
\label{eq_relation_un_vn}
\begin{pmatrix}
v_{1,\sharp}(\cdot;0) \\ v_{2,\sharp}(\cdot;0) \\ v_{3,\sharp}(\cdot;0) \\ v_{4,\sharp}(\cdot;0)
\end{pmatrix}
=\frac{1}{2}
\begin{pmatrix}
\text{sgn}(\alpha_*) & \text{sgn}(\alpha_*) & 1 & 1 \\
\text{sgn}(\alpha_*) & -\text{sgn}(\alpha_*) & 1 & -1 \\
-\text{sgn}(\alpha_*) & -\text{sgn}(\alpha_*) & 1 & 1 \\
\text{sgn}(\alpha_*) & -\text{sgn}(\alpha_*) & -1 & 1
\end{pmatrix}
\begin{pmatrix}
u_1(\cdot) \\ u_2(\cdot) \\ u_3(\cdot) \\ u_4(\cdot)
\end{pmatrix},
\end{equation}
where $\{u_n\}$ are the basis of the four-dimensional irrep $\tilde{\rho}$ fixed in Theorem \ref{thm_asymptotics_eigenpairs}. The proof is based on the perturbation argument in Section \ref{sec_gap_open} (setting $\delta=0$ and $\kappa_2=0$ therein). This choice of analytic Floquet-Bloch eigenfunctions, as we will see later in Section \ref{sec_boundary_operator_limit}-\ref{sec_number_interface_mode}, is particularly convenient for studying interface modes.

Let us consider the first branch, i.e., $v_{1,\sharp}$. As seen in Section 3.1, for $\kappa_1$ near zero, $v_{1,\sharp}(\cdot;\kappa_1)$ is totally determined by $v_{1,\sharp}(\cdot;0)$, its value at zero momentum\footnote{In fact, after fixing a particular vector $v\in \mathcal{X}_{\bm{0}}$, one just applies the formula \eqref{eq_sec3_9} and \eqref{eq_sec3_13} with $u^{(0)}=v$. This will produce an analytic family of Floquet-Bloch eigenvalues of $\mathcal{H}_{b,\sharp,\kappa_1}$, which is equal to $v$ for $\kappa_1=0$.}. In other words, it is sufficient to determine the projection coefficients of $v_{1,\sharp}(\cdot;0)$ to $\text{span}\{u_n:n=1,2,3,4\}$. Taking $\lambda^{(1)}=\mu_{n,\sharp}(\kappa_1)-\lambda_*=\frac{\sqrt{3}}{2}|\alpha_*|\kappa_1+\mathcal{O}(|\kappa_1|^2)$, with $n=1,2$, in \eqref{eq_sec3_14}, one solves these coefficients and obtains 
\begin{equation} \label{eq_sec4_10}
v_{1,\sharp}(\cdot;0)=a\cdot \big(\text{sgn}(\alpha_*)u_1+u_3 \big)
+b\cdot \big(\text{sgn}(\alpha_*)u_2+u_4 \big),
\end{equation}
and
\begin{equation} \label{eq_sec4_11}
v_{2,\sharp}(\cdot;0)=c\cdot \big(\text{sgn}(\alpha_*)u_1+u_3 \big)
+d\cdot \big(\text{sgn}(\alpha_*)u_2+u_4 \big),
\end{equation}
where $a,b,c,d\in\mathbb{C}$ are constants to be determined. Once we constructed the first two branches of eigenfunctions $v_{1,\sharp}(\bm{x};\kappa_1)$ and $v_{2,\sharp}(\bm{x};\kappa_1)$ using \eqref{eq_sec4_10} and \eqref{eq_sec4_11}, we apply the $\mathcal{F}_y-$symmetry to fix the other two branches $v_{3,\sharp}(\bm{x};\kappa_1)$ and $v_{4,\sharp}(\bm{x};\kappa_1)$ as in \eqref{eq_sec4_12}. Now we exploit the $\mathcal{F}_x$-symmetry to calculate the coefficients $a,b,c,d$ in \eqref{eq_sec4_10}-\eqref{eq_sec4_11}, and, as we will see, this will lead to \eqref{eq_relation_un_vn}. Remark that, due to the periodicity of $\mathcal{X}_{\kappa_1\bm{\ell}_1^*}$ along $\bm{\ell}_2$, $\mathcal{F}_xv_{k,\sharp}(\cdot;\kappa_1)$ is still a Floquet-Bloch eigenfunction of $\mathcal{H}_{b,\sharp,\kappa_1}$. This implies that, in a neighborhood of $\kappa_1=0$, it must hold that $\mathcal{F}_x v_{k,\sharp}(\cdot;\kappa_1)$ is parallel to $v_{k,\sharp}(\cdot;\kappa_1)$ for $k=1,2$. Hence, with \eqref{eq_sec4_10}-\eqref{eq_sec4_11} and the symmetry of $u_n$ specified in the representation \eqref{eq_4d_irrep}, one can solve that
\begin{equation} \label{eq_sec4_13}
a^2=b^2,\quad c^2=d^2.
\end{equation}
On the other hand, the orthogonality between $v_{1,\sharp}(\bm{x};\kappa_1)$ and $v_{2,\sharp}(\bm{x};\kappa_1)$ gives
\begin{equation} \label{eq_sec4_14}
ac+bd=0.
\end{equation}
A nontrivial solution of \eqref{eq_sec4_13}-\eqref{eq_sec4_14} is $a=b=c=\frac{1}{2},d=-\frac{1}{2}$ (the $\frac{1}{2}-$factor is added for normalization), with which  \eqref{eq_relation_un_vn} can be directly verified using formulas \eqref{eq_sec4_10} and \eqref{eq_sec4_11}.
\end{remark}

\subsubsection{Limiting Absorption Principle and the Physical Green Operator} \label{sec_physical_green}

As pointed out at the beginning of this section, the physical Green operator and its far-field asymptotics play an important role in the study of interface modes. To be more precise, as we shall see later in Section \ref{sec_existence_interface_mode}, the physical Green operator lies in the dominant part of the boundary matching equations for solving the interface modes, while its far-field asymptotics is fundamental for the solvability of these equations.

Let us first introduce the definition of the physical Green operator. For the Hamiltonian $\mathcal{H}_{b,\sharp}$, the Green function is typically defined as the kernel associated with the resolvents $(\mathcal{H}_{b,\sharp}-\lambda)^{-1}$. However, it is apparently ill-defined for $\lambda=\lambda_*$, which lies in the spectrum $\text{Spec}(\mathcal{H}_{b,\sharp})$ as the energy level of the double Dirac cone. Nonetheless, thanks to the special spectral property of the double Dirac cone, \textit{especially the linear dispersion nearby}, one can still define \text{a physical Green operator} by choosing an appropriate radiation condition at infinity in the sense that it is still a right inverse of the Hamiltonian (see \eqref{eq_physical_green_right_inverse}). The choice of radiation condition is specified by the limiting absorption principle. To be precise, we denote $\mathcal{G}_{b,\sharp}(\lambda):=(\mathcal{H}_{b,\sharp}-\lambda)^{-1}$ as the resolvents associated with the Hamiltonian $\mathcal{H}_{b,\sharp}$. Following \cite[Theorem 6]{joly2016solutions}, one can prove that the following limit exists for any $u\in\ell^1(\Lambda/\mathbb{Z}\bm{\ell}_2)\otimes \mathbb{C}^6 \subset \mathcal{X}_{\sharp}$:\footnote{Note that the original setup of \cite[Theorem 6]{joly2016solutions} is on the resolvents of the Laplacian; nevertheless, the proof relies only on the band structure of the operator, not on its differential nature, and can be applied directly for the discrete Hamiltonian $\mathcal{H}_{b,\sharp}$. (In fact, the proof is even simpler in this case because there is no regularity problem to deal with.)}
\begin{equation} \label{eq_sec_l.a._1}
\mathcal{G}_{b,\sharp}^{l.a.}u:=\lim_{\epsilon\to 0^+}(\mathcal{H}_{b,\sharp}-\lambda_*-i\epsilon)^{-1}u.
\end{equation}
The physical Green operator is contained in $\mathcal{G}_{b,\sharp}^{l.a.}$. To see this, let us look at the main idea of the proof of \eqref{eq_sec_l.a._1}, while we refer the reader to the original paper \cite{joly2016solutions} for details. Using Fourier expansion, $(\mathcal{H}_{b,\sharp}-\lambda_*-i\epsilon)^{-1}u$ is written as a linear combination of the Floquet-Bloch eigenmodes of $\mathcal{H}_{b,\sharp}$:\footnote{There is a little abuse of notation in the bracket $\big(u(\cdot),v_{n,\sharp}(\cdot;\kappa_1)\big)_{\mathcal{X}_{\sharp}}$, which actually refers to the $\ell^1-\ell^{\infty}$ pairing (recall that $v_{n,\sharp}(\cdot;\kappa_1)$ is not $\ell^2$ localized), instead of the inner product of $\mathcal{X}_{\sharp}$. Nevertheless, their expressions are the same whenever the pairing is well-defined. We will keep this convention for ease of notation throughout this paper.}
\begin{equation}
\label{eq_sec_l.a._2}
(\mathcal{H}_{b,\sharp}-\lambda_*-i\epsilon)^{-1}u=\frac{1}{2\pi}\int_{-\pi}^{\pi}d\kappa_1 \sum_{1\leq n\leq 6}\frac{v_{n,\sharp}(\bm{n};\kappa_1)\big(u(\cdot),v_{n,\sharp}(\cdot;\kappa_1)\big)_{\mathcal{X}_{\sharp}}}{\mu_{n,\sharp}(\kappa_1)-(\lambda_*+i\epsilon)},
\end{equation}
where $(\mu_{n,\sharp},v_{n,\sharp})$ are the analytically labeled Floquet-Bloch eigenpairs introduced in the last section. From \eqref{eq_sec_l.a._2}, it is clear that $(\mathcal{H}_{b,\sharp}-\lambda_*)^{-1}$ is ill-defined due to the singularity in the denominator at the double Dirac cone (i.e., for $n=1,2,3,4$ and $\kappa_1=0$). However, \textit{thanks to the smoothness of the analytically labeled eigenfunction $\mu_{n,\sharp}$ and its non-vanishing derivatives at $\kappa_1=0$}, the limit of \eqref{eq_sec_l.a._2} for $\epsilon\to 0^+$ exists, which consists of a principal-value integral and a projection term at $\kappa_1=0$:
\begin{equation}
\label{eq_sec_l.a._3}
\begin{aligned}
&\lim_{\epsilon\to 0^+}\frac{1}{2\pi}\int_{-\pi}^{\pi}d\kappa_1 \sum_{1\leq n\leq 6}\frac{v_{n,\sharp}(\bm{n};\kappa_1)\big(u(\cdot),v_{n,\sharp}(\cdot;\kappa_1)\big)_{\mathcal{X}_{\sharp}}}{\mu_{n,\sharp}(\kappa_1)-(\lambda_*+i\epsilon)} \\
&=\frac{1}{2\pi}\text{p.v.}\int_{-\pi}^{\pi}d\kappa_1 \sum_{1\leq n\leq 6}\frac{v_{n,\sharp}(\bm{n};\kappa_1)\big(u(\cdot),v_{n,\sharp}(\cdot;\kappa_1)\big)_{\mathcal{X}_{\sharp}}}{\mu_{n,\sharp}(\kappa_1)-(\lambda_*+i\epsilon)}
+\frac{i}{2|\alpha_*|}\sum_{1\leq n\leq 4}v_{n,\sharp}(\bm{n};0)\big(u(\cdot),v_{n,\sharp}(\cdot;0)\big)_{\mathcal{X}_{\sharp}} \\
&=: \mathcal{G}_{b,\sharp}^{p.v.}u + \frac{i}{2|\alpha_*|}\sum_{1\leq n\leq 4}v_{n,\sharp}(\bm{n};0)\big(u(\cdot),v_{n,\sharp}(\cdot;0)\big)_{\mathcal{X}_{\sharp}}.
\end{aligned}
\end{equation}
Here, p.v. indicates that the $\kappa_1-$integral is understood in the sense of a Cauchy principal-value integral\footnote{In fact, the principal-value integral is only necessary for $1\leq n\leq 4$ in \eqref{eq_sec_l.a._3}, since there is no singularity in the integrands for $n=5,6$.}. The integral part, i.e., $\mathcal{G}_{b,\sharp}^{p.v.}$, is referred to as \textit{the physical Green operator}. Its basic properties are summarized as follows.

\begin{proposition} \label{prop_physical_green}
The physical Green operator $\mathcal{G}_{b,\sharp}^{p.v.}:\, \ell^1(\Lambda/\mathbb{Z}\bm{\ell}_2)\otimes \mathbb{C}^6 \subset \mathcal{X}_{\sharp} \to \ell^{\infty}(\Lambda/\mathbb{Z}\bm{\ell}_2)\otimes \mathbb{C}^6$ is bounded. It is a right-inverse of the Hamiltonian in the sense that
\begin{equation} \label{eq_physical_green_right_inverse}
(\mathcal{H}_{b,\sharp}-\lambda_*)\mathcal{G}_{b,\sharp}^{p.v.}=\mathbbm{1} \quad \text{on $\ell^1(\Lambda/\mathbb{Z}\bm{\ell}_2)\otimes \mathbb{C}^6$}.
\end{equation}
Moreover, for any $u\in\ell^1(\Lambda/\mathbb{Z}\bm{\ell}_2)\otimes \mathbb{C}^6$, the function $\mathcal{G}_{b,\sharp}^{p.v.}u$ admits the following far-field asymptotics
\begin{equation} \label{eq_physical_green_decay_1}
\begin{aligned}
\lim_{\bm{n}\cdot\bm{\ell}_1\to\infty}\Big[(\mathcal{G}_{b,\sharp}^{p.v.}u)(\bm{n})-\frac{i}{2|\alpha_*|}\big(& v_{1,\sharp}(\bm{n};0)\big(u(\cdot),v_{1,\sharp}(\cdot;0)\big)_{\mathcal{X}_{\sharp}}+v_{2,\sharp}(\bm{n};0)\big(u(\cdot),v_{2,\sharp}(\cdot;0)\big)_{\mathcal{X}_{\sharp}} \\
&-v_{3,\sharp}(\bm{n};0)\big(u(\cdot),v_{3,\sharp}(\cdot;0)\big)_{\mathcal{X}_{\sharp}}-v_{4,\sharp}(\bm{n};0)\big(u(\cdot),v_{4,\sharp}(\cdot;0)\big)_{\mathcal{X}_{\sharp}}  \big) \Big] =0,
\end{aligned}
\end{equation}
and
\begin{equation} \label{eq_physical_green_decay_2}
\begin{aligned}
\lim_{\bm{n}\cdot\bm{\ell}_1\to -\infty}\Big[(\mathcal{G}_{b,\sharp}^{p.v.}u)(\bm{n}) + \frac{i}{2|\alpha_*|}\big(& v_{1,\sharp}(\bm{n};0)\big(u(\cdot),v_{1,\sharp}(\cdot;0)\big)_{\mathcal{X}_{\sharp}}+v_{2,\sharp}(\bm{n};0)\big(u(\cdot),v_{2,\sharp}(\cdot;0)\big)_{\mathcal{X}_{\sharp}} \\
&-v_{3,\sharp}(\bm{n};0)\big(u(\cdot),v_{3,\sharp}(\cdot;0)\big)_{\mathcal{X}_{\sharp}}-v_{4,\sharp}(\bm{n};0)\big(u(\cdot),v_{4,\sharp}(\cdot;0)\big)_{\mathcal{X}_{\sharp}}  \big) \Big] =0,
\end{aligned}
\end{equation}
where the convergence rate is exponential.
\end{proposition}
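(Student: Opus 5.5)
We will work directly with the Floquet–Bloch representation \eqref{eq_sec_l.a._2}–\eqref{eq_sec_l.a._3}. For boundedness, split the integrand into the non-singular bands $n=5,6$ and the singular bands $n=1,\dots,4$. For $n=5,6$ the denominator $\mu_{n,\sharp}(\kappa_1)-\lambda_*$ is bounded away from zero by the no-fold condition \eqref{eq_no_fold}. The eigenfunctions are uniformly bounded in $\ell^\infty$ on a unit cell, and the pairing with $u\in\ell^1$ is bounded by $\|u\|_{\ell^1}$. So this part is a bounded map $\ell^1\to\ell^\infty$.

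For $n=1,\dots,4$, by \eqref{eq_no_fold} and Proposition \ref{prop_mu1234_asymptotic_even}, $\mu_{n,\sharp}(\kappa_1)-\lambda_*$ vanishes only at $\kappa_1=0$, and there it is simple with derivative $\pm\frac{\sqrt3}{2}|\alpha_*|\neq0$. Write the numerator $F_n(\kappa_1):=v_{n,\sharp}(\bm n;\kappa_1)(u,v_{n,\sharp}(\cdot;\kappa_1))$ as $F_n(0)+\kappa_1 R_n(\kappa_1)$. The principal value of $F_n(0)/(\mu_n-\lambda_*)$ is finite, because $1/(\mu_n-\lambda_*)-1/(\mu_n'(0)\kappa_1)$ is smooth. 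The remainder $\kappa_1R_n/(\mu_n-\lambda_*)$ is smooth. Both are bounded by $C\|u\|_{\ell^1}$ uniformly in $\bm n$, using analyticity of $v_{n,\sharp}$ from Proposition \ref{prop_analytic_label}. This gives boundedness.

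For the right-inverse property \eqref{eq_physical_green_right_inverse}, apply $\mathcal H_{b,\sharp}-\lambda_*$ under the integral. This is legitimate since the operator has finite range $N$, so it acts on finitely many lattice points and commutes with the p.v. limit. Using $(\mathcal H_{b,\sharp}-\lambda_*)v_{n,\sharp}=(\mu_{n,\sharp}-\lambda_*)v_{n,\sharp}$, the denominators cancel and we obtain $\frac1{2\pi}\int\sum_n v_{n,\sharp}(u,v_{n,\sharp})\,d\kappa_1=u$ by Floquet–Bloch completeness. The projection term in \eqref{eq_sec_l.a._3} is annihilated, since $v_{n,\sharp}(\cdot;0)$ lies in the kernel at $\lambda_*$. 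This also confirms consistency with the limiting-absorption formula.

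For the far-field asymptotics \eqref{eq_physical_green_decay_1}–\eqref{eq_physical_green_decay_2}, take $\bm n=n_1\bm\ell_1+\bm n_0$ with $\bm n_0$ in a reference cell. Quasi-periodicity gives $v_{n,\sharp}(\bm n;\kappa_1)=e^{i\kappa_1\cdot 2\pi n_1}\tilde v_{n}(\bm n_0;\kappa_1)$, with $\tilde v_n$ analytic in a strip. The far field then comes from a contour-deformation argument, which is the main step. The integrand is $2\pi$-periodic and analytic in a complex strip around $[-\pi,\pi]$, except for simple poles at $\kappa_1=0$ for $n\le4$; these are the only real zeros of $\mu_{n,\sharp}-\lambda_*$ by \eqref{eq_no_fold}, and $\mu_{n,\sharp}$ is analytic. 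The other zeros of $\mu_{n,\sharp}-\lambda_*$ are at positive distance from the real line.

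For $n_1\to+\infty$, shift the contour to $\mathrm{Im}\,\kappa_1=\eta>0$ small. The sidewall contributions cancel by periodicity, and the shifted integral is $O(e^{-2\pi\eta n_1})$. Passing the p.v. contour past the pole picks up half a residue with sign $+i\pi$ (times $\frac1{2\pi}$). This equals $\frac{i}{2}\,F_n(0)/\mu_n'(0)$. By Proposition \ref{prop_mu1234_asymptotic_even}, $\mu_n'(0)=+\frac{\sqrt3}{2}|\alpha_*|$ for $n=1,2$ and $-\frac{\sqrt3}{2}|\alpha_*|$ for $n=3,4$. This produces the signs $+,+,-,-$ in \eqref{eq_physical_green_decay_1}. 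The overall constant must be matched to $\frac{1}{2|\alpha_*|}$ by tracking the normalization of $\kappa_1$ in the dual basis; the factor $\sqrt3/2$ is absorbed by $|\bm\ell_1^*|$. I would reconcile this constant carefully with the one already used in \eqref{eq_sec_l.a._3}.

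For $n_1\to-\infty$, shift downward. The half residue now has the opposite sign, giving \eqref{eq_physical_green_decay_2}. The bands $n=5,6$ are analytic in the strip and contribute only exponentially decaying terms, which gives the exponential rate. The main obstacle is justifying the deformation uniformly for $u\in\ell^1$, which is not compactly supported: the pairing $(u,v_{n,\sharp}(\cdot;\kappa_1))$ need not extend analytically off the real axis. To handle this, I would first prove the statement for finitely supported $u$, where the pairing is a trigonometric polynomial in $\kappa_1$ and hence entire. I would then extend to general $u$ by density, using the uniform $\ell^1\to\ell^\infty$ bound. If the exponential rate is needed in full generality, I would state it for $u$ with exponential decay, or accept that for general $\ell^1$ data only convergence holds.
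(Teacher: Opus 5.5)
Your route is the one the paper relies on. The paper gives no argument of its own: it defers boundedness to a standard principal-value estimate, and the right-inverse property and far-field asymptotics to the Floquet--Bloch/residue argument of \cite{joly2016solutions}, which is essentially your contour shift. Your right-inverse computation is correct. Your half-residue bookkeeping gives the right sign pattern: $+,+,-,-$ as $\bm n\cdot\bm\ell_1\to+\infty$, reversed as $\bm n\cdot\bm\ell_1\to-\infty$. Your caveat on the rate is also justified. Since $\mathcal{G}_{b,\sharp}^{p.v.}(\bm n,\bm m)$ tends to the \emph{opposite} propagating kernel as $m_1-n_1\to+\infty$, the error at $\bm n$ contains a contribution of the size of the tail of $u$ beyond $n_1$. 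Exponential convergence therefore holds only for finitely supported (or exponentially decaying) $u$, which is all the paper later uses.

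The genuine gap is the boundedness step, on which your closing density argument depends. You Taylor-expand the whole numerator $F_n(\kappa_1)=v_{n,\sharp}(\bm n;\kappa_1)\big(u,v_{n,\sharp}(\cdot;\kappa_1)\big)$ and assert that $\kappa_1R_n/(\mu_{n,\sharp}-\lambda_*)$ is bounded by $C\|u\|_{\ell^1}$ uniformly in $\bm n$. That is false, for two reasons. First, $v_{n,\sharp}(\bm n;\kappa_1)$ carries the Bloch phase in $n_1$ and the pairing carries phases in $m_1$. For a point source at $\bm m$ the remainder therefore behaves like $(e^{it\kappa_1}-1)/\kappa_1$ with $t$ proportional to $n_1-m_1$, whose supremum is of order $|n_1-m_1|$. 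Analyticity of $v_{n,\sharp}$ does not help, because its $\kappa_1$-derivatives grow linearly in $n_1$. Second, for general $u\in\ell^1$ the map $\kappa_1\mapsto(u,v_{n,\sharp}(\cdot;\kappa_1))$ is merely continuous (an absolutely convergent Fourier series), so $R_n$ need not even be integrable near $\kappa_1=0$.

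The bound is true, but only through oscillation, and you should work with the kernel first. Write $v_{n,\sharp}(\bm n;\kappa_1)\overline{v_{n,\sharp}(\bm m;\kappa_1)}^{\top}=e^{it\kappa_1}g(\kappa_1)$, where $g$ is built from the periodic parts and is smooth with derivatives bounded uniformly over the finitely many cell positions. Taylor-expand only $g$, and use $\big|\int_{\epsilon<|\kappa_1|<\pi}e^{it\kappa_1}\kappa_1^{-1}\,d\kappa_1\big|=2\big|\int_{\epsilon}^{\pi}\sin(t\kappa_1)\kappa_1^{-1}\,d\kappa_1\big|\le C$, uniformly in $t$ and $\epsilon$. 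This gives truncated kernels bounded uniformly in $\bm n,\bm m,\epsilon$. Dominated convergence then yields existence of the principal value for every $u\in\ell^1$, together with the $\ell^1\to\ell^\infty$ bound. Your contour shift for point sources gives the same uniform bound on the limiting kernel. Only after this is your density argument valid.

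Three smaller points:
\begin{itemize}
\item The factor $\overline{v_{n,\sharp}(\cdot;\kappa_1)}$ is not analytic and must be continued as $\overline{v_{n,\sharp}(\cdot;\bar\kappa_1)}$. For finitely supported $u$ it is the Floquet transform $\hat u(\kappa_1)$, not the pairing, that is a trigonometric polynomial.
\item The vertical sides cancel only for the total integrand, best written as $\big[(\mathcal{H}_{b,\sharp,\kappa_1}-\lambda_*)^{-1}\hat u(\kappa_1)\big](\bm n)$. Individual analytic branches are not periodic here: a branch through the cone passes from the lower to the upper pair and, by the no-fold condition, cannot return within one period.
\item The half-residue yields the prefactor $1/(2|\mu_{n,\sharp}'(0)|)$. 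The paper's slope normalizations are not mutually consistent, and rescaling by $|\bm\ell_1^*|=4\pi/\sqrt3$ does not turn $\frac{\sqrt3}{2}$ into $1$. State the constant through $\mu_{n,\sharp}'(0)$.
\end{itemize}
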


The boundedness of the physical Green operator follows from the standard estimate of the principal-value integral, following the lines of \cite[Proposition 4.4]{qiu2026waveguide_localized}. The invertibility \eqref{eq_physical_green_right_inverse} and the far-field asymptotics \eqref{eq_physical_green_decay_1}-\eqref{eq_physical_green_decay_2} are proved by the same argument as in \cite[Theorem 6]{joly2016solutions}.

\begin{remark}
The physical meaning of the far-field asymptotics presented in Proposition \ref{prop_physical_green} will be clear if we substitute them into \eqref{eq_sec_l.a._3}. In fact, by this substitution, one sees that the operator $\mathcal{G}_{b,\sharp}^{l.a.}$ obtained by the limiting absorption principle satisfies
\begin{equation*}
\begin{aligned}
\lim_{\bm{n}\cdot\bm{\ell}_1\to\infty}\Big[(\mathcal{G}_{b,\sharp}^{l.a.}u)(\bm{n})-\frac{i}{|\alpha_*|}\big(& v_{1,\sharp}(\bm{n};0)\big(u(\cdot),v_{1,\sharp}(\cdot;0)\big)_{\mathcal{X}_{\sharp}}+v_{2,\sharp}(\bm{n};0)\big(u(\cdot),v_{2,\sharp}(\cdot;0)\big)_{\mathcal{X}_{\sharp}}   \big) \Big] =0,
\end{aligned}
\end{equation*}
and
\begin{equation*}
\lim_{\bm{n}\cdot\bm{\ell}_1\to -\infty}\Big[(\mathcal{G}_{b,\sharp}^{l.a.}u)(\bm{n})-\frac{i}{|\alpha_*|}\big( v_{3,\sharp}(\bm{n};0)\big(u(\cdot),v_{3,\sharp}(\cdot;0)\big)_{\mathcal{X}_{\sharp}}+v_{4,\sharp}(\bm{n};0)\big(u(\cdot),v_{4,\sharp}(\cdot;0)\big)_{\mathcal{X}_{\sharp}}   \big) \Big] =0.
\end{equation*}
This means that, for any $u$, $\mathcal{G}_{b,\sharp}^{l.a.}u$ consists of an evanescent part and a right-propagating (left-propagating) part, where the propagating part consists exactly of \textit{the Floquet-Bloch modes at the double Dirac cone with positive (negative) group velocity}, i.e., $v_{n,\sharp}(\cdot;0)$ for $n=1,2$ ($n=3,4$), as seen from Figure \ref{fig_analytic_label}. In the following section, the distinct propagation directions of the Floquet-Bloch modes at the double Dirac cone are illustrated in more detail, based on the notion of energy flux.
\end{remark}

\subsubsection{Discrete Energy Flux of Floquet-Bloch Modes at the Double Dirac Cone}
\label{sec_energy_flux}

In this section, we introduce a sesquilinear form defined on $\mathcal{X}_{\sharp}$. When applied on the Floquet-Bloch eigenmodes $v_{n,\sharp}(\cdot;\kappa_1)$, it produces the energy flux carried by the corresponding eigenmodes (or alternatively, the group velocity, if we drop all units). Moreover, as we will see later in Section \ref{sec_existence_interface_mode}, this sesquilinear form provides a convenient orthogonalization of $\mathcal{X}_{\sharp}$ to solve the boundary matching equation and to obtain the interface modes.

Let us first fix the notation. Since $\mathcal{X}_{\sharp}\simeq \ell^2 (\Lambda_1)\otimes \mathbb{C}^6$ is associated with the one-dimensional lattice $\Lambda_1=\mathbb{Z}\bm{\ell}_1$ generated by the unit vector $\bm{\ell}_1$, we will call $\mathcal{X}_{\sharp}$ \textit{the strip space}, and the bulk Hamiltonian $\mathcal{H}_{b,\sharp}$ on $\mathcal{X}_{\sharp}$ \textit{the strip Hamiltonian}. It is convenient to write the strip Hamiltonian into a nearest-neighbor-hopping form. This is achieved by introducing the following isomorphism:
\begin{equation} \label{eq_NN_isomorphism}
\begin{aligned}
\iota: \mathbb{Z} \times (\{1,\cdots,N\}\times \{1,\cdots,6\}) &\to \Lambda_1 \times \{1,\cdots,6\}, \\
(n,(\sigma_1,\sigma_2))& \mapsto \big((n\cdot N +\sigma_1)\bm{\ell}_1,\sigma_2\big),
\end{aligned}
\end{equation}
where $N>0$ is the Hamiltonian range (see Definition \ref{asmp_short_range}). With a little abuse of notation, the induced isomorphism by $\iota$ between $\tilde{\mathcal{X}}_{\sharp}:=\ell^2(\mathbb{Z})\otimes \mathbb{C}^{6N}$ (referred to as \textit{the blocked strip space}) and $\mathcal{X}_{\sharp}$ is still denoted by $\iota$, and its adjoint is denoted by $\iota^*$. With this notation, \textit{the nearest-neighbor-hopping (NNH) strip Hamiltonian is defined as}
\begin{equation*}
\tilde{\mathcal{H}}_{b,\sharp}:=\iota^* \mathcal{H}_{b,\sharp} \iota \in \mathcal{B}(\tilde{\mathcal{X}}_{\sharp}).
\end{equation*}
As suggested by the name, the pivotal reason for introducing the NNH strip Hamiltonian is that its kernel satisfies
\begin{equation*}
\tilde{\mathcal{H}}_{b,\sharp}(n,m)=0,\quad \text{for }|n-m|>1 .
\end{equation*}
Intuitively, we are encapsulating $N$ sites of the lattice (on which the original Hamiltonian $\mathcal{H}_{b,\sharp}$ is defined) into a single block such that the hopping term in the NNH Hamiltonian $\tilde{\mathcal{H}}_{b,\sharp}$ is restricted to the nearest neighbors.

With these preparations, we are now ready to introduce the sesquilinear form of energy flux.
\begin{definition}[Energy Flux Form] \label{def_sesqui_form}
For any $\phi,\varphi\in \tilde{\mathcal{X}}_{\sharp}$, the sesquilinear form of the energy flux at the site $n\in\mathbb{Z}$ is defined as
\begin{equation*}
\begin{aligned}
\mathfrak{a}(\phi,\varphi;n)&:=\big(\tilde{\mathcal{H}}_{b,\sharp}(n,n-1)\phi(n-1),\varphi(n) \big)_{\mathbb{C}^{6N}}-\big(\tilde{\mathcal{H}}_{b,\sharp}(n-1,n)\phi(n),\varphi(n-1) \big)_{\mathbb{C}^{6N}} \\
&=\big(\tilde{\mathcal{H}}_{b,\sharp}(1,0)\phi(n-1),\varphi(n) \big)_{\mathbb{C}^{6N}}-\big(\tilde{\mathcal{H}}_{b,\sharp}(0,1)\phi(n),\varphi(n-1) \big)_{\mathbb{C}^{6N}} .
\end{aligned}
\end{equation*}
Or more explicitly,
\begin{equation*}
\begin{aligned}
\mathfrak{a}(\phi,\varphi;n)=\overline{\varphi(n)}^{\top}\cdot\tilde{\mathcal{H}}_{b,\sharp}(1,0)\phi(n-1)-\overline{\varphi(n-1)}^{\top}\cdot \tilde{\mathcal{H}}_{b,\sharp}(0,1)\phi(n).
\end{aligned}
\end{equation*}
\end{definition}
 Note that one of the advantages of introducing the NNH strip Hamiltonian can be seen from Definition \ref{def_sesqui_form}: the inner product defining the sesquilinear form $\mathfrak{a}(\cdot,\cdot)$, which is actually a sum of $N$ terms if calculated on $\mathcal{X}_{\sharp}$, is now encapsulated into a simple product on $\tilde{\mathcal{X}}_{\sharp}$ that significantly simplifies the notation.

The main properties of this sesquilinear form are described as follows, which are important for the analysis of interface modes in the following sections. Their proof follows the lines of \cite[Theorem 3]{joly2016solutions} (see Remark \ref{rmk_integral_by_parts} for a brief comment). For ease of notation, we denote by
\begin{equation} \label{eq_blocked_analytic_Bloch_mode}
\tilde{v}_n:=\iota^* v_n(\cdot;0)\quad \text{for }1\leq n\leq 4,
\end{equation}
the blocked Floquet-Bloch eigenmodes at the double Dirac cone.

\begin{proposition} \label{prop_sesqui_form}
The value of $\mathfrak{a}(\tilde{v}_i,\tilde{v}_j;n)$ is independent of the site $n$, that is,
\begin{equation} \label{eq_sesqui_form_1}
\mathfrak{a}(\tilde{v}_i,\tilde{v}_j;n)\equiv \mathfrak{a}(\tilde{v}_i,\tilde{v}_j;0)=:\mathfrak{a}(\tilde{v}_i,\tilde{v}_j) \quad (1\leq i,j\leq 4). 
\end{equation}
When $i=j$, $\mathfrak{a}(\tilde{v}_i,\tilde{v}_j)$ is proportional to the slope of the dispersion curve $\mu_{j,\sharp}$ at $\kappa_1=0$, i.e.,
\begin{equation} \label{eq_sesqui_form_2}
\mathfrak{a}(\tilde{v}_j,\tilde{v}_j)=
\left\{
\begin{aligned}
&i|\alpha_*|,\quad j=1,2, \\ &-i|\alpha_*|,\quad j=3,4.
\end{aligned}
\right.
\end{equation}
When $i\neq j$, it holds that
\begin{equation} \label{eq_sesqui_form_3}
\mathfrak{a}(\tilde{v}_i,\tilde{v}_j)=0.
\end{equation}
\end{proposition}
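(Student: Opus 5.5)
The argument is a discrete Green's identity (summation by parts) on the blocked strip space $\tilde{\mathcal{X}}_{\sharp}$, followed by a Hellmann--Feynman computation of the diagonal terms.

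\emph{Site independence \eqref{eq_sesqui_form_1}.} Write the eigenvalue equation $(\tilde{\mathcal{H}}_{b,\sharp}-\lambda_*)\tilde v_i=0$ at site $n$:
\[
\tilde H_{10}\tilde v_i(n-1)+(\tilde H_{00}-\lambda_*)\tilde v_i(n)+\tilde H_{01}\tilde v_i(n+1)=0,
\]
where $\tilde H_{10}=\tilde{\mathcal{H}}_{b,\sharp}(1,0)$, $\tilde H_{01}=\tilde{\mathcal{H}}_{b,\sharp}(0,1)=\tilde H_{10}^*$, and $\tilde H_{00}$ is Hermitian. Pair this equation with $\tilde v_j(n)$. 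Subtract the conjugate of the same identity with $i$ and $j$ swapped, using that $\lambda_*$ is real. The diagonal terms cancel, and what remains is
\[
\mathfrak{a}(\tilde v_i,\tilde v_j;n+1)-\mathfrak{a}(\tilde v_i,\tilde v_j;n)=0 .
\]
Checking the signs against Definition \ref{def_sesqui_form} is routine. This gives \eqref{eq_sesqui_form_1}.

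\emph{Diagonal terms \eqref{eq_sesqui_form_2}.} Consider the analytic family $\tilde v_j(\kappa_1):=\iota^*v_{j,\sharp}(\cdot;\kappa_1)$. In blocked variables it is quasi-periodic, $\tilde v_j(n;\kappa_1)=e^{iN\kappa_1 n}\tilde v_j(0;\kappa_1)$, up to the factor $2\pi$ in the $\bm{\ell}_1^*$ normalisation. It satisfies
\[
\big(\tilde H_{00}+e^{-iN\kappa_1}\tilde H_{10}+e^{iN\kappa_1}\tilde H_{01}\big)\tilde v_j(0;\kappa_1)=\mu_{j,\sharp}(\kappa_1)\,\tilde v_j(0;\kappa_1).
\]
Differentiate at $\kappa_1=0$ and pair with $\tilde v_j(0)$, which has unit norm in $\mathcal{X}_{\bm 0}$ after rescaling by $N$ sites. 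The term containing $\partial_{\kappa_1}\tilde v_j$ drops out because the operator is self-adjoint. What is left is $\mu_{j,\sharp}'(0)$ expressed as $iN\bigl(\tilde H_{01}\tilde v_j(0),\tilde v_j(0)\bigr)-iN\bigl(\tilde H_{10}\tilde v_j(0),\tilde v_j(0)\bigr)$, and this is exactly a multiple of $\mathfrak{a}(\tilde v_j,\tilde v_j;1)$ once we use $\tilde v_j(1)=\tilde v_j(0)$ at $\kappa_1=0$. Inserting the slopes $\pm\frac{\sqrt3}{2}|\alpha_*|$ from Proposition \ref{prop_mu1234_asymptotic_even}, positive for $j=1,2$ and negative for $j=3,4$, yields $\pm i|\alpha_*|$. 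The constant must be tracked carefully: the $\frac{\sqrt3}{2}$ comes from $|\bm{\ell}_1^*|$ against the $\kappa_1$ parametrization, and the normalisation of $v_{n,\sharp}$ is per unit cell.

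\emph{Off-diagonal terms \eqref{eq_sesqui_form_3}.} This is the main obstacle. The modes are degenerate in energy, and a pair with opposite group velocities, say $\mu_1$ and $\mu_3$, cannot be separated by an energy argument alone. I would first try the degenerate Hellmann--Feynman identity: differentiating the eigen-relation for $\tilde v_j(\kappa_1)$ and pairing with $\tilde v_i(0)$, for $i\ne j$ in the degenerate eigenspace, gives
\[
(\partial_{\kappa_1}\tilde{\mathcal H}\,\tilde v_j,\tilde v_i)=\mu_j'(0)\,(\tilde v_j,\tilde v_i)=0,
\]
since the derivative terms cancel against $(\mathcal H(0)-\lambda_*)\tilde v_i=0$. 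The left-hand side is a multiple of $\mathfrak{a}(\tilde v_j,\tilde v_i)$ by the same computation as in the diagonal case, so it vanishes. In other words, the analytic labelling diagonalises $\partial_{\kappa_1}\mathcal H$ on the degenerate space, which is the content of \eqref{eq_relation_un_vn}. As a cross-check, one can compute directly with $\underline H^{(1,0)}_1$ from \eqref{eq_sec3_5} in the basis \eqref{eq_relation_un_vn} and verify that the matrix is $\operatorname{diag}(|\alpha_*|,|\alpha_*|,-|\alpha_*|,-|\alpha_*|)$ up to the $\frac{\sqrt3}{2}$ factor.
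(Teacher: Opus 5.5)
Your proposal is correct and follows the route the paper points to through the energy-flux argument of \cite{joly2016solutions}. Site independence comes from the discrete Green identity \eqref{eq_discrete_integral_by_parts} with $f_i=0$. Differentiating the blocked Bloch eigenproblem at the degenerate point $\kappa_1=0$, and pairing with the orthonormal modes $v_{k,\sharp}(\cdot;0)$ of \eqref{eq_relation_un_vn}, gives $\mathfrak{a}(\tilde v_j,\tilde v_i)=i\,\mu_{j,\sharp}'(0)\,\delta_{ij}$, which covers both the diagonal and the off-diagonal cases at once. Your treatment of the $\frac{\sqrt3}{2}$ is also right: it converts $|\kappa_1\bm{\ell}_1^*|$ (up to $2\pi$) into the coefficient $\kappa_1$, and along $\kappa_2=0$ the slope in $\kappa_1$ is exactly $\pm|\alpha_*|$, as the eigenvalues $\pm\alpha_*$ of $\underline{H}^{(1,0)}_1$ show.
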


We interpret \eqref{prop_sesqui_form} as follows. Physically, \eqref{eq_sesqui_form_2} justifies the name of the sesquilinear form $\mathfrak{a}(\cdot,\cdot)$: when applying it on the Floquet-Bloch eigenmode, its value is proportional to the energy flux carried by these modes, since the latter quantity equals the group velocity if we drop all units. Moreover, \eqref{eq_sesqui_form_1} states that the energy flux has the same value when evaluated on different sites; this follows simply from the fact that $\tilde{v}_i$ are eigenmodes of $\tilde{\mathcal{H}}_{b,\sharp}$, i.e., $$(\tilde{\mathcal{H}}_{b,\sharp}-\lambda_*)\tilde{v}_i=0$$ and no sources appear. Finally, by \eqref{eq_sesqui_form_3}, we see that different eigenmodes are decoupled under the action of $\mathfrak{a}(\cdot,\cdot)$. From a mathematical point of view, Proposition \ref{prop_sesqui_form} indicates that the blocked eigenmodes $\tilde{v}_i$ are orthogonalized by the sesquilinear form $\mathfrak{a}(\cdot,\cdot)$. As we stated at the beginning of this section, this orthogonalization appears to be important for the study of interface modes.

We refer the reader to the original paper \cite{joly2016solutions}, from which we learn the idea of energy flux associated with the Floquet-Bloch modes, for a detailed discussion on the continuous version of the sesquilinear form $\mathfrak{a}(\cdot,\cdot)$. We have also extended some aspects of this form that are not proved in \cite{joly2016solutions}; see \cite[Section 6]{qiu2024square_lattice}.

\begin{remark} \label{rmk_integral_by_parts}
We briefly comment on the proof of Proposition \ref{prop_sesqui_form}. Relying only on the band structure of the operator, the proof of \cite[Theorem 3]{joly2016solutions} directly carries to our case. We note that the key step in \cite[Theorem 3]{joly2016solutions} is the Gauss-Green formula that applies to functions in the strip, which has the following discrete counterparts in our case
\begin{equation} \label{eq_discrete_integral_by_parts}
\begin{aligned}
&\sum_{k\leq m \leq k+p}(f_i(m),\phi_j(m))_{\mathbb{C}^{6N}}-(\phi_i(m),f_j(m))_{\mathbb{C}^{6N}} =\mathfrak{a}(\phi_i,\phi_j;k)-\mathfrak{a}(\phi_i,\phi_j;k+p+1)
\end{aligned}
\end{equation}
with
\begin{equation*}
f_i:=(\tilde{\mathcal{H}}_{b,\sharp}-\lambda)\phi_i \quad (i=1,2).
\end{equation*}
The remaining details are left to the interested reader.
\end{remark}

\subsection{Discrete Layer Potential Formulation of Interface Modes} \label{sec_layer_potential_framework}

In this section, we set up the basic framework for studying interface modes based on a discrete layer potential framework. As in Section \ref{sec_energy_flux}, we prefer to work with the NNH strip Hamiltonian for the ease of notation. To be precise, we will solve the following equivalent eigenvalue problem:
\begin{equation} \label{eq_blocked_interface_problem}
(\tilde{\mathcal{H}}_{zig,\sharp}-\lambda)u=0 \quad \text{with } \tilde{\mathcal{H}}_{zig,\sharp}:=\iota \mathcal{H}_{zig,\sharp} \iota^*,
\end{equation}
for $\lambda\in\mathcal{I}_\delta$ (bulk spectral gap) and $u\in \tilde{\mathcal{X}}_{\sharp}$. By \eqref{eq_interface_Hamiltonian}, the kernel of the NNH interface Hamiltonian $\tilde{\mathcal{H}}_{zig,\sharp}$ can be written explicitly as
\begin{equation*}
\tilde{\mathcal{H}}_{zig,\sharp}(n,m)=
\left\{
\begin{aligned}
&\tilde{\mathcal{H}}_{\delta,\sharp}(n,m),\quad n,m\geq 0, \\
&\tilde{\mathcal{H}}_{-\delta,\sharp}(n,m),\quad n,m< 0, \\
&(\tilde{\mathcal{H}}_{b,\sharp}+\delta\tilde{\mathcal{E}}_{\sharp})(n,m),\quad \text{otherwise},
\end{aligned}
\right.
\end{equation*}
with $\tilde{\mathcal{O}}:=\iota^*\mathcal{O} \iota$ for $\mathcal{O}\in \{\mathcal{H}_{\pm\delta,\sharp},\mathcal{H}_{b,\sharp},\mathcal{E}_{\sharp}\}$, and the subscripts indicate the restriction to $\mathcal{X}_{\sharp}$.

The first key observation is that, if $u\in \tilde{\mathcal{X}}_{\sharp}$ solves \eqref{eq_blocked_interface_problem}, one can prove that $u(n)$ admits the following layer-potential expression by the (discrete) integration by parts\footnote{Specifically, to obtain  formula \eqref{eq_interface_mode_layer_potential_form} for $n\geq 0$, one just applies \eqref{eq_discrete_integral_by_parts} taking $\phi_1(m)=u(m)$, $\phi_2(m)=\tilde{\mathcal{G}}_{\delta,\sharp}(m,n)$, $k=0$, replacing $\tilde{\mathcal{H}}_{b,\sharp}$ by $\tilde{\mathcal{H}}_{zig,\sharp}$, and using the fact that the last term vanishes as $p\to\infty$; the case where $n<0$ can be proved similarly.}
\begin{equation} \label{eq_interface_mode_layer_potential_form}
u(n)=\left\{
\begin{aligned}
&\tilde{\mathcal{G}}_{\delta,\sharp}(n,-1)\tilde{\mathcal{H}}_{\delta,\sharp}(-1,0)a-\tilde{\mathcal{G}}_{\delta,\sharp}(n,0)\tilde{\mathcal{H}}_{zig,\sharp}(0,-1)b,\quad n\geq 0, \\
&-\tilde{\mathcal{G}}_{-\delta,\sharp}(n,-1)\tilde{\mathcal{H}}_{zig,\sharp}(-1,0)a+\tilde{\mathcal{G}}_{-\delta,\sharp}(n,0)\tilde{\mathcal{H}}_{-\delta,\sharp}(0,-1)b,\quad n< 0, \\
\end{aligned}
\right.
\end{equation}
with $a=u(0),b=u(-1)\in\mathbb{C}^{6N}$, and
\begin{equation*}
\tilde{\mathcal{G}}_{\pm\delta,\sharp}=\tilde{\mathcal{G}}_{\pm\delta,\sharp}(\lambda):= (\tilde{\mathcal{H}}_{\pm\delta,\sharp}-\lambda)^{-1} = \iota^{*} (\mathcal{H}_{\pm\delta,\sharp}-\lambda)^{-1} \iota.
\end{equation*}
Throughout, we will omit the $\lambda$ dependence of $\tilde{\mathcal{G}}_{\pm\delta,\sharp}$ if there is no confusion. Moreover, evaluating \eqref{eq_blocked_interface_problem} at $n=-1,0$ with \eqref{eq_interface_mode_layer_potential_form}, one can derive the following \textit{boundary matching equation}:
\begin{equation} \label{eq_boundary_matching}
\mathcal{M}(\lambda,\delta)
\begin{pmatrix}
a \\ b
\end{pmatrix}
=0,
\end{equation}
with the Hermitian matrix $\mathcal{M}(\lambda,\delta)\in \mathbb{C}^{12N\times 12N}$ defined as
\begin{equation} \label{eq_M_matrix_elements}
\mathcal{M}(\lambda,\delta):=
\begin{pmatrix}
\substack{-\tilde{\mathcal{H}}_{\delta,\sharp}(0,-1)\tilde{\mathcal{G}}_{\delta,\sharp}(-1,-1)\tilde{\mathcal{H}}_{\delta,\sharp}(-1,0) \\ -\tilde{\mathcal{H}}_{zig,\sharp}(0,-1)\tilde{\mathcal{G}}_{-\delta,\sharp}(-1,-1)\tilde{\mathcal{H}}_{zig,\sharp}(-1,0)} &
\substack{-\tilde{\mathcal{H}}_{zig,\sharp}(0,-1)+\tilde{\mathcal{H}}_{\delta,\sharp}(0,-1)\tilde{\mathcal{G}}_{\delta,\sharp}(-1,0)\tilde{\mathcal{H}}_{zig,\sharp}(0,-1) \\ +\tilde{\mathcal{H}}_{zig,\sharp}(0,-1)\tilde{\mathcal{G}}_{-\delta,\sharp}(-1,0)\tilde{\mathcal{H}}_{-\delta,\sharp}(0,-1)} \\
\\
\substack{-\tilde{\mathcal{H}}_{zig,\sharp}(-1,0)+\tilde{\mathcal{H}}_{-\delta,\sharp}(-1,0)\tilde{\mathcal{G}}_{-\delta,\sharp}(0,-1)\tilde{\mathcal{H}}_{zig,\sharp}(-1,0) \\ +\tilde{\mathcal{H}}_{zig,\sharp}(-1,0)\tilde{\mathcal{G}}_{\delta,\sharp}(0,-1)\tilde{\mathcal{H}}_{\delta,\sharp}(-1,0)} &
\substack{-\tilde{\mathcal{H}}_{-\delta,\sharp}(-1,0)\tilde{\mathcal{G}}_{-\delta,\sharp}(0,0)\tilde{\mathcal{H}}_{-\delta,\sharp}(0,-1) \\ -\tilde{\mathcal{H}}_{zig,\sharp}(-1,0)\tilde{\mathcal{G}}_{\delta,\sharp}(0,0)\tilde{\mathcal{H}}_{zig,\sharp}(0,-1)}
\end{pmatrix}.
\end{equation}

On the other hand, restricting $n=-1,0$ in \eqref{eq_interface_mode_layer_potential_form} produces the following condition:
\begin{equation} \label{eq_boundary_matching_auxliary}
\mathcal{M}^{aux}(\lambda,\delta)
\begin{pmatrix}
a \\ b
\end{pmatrix}
=\begin{pmatrix}
a \\ b
\end{pmatrix},
\end{equation}
with
\begin{equation} \label{eq_M_aux_matrix_elements}
\mathcal{M}^{aux}(\lambda,\delta):=
\begin{pmatrix}
\tilde{\mathcal{G}}_{\delta,\sharp}(0,-1)\tilde{\mathcal{H}}_{\delta,\sharp}(-1,0) & -\tilde{\mathcal{G}}_{\delta,\sharp}(0,0)\tilde{\mathcal{H}}_{zig,\sharp}(0,-1) \\
-\tilde{\mathcal{G}}_{-\delta,\sharp}(-1,-1)\tilde{\mathcal{H}}_{zig,\sharp}(-1,0) & \tilde{\mathcal{G}}_{-\delta,\sharp}(-1,0)\tilde{\mathcal{H}}_{-\delta,\sharp}(0,-1)
\end{pmatrix}.
\end{equation}

Remarkably, whenever the characteristic value problem \eqref{eq_boundary_matching} has a nontrivial solution and the left side of \eqref{eq_boundary_matching_auxliary} is nonzero, one can check that the function $u$ constructed by the discrete layer-potential formula \eqref{eq_interface_mode_layer_potential_form} is indeed a nontrivial eigenmode that solves \eqref{eq_blocked_interface_problem}. This means that we have derived an equivalent formulation for the \textit{existence of solutions to \eqref{eq_blocked_interface_problem}}.

\begin{proposition} \label{prop_existence_equivalence}
The eigenvalue problem \eqref{eq_blocked_interface_problem} has solutions if and only if
\begin{itemize}
    \item[1)] the characteristic value problem \eqref{eq_boundary_matching} has a nontrivial solution $(\lambda,(a,b)^{\top})\in \mathcal{I}_{\delta}\times (\mathbb{C}^{6N}\times \mathbb{C}^{6N})$;
    \item[2)] the solution in 1) satisfies $\mathcal{M}^{aux}(\lambda,\delta) (a,b)^{\top}\neq 0$.
\end{itemize}
\end{proposition}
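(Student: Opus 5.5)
We prove the two implications separately. Throughout, $\lambda\in\mathcal{I}_\delta$, so $\lambda\notin\text{Spec}(\tilde{\mathcal{H}}_{\pm\delta,\sharp})$ by Theorem \ref{thm_gap_open}. Hence the resolvents $\tilde{\mathcal{G}}_{\pm\delta,\sharp}(\lambda)$ are bounded on $\tilde{\mathcal{X}}_{\sharp}$. Since $\lambda$ lies at positive distance from the spectrum, their kernels $\tilde{\mathcal{G}}_{\pm\delta,\sharp}(n,m)$ also decay exponentially in $|n-m|$ (Combes--Thomas estimate for the finite-range, NNH operators).

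\emph{Necessity.} Let $u\in\tilde{\mathcal{X}}_{\sharp}$ be a nontrivial solution of \eqref{eq_blocked_interface_problem}, and set $a=u(0)$, $b=u(-1)$.

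First we derive the layer-potential representation \eqref{eq_interface_mode_layer_potential_form}. For $n\ge 0$, apply the summation-by-parts identity \eqref{eq_discrete_integral_by_parts} on the half-line $m\ge 0$ with $\phi_1=u$ and $\phi_2=\tilde{\mathcal{G}}_{\delta,\sharp}(\cdot,n)$. Since $\tilde{\mathcal{H}}_{zig,\sharp}$ coincides with $\tilde{\mathcal{H}}_{\delta,\sharp}$ on $m,m'\ge 0$, the source terms reduce to $\delta_{mn}$ plus a contribution at $m=0$ from the coupling block $\tilde{\mathcal{H}}_{zig,\sharp}(0,-1)$. The boundary term at $p\to\infty$ vanishes because $u\in\ell^2$ and the Green kernel decays. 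What remains is exactly the first line of \eqref{eq_interface_mode_layer_potential_form}. The case $n<0$ is symmetric, using $\tilde{\mathcal{G}}_{-\delta,\sharp}$.

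Next, evaluate the equation $(\tilde{\mathcal{H}}_{zig,\sharp}-\lambda)u=0$ at the two sites $n=0,-1$ and substitute the representation for $u(1)$ and $u(-2)$. Only nearest-neighbour blocks appear, so this yields precisely $\mathcal{M}(\lambda,\delta)(a,b)^{\top}=0$. Restricting \eqref{eq_interface_mode_layer_potential_form} to $n=0,-1$ gives $\mathcal{M}^{aux}(a,b)^{\top}=(a,b)^{\top}$.

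Finally, $(a,b)\neq 0$: otherwise the representation forces $u\equiv 0$. Therefore $\mathcal{M}^{aux}(a,b)^{\top}=(a,b)^{\top}\neq0$, which gives 1) and 2).

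\emph{Sufficiency.} Given $(\lambda,(a,b))$ satisfying 1) and 2), \emph{define} $u$ by \eqref{eq_interface_mode_layer_potential_form}. It lies in $\tilde{\mathcal{X}}_{\sharp}$ by the decay of the resolvent kernels.

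For $n\ge1$, $(\tilde{\mathcal{H}}_{zig,\sharp}-\lambda)u(n)$ involves only sites $n-1,n,n+1\ge0$ and only the $\tilde{\mathcal{H}}_{\delta,\sharp}$ blocks. It therefore equals $(\tilde{\mathcal{H}}_{\delta,\sharp}-\lambda)$ applied to the first line of \eqref{eq_interface_mode_layer_potential_form}. The identity $(\tilde{\mathcal{H}}_{\delta,\sharp}-\lambda)\tilde{\mathcal{G}}_{\delta,\sharp}(\cdot,m)=\delta_{\cdot m}$ then shows this vanishes, since the sources sit at $m=-1,0$ and $n\ge1$. The same argument handles $n\le-2$.

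At $n=0,-1$ the equation does not follow automatically. Here we need the values $u(0),u(-1)$ of the constructed function, which by definition are the components of $\mathcal{M}^{aux}(a,b)^{\top}$ rather than $a,b$ themselves. The main obstacle is to check that, after expanding with the defining identities of $\tilde{\mathcal{G}}_{\pm\delta,\sharp}$, the residuals at $n=0,-1$ are exactly the rows of $\mathcal{M}(\lambda,\delta)(a,b)^{\top}$.

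Our plan for this step is to write the residual at $n=0$ as $\tilde{\mathcal{H}}_{zig,\sharp}(0,-1)u(-1)+(\tilde{\mathcal{H}}_{zig,\sharp}(0,0)-\lambda)u(0)+\tilde{\mathcal{H}}_{\delta,\sharp}(0,1)u(1)$. We then replace the last two terms using the Green-function identity at row $0$, namely $(\tilde{\mathcal{H}}_{\delta,\sharp}-\lambda)\tilde{\mathcal{G}}_{\delta,\sharp}(0,m)=\delta_{0m}$ with the missing term involving $\tilde{\mathcal{G}}_{\delta,\sharp}(-1,m)$ restored. This produces exactly the first block row of \eqref{eq_M_matrix_elements}, in which the diagonal blocks of $\tilde{\mathcal{H}}_{zig,\sharp}$ and $\tilde{\mathcal{H}}_{\pm\delta,\sharp}$ at the interface are matched through the $\delta\tilde{\mathcal{E}}_{\sharp}$ term as in the definition of $\mathcal{M}$. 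The row $n=-1$ is handled analogously.

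It follows that $(\tilde{\mathcal{H}}_{zig,\sharp}-\lambda)u=0$ everywhere. Moreover $u\neq0$, because $(u(0),u(-1))^{\top}=\mathcal{M}^{aux}(a,b)^{\top}\neq0$ by 2). This is precisely why condition 2) is required, and it completes the equivalence.
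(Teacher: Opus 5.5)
Your proposal is correct and follows the paper's argument. The layer-potential representation comes from discrete summation by parts. For the layer-potential function, the residual at $n=0,-1$ is exactly $\mathcal{M}(\lambda,\delta)(a,b)^{\top}$, while its values there are $\mathcal{M}^{aux}(\lambda,\delta)(a,b)^{\top}$, so nontriviality is exactly condition 2); you in fact carry out the check the paper leaves as ``one can check''.

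There are two wording slips. First, substituting the representation only for $u(1),u(-2)$ does not literally give $\mathcal{M}(a,b)^{\top}$; you also need $u(0),u(-1)$ replaced by their layer-potential values, i.e.\ $\mathcal{M}^{aux}(a,b)^{\top}=(a,b)^{\top}$, exactly as your sufficiency computation does. Second, $\delta\tilde{\mathcal{E}}_{\sharp}$ enters only the off-diagonal blocks $\tilde{\mathcal{H}}_{zig,\sharp}(0,-1)$ and $\tilde{\mathcal{H}}_{zig,\sharp}(-1,0)$, so no matching of diagonal blocks occurs.
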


Now we consider \textit{the precise number of interface modes}. To this end, the formulation outlined in Proposition \ref{prop_existence_equivalence} is not sufficient: it is possible that two distinct solutions, say $(a_i,b_i)^{\top}$ for $i=1,2$, may lead to the same interface modes. Importantly, this redundant degree of freedom is eliminated by \eqref{eq_boundary_matching_auxliary}: if $(a_i,b_i)^{\top}$ also solves \eqref{eq_boundary_matching_auxliary}, then the interface modes constructed by \eqref{eq_interface_mode_layer_potential_form} must be linearly independent as their data at $n=-1,0$ are independent. In conclusion, we have argued that the following result holds. 

\begin{proposition} \label{prop_number_equivalence}
Suppose that the eigenvalue problem \eqref{eq_blocked_interface_problem} has solutions. Then the total number of eigenvalues of \eqref{eq_blocked_interface_problem}, counted with their multiplicities, equals the number of characteristic values of the system of equations \eqref{eq_boundary_matching} and \eqref{eq_boundary_matching_auxliary}.
\end{proposition}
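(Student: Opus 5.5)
The natural route is to set up a bijection between the eigenspace $\text{Ker}(\tilde{\mathcal{H}}_{zig,\sharp}-\lambda)$ and the solution space
\[
\mathcal{V}(\lambda):=\big\{(a,b)^{\top}\in\mathbb{C}^{12N}:\ \mathcal{M}(\lambda,\delta)(a,b)^{\top}=0,\ \mathcal{M}^{aux}(\lambda,\delta)(a,b)^{\top}=(a,b)^{\top}\big\}
\]
for every $\lambda\in\mathcal{I}_\delta$. Summing dimensions over the characteristic values in $\mathcal{I}_\delta$ then gives the count. Since $\tilde{\mathcal{H}}_{zig,\sharp}$ is self-adjoint, algebraic and geometric multiplicities of its eigenvalues coincide. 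The multiplicity of a characteristic value of the joint system is understood as $\dim\mathcal{V}(\lambda)$.

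\emph{Forward map.} Let $u\in\tilde{\mathcal{X}}_{\sharp}$ solve \eqref{eq_blocked_interface_problem}. Set $\Phi(u):=(u(0),u(-1))^{\top}$; this map is linear.
\begin{itemize}
\item The layer-potential representation \eqref{eq_interface_mode_layer_potential_form} holds with $a=u(0)$, $b=u(-1)$. It comes from the discrete Gauss--Green identity \eqref{eq_discrete_integral_by_parts} on each half-line. The boundary terms at infinity vanish because $u\in\ell^2$ and $\tilde{\mathcal{G}}_{\pm\delta,\sharp}(\lambda)$ decays exponentially for $\lambda$ in the gap $\mathcal{I}_\delta$ (Theorem \ref{thm_gap_open}).
\item Evaluating \eqref{eq_interface_mode_layer_potential_form} at $n=0,-1$ gives exactly \eqref{eq_boundary_matching_auxliary}.
\item Inserting \eqref{eq_interface_mode_layer_potential_form} into the equation at the rows $n=0,-1$ (the only rows where the NNH kernel mixes the two media) gives \eqref{eq_boundary_matching}.
\end{itemize}
Hence $\Phi(u)\in\mathcal{V}(\lambda)$.

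\emph{Injectivity of $\Phi$.} If $u(0)=u(-1)=0$, then \eqref{eq_interface_mode_layer_potential_form} with $a=b=0$ forces $u\equiv0$.

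\emph{Surjectivity.} Given $(a,b)\in\mathcal{V}(\lambda)$, define $u$ by \eqref{eq_interface_mode_layer_potential_form}.
\begin{itemize}
\item $u$ lies in $\ell^2$, by the exponential decay of the half-space resolvent kernels in the gap.
\item For $n\ge1$, the relation $(\tilde{\mathcal{H}}_{\delta,\sharp}-\lambda)\tilde{\mathcal{G}}_{\delta,\sharp}=\mathbbm{1}$ gives $(\tilde{\mathcal{H}}_{zig,\sharp}u-\lambda u)(n)=0$, since the source terms are supported at $n=-1,0$ and the Hamiltonian has nearest-neighbour block structure. The rows $n\le-2$ are handled in the same way with $\tilde{\mathcal{G}}_{-\delta,\sharp}$.
\item The auxiliary condition \eqref{eq_boundary_matching_auxliary} guarantees that the constructed $u$ actually satisfies $u(0)=a$ and $u(-1)=b$.
\item With that consistency in hand, the rows $n=0,-1$ of the eigen-equation reduce to $\mathcal{M}(a,b)^{\top}=0$, which holds by assumption.
\end{itemize}
So $u$ is an eigenvector with $\Phi(u)=(a,b)^{\top}$, and $\Phi$ is an isomorphism. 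This is also where Proposition \ref{prop_existence_equivalence} is sharpened. Without \eqref{eq_boundary_matching_auxliary}, distinct kernel vectors of $\mathcal{M}$ could produce the same $u$ (or zero). With it, linear independence of data translates directly into linear independence of modes.

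The main obstacle is the surjectivity check at the interface rows $n=0,-1$. I expect the identity $\mathcal{M}\,(a,b)^{\top}=\big[(\tilde{\mathcal{H}}_{zig,\sharp}-\lambda)u\big](0,-1)$ to hold only \emph{after} substituting $u(0)=a$ and $u(-1)=b$. I would first verify this by expanding $\tilde{\mathcal{H}}_{zig,\sharp}(0,\cdot)u$ into its three contributions $n=-1,0,1$. The next step uses the resolvent identity at row $0$, $(\tilde{\mathcal{H}}_{\delta,\sharp}-\lambda)\tilde{\mathcal{G}}_{\delta,\sharp}(\cdot,m)=\delta_{\cdot m}$, to rewrite the $n=0,1$ terms via $\tilde{\mathcal{G}}_{\delta,\sharp}(0,\cdot)$ and $\tilde{\mathcal{G}}_{\delta,\sharp}(-1,\cdot)$. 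Here the half-space resolvent is read as that of $\tilde{\mathcal{H}}_{\delta,\sharp}$ extended to the full line, which is consistent with \eqref{eq_M_matrix_elements}. I would then match the resulting entries with \eqref{eq_M_matrix_elements}, keeping careful track of which off-diagonal hopping ($\tilde{\mathcal{H}}_{\pm\delta,\sharp}$ or $\tilde{\mathcal{H}}_{zig,\sharp}$) appears in each term.
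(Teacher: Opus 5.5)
Your proposal is correct and follows the paper's argument, which the paper only sketches: the auxiliary equation \eqref{eq_boundary_matching_auxliary} forces the layer potential built from $(a,b)^{\top}$ to take exactly these values at $n=0,-1$, so your map $u\mapsto(u(0),u(-1))^{\top}$ is a linear isomorphism from $\text{Ker}(\tilde{\mathcal{H}}_{zig,\sharp}-\lambda)$ onto the joint solution space. One correction to the obstacle you anticipated: for the layer potential $u$ defined by \eqref{eq_interface_mode_layer_potential_form}, the resolvent identities already give $\big([(\tilde{\mathcal{H}}_{zig,\sharp}-\lambda)u](0),[(\tilde{\mathcal{H}}_{zig,\sharp}-\lambda)u](-1)\big)^{\top}=\mathcal{M}(\lambda,\delta)(a,b)^{\top}$ for every $(a,b)$, without using $u(0)=a$ or $u(-1)=b$, so the auxiliary equation is needed only to identify $\Phi(u)$ with $(a,b)^{\top}$.
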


\begin{remark}
Another advantage (in fact, the main reason) of introducing the NNH strip Hamiltonian is seen from the layer-potential formulation of interface modes, i.e. \eqref{eq_boundary_matching} and \eqref{eq_boundary_matching_auxliary}, in which the matrix elements of $\mathcal{M}$ and $\mathcal{M}^{aux}$ are concisely expressed as products of nearest-block-hopping elements, instead of expanding into sums of all $N$ neighbors.
\end{remark}

\subsection{Limits of Boundary Operators}

\label{sec_boundary_operator_limit}

In this section, we study the limits of the boundary operators $\mathcal{M}$ and $\mathcal{M}^{aux}$ as $\delta\to 0^+$, which constitute the most important ingredients in solving the boundary matching equations \eqref{eq_boundary_matching} and \eqref{eq_boundary_matching_auxliary}. In the sequel, we reparameterize the energy in (a complex neighborhood of) the bulk spectral gap $\mathcal{I}_{\delta}$ as
\begin{equation*}
\lambda=\lambda_*+\delta\cdot h,\quad 
h\in\mathcal{J}:=\{z\in\mathbb{C}:\, |z|<|c_*\beta_*|\},
\end{equation*}
where $c_*,\beta_*\in\mathbb{R}$ are introduced in Theorem \ref{thm_gap_open}. With this notation, the main results of this section are as follows.
\begin{proposition} \label{prop_boundary_operator_limit}
As $\delta\to 0^+$, the matrix $\mathcal{M}(\lambda_*+\delta\cdot h,\delta)\in\mathbb{C}^{12N\times 12N}$ converges in matrix norm uniformly for $h\in\mathcal{J}$:
\begin{equation} \label{eq_M_limit}
\lim_{\delta\to 0^+}\Big\|\mathcal{M}(\lambda_*+\delta\cdot h,\delta)-\big(\mathcal{M}^{p.v.}+\xi(h)\mathcal{A}\big)\Big\|_{\mathbb{C}^{12N\times 12N}}=0,
\end{equation}
where, in the limit operator, $\mathcal{M}^{p.v.}$ is defined by replacing the Green operator $\tilde{\mathcal{G}}_{\pm\delta,\sharp}$ in \eqref{eq_M_matrix_elements} with $\tilde{\mathcal{G}}_{b,\sharp}^{p.v.}:=\iota^{*} \mathcal{G}_{b,\sharp}^{p.v.}\iota$ and Hamiltonians with $\tilde{\mathcal{H}}_{b,\sharp}$:
\begin{equation} \label{eq_M_PV_expression}
\mathcal{M}^{p.v.}:=
\begin{pmatrix}
\substack{-2\tilde{\mathcal{H}}_{b,\sharp}(0,-1)\tilde{\mathcal{G}}_{b,\sharp}^{p.v.}(-1,-1)\tilde{\mathcal{H}}_{b,\sharp}(-1,0)} &
\substack{-\tilde{\mathcal{H}}_{b,\sharp}(0,-1)+2\tilde{\mathcal{H}}_{b,\sharp}(0,-1)\tilde{\mathcal{G}}_{b,\sharp}^{p.v.}(-1,0)\tilde{\mathcal{H}}_{b,\sharp}(0,-1)} \\
\\
\substack{-\tilde{\mathcal{H}}_{b,\sharp}(-1,0)+2\tilde{\mathcal{H}}_{b,\sharp}(-1,0)\tilde{\mathcal{G}}_{b,\sharp}^{p.v.}(0,-1)\tilde{\mathcal{H}}_{b,\sharp}(-1,0) } &
\substack{-2\tilde{\mathcal{H}}_{b,\sharp}(-1,0)\tilde{\mathcal{G}}_{b,\sharp}^{p.v.}(0,0)\tilde{\mathcal{H}}_{b,\sharp}(0,-1)  }
\end{pmatrix}.
\end{equation}
The function $\xi(h)$ is analytic in $h\in\mathcal{J}$ with the explicit expression
\begin{equation} \label{eq_xi_function}
\xi(h):=\frac{h}{|\alpha_*|\sqrt{\beta_*^2-h^2}},
\end{equation}
and the matrix $\mathcal{A}$ is a four-dimensional projection on $\mathbb{C}^{12N}$:
\begin{equation} \label{eq_A_projection}
\mathcal{A}:=\sum_{k=1}^{4}
\begin{pmatrix}
\substack{-\tilde{\mathcal{H}}_{b,\sharp}(0,-1)\tilde{v}_k(-1)\overline{\tilde{\mathcal{H}}_{b,\sharp}(0,-1)\tilde{v}_k(-1)}^{\top}} & \substack{\tilde{\mathcal{H}}_{b,\sharp}(0,-1)\tilde{v}_k(-1)\overline{\tilde{\mathcal{H}}_{b,\sharp}(-1,0)\tilde{v}_k(0)}^{\top}} \\
\\
\substack{\tilde{\mathcal{H}}_{b,\sharp}(-1,0)\tilde{v}_k(0)\overline{\tilde{\mathcal{H}}_{b,\sharp}(0,-1)\tilde{v}_k(-1)}^{\top}} & \substack{-\tilde{\mathcal{H}}_{b,\sharp}(-1,0)\tilde{v}_k(0)\overline{\tilde{\mathcal{H}}_{b,\sharp}(-1,0)\tilde{v}_k(0)}^{\top}}
\end{pmatrix}
\end{equation}
in which the (blocked) Floquet-Bloch eigenmodes $\tilde{v}_k$ are introduced in \eqref{eq_blocked_analytic_Bloch_mode}. Similarly, the operator $\mathcal{M}^{aux}$ admits the following limit in norm:
\begin{equation} \label{eq_M_aux_limit}
\lim_{\delta\to 0^+}\Big\|\mathcal{M}^{aux}(\lambda_*+\delta\cdot h,\delta)-\big(\mathcal{M}^{aux,p.v.}+\xi(h)\mathcal{A}^{aux}_1+\eta(h)\mathcal{A}^{aux}_2\big)\Big\|_{\mathbb{C}^{12N\times 12N}}=0,
\end{equation}
where
\begin{equation} \label{eq_M_aux_PV_expression}
\mathcal{M}^{aux,p.v.}:=
\begin{pmatrix}
\substack{\tilde{\mathcal{G}}_{b,\sharp}^{p.v.}(0,-1)\tilde{\mathcal{H}}_{b,\sharp}(-1,0)} &
\substack{-\tilde{\mathcal{G}}_{b,\sharp}^{p.v.}(0,0)\tilde{\mathcal{H}}_{b,\sharp}(0,-1)} \\
\\
\substack{-\tilde{\mathcal{G}}_{b,\sharp}^{p.v.}(-1,-1)\tilde{\mathcal{H}}_{b,\sharp}(-1,0) } &
\substack{\tilde{\mathcal{G}}_{b,\sharp}^{p.v.}(-1,0)\tilde{\mathcal{H}}_{b,\sharp}(0,-1)  }
\end{pmatrix},
\end{equation}
\begin{equation} \label{eq_A_aux_1_projection}
\mathcal{A}^{aux}_1:=\sum_{k=1}^{4}\frac{1}{2}
\begin{pmatrix}
\substack{\tilde{v}_k(0)\overline{\tilde{\mathcal{H}}_{b,\sharp}(0,-1)\tilde{v}_k(-1)}^{\top}} & \substack{-\tilde{v}_k(0)\overline{\tilde{\mathcal{H}}_{b,\sharp}(-1,0)\tilde{v}_k(0)}^{\top}} \\
\\
\substack{-\tilde{v}_k(-1)\overline{\tilde{\mathcal{H}}_{b,\sharp}(0,-1)\tilde{v}_k(-1)}^{\top}} & \substack{\tilde{v}_k(-1)\overline{\tilde{\mathcal{H}}_{b,\sharp}(-1,0)\tilde{v}_k(0)}^{\top}}
\end{pmatrix},
\end{equation}
\begin{equation} \label{eq_A_aux_2_projection}
\mathcal{A}^{aux}_2:=\sum_{k=1}^{4}\frac{s(k)}{2}
\begin{pmatrix}
\substack{\tilde{v}_k(0)\overline{\tilde{\mathcal{H}}_{b,\sharp}(0,-1)\tilde{v}_{p(k)}(-1)}^{\top}} & \substack{-\tilde{v}_k(0)\overline{\tilde{\mathcal{H}}_{b,\sharp}(-1,0)\tilde{v}_{p(k)}(0)}^{\top}} \\
\\
\substack{\tilde{v}_k(-1)\overline{\tilde{\mathcal{H}}_{b,\sharp}(0,-1)\tilde{v}_{p(k)}(-1)}^{\top}} & \substack{-\tilde{v}_k(-1)\overline{\tilde{\mathcal{H}}_{b,\sharp}(-1,0)\tilde{v}_{p(k)}(0)}^{\top}}
\end{pmatrix},
\end{equation}
and
\begin{equation} \label{eq_eta_function}
\eta(h):=\frac{\beta_*}{|\alpha_*|\sqrt{\beta_*^2-h^2}}.
\end{equation}
Here, in \eqref{eq_A_aux_2_projection}, $s(k)$ denotes the parity of integers
\begin{equation*}
s(k)=1 \quad \text{for even $k$},\quad s(k)=-1 \quad \text{for odd $k$},
\end{equation*}
and $p:=(13)(24)$ is a permutation on the finite set $\{1,2,3,4\}$.
\end{proposition}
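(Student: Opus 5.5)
The plan is to reduce both limits to a single statement: the entries of the blocked Green operators $\tilde{\mathcal{G}}_{\pm\delta,\sharp}(\lambda_*+\delta h)$ at the four sites $(m,n)\in\{-1,0\}^2$ converge. Every entry of $\mathcal{M}$ and $\mathcal{M}^{aux}$ is a fixed finite product of hopping blocks and these Green entries. The hopping blocks $\tilde{\mathcal{H}}_{\pm\delta,\sharp}(0,-1)$, $\tilde{\mathcal{H}}_{zig,\sharp}(0,-1)$, etc., differ from the corresponding blocks of $\tilde{\mathcal{H}}_{b,\sharp}$ by $\mathcal{O}(\delta)$, because $\mathcal{H}_{per}$ and $\mathcal{E}$ are bounded with finite range. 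It therefore suffices to prove the expansion
\begin{equation*}
\tilde{\mathcal{G}}_{\pm\delta,\sharp}(\lambda_*+\delta h)(m,n)=\tilde{\mathcal{G}}^{p.v.}_{b,\sharp}(m,n)+\sum_{k=1}^4 c^{\pm}_k(h)\,\tilde v_k(m)\overline{\tilde v_{k'}(n)}^{\top}+o(1),
\end{equation*}
uniformly in $h\in\mathcal{J}$, with explicit coefficients $c^{\pm}_k(h)$. The $\mathcal{O}(\delta)$ errors coming from the hoppings are harmless once this expansion shows the Green entries are $\mathcal{O}(1)$.

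The expansion would come from the Floquet representation analogous to \eqref{eq_sec_l.a._2}, namely
\begin{equation*}
\tilde{\mathcal{G}}_{\pm\delta,\sharp}(\lambda)(m,n)=\frac{1}{2\pi}\int_{-\pi}^{\pi}\sum_{j}\frac{\tilde v_{j,\pm\delta}(m;\kappa_1)\overline{\tilde v_{j,\pm\delta}(n;\kappa_1)}^{\top}}{\lambda_{j,\pm\delta}(\kappa_1)-\lambda}\,d\kappa_1 .
\end{equation*}
I would split the integral with a cutoff at $|\kappa_1|<\epsilon_0$, for a fixed small $\epsilon_0$. On the outer region, the no-fold condition \eqref{eq_no_fold} and \eqref{eq_isolated_bands} keep the denominators bounded away from zero. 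The integrand then converges uniformly to its $\delta=0$ counterpart, which is the corresponding part of the principal-value integral.

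On the inner region I would insert the expansions of Theorem \ref{thm_asymptotics_eigenpairs} with $\kappa_2=0$. Near $\kappa_1=0$ the denominators are $\mp\sqrt{\beta_*^2\delta^2+\frac34\alpha_*^2\kappa_1^2}-\delta h$ up to relative errors, and the eigenprojections are explicit rank-one combinations of $u_1,\dots,u_4$. After the rescaling $\kappa_1=\delta s$ and pairing the band $\lambda_{1,\pm\delta}$ with $\lambda_{4,\pm\delta}$, and $\lambda_{2,\pm\delta}$ with $\lambda_{3,\pm\delta}$, the inner integral splits into two parts:
\begin{itemize}
\item an odd-in-$s$ part, which reproduces the principal value of $\mu_{n,\sharp}(\kappa_1)-\lambda_*$ over $|\kappa_1|<\epsilon_0$ in the limit;
\item an even part, where elementary integrals of the type
\begin{equation*}
\int_{\mathbb{R}}\frac{h\,ds}{\beta_*^2+\frac34\alpha_*^2s^2-h^2}=\frac{2\pi h}{\sqrt3|\alpha_*|\sqrt{\beta_*^2-h^2}}
\end{equation*}
and their $\beta_*$-weighted analogues produce $\xi(h)$ and $\eta(h)$.
\end{itemize}
The $\beta_*$-weighted terms come from the $\delta$-dependent coefficient $\frac{(\kappa_1+\dots)\alpha_*}{\beta_*\delta+\sqrt{\cdot}}$ in the eigenfunctions. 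This coefficient is odd in $\kappa_1$ and has the opposite sign for $\pm\delta$. These are the band-inversion terms, and they couple $u_1$ with $u_3$ and $u_2$ with $u_4$.

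I would then rewrite the $u_k$ in terms of the $v_{k,\sharp}(\cdot;0)$ via \eqref{eq_relation_un_vn}. In $\mathcal{M}$, the coefficients appear as $c^+ + c^-$ and their combination with the interface geometry. There the $\eta$-type terms from $+\delta$ and $-\delta$ should cancel exactly, because of the sign flip, leaving $\xi(h)\mathcal{A}$. The principal-value parts double, which explains the factors of $2$ in $\mathcal{M}^{p.v.}$. In $\mathcal{M}^{aux}$, each Green function appears only once. The even terms therefore survive, giving $\xi\mathcal{A}_1^{aux}$ with a $\frac12$, and the inversion terms give $\eta\mathcal{A}_2^{aux}$. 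The permutation $p=(13)(24)$ and the sign $s(k)$ come out of the change of basis \eqref{eq_relation_un_vn}. Uniformity in $h\in\mathcal{J}$ follows because $|h|<|c_*\beta_*|$ keeps $\beta_*^2-h^2$ bounded away from $0$.

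I expect the main obstacle to be controlling the inner-region error uniformly. The relative errors $\mathcal{O}(|\kappa_1|+\delta)$ in the eigenpairs are divided by a denominator of size $\delta+|\kappa_1|$. This gives an integrable $\mathcal{O}(1)$ integrand on $|\kappa_1|<\epsilon_0$, but not an obviously small one. I would handle this by subtracting the $\delta=0$ analytic-label integrand from Proposition \ref{prop_analytic_label}, which has the same singular structure, and applying dominated convergence to the rescaled difference. The tail contributions are of order $\log$ in $1/\delta$ times $\delta$, hence $o(1)$. The bookkeeping of signs in matching to $\mathcal{A}$, $\mathcal{A}^{aux}_{1,2}$ is routine but lengthy.
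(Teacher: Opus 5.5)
Your proposal is correct and has the same structure as the paper's proof. You reduce to the four Green blocks $\tilde{\mathcal{G}}_{\pm\delta,\sharp}(m,n)$, $m,n\in\{-1,0\}$, and you rightly note that the $\mathcal{O}(\delta)$ changes in the hopping blocks are harmless. You write these blocks as Floquet integrals, insert Theorem~\ref{thm_asymptotics_eigenpairs} near $\kappa_1=0$, and pair bands $1$--$4$ and $2$--$3$ so the odd cross terms drop. The even integral then gives $\tfrac12\xi(h)$ and $\pm\tfrac12\eta(h)$, you change basis by \eqref{eq_relation_un_vn}, and you add the $\pm\delta$ Green functions in $\mathcal{M}$ (so $\eta$ cancels and the principal-value part doubles) while using each once in $\mathcal{M}^{aux}$.

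The genuine difference is the cutoff. The paper splits at $|\kappa_1|=\delta^{1/3}$. Outside, the denominators are $\gtrsim\delta^{1/3}\gg\delta$, so regular perturbation gives the whole principal-value integral at once. Inside, the eigenpair errors are $\mathcal{O}(\delta^{1/3})$ in relative size, which beats the $\mathcal{O}(\log(1/\delta))$ size of the leading integral and gives an explicit rate uniform in $h$. Your fixed $\epsilon_0$ makes the outer region trivial and moves the work inside, and your subtraction argument does close. For $\epsilon_0$ small, the remainder integrand (exact minus leading order) is $\mathcal{O}(1)$ uniformly in $\kappa_1,\delta,h$. For each fixed $\kappa_1\neq 0$ it converges to its $\delta=0$ analogue, which is bounded because the leading $\delta=0$ integrand carries exactly the $1/\kappa_1$ singularity.

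Two clarifications on this step. First, the odd leading part integrates to zero over $(-\epsilon_0,\epsilon_0)$. So the inner principal-value piece, of size $\mathcal{O}(\epsilon_0)$, comes entirely from the remainder, not from the odd term. Second, dominated convergence at fixed $h$ does not by itself give the uniformity over $\mathcal{J}$ claimed in the statement. You can either apply it to the supremum over $h$, or use the uniform $\mathcal{O}(\epsilon_0)$ bound on the inner remainder and let $\epsilon_0\to 0$ after $\delta\to 0$; the choice $\epsilon_0=\delta^{1/3}$ does exactly this automatically.

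Finally, the constant in your displayed integral is off. Take $\kappa_1\in[-\pi,\pi]$ to be the Floquet phase along $\bm{\ell}_1$. Then $\tfrac34|\bm{\kappa}|^2=\kappa_1^2$ on the line $\kappa_2=0$, so the denominator is $\beta_*^2+|\alpha_*|^2s^2-h^2$. After the factor $\tfrac{1}{2\pi}$ this yields exactly $\tfrac12\xi(h)$, whereas $\tfrac34|\alpha_*|^2s^2$ would put you off by a factor $\tfrac{2}{\sqrt3}$.
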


\begin{remark}
As will be seen later, the limit operator in \eqref{eq_M_limit} is of ``Fredholm type", in the sense that $\mathcal{M}^{p.v.}$ is pseudo-invertible with a kernel space of \textit{$N-$independent dimension}, while the dimension of the projection $\mathcal{A}$ is also $N-$independent. We point out that in the previous papers \cite{qiu2024square_lattice,qiu2026waveguide_localized}, where the boundary space is infinite-dimensional, the associated boundary operator is actually Fredholm in the usual sense.
\end{remark}

\begin{remark} \label{rmk_absence_band_inversion}
We remark on the scenario when a band inversion is absent across the interface. Consider the case where $\mathcal{H}_{-\delta}$ is replaced by $\mathcal{H}_{\delta}$ in \eqref{eq_interface_Hamiltonian}, i.e. the governing operators for bulk media across the interface are the same. In that case, even though there is still a common band gap between the bulk Hamiltonians, there is no band inversion at the end points of the gap. As a consequence, we will see a dramatic change in the limiting boundary matching operator: one replaces $-\delta \to \delta$ in \eqref{eq_M_matrix_elements} and calculates its limit as in Proposition \ref{prop_boundary_operator_limit}, the result of which will be
\begin{equation} \label{eq_limiting_operator_absence_band_inversion}
\mathcal{M}^{p.v.}+\xi(h)\mathcal{A}+2\eta(h)
\begin{pmatrix}
-\tilde{\mathcal{H}}_{b,\sharp}(0,-1) & 0 \\ 0 & \tilde{\mathcal{H}}_{b,\sharp}(-1,0)
\end{pmatrix}
\mathcal{A}^{aux}_2 .
\end{equation}
Compared to \eqref{eq_M_limit}, there is an additional projector proportional to $\eta(h)$. Following the calculation in Section \ref{sec_existence_interface_mode}, one can prove that the operator \eqref{eq_limiting_operator_absence_band_inversion} is invertible for all $h\in\mathcal{J}$, in sharp contrast to the limit operator in \eqref{eq_M_limit} which has four characteristic values. With this fact, one can directly prove the absence of interface modes for the interface model lacking a band inversion across the interface, which perfectly matches the physical intuition.
\end{remark}

Before showing the proof of Proposition \ref{prop_boundary_operator_limit}, which is based on an asymptotic analysis of the Green operator $\tilde{\mathcal{G}}_{\pm\delta,\sharp}$, we present the following property of the dominant operator $\mathcal{M}^{p.v.}$ in the limit \eqref{eq_M_limit}. It is critical to solve the boundary matching equation \eqref{eq_boundary_matching} to find the interface modes.

\begin{proposition} \label{prop_M_PV_kernel}
The matrix $\mathcal{M}^{p.v.}\in\mathbb{C}^{12N\times 12N}$ defined in \eqref{eq_M_PV_expression} is Hermitian. Moreover, the null space of $\mathcal{M}^{p.v.}$ is spanned by the boundary data of the (blocked) Floquet-Bloch eigenmodes, i.e.,
\begin{equation} \label{eq_M_PV_kernel}
\ker \mathcal{M}^{p.v.}=\text{span}\big\{ \big(\tilde{v}_{k}(0),\tilde{v}_{k}(-1) \big)^{\top},\,1\leq k\leq 4 \big\}.
\end{equation}
\end{proposition}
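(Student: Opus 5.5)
Hermiticity of $\mathcal{M}^{p.v.}$ comes from the structure of \eqref{eq_M_PV_expression} together with two facts. First, the blocked strip Hamiltonian is self-adjoint, so $\tilde{\mathcal{H}}_{b,\sharp}(0,-1)^*=\tilde{\mathcal{H}}_{b,\sharp}(-1,0)$. Second, the physical Green operator has a Hermitian kernel, $\tilde{\mathcal{G}}^{p.v.}_{b,\sharp}(m,n)^*=\tilde{\mathcal{G}}^{p.v.}_{b,\sharp}(n,m)$. The second fact follows from the principal-value representation \eqref{eq_sec_l.a._3}: the denominators $\mu_{n,\sharp}(\kappa_1)-\lambda_*$ are real for real $\kappa_1$, and the integrand is the rank-one kernel $v_{n,\sharp}(\bm n;\kappa_1)\overline{v_{n,\sharp}(\bm m;\kappa_1)}^{\top}$. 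With these, the diagonal blocks are self-adjoint and the two off-diagonal blocks are adjoints of each other.

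For the kernel I would first prove the inclusion $\supseteq$. Fix $k$ and set $\phi=\tilde v_k$, which solves $(\tilde{\mathcal{H}}_{b,\sharp}-\lambda_*)\phi=0$ on all of $\mathbb{Z}$. Truncate it to the half-line, $\phi_+=\mathbbm 1_{n\ge0}\phi$. The function $(\tilde{\mathcal{H}}_{b,\sharp}-\lambda_*)\phi_+$ is supported on $\{-1,0\}$, with values $\tilde{\mathcal{H}}_{b,\sharp}(-1,0)\phi(0)$ at $n=-1$ and $-\tilde{\mathcal{H}}_{b,\sharp}(0,-1)\phi(-1)$ at $n=0$. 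Apply $\tilde{\mathcal{G}}^{p.v.}_{b,\sharp}$ to this compactly supported source. The result, $\psi_+$, differs from $\phi_+$ by a solution of the homogeneous equation, by \eqref{eq_physical_green_right_inverse}. The far-field asymptotics \eqref{eq_physical_green_decay_1}--\eqref{eq_physical_green_decay_2} identify this difference: it is $\frac12\phi$. The factor $\frac12$ comes from the flux formula: the coefficients $\frac{i}{2|\alpha_*|}(\text{source},v_j)$ reduce through \eqref{eq_discrete_integral_by_parts} to $\frac{\pm i}{2|\alpha_*|}\mathfrak a(\tilde v_k,\tilde v_j)$, and Proposition \ref{prop_sesqui_form} evaluates these. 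To make this identification rigorous, I would note that two solutions with the same boundary data at two consecutive blocks agree everywhere. This needs the invertibility of the hopping block, i.e. \eqref{eq_non_singular_hopping}, since it lets one propagate the three-term recursion in both directions. Evaluating $\tilde{\mathcal{G}}^{p.v.}_{b,\sharp}$ of the source at $n=-1,0$ then gives two identities, $\tilde{\mathcal{G}}^{p.v.}(\cdot,-1)\tilde{\mathcal{H}}(-1,0)\phi(0)-\tilde{\mathcal{G}}^{p.v.}(\cdot,0)\tilde{\mathcal{H}}(0,-1)\phi(-1)=\tfrac12\phi(\cdot)$ at $\cdot=-1,0$. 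Substituting these into the rows of $\mathcal{M}^{p.v.}$, together with the equation $(\tilde{\mathcal{H}}_{b,\sharp}-\lambda_*)\phi=0$ at $n=-1,0$, shows that $(\tilde v_k(0),\tilde v_k(-1))^{\top}$ is annihilated. These four vectors are linearly independent, again by the non-singular hopping: vanishing boundary data would force the mode itself to vanish.

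The reverse inclusion $\subseteq$ is the main obstacle. Take $(a,b)\in\ker\mathcal{M}^{p.v.}$ and define $u$ by the layer-potential formula \eqref{eq_interface_mode_layer_potential_form}, with $\tilde{\mathcal{G}}^{p.v.}_{b,\sharp}$ on both sides. The condition $\mathcal{M}^{p.v.}(a,b)^{\top}=0$ says exactly that $u$ solves $(\tilde{\mathcal{H}}_{b,\sharp}-\lambda_*)u=0$ at $n=-1,0$. Away from the interface the equation holds automatically, so $u$ solves it everywhere. By the far-field asymptotics, $u$ is bounded, and modulo exponentially decaying terms it is a combination of the $\tilde v_j$ on each side. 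I would then apply the flux identity \eqref{eq_sesqui_form_1}--\eqref{eq_sesqui_form_3} to $\mathfrak a(u,u;n)$ as $n\to\pm\infty$.

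The radiation structure of $\tilde{\mathcal{G}}^{p.v.}$ is the symmetric p.v. combination, unlike the outgoing $\mathcal{G}^{l.a.}$. So the comparison must use both $\mathfrak a(u,u;\cdot)$ and the orthogonality relations to force the far-field coefficients on the two sides to match. The resulting $u$ is then a global bounded solution at $\lambda_*$ in the strip. Such a solution must lie in the span of the $\tilde v_j$: its Floquet decomposition is supported where $\mu_{n,\sharp}(\kappa_1)=\lambda_*$, which by the no-fold condition \eqref{eq_no_fold} means only at $\kappa_1=0$. Consequently $(a,b)=(u(0),u(-1))$ lies in the claimed span.

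If the flux bookkeeping becomes delicate, I would fall back on a more direct argument. Write $u=u_{\rm hom}$ on $n\ge0$ and $u=u'_{\rm hom}$ on $n<0$, both with the same data at $n=-1,0$. Uniqueness of the recursion given two consecutive blocks, which again uses the invertible hopping block, then shows that the two pieces coincide as one global solution.
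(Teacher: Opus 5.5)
Your Hermiticity argument is fine, and the overall architecture can be made to work: a p.v. layer potential, its far-field asymptotics, and invertible hopping. However, the identity on which each inclusion hinges is stated incorrectly, and the uniqueness lemma you invoke is the wrong one.

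Start by expanding \eqref{eq_M_PV_expression}. For data $(a,b)$, let $f$ be supported on blocks $-1,0$ with $f(-1)=\tilde{\mathcal H}_{b,\sharp}(-1,0)a$ and $f(0)=-\tilde{\mathcal H}_{b,\sharp}(0,-1)b$, and set $\psi=\tilde{\mathcal G}^{p.v.}_{b,\sharp}f$. Then
\begin{equation*}
\mathcal M^{p.v.}\begin{pmatrix}a\\ b\end{pmatrix}=\begin{pmatrix}-\tilde{\mathcal H}_{b,\sharp}(0,-1)\big(b+2\psi(-1)\big)\\ \tilde{\mathcal H}_{b,\sharp}(-1,0)\big(2\psi(0)-a\big)\end{pmatrix}.
\end{equation*}

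\textbf{The $\supseteq$ direction.} For $(a,b)=(\tilde v_k(0),\tilde v_k(-1))$, this $\psi$ is your $\psi_+$. Since $(f,\tilde v_j)=-\mathfrak a(\tilde v_k,\tilde v_j)$, Propositions \ref{prop_physical_green} and \ref{prop_sesqui_form} give $\psi_+-\frac12\tilde v_k\to0$ as $n\to+\infty$, but $\psi_++\frac12\tilde v_k\to0$ as $n\to-\infty$.

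So $\psi_+-\phi_++\frac12\phi$ solves $(\tilde{\mathcal H}_{b,\sharp}-\lambda_*)w=0$ and decays at both ends. Uniqueness from data on two consecutive blocks says nothing about such a $w$, because you never know it vanishes on two blocks. What kills it is that no nonzero solution at $\lambda_*$ decays at both ends. You can get this from absence of point spectrum, as in the paper's Step 2, or, under \eqref{eq_non_singular_hopping}, from the trivial intersection of the stable and unstable subspaces of the transfer matrix.

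Then $\psi_+=\phi_+-\frac12\phi$, so $\psi_+(-1)=-\frac12\phi(-1)$, not $+\frac12\phi(-1)$ as you wrote. With your sign the first block row would equal $-2\tilde{\mathcal H}_{b,\sharp}(0,-1)\tilde v_k(-1)\neq0$. The paper's Step 1 avoids the uniqueness issue altogether: Green's identity \eqref{eq_discrete_integral_by_parts} against $\tilde{\mathcal G}^{p.v.}_{b,\sharp}(\cdot,n)$ on each half-line gives these values directly.

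\textbf{The $\subseteq$ direction.} Your final step ``$(a,b)=(u(0),u(-1))$'' is false: a layer potential does not reproduce its density, and this is exactly what $\mathcal M^{aux}$ tests. Indeed $(u(0),u(-1))^{\top}=\mathcal M^{aux,p.v.}(a,b)^{\top}$, which by \eqref{eq_existence_proof_7} equals $\frac12(a,b)^{\top}$ on the Bloch data; that is, $u=\frac12\tilde v_k$ there.

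The correct link comes from the display above. With \eqref{eq_non_singular_hopping}, $(a,b)\in\ker\mathcal M^{p.v.}$ if and only if $\psi(0)=\frac12a$ and $\psi(-1)=-\frac12b$. Since $u=\psi$ on $n\ge0$ and $u=-\psi$ on $n<0$, this says $(a,b)=2(u(0),u(-1))$.

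Your flux bookkeeping is unnecessary. That same sign flip, combined with the opposite signs in \eqref{eq_physical_green_decay_1}--\eqref{eq_physical_green_decay_2}, makes the far-field Bloch combination $V$ of $u$ identical on both sides. So $u-V$ is a solution decaying at both ends, hence zero, and $(a,b)=2(V(0),V(-1))$ lies in the span. This also replaces your heuristic Floquet-support argument. Your fallback paragraph only re-proves that $u$ is a global solution and does not touch this issue.

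Once repaired, your route genuinely differs from the paper's. The paper first removes the $\mathfrak a$-orthogonal projection onto the span, which kills all far-field coefficients; then $u\in\ell^2$, so $u\equiv0$, so $f=0$ and $(a,b)=0$. Your version identifies $u$ with its far field directly, and as a by-product shows that the trace map acts as $\frac12$ on the whole kernel.
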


The remainder of this section is devoted to the proof of Propositions \ref{prop_boundary_operator_limit} and \ref{prop_M_PV_kernel}.

\begin{proof}[Proof of Proposition \ref{prop_boundary_operator_limit}]
We claim that the convergences \eqref{eq_M_limit} and \eqref{eq_M_aux_limit} follow from the following convergence of the Green operator:
\begin{equation} \label{eq_green_operator_convergence}
\begin{aligned}
&\lim_{\delta\to 0^+}\tilde{\mathcal{G}}_{\pm\delta,\sharp}(\lambda_*+\delta\cdot h)(n,m) \\
&=\tilde{\mathcal{G}}_{b,\sharp}^{p.v.}(n,m)+\frac{1}{2}\xi(h)\sum_{1\leq k\leq 4}\tilde{v}_{k}(n)\overline{\tilde{v}_{k}(m)}^{\top}
\pm \frac{1}{2}\eta(h)\sum_{1\leq k\leq 4}s(k)\tilde{v}_{k}(n)\overline{\tilde{v}_{p(k)}(m)}^{\top}
\end{aligned}
\end{equation}
for any $n,m\in\mathbb{Z}$. In fact, with \eqref{eq_green_operator_convergence}, one can check the convergence of $\mathcal{M}(\lambda_*+\delta\cdot h,\delta)$ and $\mathcal{M}^{aux}(\lambda_*+\delta\cdot h,\delta)$ elementwisely. For example, considering the upper-left element of $\mathcal{M}(\lambda_*+\delta\cdot h,\delta)$, one calculates that
\footnotesize
\begin{equation*}
\begin{aligned}
&\lim_{\delta\to 0^+}\Big(-\tilde{\mathcal{H}}_{\delta,\sharp}(0,-1)\tilde{\mathcal{G}}_{\delta,\sharp}(-1,-1)\tilde{\mathcal{H}}_{\delta,\sharp}(-1,0) -\tilde{\mathcal{H}}_{zig,\sharp}(0,-1)\tilde{\mathcal{G}}_{-\delta,\sharp}(-1,-1)\tilde{\mathcal{H}}_{zig,\sharp}(-1,0) \Big)\\
&\overset{(i)}{=} -\tilde{\mathcal{H}}_{b,\sharp}(0,-1)\Big(\tilde{\mathcal{G}}_{b,\sharp}^{p.v.}(-1,-1)+\frac{1}{2}\xi(h)\sum_{k=1}\tilde{v}_{k}(-1)\overline{\tilde{v}_{k}(-1)}^{\top}
+ \frac{1}{2}\eta(h)\sum_{k=1}s(k)\tilde{v}_{k}(-1)\overline{\tilde{v}_{p(k)}(-1)}^{\top}\Big)\tilde{\mathcal{H}}_{b,\sharp}(-1,0) \\
&-\tilde{\mathcal{H}}_{b,\sharp}(0,-1)\Big(\tilde{\mathcal{G}}_{b,\sharp}^{p.v.}(-1,-1)+\frac{1}{2}\xi(h)\sum_{k=1}\tilde{v}_{k}(-1)\overline{\tilde{v}_{k}(-1)}^{\top}
- \frac{1}{2}\eta(h)\sum_{k=1}s(k)\tilde{v}_{k}(-1)\overline{\tilde{v}_{p(k)}(-1)}^{\top} \Big)\tilde{\mathcal{H}}_{b,\sharp}(-1,0) \\
&\overset{(ii)}{=}-2\tilde{\mathcal{H}}_{b,\sharp}(0,-1)\tilde{\mathcal{G}}_{b,\sharp}^{p.v.}(-1,-1)\tilde{\mathcal{H}}_{b,\sharp}(-1,0)-\xi(h)\sum_{k=1}^{4}\tilde{\mathcal{H}}_{b,\sharp}(0,-1)\tilde{v}_{k}(-1)\overline{\tilde{\mathcal{H}}_{b,\sharp}(0,-1)\tilde{v}_{k}(-1)} \\
&=\mathcal{M}^{p.v.}_{11}+\xi(h)\mathcal{A}_{11},
\end{aligned}
\end{equation*}
\normalsize
where $(i)$ follows from \eqref{eq_green_operator_convergence}, and in $(ii)$ we have applied the Hermiticity property $$\tilde{\mathcal{H}}_{b,\sharp}(n,m)=\overline{\tilde{\mathcal{H}}_{b,\sharp}(m,n)}^{\top}.$$ The convergence of the other elements of the matrix is verified in a similar way.

{\color{blue}Step 1.} We now prove \eqref{eq_green_operator_convergence} for $\tilde{\mathcal{G}}_{\delta,\sharp}$, while the case of $\tilde{\mathcal{G}}_{-\delta,\sharp}$ is treated similarly. To do that, we first split the Green operator into several parts: this is achieved by first expanding \eqref{eq_green_operator_convergence} as a Fourier series (analogous to \eqref{eq_sec_l.a._2}), then decomposing the Fourier expansion according to distinct integral domains. Specifically, we write
\begin{equation} \label{eq_boundary_operator_limit_proof_1}
\tilde{\mathcal{G}}_{\delta,\sharp}(\lambda_*+\delta\cdot h)(n,m)=\frac{1}{2\pi}\int_{-\pi}^{\pi}d\kappa_1 \sum_{1\leq k\leq 6}\frac{\tilde{v}_{k,\delta,\sharp}(n;\kappa_1)\overline{\tilde{v}_{k,\delta,\sharp}(m;\kappa_1)}^{\top}}{\mu_{k,\delta,\sharp}(\kappa_1)-(\lambda_*+\delta\cdot h)},
\end{equation}
where $\mu_{k,\delta,\sharp}(\kappa_1)$ is the analytically labeled Floquet-Bloch eigenvalue of $\tilde{\mathcal{H}}_{\delta,\sharp}$, and $\tilde{v}_{k,\delta,\sharp}(\cdot;\kappa_1)=\iota^{*} v_{k,\delta}(\cdot;\kappa_1\bm{\ell}_1^*)$ are the associated (blocked) eigenmodes; note that for $1\leq n\leq 4$ and small $\kappa_1$, these eigenpairs admit a detailed asymptotic expansions obtained from Theorem \ref{thm_asymptotics_eigenpairs}. Next, we decompose \eqref{eq_boundary_operator_limit_proof_1} as follows:
\begin{equation} \label{eq_boundary_operator_limit_proof_2}
\begin{aligned}
&\tilde{\mathcal{G}}_{\delta,\sharp}(\lambda_*+\delta\cdot h)(n,m) \\
&=\underbrace{\frac{1}{2\pi}\int_{-\pi}^{\pi}d\kappa_1 \sum_{k=5,6}\frac{\tilde{v}_{k,\delta,\sharp}(n;\kappa_1)\overline{\tilde{v}_{k,\delta,\sharp}(m;\kappa_1)}^{\top}}{\mu_{k,\delta,\sharp}(\kappa_1)-(\lambda_*+\delta\cdot h)} + \frac{1}{2\pi}\int_{(-\pi,-\delta^{1/3})\cup (\delta^{1/3},\pi)}d\kappa_1 \sum_{1\leq k\leq 4}\frac{\tilde{v}_{k,\delta,\sharp}(n;\kappa_1)\overline{\tilde{v}_{k,\delta,\sharp}(m;\kappa_1)}^{\top}}{\mu_{k,\delta,\sharp}(\kappa_1)-(\lambda_*+\delta\cdot h)}}_{\text{regular part}} \\
&\quad + \underbrace{\frac{1}{2\pi}\int_{-\delta^{1/3}}^{\delta^{1/3}}d\kappa_1 \sum_{1\leq k\leq 4}\frac{\tilde{v}_{k,\delta,\sharp}(n;\kappa_1)\overline{\tilde{v}_{k,\delta,\sharp}(m;\kappa_1)}^{\top}}{\mu_{k,\delta,\sharp}(\kappa_1)-(\lambda_*+\delta\cdot h)}}_{\text{singular part}} \\
&=:\tilde{\mathcal{G}}_{\delta,\sharp}^{reg}(\lambda_*+\delta\cdot h)(n,m)+\tilde{\mathcal{G}}_{\delta,\sharp}^{sing}(\lambda_*+\delta\cdot h)(n,m).
\end{aligned}
\end{equation}
As will become clearer soon, the superscript `reg' for the first two integrals indicates that their integrands behave regularly with respect to the perturbation parameter $\delta$ as the denominators are sufficiently large (at least of the order $\sim \delta^{1/3}$, which is much larger than $\sim \delta$); similarly, `sing' indicates a singular integral, the convergence of which requires a detailed asymptotic analysis using the results of Theorem \ref{thm_asymptotics_eigenpairs}.

We claim that the three integrals in \eqref{eq_boundary_operator_limit_proof_2} converge as follows:
\begin{equation} \label{eq_boundary_operator_limit_proof_3}
\lim_{\delta\to 0^+} \tilde{\mathcal{G}}_{\delta,\sharp}^{reg}(\lambda_*+\delta\cdot h)(n,m) =\tilde{\mathcal{G}}_{b,\sharp}^{p.v.}(n,m)
\end{equation}
and
\begin{equation} \label{eq_boundary_operator_limit_proof_4}
\lim_{\delta\to 0^+} \tilde{\mathcal{G}}_{\delta,\sharp}^{sing}(\lambda_*+\delta\cdot h)(n,m) =\frac{1}{2}\xi(h)\sum_{k=1}\tilde{v}_{k}(n)\overline{\tilde{v}_{k}(m)}^{\top}
+ \frac{1}{2}\eta(h)\sum_{k=1}s(k)\tilde{v}_{k}(n)\overline{\tilde{v}_{p(k)}(m)}^{\top}.
\end{equation}
Then, the convergence \eqref{eq_green_operator_convergence} follows directly. In the sequel, we prove \eqref{eq_boundary_operator_limit_proof_3} and \eqref{eq_boundary_operator_limit_proof_4}, respectively.

{\color{blue}Step 2.} The proof of \eqref{eq_boundary_operator_limit_proof_3} is based on a standard application of regular perturbation theory, following the lines of our previous works \cite[Appendix F]{qiu2026waveguide_localized} and \cite[Lemma 5.5]{qiu2024square_lattice}; here, we only outline the main idea of the proof and refer the reader to the aforementioned literature for details. We first note the following estimate:
\begin{equation} \label{eq_boundary_operator_limit_proof_5}
\int_{-\pi}^{\pi}d\kappa_1 \sum_{k=5,6}\frac{\tilde{v}_{k,\delta,\sharp}(n;\kappa_1)\overline{\tilde{v}_{k,\delta,\sharp}(m;\kappa_1)}^{\top}}{\mu_{k,\delta,\sharp}(\kappa_1)-(\lambda_*+\delta\cdot h)}
=\int_{-\pi}^{\pi}d\kappa_1 \sum_{k=5,6}\frac{\tilde{v}_{k,\sharp}(n;\kappa_1)\overline{\tilde{v}_{k,\sharp}(m;\kappa_1)}^{\top}}{\mu_{k,\sharp}(\kappa_1)-\lambda_*}+\mathcal{O}(\delta),
\end{equation}
where $\mu_{k,\sharp}(\kappa_1)=\mu_{k}(\kappa_1\bm{\ell}_1^*)$ and $\tilde{v}_{k,\sharp}(\cdot;\kappa_1)=\iota^{*} v_{k}(\cdot;\kappa_1\bm{\ell}_1^*)$ are the (analytically labeled blocked) unperturbed Floquet-Bloch eigenpairs. In fact, since the fifth and sixth bands are uniformly bounded away from $\lambda_*$ (with a $\delta-$independent isolation distance) due to assumptions \eqref{eq_no_fold} and \eqref{eq_isolated_bands}, the denominators of \eqref{eq_boundary_operator_limit_proof_5} are uniformly bounded, i.e., there exists $c>0$ such that
\begin{equation} \label{eq_boundary_operator_limit_proof_6}
\big|\mu_{k,\delta,\sharp}(\kappa_1)-(\lambda_*+\delta\cdot h)\big|>c>0 \quad \forall\, k=5,6,\,\kappa_1\in (-\pi,\pi),\, h\in\mathcal{J} \text{ and small $\delta$.}
\end{equation}
On the other hand, standard perturbation theory gives 
\begin{equation} \label{eq_boundary_operator_limit_proof_7}
\big|\mu_{k,\delta,\sharp}(\kappa_1)+\delta\cdot h -\mu_{k,\sharp}(\kappa_1)\big|=\mathcal{O}(\delta),\quad \big\|\tilde{v}_{k,\delta,\sharp}(n;\kappa_1)-\tilde{v}_{k,\sharp}(n;\kappa_1)\big\|_{\mathbb{C}^{6N}}=\mathcal{O}(\delta).
\end{equation}
With \eqref{eq_boundary_operator_limit_proof_6} and \eqref{eq_boundary_operator_limit_proof_7}, the estimate \eqref{eq_boundary_operator_limit_proof_5} follows directly.

Now, we prove the convergence of the other part of \eqref{eq_boundary_operator_limit_proof_3}. The first step is to obtain the following estimate:
\begin{equation} \label{eq_boundary_operator_limit_proof_8}
\begin{aligned}
&\int_{(-\pi,-\delta^{1/3})\cup (\delta^{1/3},\pi)}d\kappa_1 \sum_{1\leq k\leq 4}\frac{\tilde{v}_{k,\delta,\sharp}(n;\kappa_1)\overline{\tilde{v}_{k,\delta,\sharp}(m;\kappa_1)}^{\top}}{\mu_{k,\delta,\sharp}(\kappa_1)-(\lambda_*+\delta\cdot h)} \\
&=\int_{(-\pi,-\delta^{1/3})\cup (\delta^{1/3},\pi)}d\kappa_1 \sum_{1\leq k\leq 4}\frac{\tilde{v}_{k,\sharp}(n;\kappa_1)\overline{\tilde{v}_{k,\sharp}(m;\kappa_1)}^{\top}}{\mu_{k,\sharp}(\kappa_1)-\lambda_*}+\mathcal{O}(\delta^{1/3}).
\end{aligned}
\end{equation}
The idea is similar to that for the proof of \eqref{eq_boundary_operator_limit_proof_5}, except for the fact that the uniform lower bound \eqref{eq_boundary_operator_limit_proof_6} of the denominator no longer holds. Nevertheless, thanks to our design of the termination point $\delta^{1/3}$ of the integral domain, Theorems \ref{thm_asymptotics_eigenpairs} and \ref{thm_double_cone} imply that we can still control the denominator. In fact, there exists $c>0$ such that
\begin{equation} \label{eq_boundary_operator_limit_proof_9}
\big|\mu_{k,\delta,\sharp}(\kappa_1)-(\lambda_*+\delta\cdot h)\big|>c\delta^{1/3} \quad \forall\, k=1,2,3,4,\,|\kappa_1|>\delta^{1/3},\, h\in\mathcal{J} \text{ and small $\delta$.}
\end{equation}
With \eqref{eq_boundary_operator_limit_proof_9}, and also \eqref{eq_boundary_operator_limit_proof_7}, one obtains the estimate \eqref{eq_boundary_operator_limit_proof_8}. Next, \textit{since $\mu_{k,\sharp}(\kappa_1)$ possess non-vanishing derivatives at $\kappa_1=0$} (thanks to the Dirac cone and analytic labeling), we can push $\delta\to 0$ in \eqref{eq_boundary_operator_limit_proof_8} for the integral on the right side and obtain a principal-value integral
\begin{equation*}
\lim_{\delta\to 0^{+}}\int_{(-\pi,-\delta^{1/3})\cup (\delta^{1/3},\pi)}d\kappa_1 \sum_{1\leq k\leq 4}\frac{\tilde{v}_{k,\sharp}(n;\kappa_1)\overline{\tilde{v}_{k,\sharp}(m;\kappa_1)}^{\top}}{\mu_{k,\sharp}(\kappa_1)-\lambda_*}={p.v.}\int_{-\pi}^{\pi}d\kappa_1 \sum_{1\leq k\leq 4}\frac{\tilde{v}_{k,\sharp}(n;\kappa_1)\overline{\tilde{v}_{k,\sharp}(m;\kappa_1)}^{\top}}{\mu_{k,\sharp}(\kappa_1)-\lambda_*}.
\end{equation*}
Then, with \eqref{eq_boundary_operator_limit_proof_8} and \eqref{eq_boundary_operator_limit_proof_5}, we conclude the proof of \eqref{eq_boundary_operator_limit_proof_3}.

{\color{blue}Step 2.} Now, we prove the convergence of the singular part \eqref{eq_boundary_operator_limit_proof_4} by a careful asymptotic analysis. We first note that since the first four bands constitute all the bands near the double Dirac cone, the sum of integrals defining \eqref{eq_boundary_operator_limit_proof_4} (see \eqref{eq_boundary_operator_limit_proof_2}) can be alternatively written in terms of the increasingly labeled eigenpairs:
\begin{equation} \label{eq_boundary_operator_limit_proof_10}
\int_{-\delta^{1/3}}^{\delta^{1/3}}d\kappa_1 \sum_{1\leq k\leq 4}\frac{\tilde{v}_{k,\delta,\sharp}(n;\kappa_1)\overline{\tilde{v}_{k,\delta,\sharp}(m;\kappa_1)}^{\top}}{\mu_{k,\delta,\sharp}(\kappa_1)-(\lambda_*+\delta\cdot h)}
=\int_{-\delta^{1/3}}^{\delta^{1/3}}d\kappa_1 \sum_{1\leq k\leq 4}\frac{\tilde{u}_{k,\delta,\sharp}(n;\kappa_1)\overline{\tilde{u}_{k,\delta,\sharp}(m;\kappa_1)}^{\top}}{\lambda_{k,\delta,\sharp}(\kappa_1)-(\lambda_*+\delta\cdot h)}.
\end{equation}
Next, we apply the results of Theorem \ref{thm_asymptotics_eigenpairs} to estimate the right side. In light of the asymptotics of eigenpairs obtained in Theorem \ref{thm_asymptotics_eigenpairs}, we decompose \eqref{eq_boundary_operator_limit_proof_10} according to the leading-order and the remaining contributions of each band, that is,
\begin{equation} \label{eq_boundary_operator_limit_proof_15}
\begin{aligned}
\frac{1}{2\pi}\int_{-\delta^{1/3}}^{\delta^{1/3}}d\kappa_1 \sum_{1\leq k\leq 4}\frac{\tilde{u}_{k,\delta,\sharp}(n;\kappa_1)\overline{\tilde{u}_{k,\delta,\sharp}(m;\kappa_1)}^{\top}}{\lambda_{k,\delta,\sharp}(\kappa_1)-(\lambda_*+\delta\cdot h)}
=\sum_{1\leq k\leq 4}\Big( \mathcal{K}_{k,\delta}^{(0)}(n,m) + \mathcal{K}_{k,\delta}^{(1)}(n,m) \Big).
\end{aligned}
\end{equation}
Here, $\mathcal{K}_{k,\delta}^{(0)}$ includes the leading-order contribution of the $k-$th band:
\begin{equation} \label{eq_boundary_operator_limit_proof_11}
\mathcal{K}_{1,\delta}^{(0)}(n,m)
:=\frac{1}{2\pi} \int_{-\delta^{1/3}}^{\delta^{1/3}}\frac{d\kappa_1}{1+f^2(\kappa_1,\delta)} \frac{\big(-f(\kappa_1,\delta)\tilde{u}_1(n)+\tilde{u}_3(n)
\big)\overline{\big(-f(\kappa_1,\delta)\tilde{u}_1(m)+\tilde{u}_3(m)
\big)}^{\top}}{-\sqrt{\beta_{*}^2\delta^2+|\alpha_*|^2\kappa_1^2}-\delta\cdot h},
\end{equation}
\begin{equation} \label{eq_boundary_operator_limit_proof_12}
\mathcal{K}_{2,\delta}^{(0)}(n,m)
:=\frac{1}{2\pi} \int_{-\delta^{1/3}}^{\delta^{1/3}}\frac{d\kappa_1}{1+f^2(\kappa_1,\delta)} \frac{\big(-f(\kappa_1,\delta)\tilde{u}_2(n)+\tilde{u}_4(n)
\big)\overline{\big(-f(\kappa_1,\delta)\tilde{u}_2(m)+\tilde{u}_4(m)
\big)}^{\top}}{-\sqrt{\beta_{*}^2\delta^2+|\alpha_*|^2\kappa_1^2}-\delta\cdot h},
\end{equation}
\begin{equation} \label{eq_boundary_operator_limit_proof_13}
\mathcal{K}_{3,\delta}^{(0)}(n,m)
:=\frac{1}{2\pi} \int_{-\delta^{1/3}}^{\delta^{1/3}}\frac{d\kappa_1}{1+f^2(\kappa_1,\delta)} \frac{\big(\tilde{u}_2(n)+f(\kappa_1,\delta)\tilde{u}_4(n)
\big)\overline{\big(\tilde{u}_2(m)+f(\kappa_1,\delta)\tilde{u}_4(m)
\big)}^{\top}}{\sqrt{\beta_{*}^2\delta^2+|\alpha_*|^2\kappa_1^2}-\delta\cdot h},
\end{equation}
and
\begin{equation} \label{eq_boundary_operator_limit_proof_14}
\mathcal{K}_{4,\delta}^{(0)}(n,m)
:=\frac{1}{2\pi} \int_{-\delta^{1/3}}^{\delta^{1/3}}\frac{d\kappa_1}{1+f^2(\kappa_1,\delta)} \frac{\big(\tilde{u}_1(n)+f(\kappa_1,\delta)\tilde{u}_3(n)
\big)\overline{\big(\tilde{u}_1(m)+f(\kappa_1,\delta)\tilde{u}_3(m)
\big)}^{\top}}{\sqrt{\beta_{*}^2\delta^2+|\alpha_*|^2\kappa_1^2}-\delta\cdot h},
\end{equation}
where the function $f(\kappa_1,\delta)$ is defined as
\begin{equation*}
f(\kappa_1,\delta):=\frac{\alpha_*\kappa_1}{\beta_*\delta+\sqrt{\beta_{*}^2\delta^2+|\alpha_*|^2\kappa_1^2}}
\end{equation*}
and $\tilde{u}_k:=\iota^* u_k$ with $u_k$ being the short-hand denotation of Floquet-Bloch modes at the double Dirac point, as in Theorem \ref{thm_asymptotics_eigenpairs}. The matrix $\mathcal{K}_{k,\delta}^{(1)}$ includes the remaining contribution of the $k-$th band:
\begin{equation*}
\mathcal{K}_{k,\delta}^{(1)}(n,m):=\frac{1}{2\pi}\int_{-\delta^{1/3}}^{\delta^{1/3}}d\kappa_1 \sum_{1\leq k\leq 4}\frac{\tilde{u}_{k,\delta,\sharp}(n;\kappa_1)\overline{\tilde{u}_{k,\delta,\sharp}(m;\kappa_1)}^{\top}}{\lambda_{k,\delta,\sharp}(\kappa_1)-(\lambda_*+\delta\cdot h)} - \mathcal{K}_{k,\delta}^{(0)}(n,m).
\end{equation*}
By the estimate of eigenpairs shown in Theorem \ref{thm_asymptotics_eigenpairs}, one can check that
\begin{equation} \label{eq_boundary_operator_limit_proof_16}
\big\|\mathcal{K}_{k,\delta}^{(1)}(n,m)\big\|_{\mathbb{C}^{6N\times 6N}}\leq C\delta^{\frac{1}{3}}\big\|\mathcal{K}_{k,\delta}^{(0)}(n,m)\big\|_{\mathbb{C}^{6N\times 6N}} 
\end{equation}
for some $C>0$. 

With these preparations, we will calculate the limits of $\sum_{1\leq k\leq 4}\mathcal{K}_{k,\delta}^{(0)}(n,m)$ and $\sum_{1\leq k\leq 4}\mathcal{K}_{k,\delta}^{(1)}(n,m)$ in two separate steps, which then complete the proof of \eqref{eq_boundary_operator_limit_proof_4} recalling the decomposition \eqref{eq_boundary_operator_limit_proof_15}.

{\color{blue}Step 2.1.} We claim that
\begin{equation} \label{eq_boundary_operator_limit_proof_17}
\begin{aligned}
&\lim_{\delta\to 0^{+}}\Big(\mathcal{K}_{1,\delta}^{(0)}(n,m) + \mathcal{K}_{4,\delta}^{(0)}(n,m)\Big) \\
&=\frac{\xi(h)}{2}\big(\tilde{u}_1(n)\overline{\tilde{u}_1(m)}^{\top}+\tilde{u}_3(n)\overline{\tilde{u}_3(m)}^{\top}\big) + \frac{\eta(h)}{2}\big(\tilde{u}_1(n)\overline{\tilde{u}_1(m)}^{\top}-\tilde{u}_3(n)\overline{\tilde{u}_3(m)}^{\top}\big),
\end{aligned}
\end{equation}
and
\begin{equation} \label{eq_boundary_operator_limit_proof_18}
\begin{aligned}
&\lim_{\delta\to 0^{+}}\Big(\mathcal{K}_{2,\delta}^{(0)}(n,m) + \mathcal{K}_{3,\delta}^{(0)}(n,m)\Big) \\
&=\frac{\xi(h)}{2}\big(\tilde{u}_2(n)\overline{\tilde{u}_2(m)}^{\top}+\tilde{u}_4(n)\overline{\tilde{u}_4(m)}^{\top}\big) + \frac{\eta(h)}{2}\big(\tilde{u}_2(n)\overline{\tilde{u}_2(m)}^{\top}-\tilde{u}_4(n)\overline{\tilde{u}_4(m)}^{\top}\big).
\end{aligned}
\end{equation}
Then, by replacing $\tilde{u}_k$ with $\tilde{v}_k$ following the rules \eqref{eq_relation_un_vn} in Remark \ref{rmk_construction_analytic_eigenfunction}, we conclude that
\begin{equation} \label{eq_boundary_operator_limit_proof_19}
\begin{aligned}
&\lim_{\delta\to 0^{+}}\sum_{1\leq k\leq 4}\mathcal{K}_{k,\delta}^{(0)}(n,m) \\
&=\frac{\xi(h)}{2}\big(\tilde{u}_1(n)\overline{\tilde{u}_1(m)}^{\top}+\tilde{u}_2(n)\overline{\tilde{u}_2(m)}^{\top}+\tilde{u}_3(n)\overline{\tilde{u}_3(m)}^{\top}+ \tilde{u}_4(n)\overline{\tilde{u}_4(m)}^{\top}\big) \\
&\quad + \frac{\eta(h)}{2}\big(\tilde{u}_1(n)\overline{\tilde{u}_1(m)}^{\top}+\tilde{u}_2(n)\overline{\tilde{u}_2(m)}^{\top}-\tilde{u}_3(n)\overline{\tilde{u}_3(m)}^{\top}- \tilde{u}_4(n)\overline{\tilde{u}_4(m)}^{\top}\big) \\
&=\frac{\xi(h)}{2}\sum_{1\leq k\leq 4}\tilde{v}_{k}(n)\overline{\tilde{v}_{k}(m)}^{\top}
\pm \frac{\eta(h)}{2}\sum_{1\leq k\leq 4}s(k)\tilde{v}_{k}(n)\overline{\tilde{v}_{p(k)}(m)}^{\top},
\end{aligned}
\end{equation}
which exactly corresponds to the right side of \eqref{eq_boundary_operator_limit_proof_4}.

Now we prove \eqref{eq_boundary_operator_limit_proof_17}, while the proof of \eqref{eq_boundary_operator_limit_proof_18} is similar. Note that $f(\kappa_1,\delta)$ is an odd function of $\kappa_1$, whose integral on $(-\delta^{1/3},\delta^{1/3})$ vanishes. Hence,
\footnotesize
\begin{equation*}
\begin{aligned}
&\mathcal{K}_{1,\delta}^{(0)}(n,m) + \mathcal{K}_{4,\delta}^{(0)}(n,m) \\
&=-\frac{1}{2\pi}\int_{-\delta^{\frac{1}{3}}}^{\delta^{\frac{1}{3}}}\frac{d\kappa_1}{1+f^2(\kappa_1,\delta)}
\Big[\frac{f^2(\kappa_1,\delta)\tilde{u}_1(n)\overline{\tilde{u}_1(n)}^{\top}+\tilde{u}_3(n)\overline{\tilde{u}_3(n)}^{\top}}{\delta\cdot h+\sqrt{\beta_{*}^2\delta^2+|\alpha_*|^2\kappa_1^2}} + \frac{\tilde{u}_1(n)\overline{\tilde{u}_1(m)}^{\top}+f^2(\kappa_1,\delta)\tilde{u}_3(n)\overline{\tilde{u}_3(m)}^{\top}}{\delta\cdot h-\sqrt{\beta_{*}^2\delta^2+|\alpha_*|^2\kappa_1^2}}  \Big] .
\end{aligned}
\end{equation*}
\normalsize
Using the following identity:
\begin{equation*}
\frac{1-f^2(\kappa_1,\delta)}{1+f^2(\kappa_1,\delta)}=\frac{\beta_*\delta}{\sqrt{\beta_{*}^2\delta^2+|\alpha_*|^2\kappa_1^2}},
\end{equation*}
one then obtains
\begin{equation} \label{eq_boundary_operator_limit_proof_20}
\begin{aligned}
&\mathcal{K}_{1,\delta}^{(0)}(n,m) + \mathcal{K}_{4,\delta}^{(0)}(n,m) \\
&=\Big[\frac{h}{2\pi}\int_{-\delta^{\frac{1}{3}}}^{\delta^{\frac{1}{3}}}
\frac{\delta}{\big(\beta_{*}^2-h^2\big)\delta^2+|\alpha_*|^2\kappa_{1}^2}d\kappa_1
\Big]\big(\tilde{u}_1(n)\overline{\tilde{u}_1(m)}^{\top}+\tilde{u}_3(n)\overline{\tilde{u}_3(m)}^{\top}\big) \\
&\quad + \Big[\frac{\beta_*}{2\pi}\int_{-\delta^{\frac{1}{3}}}^{\delta^{\frac{1}{3}}}
\frac{\delta}{\big(\beta_{*}^2-h^2\big)\delta^2+|\alpha_*|^2\kappa_{1}^2}d\kappa_1
\Big]\big(\tilde{u}_1(n)\overline{\tilde{u}_1(m)}^{\top}-\tilde{u}_3(n)\overline{\tilde{u}_3(m)}^{\top}\big).
\end{aligned}
\end{equation}
The integral in \eqref{eq_boundary_operator_limit_proof_20} can be calculated explicitly using the anti-trigonometric function. We give directly the result
\begin{equation*}
\lim_{\delta\to 0^+}\int_{-\delta^{\frac{1}{3}}}^{\delta^{\frac{1}{3}}}
\frac{\delta}{\big(\beta_{*}^2-h^2\big)\delta^2+|\alpha_*|^2\kappa_{1}^2}d\kappa_1=\frac{\pi}{|\alpha_*|\sqrt{\beta_{*}^2-h^2}}.
\end{equation*}
By this identity and \eqref{eq_boundary_operator_limit_proof_20}, the proof of \eqref{eq_boundary_operator_limit_proof_17} is complete.

{\color{blue}Step 2.2.} With \eqref{eq_boundary_operator_limit_proof_19}, it is sufficient to show the following to conclude the proof of \eqref{eq_boundary_operator_limit_proof_4}:
\begin{equation} \label{eq_boundary_operator_limit_proof_21}
\lim_{\delta\to 0^+}\mathcal{K}_{k,\delta}^{(1)}(n,m)=0,\quad \text{for $1\leq k\leq 4$}.
\end{equation}
We prove it only for $k=1$; the other cases are treated following the same lines. Using \eqref{eq_boundary_operator_limit_proof_11}, the estimate \eqref{eq_boundary_operator_limit_proof_16}, and the fact that both $f(\kappa_1,\delta)$ and $\|\tilde{u}_k(n)\|_{\mathbb{C}^{6N}}$ are uniformly bounded, one sees that $\mathcal{K}_{1,\delta}^{(1)}(n,m)$ is simply estimated by the following elementary integral
\begin{equation} \label{eq_boundary_operator_limit_proof_22}
\big\|\mathcal{K}_{1,\delta}^{(1)}(n,m)\big\|_{\mathbb{C}^{6N\times 6N}}\leq C\delta^{\frac{1}{3}}\int_{-\delta^{\frac{1}{3}}}^{\delta^{\frac{1}{3}}}\frac{d\kappa_1}{\delta\cdot h+\sqrt{\beta_*^2\delta^2+|\alpha_*|^2\kappa_1^2}}
\end{equation}
for some $C>0$. For $h\in\mathcal{J}$, it holds that $|h|\leq c_*|\beta_*|$ with the constant $c_*<1$ introduced in Theorem \ref{thm_gap_open}. Thus,
\begin{equation*}
\int_{-\delta^{\frac{1}{3}}}^{\delta^{\frac{1}{3}}}\frac{d\kappa_1}{\delta\cdot h+\sqrt{\beta_*^2\delta^2+|\alpha_*|^2\kappa_1^2}}
\leq \frac{1}{1-c_*}\int_{-\delta^{\frac{1}{3}}}^{\delta^{\frac{1}{3}}}\frac{d\kappa_1}{\sqrt{\beta_*^2\delta^2+|\alpha_*|^2\kappa_1^2}}
=\mathcal{O}(\log \delta).
\end{equation*}
Then, with \eqref{eq_boundary_operator_limit_proof_22}, we conclude the proof of \eqref{eq_boundary_operator_limit_proof_21}.
\end{proof}

\begin{proof}[Proof of Proposition \ref{prop_M_PV_kernel}]

{\color{blue}Step 1.} The Hermiticity of $\mathcal{M}^{p.v.}$ follows directly from that of the Hamiltonian and the Green operator. To prove \eqref{eq_M_PV_kernel}, we start with
\begin{equation} \label{eq_M_PV_kernel_proof_1}
\mathcal{M}^{p.v.}
\begin{pmatrix}
\tilde{v}_{k}(0) \\ \tilde{v}_{k}(-1)
\end{pmatrix}
=0, \quad 1\leq k\leq 4.
\end{equation}
We will prove it only for $k=1$. In fact, applying the discrete integral-by-part formula \eqref{eq_discrete_integral_by_parts} for $\phi_1(m)=\tilde{v}_1(m)$, $\phi_2(m)=\tilde{\mathcal{G}}_{b,\sharp}^{p.v.}(m,n)$ (with $n\geq 0$), $k=0$, $p>0$, and recalling the identity \eqref{eq_physical_green_right_inverse}, we obtain
\begin{equation} \label{eq_M_PV_kernel_proof_2}
-\tilde{v}_1(n)=\mathfrak{a}\big(\tilde{v}_1(\cdot),\tilde{\mathcal{G}}_{b,\sharp}^{p.v.}(\cdot,n);0\big)-\mathfrak{a}\big(\tilde{v}_1(\cdot),\tilde{\mathcal{G}}_{b,\sharp}^{p.v.}(\cdot,n);p+1\big).
\end{equation}
By the far-field asymptotics of the physical Green operator shown in Proposition \ref{eq_physical_green_right_inverse}, we know
\begin{equation*}
\lim_{m\to\infty}\Big[ \tilde{\mathcal{G}}_{b,\sharp}^{p.v.}(m,n)-
\frac{i}{2|\alpha_*|}\big(
\tilde{v}_{1}(m)\overline{\tilde{v}_{1}(n)}^{\top}+\tilde{v}_{2}(m)\overline{\tilde{v}_{2}(n)}^{\top}-\tilde{v}_{3}(m)\overline{\tilde{v}_{3}(n)}^{\top}-\tilde{v}_{4}(m)\overline{\tilde{v}_{4}(n)}^{\top} \big)
\Big]=0.
\end{equation*}
Hence, the last term in \eqref{eq_M_PV_kernel_proof_2} tends to
\begin{equation*}
\begin{aligned}
&\lim_{p\to\infty}\mathfrak{a}\big(\tilde{v}_1(\cdot),\tilde{\mathcal{G}}_{b,\sharp}^{p.v.}(\cdot,n);p+1\big) \\
&=-\frac{i}{2|\alpha_*|} \Big[\mathfrak{a}(\tilde{v}_1,\tilde{v}_1)\tilde{v}_1(n)+\mathfrak{a}(\tilde{v}_1,\tilde{v}_2)\tilde{v}_2(n)-\mathfrak{a}(\tilde{v}_1,\tilde{v}_3)\tilde{v}_3(n)-\mathfrak{a}(\tilde{v}_1,\tilde{v}_4)\tilde{v}_4(n) \Big] 
=\frac{1}{2}\tilde{v}_1(n),
\end{aligned}
\end{equation*}
where the last step follows from the orthogonality shown in Proposition \ref{prop_sesqui_form}. Thus, \eqref{eq_M_PV_kernel_proof_2} equals
\begin{equation} \label{eq_M_PV_kernel_proof_9}
\begin{aligned}
-\frac{1}{2}\tilde{v}_1(n)&=\mathfrak{a}\big(\tilde{v}_1(\cdot),\tilde{\mathcal{G}}_{b,\sharp}^{p.v.}(\cdot,n);0\big) \\
&\overset{(i)}{=} \overline{\tilde{\mathcal{G}}_{b,\sharp}^{p.v.}(0,n)}^{\top} \tilde{\mathcal{H}}_{b,\sharp}(1,0)\tilde{v}_1(-1)-\overline{\tilde{\mathcal{G}}_{b,\sharp}^{p.v.}(-1,n)}^{\top} \tilde{\mathcal{H}}_{b,\sharp}(0,1)\tilde{v}_1(0) \\
&\overset{(ii)}{=} \tilde{\mathcal{G}}_{b,\sharp}^{p.v.}(n,0) \tilde{\mathcal{H}}_{b,\sharp}(1,0)\tilde{v}_1(-1)-\tilde{\mathcal{G}}_{b,\sharp}^{p.v.}(n,-1) \tilde{\mathcal{H}}_{b,\sharp}(0,1)\tilde{v}_1(0),
\end{aligned}
\end{equation}
where (i) follows from the definition of the form $\mathfrak{a}$, and (ii) is a consequence of the Hermiticity $$\tilde{\mathcal{G}}_{b,\sharp}^{p.v.}(n,m)=\overline{\tilde{\mathcal{G}}_{b,\sharp}^{p.v.}(m,n)}^{\top}.$$ Taking $n=0$ and multiplying both sides by $\tilde{\mathcal{H}}_{b,\sharp}(-1,0)$, we obtain
$$
\begin{array}{lll}
\ds -\tilde{\mathcal{H}}_{b,\sharp}(-1,0)\tilde{v}_1(0) &=& \ds
-2\tilde{\mathcal{H}}_{b,\sharp}(-1,0)\tilde{\mathcal{G}}_{b,\sharp}^{p.v.}(0,-1) \tilde{\mathcal{H}}_{b,\sharp}(0,1)\tilde{v}_1(0)\\
\nm
&& \ds +2\tilde{\mathcal{H}}_{b,\sharp}(-1,0)\tilde{\mathcal{G}}_{b,\sharp}^{p.v.}(0,0) \tilde{\mathcal{H}}_{b,\sharp}(1,0)\tilde{v}_1(-1).
\end{array}
$$
Using the periodicity $\tilde{\mathcal{H}}_{b,\sharp}(n,m)=\tilde{\mathcal{H}}_{b,\sharp}(n-1,m-1)$ and writing the identity in matrix form, we get the result
\begin{equation*}
\begin{pmatrix}
\substack{-\tilde{\mathcal{H}}_{b,\sharp}(-1,0)+2\tilde{\mathcal{H}}_{b,\sharp}(-1,0)\tilde{\mathcal{G}}_{b,\sharp}^{p.v.}(0,-1)\tilde{\mathcal{H}}_{b,\sharp}(-1,0) } &
\substack{-2\tilde{\mathcal{H}}_{b,\sharp}(-1,0)\tilde{\mathcal{G}}_{b,\sharp}^{p.v.}(0,0)\tilde{\mathcal{H}}_{b,\sharp}(0,-1)  }
\end{pmatrix}
\begin{pmatrix}
\substack{\tilde{v}_1(0)} \\ \substack{\tilde{v}_1(-1)}
\end{pmatrix}
=0,
\end{equation*}
which is exactly the second component of the desired identity \eqref{eq_M_PV_kernel_proof_1} if one recalls the definition of $\mathcal{M}^{p.v.}$. The proof for the first component is similar, which we now need to play the integral by part trick in the left half-strip, i.e., for $n,k,p<0$; the details are skipped here.

{\color{blue}Step 2.} Now we continue to prove the other direction, i.e. 
$$\ker \mathcal{M}^{p.v.}\subset \text{span}\big\{ \big(\tilde{v}_{k}(0),\tilde{v}_{k}(-1) \big)^{\top},\,1\leq k\leq 4 \big\},
$$
which, together with \eqref{eq_M_PV_kernel_proof_1}, concludes the proof of \eqref{eq_M_PV_kernel}. Recalling that the sesquilinear form $\mathfrak{a}$ orthogonalizes the space at the right side, it is sufficient to prove the following statement:
\begin{equation} \label{eq_M_PV_kernel_proof_3}
\mathcal{M}^{p.v.}
\begin{pmatrix}
\phi_{0} \\ \phi_{-1}
\end{pmatrix}=0 ,\quad
\mathfrak{a}(\phi,\tilde{v}_k;0)=0
\Longrightarrow
\begin{pmatrix}
\phi_{0} \\ \phi_{-1}
\end{pmatrix}=0,
\end{equation}
where $\phi:=\mathbbm{1}_{\{0\}}\otimes \phi_{0}+\mathbbm{1}_{\{-1\}}\otimes \phi_{-1}\in \tilde{\mathcal{X}}_{\sharp}$ (i.e., extension by zero). Suppose that the conditions of \eqref{eq_M_PV_kernel_proof_3} hold true. Then we construct the following function $u_{\phi}\in \tilde{\mathcal{X}}_{\sharp}$ by a layer-potential expression:
\begin{equation} \label{eq_M_PV_kernel_proof_4}
u_\phi (n):=\left\{
\begin{aligned}
&\tilde{\mathcal{G}}_{b,\sharp}^{p.v.}(n,-1)\tilde{\mathcal{H}}_{b,\sharp}(-1,0)\phi_{0}-\tilde{\mathcal{G}}_{b,\sharp}^{p.v.}(n,0)\tilde{\mathcal{H}}_{b,\sharp}(0,-1)\phi_{-1},\quad n\geq 0, \\
&-\tilde{\mathcal{G}}_{b,\sharp}^{p.v.}(n,-1)\tilde{\mathcal{H}}_{b,\sharp}(-1,0)\phi_{0}+\tilde{\mathcal{G}}_{b,\sharp}^{p.v.}(n,0)\tilde{\mathcal{H}}_{-b,\sharp}(0,-1)\phi_{-1},\quad n< 0 . \\
\end{aligned}
\right.
\end{equation}
By definition \eqref{eq_M_PV_kernel_proof_4}, it is clear that
\begin{equation} \label{eq_M_PV_kernel_proof_5}
\Big[(\tilde{\mathcal{H}}_{b,\sharp}-\lambda_*)u_\phi\Big](n)=0\quad \text{for all $n\neq 0,-1$}.
\end{equation}
Importantly, the assumption \eqref{eq_M_PV_kernel_proof_3} implies that \eqref{eq_M_PV_kernel_proof_5} also holds for $n=0,-1$. In fact,
\begin{equation} \label{eq_M_PV_kernel_proof_6}
\begin{aligned}
&\Big[(\tilde{\mathcal{H}}_{b,\sharp}-\lambda_*)u_\phi\Big](0) \\
&=\tilde{\mathcal{H}}_{b,\sharp}(0,1)u_{\phi}(1)+\tilde{\mathcal{H}}_{b,\sharp}(0,0)u_{\phi}(0)+\tilde{\mathcal{H}}_{b,\sharp}(0,-1)u_{\phi}(-1)-\lambda_*u_{\phi}(0) \\
&=\tilde{\mathcal{H}}_{b,\sharp}(0,1)\tilde{\mathcal{G}}_{b,\sharp}^{p.v.}(1,-1)\tilde{\mathcal{H}}_{b,\sharp}(-1,0)\phi_{0}-\tilde{\mathcal{H}}_{b,\sharp}(0,1)\tilde{\mathcal{G}}_{b,\sharp}^{p.v.}(1,0)\tilde{\mathcal{H}}_{b,\sharp}(0,-1)\phi_{-1} \\
&\quad + \tilde{\mathcal{H}}_{b,\sharp}(0,0)\tilde{\mathcal{G}}_{b,\sharp}^{p.v.}(0,-1)\tilde{\mathcal{H}}_{b,\sharp}(-1,0)\phi_{0}-\tilde{\mathcal{H}}_{b,\sharp}(0,0)\tilde{\mathcal{G}}_{b,\sharp}^{p.v.}(0,0)\tilde{\mathcal{H}}_{b,\sharp}(0,-1)\phi_{-1} \\
&\quad -\tilde{\mathcal{H}}_{b,\sharp}(0,-1)\tilde{\mathcal{G}}_{b,\sharp}^{p.v.}(-1,-1)\tilde{\mathcal{H}}_{b,\sharp}(-1,0)\phi_{0}+\tilde{\mathcal{H}}_{b,\sharp}(0,-1)\tilde{\mathcal{G}}_{b,\sharp}^{p.v.}(-1,0)\tilde{\mathcal{H}}_{b,\sharp}(0,-1)\phi_{-1} \\
&\quad -\lambda_* \tilde{\mathcal{G}}_{b,\sharp}^{p.v.}(0,-1)\tilde{\mathcal{H}}_{b,\sharp}(-1,0)\phi_{0}+\lambda_*\tilde{\mathcal{G}}_{b,\sharp}^{p.v.}(0,0)\tilde{\mathcal{H}}_{b,\sharp}(0,-1)\phi_{-1} \\
&\overset{(i)}{=}-2\tilde{\mathcal{H}}_{b,\sharp}(0,-1)\tilde{\mathcal{G}}_{b,\sharp}^{p.v.}(-1,-1)\tilde{\mathcal{H}}_{b,\sharp}(-1,0)\phi_{0}+2\tilde{\mathcal{H}}_{b,\sharp}(0,-1)\tilde{\mathcal{G}}_{b,\sharp}^{p.v.}(-1,0)\tilde{\mathcal{H}}_{b,\sharp}(0,-1)\phi_{-1} \\
&\quad -\tilde{\mathcal{H}}_{b,\sharp}(0,-1)\phi_{-1} \\
&\overset{(ii)}{=} 0,
\end{aligned}
\end{equation}
where (i) is a consequence of the following identity (equivalent to \eqref{eq_physical_green_right_inverse}):
\begin{equation*}
\sum_{k\in\mathbb{Z}:|k-n|\leq 1}\big(\tilde{\mathcal{H}}_{b,\sharp}-\lambda_*\big)(n,k)\tilde{\mathcal{G}}_{b,\sharp}^{p.v.}(k,m)=\delta_{n,m}
\end{equation*}
and (ii) follows directly from our assumption $\mathcal{M}^{p.v.}
\begin{pmatrix}
\phi_{0} \\ \phi_{-1}
\end{pmatrix}=0$. The verification of \eqref{eq_M_PV_kernel_proof_5} for $n=-1$ is similar. Hence, we conclude that $u_{\phi}$ solves \eqref{eq_M_PV_kernel_proof_5} for all $n\in\mathbb{Z}$. On the other hand, the condition $\mathfrak{a}(\phi,\tilde{v}_k;0)=0$ together with the far-field asymptotics of the physical Green operator present in Proposition \ref{prop_physical_green} tells $u_\phi(n)\to 0$ as $n\to\infty$. In conclusion, we have argued that $u_\phi$ constructed in \eqref{eq_M_PV_kernel_proof_4} is an $\ell^2$ eigenmode of the strip Hamiltonian $\tilde{\mathcal{H}}_{b,\sharp}$ and hence, $\lambda_*$ is the associated eigenvalue. However, this contradicts the absolute continuity of the spectrum of the periodic operator $\tilde{\mathcal{H}}_{b,\sharp}$ \cite{Sobolev02absolute}. Hence, it must hold that
\begin{equation*}
    u_{\phi}(n)\equiv 0,\quad \text{for all $n\in\mathbb{Z}$} .
\end{equation*}
But then, the only remaining term in \eqref{eq_M_PV_kernel_proof_6} is
\begin{equation}  \label{eq_M_PV_kernel_proof_7}
-\tilde{\mathcal{H}}_{b,\sharp}(0,-1)\phi_{-1} =0 .
\end{equation}
Similarly, one can prove that
\begin{equation} \label{eq_M_PV_kernel_proof_8}
\tilde{\mathcal{H}}_{b,\sharp}(-1,0)\phi_{0} =0,
\end{equation}
which implies that $\phi_{-1}=\phi_{1}=0$ because both $\tilde{\mathcal{H}}_{b,\sharp}(0,-1)$ and $\tilde{\mathcal{H}}_{b,\sharp}(-1,0)$ are non-singular\footnote{Considering assumption \eqref{eq_non_singular_hopping} in Theorem \ref{thm_existence_interface_modes}, one can check that these matrices are upper-diagonal with non-singular diagonal blocks.}. This concludes the proof.
\end{proof}

\begin{remark}
As we see in the proof of Proposition \ref{prop_M_PV_kernel}, the far-field asymptotics of the physical Green operator is mainly used in studying the kernel space of the matrix $\mathcal{M}^{p.v.}$, which consists of boundary data of propagating modes at the double Dirac point. This indicates that $\mathcal{M}^{p.v.}$ actually encodes the information on the coupling between the evanescent and the propagating waves of the unperturbed system. This information is critical for studying the interface modes, as we shall see in the next section.
\end{remark}

\subsection{Existence of Interface Modes} \label{sec_existence_interface_mode}

Now we prove the first part of Theorem \ref{thm_existence_interface_modes}, i.e., the existence of interface modes. 

{\color{blue}Step 1.} According to Proposition \ref{prop_existence_equivalence}, we first show that the characteristic value problem \eqref{eq_boundary_matching} has a nontrivial solution. This is achieved by the Gohberg-Sigal \cite{gohberg, ammari2009} theory as follows. Denote
\begin{equation} \label{eq_existence_proof_1}
\underline{\mathcal{M}}(h;\delta):=\mathcal{M}(\lambda_*+\delta\cdot h,\delta).
\end{equation}
We first examine the limit of $\underline{\mathcal{M}}(h;\delta)$ as $\delta\to 0^+$. By Proposition \ref{prop_boundary_operator_limit}, we obtain
\begin{equation*}
\lim_{\delta\to 0^+}\underline{\mathcal{M}}(h;\delta)=\mathcal{M}^{p.v.}+\xi(h)\mathcal{A}=:\underline{\mathcal{M}}(h;0).
\end{equation*}
Both matrices at the right side are Hermitian, as checked from their definitions \eqref{eq_M_PV_expression} and \eqref{eq_A_projection}; they are in fact perfectly `orthogonalized' as illustrated below. We claim that the limiting operator $\underline{\mathcal{M}}(h;0)$ has four characteristic values (up to multiplicities) within $h\in\mathcal{J}$. In fact, suppose that
\begin{equation} \label{eq_existence_proof_2}
\underline{\mathcal{M}}(h;0)
\begin{pmatrix}
\phi_0 \\ \phi_{-1}
\end{pmatrix}
=0,
\end{equation}
or equivalently, using \eqref{eq_A_projection},
\begin{equation} \label{eq_existence_proof_3}
\mathcal{M}^{p.v.}\begin{pmatrix}
\phi_0 \\ \phi_{-1}
\end{pmatrix}
+\xi(h)\sum_{1\leq k\leq 4}\overline{\mathfrak{a}(\tilde{v}_{k},\phi;0)}
\begin{pmatrix}
-\tilde{\mathcal{H}}_{b,\sharp}(0,-1)\tilde{v}_{k}(-1) \\ \tilde{\mathcal{H}}_{b,\sharp}(-1,0)\tilde{v}_{k}(0)
\end{pmatrix}
=0,
\end{equation}
with $\phi:=\mathbbm{1}_{\{0\}}\otimes \phi_{0}+\mathbbm{1}_{\{-1\}}\otimes \phi_{-1}\in \tilde{\mathcal{X}}_{\sharp}$. Then, multiplying $(\overline{\tilde{v}_k(0)}, \overline{\tilde{v}_k(-1)})$ to both sides of \eqref{eq_existence_proof_3}, recalling Proposition \ref{prop_M_PV_kernel} and Proposition \ref{prop_sesqui_form}, we conclude that
\begin{equation*}
\xi(h)\mathfrak{a}(\tilde{v}_{k},\phi;0)=0.
\end{equation*}
We have two cases: (i) $\xi(h)=0$ or (ii) $\mathfrak{a}(\tilde{v}_{k},\phi;0)=0$ for all $1\leq k\leq 4$. Suppose that (ii) is true. Then, \eqref{eq_existence_proof_2}, \eqref{eq_existence_proof_3} and Proposition \ref{prop_M_PV_kernel} show that
\begin{equation*}
\begin{pmatrix}
\phi_0 \\ \phi_{-1}
\end{pmatrix}
\in \text{span}\big\{ \big(\tilde{v}_{k}(0),\tilde{v}_{k}(-1) \big)^{\top},\,1\leq k\leq 4 \big\}.
\end{equation*}
But this is in contradiction to assumption (ii) if $(\phi_0, \phi_{-1})^{\top}$ is nontrivial. Hence, we are left with the only possibility (i). Remarkably, by the explicit form of $\xi(h)$ obtained in \eqref{eq_xi_function}, we know that $\xi(h)=0$ has a unique solution $h=0$. In conclusion, we have four characteristic solutions to \eqref{eq_existence_proof_2}
\begin{equation} \label{eq_existence_proof_4}
h^{(k)}=0,\quad \begin{pmatrix}
\phi_0^{(k)} \\ \phi_{-1}^{(k)}
\end{pmatrix}
=\begin{pmatrix}
\tilde{v}_{k}(0) \\ \tilde{v}_{k}(-1)
\end{pmatrix}
,\quad 1\leq k\leq 4.
\end{equation}
For $h\neq 0$, we know that $\xi(h)\neq 0$; then the same argument as before shows that $\underline{\mathcal{M}}(h;0)$ is invertible.

Now, we collect all the information we have learned:
\begin{itemize}
    \item[(i)] The limit of $\underline{\mathcal{M}}(h;\delta)$ as $\delta\to 0^+$, denoted by $\underline{\mathcal{M}}(h;0)$, has a unique four-fold characteristic value $h=0$ within $h\in\mathcal{J}$. For all $h\neq 0$, $\underline{\mathcal{M}}(h;0)$ is invertible;
    \item[(ii)] The operator $\underline{\mathcal{M}}(h;\delta)$ is analytic in $h\in\mathcal{J}$, which is a consequence of the analyticity of the resolvents $\tilde{\mathcal{G}}_{\pm\delta,\sharp}$ (of the bulk Hamiltonians $\tilde{\mathcal{H}}_{\pm\delta,\sharp}$) within the spectral gap $\mathcal{I}_{\delta}$;
    \item[(iii)] $\underline{\mathcal{M}}(h;\delta)$ is Hermitian.
\end{itemize}

Based on (i)-(iii), the generalized Rouché theorem (cf. \cite[Chapter 1]{ammari2018mathematical}) concludes that there exists a neighborhood of $\delta=0$ in which the operator $\underline{\mathcal{M}}(h;\delta)$ (or in other words, the problem \eqref{eq_boundary_matching}) has four characteristic values counted with their multiplicities, which are bifurcated from \eqref{eq_existence_proof_4} and are denoted as
\begin{equation} \label{eq_existence_proof_5}
h^{(k)}_{\delta}=o(1),\quad \begin{pmatrix}
\phi_{\delta,0}^{(k)} \\ \phi_{\delta,-1}^{(k)}
\end{pmatrix}
=\begin{pmatrix}
\tilde{v}_{k}(0) \\ \tilde{v}_{k}(-1)
\end{pmatrix}+o(1)
,\quad 1\leq k\leq 4.
\end{equation}
This concludes the first part of the proof.

{\color{blue}Step 2.} With \eqref{eq_existence_proof_5}, it is now direct to verify condition (2) of Proposition \ref{prop_existence_equivalence}. In fact, we only need to check that the leading-order term is nonzero. Taking, for example, $k=1$ yields
\begin{equation} \label{eq_existence_proof_6}
\begin{aligned}
&\lim_{\delta\to 0^+}\mathcal{M}^{aux}(\lambda_*+\delta\cdot h^{(1)}_{\delta},\delta)\begin{pmatrix}
\phi_{\delta,0}^{(1)} \\ \phi_{\delta,-1}^{(1)}
\end{pmatrix} \\
&\overset{(i)}{=}\big(\mathcal{M}^{aux,p.v.}+\frac{\text{sgn}(\beta_*)}{|\alpha_*|}\mathcal{A}^{aux}_2\big)
\begin{pmatrix}
\tilde{v}_{1}(0) \\ \tilde{v}_{1}(-1)
\end{pmatrix} \\
&\overset{(ii)}{=} 
\frac{1}{2}\begin{pmatrix}
\tilde{v}_{1}(0) \\ \tilde{v}_{1}(-1)
\end{pmatrix}
-\frac{\text{sgn}(\beta_*)}{2|\alpha_*|}\overline{\mathfrak{a}(\tilde{v}_{1},\tilde{v}_{1})} \begin{pmatrix}
\tilde{v}_{3}(0) \\ \tilde{v}_{3}(-1)
\end{pmatrix} \\
&\overset{(iii)}{=} 
\frac{1}{2}\begin{pmatrix}
\tilde{v}_{1}(0) \\ \tilde{v}_{1}(-1)
\end{pmatrix}
+\frac{i\cdot\text{sgn}(\beta_*)}{2}\begin{pmatrix}
\tilde{v}_{3}(0) \\ \tilde{v}_{3}(-1)
\end{pmatrix} \neq 0.
\end{aligned}
\end{equation}
Here, equalities (i) and (iii) and the last step follow by recalling the convergence \eqref{eq_M_aux_limit}, Proposition \ref{prop_sesqui_form}, and the fact that the Bloch modes are independent. The middle step (ii) is derived by the following identity:
\begin{equation*}
\mathcal{M}^{aux,p.v.}
\begin{pmatrix}
\tilde{v}_{1}(0) \\ \tilde{v}_{1}(-1)
\end{pmatrix}
=\frac{1}{2}\begin{pmatrix}
\tilde{v}_{1}(0) \\ \tilde{v}_{1}(-1)
\end{pmatrix},
\end{equation*}
where the first component of this equation is proved by \eqref{eq_M_PV_kernel_proof_9}, and the second component is checked similarly. This concludes the proof of the existence of interface modes.

\begin{remark}
In fact, the calculation in \eqref{eq_M_PV_kernel_proof_9} can be generalized to any $1\leq k\leq 4$ with the result
\begin{equation} \label{eq_existence_proof_7}
\mathcal{M}^{aux,p.v.}
\begin{pmatrix}
\tilde{v}_{k}(0) \\ \tilde{v}_{k}(-1)
\end{pmatrix}
=\frac{1}{2}\begin{pmatrix}
\tilde{v}_{k}(0) \\ \tilde{v}_{k}(-1)
\end{pmatrix}.
\end{equation}
The details are left to the interested reader.
\end{remark}

\subsection{Number of Interface Modes} \label{sec_number_interface_mode}

Now we proceed with the second part of Theorem \ref{thm_existence_interface_modes}, with regard to the precise number of interface modes and their parities. The idea is the same as in our previous works, e.g., \cite[Section 9]{qiu2024square_lattice} and \cite[Section 8]{li2024interface_mode_honeycomb}. In particular, based on the discrete layer-potential formulation of interface mode problem we have established, the analysis in the aforementioned studies in the context of continuous systems can be carried directly to the discrete setup considered here. We only show the main idea and skip the details.

The first step is to show that there are \textit{at least two independent interface modes}. In fact, it is easy to construct such two interface modes using the layer-potential formula \eqref{eq_interface_mode_layer_potential_form} and the boundary data in \eqref{eq_existence_proof_5}. We define
\begin{equation} \label{eq_number_interface_mode_proof_1}
u_{\delta}^{(k)}(n)=\left\{
\begin{aligned}
&\tilde{\mathcal{G}}_{\delta,\sharp}(\lambda_*+h_{\delta}^{(k)})(n,-1)\tilde{\mathcal{H}}_{\delta,\sharp}(-1,0)\phi_{\delta,0}^{(k)} \\
&\quad -\tilde{\mathcal{G}}_{\delta,\sharp}(\lambda_*+h_{\delta}^{(k)})(n,0)\tilde{\mathcal{H}}_{zig,\sharp}(0,-1)\phi_{\delta,-1}^{(k)},\quad n\geq 0, \\
&-\tilde{\mathcal{G}}_{-\delta,\sharp}(\lambda_*+h_{\delta}^{(k)})(n,-1)\tilde{\mathcal{H}}_{zig,\sharp}(-1,0)\phi_{\delta,0}^{(k)} \\
&\quad +\tilde{\mathcal{G}}_{-\delta,\sharp}(\lambda_*+h_{\delta}^{(k)})(n,0)\tilde{\mathcal{H}}_{-\delta,\sharp}(0,-1)\phi_{\delta,-1}^{(k)},\quad n< 0, \\
\end{aligned}
\right.
\end{equation}
for $k=1,2$. The function $u_{\delta}^{(k)}$ satisfies $\tilde{\mathcal{H}}_{zig,\sharp}u_{\delta}^{(k)}=(\lambda_*+h_{\delta}^{(k)})u_{\delta}^{(k)}$, as seen from the discussion in the last section. By a similar calculation to the one in \eqref{eq_existence_proof_6}, one can check that the values of $u_{\delta}^{(k)}(0)$ are linearly independent. In fact, the leading-order term of $u_{\delta}^{(k)}(0)$, for $k=1,2,$ are given by
\begin{equation*}
u_{\delta}^{(1)}(0)\sim \frac{1}{2}\tilde{v}_{1}(0)
+\frac{i\cdot\text{sgn}(\beta_*)}{2}\tilde{v}_{3}(0),\quad
u_{\delta}^{(2)}(0)\sim \frac{1}{2}\tilde{v}_{2}(0)
-\frac{i\cdot\text{sgn}(\beta_*)}{2}\tilde{v}_{4}(0),
\end{equation*}
and are clearly independent by the orthogonality of the Bloch modes $\tilde{v}_{k}$. Hence, \eqref{eq_number_interface_mode_proof_1} indeed gives two independent interface modes.

The second step is to show that there are \textit{at most two independent interface modes}. That is proved using the auxiliary equation \eqref{eq_boundary_matching_auxliary}. In fact, by Proposition \ref{prop_boundary_operator_limit}, the leading-order equation of \eqref{eq_boundary_matching_auxliary} as $\delta\to 0^+$ is given by
\begin{equation*}
\Big(\mathcal{M}^{aux,p.v.}+\xi(h)\mathcal{A}^{aux}_1+\eta(h)\mathcal{A}^{aux}_2-\mathbb{I} \Big)
\begin{pmatrix}
\phi_{0} \\ \phi_{-1}
\end{pmatrix}.
\end{equation*}
Using \eqref{eq_existence_proof_7}, one can show that the above equation \textit{only has two independent solutions}, instead of four, at $h=0$:
\begin{equation} \label{eq_number_interface_mode_proof_2}
\begin{array}{l}
\ds \begin{pmatrix}
\phi_{0}^{mode,(1)} \\ \phi_{-1}^{mode,(1)}
\end{pmatrix}
=\begin{pmatrix}
\tilde{v}_{1}(0) \\ \tilde{v}_{1}(-1)
\end{pmatrix}
+i\cdot\text{sgn}(\beta_*)
\begin{pmatrix}
\tilde{v}_{3}(0) \\ \tilde{v}_{3}(-1)
\end{pmatrix},\\
\nm 
\ds
\begin{pmatrix}
\phi_{0}^{mode,(2)} \\ \phi_{-1}^{mode,(2)}
\end{pmatrix}
=\begin{pmatrix}
\tilde{v}_{2}(0) \\ \tilde{v}_{2}(-1)
\end{pmatrix}
-i\cdot\text{sgn}(\beta_*)
\begin{pmatrix}
\tilde{v}_{4}(0) \\ \tilde{v}_{4}(-1)
\end{pmatrix}.
\end{array}
\end{equation}
The superscript `mode' indicates that they are leading-order terms of the boundary data of the interface modes. This confirms that there are at most two interface modes with characteristic values near $h=0$. In fact, the complete solution to \eqref{eq_boundary_matching_auxliary} (including the remainder) for small $\delta$ can be constructed using the Lyapunov-Schmidt argument, as detailed in \cite[Proposition 7.1]{li2024interface_mode_honeycomb}.

We note that the above calculations also indicate the parity of the interface modes. For example, denote by $u_{1}^{zig}$ the interface mode associated with the boundary data $(\phi_{0}^{mode,(1)},\phi_{-1}^{mode,(1)})^{\top}$. Since the interface Hamiltonian $\mathcal{H}_{zig}$ is $\mathcal{F}_x$ invariant, its eigenfunction $u_{1}^{zig}$ attains either even or odd parity with respect to $\mathcal{F}_x$. To determine the precise parity, it is sufficient to look at the leading order of the origin value $u_{1}^{zig}(\bm{0})$:
\begin{equation*}
u_{1}^{zig}(\bm{0})\sim v_{1}(\bm{0})+i\cdot\text{sgn}(\beta_*) v_{3}(\bm{0}).
\end{equation*}
Then, from \eqref{eq_relation_un_vn} and the reflection parity of the Bloch modes specified by the representation \eqref{eq_4d_irrep}, one can check that
\begin{equation*}
F_{x}^{int}\Big(v_{1}(\bm{0})+i\cdot\text{sgn}(\beta_*) v_{3}(\bm{0})\Big)
=v_{1}(\bm{0})+i\cdot\text{sgn}(\beta_*)v_{3}(\bm{0}).
\end{equation*}
This implies that $u_{1}^{zig}$ has even parity with respect to the $\mathcal{F}_x$ reflection. Similarly, one can check that the other interface mode, associated with the boundary data $(\phi_{0}^{mode,(2)},\phi_{-1}^{mode,(2)})^{\top}$, possesses odd parity. We omit the details here.

\section{Robustness of Interface Modes} \label{sec_robustness}

Now we prove the symmetry-protection of the interface modes. As indicated in the discussion following Theorem \ref{thm_robustness_interface_modes}, this is achieved by a periodic approximation argument based on exploiting the behaviour of the perturbed interface modes on a sequence of $L-$strips. Before we present the details, we need to introduce some preliminaries.

We first define the following Hilbert space:
\begin{equation*}
\mathcal{X}_{L,\sharp}:=\{u\in \ell^2_{loc}(\Lambda)\otimes \mathbb{C}^6:\, u(\bm{n}+L\bm{\ell}_2)=u(\bm{n}),\, \sum_{\substack{n_1\in\mathbb{Z} \\ 0\leq n_2\leq L-1}}\|u(n_1\bm{e}_1)\|^2<\infty \}.
\end{equation*}
(we will omit the subscripts for $\mathbb{C}^6$ vector and $\mathbb{C}^{6\times 6}$ norm hereafter.) The functions in $\mathcal{X}_{L,\sharp}$ are $L\bm{\ell}_2$-periodic. $\mathcal{X}_{L,\sharp}$ is equipped with the inner product:
\begin{equation*}
(u,v)_{\mathcal{X}_{L,\sharp}}:=\sum_{\substack{n_1\in\mathbb{Z} \\ 0\leq n_2\leq L-1}} \big(u(\bm{n}),v(\bm{n})\big)_{\mathbb{C}^6}
=\sum_{\bm{n}\in \Lambda_L} \big(u(\bm{n}),v(\bm{n})\big)_{\mathbb{C}^6},\quad \forall\, u,v\in \mathcal{X}_{L,\sharp},
\end{equation*}
where $\Lambda_L$ is a strip transecting the interface and has a width equal to $L>0$
\begin{equation*}
\Lambda_L:=\big\{\bm{n}\in \Lambda:\, |\bm{n}\cdot \bm{\ell}_2|\leq \frac{L}{2} \big\}.
\end{equation*}
Note that $\Lambda_L$ is a special isomorphic copy of the cylinder $\Lambda/L\mathbb{Z}\bm{\ell}_2$, which is chosen to be reflectional symmetric about the $x-$axis. For example, the background lattice sites in Figure \ref{fig_interface structure_perturbed} comprise $\Lambda_L$ with $L=3$. Furthermore, we introduce the subspace of $\mathcal{X}_{L,\sharp}$ with even parity with respect to the reflection operator $\mathcal{F}_{x}:$
\begin{equation*}
\mathcal{X}_{L,\sharp}^{e}:=\{u\in \mathcal{X}_{L,\sharp}:\, \mathcal{F}_{x}u=u \}.
\end{equation*}
The superscript `e' indicates that any $u\in \mathcal{X}_{L,\sharp}^{e}$ attains even parity with respect to the reflection symmetry $\mathcal{F}_{x}$. Similarly, the function space with odd parity is defined as
\begin{equation*}
\mathcal{X}_{L,\sharp}^{o}:=\{u\in \mathcal{X}_{L,\sharp}:\, \mathcal{F}_{x}u=-u \}.
\end{equation*}
It is direct to check that the unperturbed interface Hamiltonian, $\mathcal{H}_{zig}$, is invariant on both $\mathcal{X}_{L,\sharp}^{e}$ and $\mathcal{X}_{L,\sharp}^{o}$ because $\mathcal{H}_{zig}$ is reflectional and $\mathbb{Z}\bm{\ell}_2-$translational invariant; consequently, its restriction $\mathcal{H}_{zig}^{L,e/o}:=\mathcal{H}_{zig}\Big|_{\mathcal{X}_{L,\sharp}^{e/o}}$ is well-defined. However, the perturbed interface Hamiltonian $\mathcal{H}_{zig,\mathcal{W}}=\mathcal{H}_{zig}+\mathcal{W}$ is not invariant on these strip spaces due to the loss of periodicity. To define its restriction, we introduce the following restriction and periodization operators. First, the restriction operator $\mathfrak{R}^L$ simply truncates functions on the whole lattice $\Lambda$ to the strip $\Lambda_L$:
\begin{equation*}
\mathfrak{R}^L:\ell^{\infty}(\Lambda)\otimes \mathbb{C}^{6}\to \ell^{\infty}(\Lambda_L)\otimes \mathbb{C}^{6},\quad
(\mathfrak{R}^Lu)(\bm{n}):=u(\bm{n}),\quad \bm{n}\in \Lambda_L.
\end{equation*}
The periodization operator $\mathfrak{S}^L$ first restricts a function to $\Lambda_{L/2}$, then extends it periodically:
\begin{equation*}
\mathfrak{S}^L:\ell^{\infty}(\Lambda)\otimes \mathbb{C}^{6}\to \ell^{\infty}(\Lambda)\otimes \mathbb{C}^{6},\quad
(\mathfrak{S}^Lu)(\bm{n}):=\left\{
\begin{aligned}
&(\mathfrak{R}^{L/2}u)(\bm{m}),\quad \text{if $\bm{n}-\bm{m}\in L\mathbb{Z}\bm{\ell}_2$ with $\bm{m}\in \Lambda_{L/2}$,} \\
&0,\quad \text{otherwise.}
\end{aligned}
\right.
\end{equation*}
Clearly, $\mathfrak{S}^Lu$ is periodic and, due to the symmetric shape of $\Lambda_L$, preserves the $\mathcal{F}_x$-symmetry.
\begin{proposition} \label{prop_periodization operator}
For any $u\in \ell^{\infty}(\Lambda)\otimes \mathbb{C}^6$, the function $u^{L}:=\mathfrak{S}^Lu$ satisfies $u^{L}(\bm{n}+L\bm{\ell}_2)=u^{L}(\bm{n})$ for all $\bm{n}\in\Lambda$. Moreover, if $\mathcal{F}_x u=\pm u$, we have, accordingly, $\mathcal{F}_x u^{L}=\pm u^{L}$.
\end{proposition}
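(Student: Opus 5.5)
The plan is to verify both claims directly from the definition of $\mathfrak{S}^L$. The only geometric input is a description of the fundamental domain $\Lambda_{L/2}$. It is a strip of sites $\bm{n}$ with $|\bm{n}\cdot\bm{\ell}_2|\le L/4$. Its translates by $L\mathbb{Z}\bm{\ell}_2$ are pairwise disjoint, because two distinct translates differ in $\bm{n}\cdot\bm{\ell}_2$ by $L|\bm{\ell}_2|^2 j=Lj$ with $j\neq 0$. Hence for each $\bm{n}\in\Lambda$ there is at most one $\bm{m}\in\Lambda_{L/2}$ with $\bm{n}-\bm{m}\in L\mathbb{Z}\bm{\ell}_2$. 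So $\mathfrak{S}^Lu$ is well defined: it equals $u(\bm{m})$ if such an $\bm{m}$ exists, and zero otherwise.

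\textbf{Periodicity.} Fix $\bm{n}$ and consider $\bm{n}+L\bm{\ell}_2$. The relation $\bm{n}-\bm{m}\in L\mathbb{Z}\bm{\ell}_2$ holds if and only if $(\bm{n}+L\bm{\ell}_2)-\bm{m}\in L\mathbb{Z}\bm{\ell}_2$. Therefore the representative $\bm{m}$ exists for $\bm{n}$ exactly when it exists for $\bm{n}+L\bm{\ell}_2$, and it is the same point. In the first case both values equal $u(\bm{m})$; in the other case both are zero. This gives $u^L(\bm{n}+L\bm{\ell}_2)=u^L(\bm{n})$.

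\textbf{Parity.} I would first record two facts about $F_x^{ext}=\mathrm{diag}(1,-1)$. First, it maps $L\mathbb{Z}\bm{\ell}_2$ onto itself, since $\bm{\ell}_2=(0,1)^{\top}$ is sent to $-\bm{\ell}_2$. Second, it maps $\Lambda_{L/2}$ onto itself, since $F_x^{ext}\bm{n}\cdot\bm{\ell}_2=-\bm{n}\cdot\bm{\ell}_2$. This is the symmetric-shape remark made before the proposition. It is also implicit that $F_x^{ext}\Lambda=\Lambda$, which is needed for $\mathcal{F}_x$ to be defined at all.

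Now take $\bm{n}$ and set $\bm{n}'=F_x^{ext}\bm{n}$. If $\bm{m}\in\Lambda_{L/2}$ represents $\bm{n}'$, then $F_x^{ext}\bm{m}\in\Lambda_{L/2}$ represents $\bm{n}$, because $\bm{n}-F_x^{ext}\bm{m}=F_x^{ext}(\bm{n}'-\bm{m})\in L\mathbb{Z}\bm{\ell}_2$. The converse holds by applying the same argument with the involution $F_x^{ext}$. Hence
\[
(\mathcal{F}_xu^L)(\bm{n})=F_x^{int}u^L(\bm{n}')=F_x^{int}u(\bm{m})=(\mathcal{F}_xu)(F_x^{ext}\bm{m})=\pm u(F_x^{ext}\bm{m})=\pm u^L(\bm{n}).
\]
In the case where no representative exists, both sides vanish by the equivalence just established.

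The only point needing care is this well-definedness and bijective correspondence of representatives. It reduces to the exact symmetry of $\Lambda_{L/2}$ under $\bm{n}\cdot\bm{\ell}_2\mapsto-\bm{n}\cdot\bm{\ell}_2$. Because of that symmetry, boundary sites with $|\bm{n}\cdot\bm{\ell}_2|=L/4$ cause no ambiguity: they are either both included or both excluded, together with their reflections. I do not expect a genuine obstacle here.
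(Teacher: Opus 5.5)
Your argument is correct and follows the paper's route. The paper gives no proof beyond remarking that $\mathfrak{S}^Lu$ is clearly periodic and that the reflection-symmetric shape of the strip makes it preserve $\mathcal{F}_x$-parity; you spell this out, including well-definedness via the disjointness of the $L\mathbb{Z}\bm{\ell}_2$-translates of $\Lambda_{L/2}$ and the bijection $\bm{m}\mapsto F_x^{ext}\bm{m}$ of representatives, and you correctly work with $\Lambda_{L/2}$, the set that actually enters the definition of $\mathfrak{S}^L$, rather than $\Lambda_L$.
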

With these preparations, we can now define the restriction of perturbed Hamiltonians on the strips. Define the following operator:
\begin{equation*}
\mathcal{W}^{L}:=\mathfrak{S}^L \mathcal{W} \mathfrak{R}^L \in \mathcal{B}(\ell^{\infty}(\Lambda)\otimes \mathbb{C}^{6}).
\end{equation*}
Then, $\mathcal{W}^{L}$ is invariant on both $\mathcal{X}_{L,\sharp}^{e}$ and $\mathcal{X}_{L,\sharp}^{o}$ by Proposition \ref{prop_periodization operator}. Hence, the restriction of the Hamiltonian $\mathcal{H}_{zig}+\mathcal{W}^{L}$ to $\mathcal{X}_{L,\sharp}^{e/o}$ is well-defined; we denote it by
\begin{equation*}
\mathcal{H}_{zig,\mathcal{W}}^{L,e/o}:=\big(\mathcal{H}_{zig}+\mathcal{W}^{L}\big)\Big|_{\mathcal{X}_{L,\sharp}^{e/o}}.
\end{equation*}
It is also important to note that the operator $\mathcal{W}^{L}$ converges weakly to $\mathcal{W}$ in the sense that
\begin{equation*}
(\mathcal{W}^{L}u,v)_{\mathcal{X}}\to (\mathcal{W}u,v)_{\mathcal{X}} \quad \text{for any $u\in \ell^{\infty}(\Lambda)\otimes \mathbb{C}^6$ and compactly supported $v$.}
\end{equation*}

\subsection{Main Idea of Proof} \label{sec_robustness_main_idea}

The idea of proof is first to construct a sequence of perturbed interface modes, bifurcated from the unperturbed ones found in Theorem \ref{thm_existence_interface_modes} and located in the strips $\mathcal{X}_{L,\sharp}^{e/o}$. Then, we  prove the weak convergence of (a subsequence of) this sequence as $L\to \infty$, the limit of which is exactly the whole-space perturbed interface modes claimed in Theorem \ref{thm_robustness_interface_modes}. As one may expect, the key point of this compactness argument is the uniform boundedness of perturbed interface modes on the $L-$strips. This is rooted in the following lemma. Recall that we will fix the parameter $\delta$ (in the interface Hamiltonian $\mathcal{H}_{zig}$) from now on. To emphasize the parity of the interface mode, we will denote the unperturbed eigenpairs by $\lambda_{zig,e/o}:=\lambda_{zig,1/2}$ and  $u_{zig,e/o}:=u_{zig,1/2}$.

\begin{lemma} \label{lem_isolation_strip_spectrum}
Suppose that the conditions in Theorem \ref{thm_existence_interface_modes} hold. Then, $\mathcal{I}_{\delta}$ is an essential gap of $\mathcal{H}_{zig}^{L,e/o}$ for all $L>0$, in the sense that $\text{Spec}_{ess}(\mathcal{H}_{zig}^{L,e/o})\cap \mathcal{I}_{\delta}=\emptyset$. Moreover, $\mathcal{H}_{zig}^{L,e/o}$ has a unique eigenvalue inside $\mathcal{I}_{\delta}$:
\begin{equation} \label{eq_isolation_strip_spectrum}
\text{Spec}_{disc}(\mathcal{H}_{zig}^{L,e/o})\cap \mathcal{I}_{\delta} = \{\lambda_{zig,e/o}\}.
\end{equation}
Consequently, denoting the projection of $\mathcal{H}_{zig}^{L,e/o}$ to its in-gap eigenvalue by 
$$
\mathbb{P}_{zig}^{L,e/o}:=\mathbbm{1}_{\{\lambda_{zig,e/o}\}}(\mathcal{H}_{zig}^{L,e/o}),
$$
then the Fredholm operator $$\mathbb{Q}_{zig}^{L,e/o}:=(\mathbb{P}_{zig}^{L,e/o})^{\perp}(\mathcal{H}_{zig}^{L,e/o}-\lambda_{zig,e/o})(\mathbb{P}_{zig}^{L,e/o})^{\perp}$$ is invertible on $\text{Ran}(\mathbb{P}_{zig}^{L,e/o})^{\perp}$ with the uniform bound:
\begin{equation*}
\big\| (\mathbb{Q}_{zig}^{L,e/o})^{-1} \big\|_{\text{Ran}(\mathbb{P}_{zig}^{L,e/o})^{\perp}\to \text{Ran}(\mathbb{P}_{zig}^{L,e/o})^{\perp}} \leq d_{zig,e/o}^{-1},
\end{equation*}
where the isolation distance $d_{zig,e/o}:=d_{zig,1/2}>0$ (defined in \eqref{eq_isolation_distance_interface_eigenvalue}) is independent of $L$.
\end{lemma}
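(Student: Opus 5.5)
The plan is to reduce everything to the quasi-periodic fibres $\mathcal{X}_{\kappa_{\parallel}}$ by a discrete Floquet decomposition along $\mathbb{Z}\bm{\ell}_2$. Since $\mathcal{H}_{zig}$ commutes with translation by $\bm{\ell}_2$, an $L\bm{\ell}_2$-periodic function splits into its $\bm{\ell}_2$-Floquet components, giving the orthogonal decomposition
\begin{equation*}
\mathcal{X}_{L,\sharp}\simeq \bigoplus_{j=0}^{L-1}\mathcal{X}_{\kappa_j},\qquad \kappa_j:=\frac{2\pi j}{L},
\end{equation*}
and $\mathcal{H}_{zig}$ acts diagonally, as $\mathcal{H}_{zig}|_{\mathcal{X}_{\kappa_j}}$ on each summand. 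The reflection $\mathcal{F}_x$ sends $\bm{\ell}_2\mapsto -\bm{\ell}_2$, so it maps $\mathcal{X}_{\kappa}$ onto $\mathcal{X}_{-\kappa}$. Consequently $\mathcal{X}_{L,\sharp}^{e/o}$ is the direct sum of three kinds of pieces:
\begin{itemize}
\item the $\mathcal{F}_x$-even/odd part of $\mathcal{X}_{0}$;
\item the $\mathcal{F}_x$-even/odd part of $\mathcal{X}_{\pi}$, present only when $L$ is even;
\item for each pair $\{\kappa_j,-\kappa_j\}$ with $\kappa_j\neq 0,\pi$, the "symmetrized" copies $\{u\pm\mathcal{F}_xu:\,u\in\mathcal{X}_{\kappa_j}\}$, each unitarily equivalent to the full fibre $\mathcal{X}_{\kappa_j}$.
\end{itemize}
Hence the spectrum of $\mathcal{H}^{L,e/o}_{zig}$ is the union of the spectra of these finitely many fibre operators, and it suffices to control each fibre.

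\textbf{Essential gap.} On each fibre, $\mathcal{H}_{zig}|_{\mathcal{X}_{\kappa_{\parallel}}}$ is a finite-rank perturbation, supported near $n_1\in\{-N,\dots,N\}$, of the decoupled operator $\mathcal{H}_{\delta}|_{n_1\ge 0}\oplus\mathcal{H}_{-\delta}|_{n_1<0}$. By Weyl's theorem, its essential spectrum is contained in $\text{Spec}(\mathcal{H}_{\delta})\cup\text{Spec}(\mathcal{H}_{-\delta})$ restricted to the $\kappa_{\parallel}$-slice. By Theorem \ref{thm_gap_open} this set misses $\mathcal{I}_\delta$. A finite direct sum preserves this, so $\text{Spec}_{ess}(\mathcal{H}^{L,e/o}_{zig})\cap\mathcal{I}_\delta=\emptyset$ for every $L$.

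\textbf{Discrete in-gap spectrum.}
\begin{itemize}
\item \emph{The $\kappa=0$ sector.} By Theorem \ref{thm_existence_interface_modes}, $\mathcal{H}_{zig}|_{\mathcal{X}_{\sharp}}$ has exactly two eigenvalues in $\mathcal{I}_\delta$, one with an even eigenmode and one with an odd eigenmode, by \eqref{eq_interface_mode_parity}. So the even (resp. odd) part of $\mathcal{X}_0$ contributes exactly $\lambda_{zig,e}$ (resp. $\lambda_{zig,o}$), and this eigenvalue is simple.
\item \emph{The $\kappa=\pi$ sector.} When $L$ is even, $\mathcal{X}_{\pi}$ contributes nothing in $\mathcal{I}_\delta$, by \eqref{eq_absence_eigenvalue_pi} and Remark \ref{rmk_absence_eigenvalue_pi}.
\item \emph{The sectors $\kappa_j\neq 0,\pi$.} It remains to show that $\mathcal{H}_{zig}|_{\mathcal{X}_{\kappa_j}}$ has no eigenvalue in $\mathcal{I}_\delta$ other than possibly $\lambda_{zig,e/o}$ in the matching parity. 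For this I would redo the layer-potential reduction of Section \ref{sec_layer_potential_framework} on $\mathcal{X}_{\kappa_{\parallel}}$: the boundary matching operator $\mathcal{M}_{\kappa_{\parallel}}(\lambda,\delta)$ is analytic in $\kappa_{\parallel}$. Away from the double Dirac point, no propagating Bloch modes exist at $\lambda_*$, so $\mathcal{M}^{p.v.}$ is replaced by an invertible matrix and characteristic values are excluded. Near $\kappa_{\parallel}=0$, I would use the parity structure: the even/odd symmetrized pieces built from the pair $(\kappa,-\kappa)$ have to be tracked against the branches of Corollary \ref{corol_interface_eigenvalue_curve}.
\end{itemize}

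This third bullet is where I expect the main obstacle. I would first try to show that inside the parity sectors the two interface branches cannot enter $\mathcal{I}_\delta$ at $\kappa_j\neq0$ for the fixed $\delta$. If that fails, I would look for a replacement argument that still yields \eqref{eq_isolation_strip_spectrum} as stated.

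\textbf{Resolvent bound.} Once \eqref{eq_isolation_strip_spectrum} holds, the bound is soft. $\mathcal{H}^{L,e/o}_{zig}$ is self-adjoint. Its spectrum restricted to $\text{Ran}(\mathbb{P}^{L,e/o}_{zig})^{\perp}$ lies in $\mathcal{I}_\delta^{c}$: the in-gap eigenvalue has been projected out, the essential spectrum lies outside $\mathcal{I}_\delta$, and so do all other discrete eigenvalues. Therefore
\begin{equation*}
\text{dist}\Big(\lambda_{zig,e/o},\,\text{Spec}\big(\mathcal{H}^{L,e/o}_{zig}|_{\text{Ran}(\mathbb{P}^{L,e/o}_{zig})^{\perp}}\big)\Big)\ \ge\ \text{dist}(\lambda_{zig,e/o},\mathcal{I}_\delta^c)=d_{zig,e/o}.
\end{equation*}
The spectral theorem then gives $\|(\mathbb{Q}^{L,e/o}_{zig})^{-1}\|\le d_{zig,e/o}^{-1}$. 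The right-hand side depends only on $\lambda_{zig,e/o}$ and $\mathcal{I}_\delta$, so it is independent of $L$.
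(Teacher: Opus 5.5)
Your essential-gap argument and the final spectral-theorem step are fine. The genuine gap is your third bullet, and it cannot be closed. Your own decomposition shows why. For $\kappa_j\in\mathcal{A}_L$ with $\kappa_j\neq 0,\pi$, the map $u\mapsto u\pm\mathcal{F}_x u$ sends $\mathcal{X}_{\kappa_j}$ injectively into $\mathcal{X}_{L,\sharp}^{e/o}$, since the two terms lie in the orthogonal fibres $\pm\kappa_j$. It also intertwines $\mathcal{H}_{zig}$. Hence \emph{every} eigenvalue of $\mathcal{H}_{zig}|_{\mathcal{X}_{\kappa_j}}$ is an eigenvalue of both $\mathcal{H}_{zig}^{L,e}$ and $\mathcal{H}_{zig}^{L,o}$; parity gives no constraint on these sectors. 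By Corollary~\ref{corol_interface_eigenvalue_curve}, $\mathcal{H}_{zig}|_{\mathcal{X}_{\kappa_\parallel}}$ has two eigenvalues $\lambda_{zig,n}(\kappa_\parallel)\in\mathcal{I}_\delta$ for all $\kappa_\parallel$ in a fixed neighbourhood of $0$, and $2\pi/L\in\mathcal{A}_L$ lies in that neighbourhood once $L$ is large. So what you propose to show in the third bullet contradicts a result already stated in the paper. That is, you want the branches not to enter $\mathcal{I}_\delta$ at $\kappa_j\neq0$, or to enter only at the value $\lambda_{zig,e/o}$. No replacement argument can give \eqref{eq_isolation_strip_spectrum} for all $L$. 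What your decomposition actually yields is
\[
\text{Spec}_{disc}(\mathcal{H}^{L,e}_{zig})\cap\mathcal{I}_\delta=\{\lambda_{zig,e}\}\cup\bigcup_{\kappa_j\in\mathcal{A}_L,\,0<\kappa_j<\pi}\big(\text{Spec}(\mathcal{H}_{zig}|_{\mathcal{X}_{\kappa_j}})\cap\mathcal{I}_\delta\big).
\]
Since $\lambda_{zig,1}(\kappa_\parallel)\to\lambda_{zig,e}$ as $\kappa_\parallel\to 0$, the values $\lambda_{zig,1}(2\pi/L)$ are in-gap eigenvalues of $\mathcal{H}^{L,e}_{zig}$ that differ from $\lambda_{zig,e}$ and converge to it. The only escape is that this analytic branch is constant near $0$, and nothing in the hypotheses provides that. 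Therefore $\text{dist}\big(\lambda_{zig,e},\text{Spec}(\mathcal{H}^{L,e}_{zig}|_{\text{Ran}(\mathbb{P}^{L,e}_{zig})^{\perp}})\big)\to 0$, and the $L$-uniform bound $\|(\mathbb{Q}^{L,e}_{zig})^{-1}\|\le d_{zig,e}^{-1}$ also fails. Your ``soft'' last step is valid only conditionally on a premise that is unavailable. The odd sector behaves identically.

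For comparison, the paper obtains \eqref{eq_isolation_strip_spectrum} by arguing that an even eigenfunction of the form $u=e^{i\kappa_\parallel n_2}u_{per}$ forces $\kappa_\parallel\in\{0,\pm\pi\}$. That argument applies only to eigenfunctions lying in a single fibre $\mathcal{X}_{\kappa_\parallel}$. It overlooks exactly the combinations $u+\mathcal{F}_x u$ across $\pm\kappa_\parallel$, your third kind of piece. Compare Figure~\ref{fig_disctre_spectrum_strip_space}(a), where the branch values sampled at $\kappa_\parallel\in\mathcal{A}_L\setminus\{0\}$ lie inside the gap. So you have correctly located an obstruction that the paper's own argument misses. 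What is missing is not a sharper argument for the third bullet. The lemma as stated, with an $L$-independent isolation distance, does not hold in general, and the periodic-approximation step of Section~\ref{sec_robustness} that relies on it would need a different argument.
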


In the sequel, we always assume that the conditions in Theorem \ref{thm_existence_interface_modes} hold, which will no longer be repeated. Based on Lemma \ref{lem_isolation_strip_spectrum}, the reflection symmetry of $\mathcal{W}$, and the fact that the perturbed Hamiltonian $\mathcal{H}_{zig,\mathcal{W}}^{L,e/o}$ acts invariantly on $\mathcal{X}_{L,\sharp}^{e/o}$, the standard perturbation theory indicates that the interface eigenvalue in the strip persists if the size of the perturbation is suitably controlled. Moreover, using the Lyapunov-Schmidt reduction argument, we are able to explicitly construct the perturbed interface mode for each $L>0$. 
\begin{proposition} \label{prop_strip_interface_modes}
Suppose that the perturbation operator $\mathcal{W}$ is of class (A). Then, there exists $c_{\mathcal{W}}\in (0,\frac{1}{2})$ such that whenever the constant $M_{\mathcal{W}}$ defined in \eqref{eq_longi_loc_cond} satisfies
\begin{equation*}
M_{\mathcal{W}}<c_{\mathcal{W}}d_{zig,e/o},
\end{equation*}
the perturbed strip Hamiltonian $\mathcal{H}_{zig,\mathcal{W}}^{L,e/o}$ has a unique in-gap eigenvalue $\lambda_{zig,\mathcal{W}}^{L,e/o}\in\mathcal{I}_{\delta}$ with the estimate
\begin{equation} \label{eq_perturbed_interface_eigenvalue_bound}
|\lambda_{zig,\mathcal{W}}^{L,e/o}- \lambda_{zig,e/o}|<\frac{1}{2}d_{zig,e/o}.
\end{equation}
The associated eigenmode can be explicitly expressed as
\begin{equation} \label{eq_strip_interface_mode_ansatz}
\begin{aligned}
u_{zig,\mathcal{W}}^{L,e/o}
&=u_{zig,e/o}-
\Big[ 1+(\mathbb{Q}_{zig}^{L,e/o})^{-1}\big( \mathcal{W}^{L}-\lambda_{zig,\mathcal{W}}^{L,e/o}+\lambda_{zig,e/o} \big) \Big]^{-1}\mathbb{Q}_{zig}^{L,e/o}\mathcal{W}^{L} u_{zig,e/o} \\
&=: u_{zig,e/o}+u_{zig,\mathcal{W}}^{L,e/o,(1)} .
\end{aligned}
\end{equation}
In particular, $u_{zig,\mathcal{W}}^{L,e/o,(1)}$ is uniformly bounded in the sense that there exists $C>0$ such that
\begin{equation} \label{eq_strip_interface_mode_uniform_bound}
\|u_{zig,\mathcal{W}}^{L,e/o,(1)}\|_{\mathcal{X}_{L,\sharp}}\leq C<\infty ,\quad \text{for all $L>0$}.
\end{equation}
\end{proposition}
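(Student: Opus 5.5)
The plan is a Lyapunov--Schmidt reduction on the parity sector $\mathcal{X}_{L,\sharp}^{e/o}$, using Lemma \ref{lem_isolation_strip_spectrum} as the only spectral input. Fix the parity, write $\lambda=\lambda_{zig,e/o}+\mu$, and normalize the unperturbed interface mode as the $L$-periodic function $u_{zig,e/o}\in\mathcal{X}_{L,\sharp}^{e/o}$. It is $\mathbb{Z}\bm{\ell}_2$-invariant and exponentially localized in $n_1$, so $\|u_{zig,e/o}\|_{\mathcal{X}_{L,\sharp}}^2$ grows like $L$. By Lemma \ref{lem_isolation_strip_spectrum}, $\mathbb{P}_{zig}^{L,e/o}$ is the rank-one projection onto $u_{zig,e/o}$. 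We seek $u=u_{zig,e/o}+w$ with $w\in\text{Ran}(\mathbb{P}_{zig}^{L,e/o})^{\perp}$, and split $(\mathcal{H}_{zig}+\mathcal{W}^L-\lambda)u=0$ into its $\perp$ and $\parallel$ components. Here we use that $\mathcal{W}^L$ preserves parity (Proposition \ref{prop_periodization operator} together with $[\mathcal{W},\mathcal{F}_x]=0$).

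For the $\perp$ equation, write
\[
\big(\mathbb{Q}_{zig}^{L,e/o}+\mathbb{P}^{\perp}(\mathcal{W}^L-\mu)\mathbb{P}^{\perp}\big)w=-\mathbb{P}^{\perp}\mathcal{W}^Lu_{zig,e/o}.
\]
The key bound is an $L$-uniform operator-norm estimate $\|\mathcal{W}^L\|_{\mathcal{B}(\mathcal{X}_{L,\sharp})}\lesssim M_{\mathcal{W}}$. I would prove it by a Schur test. Condition \eqref{eq_longi_loc_cond} controls row sums uniformly. The column sums also need control; I would get them from Hermiticity of $\mathcal{W}$, or else add them as an assumption. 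The periodization $\mathfrak{S}^L$ only folds the kernel, and folding does not increase Schur norms.

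With this bound, $|\mu|<d/2$ and $M_{\mathcal{W}}<c_{\mathcal{W}}d$ give $\|(\mathbb{Q}^{L})^{-1}\mathbb{P}^\perp(\mathcal{W}^L-\mu)\|\le 1/2+C c_{\mathcal{W}}<1$. A Neumann series then solves $w=w(\mu)$, analytic in $\mu$, which reproduces \eqref{eq_strip_interface_mode_ansatz}.

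For the $\parallel$ equation, the scalar condition becomes
\[
\mu\|u_{zig}\|^2=(\mathcal{W}^L(u_{zig}+w(\mu)),u_{zig}),
\]
which is a fixed-point problem $\mu=F(\mu)$ on the disc $|\mu|<d/2$. The crucial observation is that $\mathcal{W}$ is localized longitudinally. Condition \eqref{eq_longi_loc_cond} sums over $n_2$, so $\mathcal{W}$ has only $O(1)$ total mass per row of $n_1$, and $\|\mathcal{W}^Lu_{zig}\|_{\mathcal{X}_{L,\sharp}}\le CM_{\mathcal{W}}$ uniformly in $L$. The point is that $u_{zig}$ is bounded in $\ell^\infty$ and decays in $n_1$; the bound should be taken with respect to $\ell^\infty$ of $u$, not its $L$-growing $\ell^2$ norm. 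From this, $|F(\mu)|\le CM_{\mathcal{W}}/\|u_{zig}\|^2+\dots$, which is small, and $F$ is a contraction. So for $c_{\mathcal{W}}$ small there is a unique $\mu$ with $|\mu|<d/2$, giving \eqref{eq_perturbed_interface_eigenvalue_bound}.

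Uniqueness of the in-gap eigenvalue in $\mathcal{I}_\delta$ should follow from min-max and the $L$-uniform norm bound. I expect a shift of at most $CM_{\mathcal{W}}<d/2$ in all spectral points, so no other spectrum can enter the window $d/2$ around $\lambda_{zig}$. One more check is needed to exclude spectrum appearing elsewhere in $\mathcal{I}_\delta$, using that the essential spectrum stays out of $\mathcal{I}_\delta$ and the distance $d$.

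The main obstacle is the uniform bound \eqref{eq_strip_interface_mode_uniform_bound}: $\|w\|\le\|(1+\dots)^{-1}\|\,d^{-1}\|\mathcal{W}^Lu_{zig}\|_{\mathcal{X}_{L,\sharp}}$. Everything hinges on $\|\mathcal{W}^Lu_{zig}\|_{\mathcal{X}_{L,\sharp}}\le C$ independently of $L$. I would prove it directly via $|(\mathcal{W}^Lu)(\bm n)|\le\sum_{\bm m}\|\mathcal{W}(\bm n,\bm m)\|\,|u_{zig}(\bm m)|$. Summing the square over $n_2$ in a period and over $n_1$, I would use Cauchy--Schwarz together with \eqref{eq_longi_loc_cond} and the exponential $n_1$-decay of $u_{zig}$ from Theorem \ref{thm_existence_interface_modes}, plus the finite hopping range of $\mathcal{W}$ in $\bm m-\bm n$, if needed, to turn row mass into a square-summable bound. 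Combining this with the $L$-uniform resolvent bound $d^{-1}$ from Lemma \ref{lem_isolation_strip_spectrum} and the Neumann-series constant $\le2$ gives $C$ independent of $L$.
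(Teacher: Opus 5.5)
Your proof is the paper's proof, step for step: a Lyapunov--Schmidt reduction in $\mathcal{X}^{e/o}_{L,\sharp}$, with Lemma \ref{lem_isolation_strip_spectrum} supplying a rank-one $\mathbb{P}^{L,e/o}_{zig}$ and the $L$-uniform bound $\|(\mathbb{Q}^{L,e/o}_{zig})^{-1}\|\le d_{zig,e/o}^{-1}$, plus a Schur bound on $\mathcal{W}^L$ and an $L$-uniform bound on $\mathcal{W}^L u_{zig,e/o}$.

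\textbf{The gap.} The spectral input you inherit from the paper is false for large $L$. Since $F_x^{ext}\bm\ell_2=-\bm\ell_2$, the reflection $\mathcal{F}_x$ maps $\mathcal{X}_{\kappa_\parallel}$ onto $\mathcal{X}_{-\kappa_\parallel}$. Take $\kappa_\parallel=2\pi k/L\notin\{0,\pm\pi\}$ and let $v\in\mathcal{X}_{\kappa_\parallel}$ be one of the interface eigenfunctions of Corollary \ref{corol_interface_eigenvalue_curve}. Then $v\pm\mathcal{F}_x v\neq0$ is an $L\bm\ell_2$-periodic eigenfunction of $\mathcal{H}_{zig}$, of even/odd parity, with eigenvalue $\lambda_{zig,n}(\kappa_\parallel)\in\mathcal{I}_\delta$. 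The parity argument in the proof of the lemma only rules out eigenfunctions lying in a single $\mathcal{X}_{\kappa_\parallel}$. Consequently the spectrum of $\mathcal{H}^{L,e/o}_{zig}$ comes within $O(1/L)$ of $\lambda_{zig,e/o}$. Three things follow. Uniqueness (or simplicity) fails already for $\mathcal{W}=0$. The bound degrades to $\|(\mathbb{Q}^{L,e/o}_{zig})^{-1}\|\gtrsim L$ instead of $d_{zig,e/o}^{-1}$. Your Neumann series, like the paper's, cannot be closed uniformly in $L$ under $M_{\mathcal{W}}<c_{\mathcal{W}}d_{zig,e/o}$, so \eqref{eq_strip_interface_mode_uniform_bound} is not established. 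Parity alone does not isolate $\lambda_{zig,e/o}$ on the $L$-strips. A repair has to handle the interface band near $\kappa_\parallel=0$ directly, in particular the couplings of $\mathcal{W}^L u_{zig,e/o}$ to the modes at $\kappa_\parallel=2\pi k/L$, whose energy denominators are only $O(k/L)$.

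\textbf{Granting the lemma.} The rest of your proposal is sound, and in one place better than the paper.

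Your bound on $\mathcal{W}^L u_{zig,e/o}$ uses the finite range of $\mathcal{W}$ to get $\|(\mathcal{W}\mathfrak{R}^L u_{zig,e/o})(\bm n)\|\le Ce^{-c|n_1|}a(\bm n)$, where $a(\bm n):=\sum_{\bm m}\|\mathcal{W}(\bm n,\bm m)\|$. Then $\sum_{n_2}a(\bm n)^2\le M_{\mathcal{W}}^2$ for each $n_1$, and summing over $n_1$ gives $\|\mathcal{W}^L u_{zig,e/o}\|_{\mathcal{X}_{L,\sharp}}\lesssim M_{\mathcal{W}}$. The paper's chain instead discards the $n_1$-decay of $u_{zig,e}$ and ends with $\big(\sum_{\bm n\in\Lambda}\sum_{m_2}\sup_{m_1}\|\mathcal{W}(\bm n,\bm m)\|\big)^2$. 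That quantity is not controlled by \eqref{eq_longi_loc_cond}, and it is infinite for the line defect of Example \ref{examp_symmetric_perturbation}.

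Solving the scalar equation by contraction, instead of citing Kato, works and needs no min-max. Any eigenvector with $|\lambda-\lambda_{zig,e/o}|<d_{zig,e/o}/2$ has $\mathbb{P}^{L,e/o}_{zig}u\neq0$, so uniqueness in that window follows from the contraction itself.

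Your ``one more check'' on all of $\mathcal{I}_\delta$ cannot succeed. A non-compact $\mathcal{W}^L$ of norm $M_{\mathcal{W}}$ may pull spectrum from $\mathcal{I}_\delta^c$ into an $M_{\mathcal{W}}$-neighbourhood of the endpoints.

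Finally, $\mathfrak{S}^L\mathcal{W}\mathfrak{R}^L$ cuts rows to $\Lambda_{L/2}$ but columns to $\Lambda_L$, so it is not exactly Hermitian. Reality of $\mu$, and any min-max step, require a symmetric truncation.
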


Note that \textit{the leading-order part $u_{zig,e/o}$ in \eqref{eq_strip_interface_mode_ansatz} is independent of $L$}. Hence, we only need to consider the convergence of the perturbation part $u_{zig,\mathcal{W}}^{L,e/o,(1)}$. Thanks to its uniform bound, it is possible to extract a subsequence $L_i$ by a compactness argument such that $u_{zig,\mathcal{W}}^{L_i,e/o,(1)}$ converges weakly in a suitable sense to a bound state, which is exactly the scattering wave in Theorem \ref{thm_robustness_interface_modes}.

\begin{proposition} \label{prop_convergence_scattering_modes}
There exists a subsequence $L_i$ such that the perturbed eigenvalue $\lambda_{zig,\mathcal{W}}^{L_i,e/o}$ converges to $\lambda_{zig,\mathcal{W}}^{e/o}\in \mathcal{I}_{\delta}$, and the scattering state $u_{zig,\mathcal{W}}^{L_i,e/o,(1)}$ defined in \eqref{eq_strip_interface_mode_ansatz} converges weakly to $u_{zig,\mathcal{W}}^{e/o,(1)}\in \ell^2(\Lambda)\otimes \mathbb{C}^6$ in the sense that
\begin{equation*}
(u_{zig,\mathcal{W}}^{L_i,e/o,(1)},v)_{\mathcal{X}}\to (u_{zig,\mathcal{W}}^{e/o,(1)},v)_{\mathcal{X}} \quad \text{for any compactly supported $v\in \mathcal{X}$.}
\end{equation*}
Moreover, the function $u_{zig,\mathcal{W}}^{e/o}:=u_{zig,e/o}+u_{zig,\mathcal{W}}^{e/o,(1)}$ solves the following equation for all $\bm{n}\in\Lambda$:
\begin{equation} \label{eq_limiting_full_wave_eigenmode}
\Big[\big(\mathcal{H}_{zig,\mathcal{W}}-\lambda_{zig,\mathcal{W}}^{e/o}\big)u_{zig,\mathcal{W}}^{e/o}\Big](\bm{n})=0.
\end{equation}
\end{proposition}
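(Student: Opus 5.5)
The plan is a standard compactness argument: extract convergent subsequences, then pass to the limit in the eigenvalue equation tested against compactly supported functions.

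\emph{Eigenvalue.} By \eqref{eq_perturbed_interface_eigenvalue_bound}, the numbers $\lambda_{zig,\mathcal{W}}^{L,e/o}$ lie in the closed interval of radius $\frac12 d_{zig,e/o}$ around $\lambda_{zig,e/o}$. This interval is a compact subset of $\mathcal{I}_\delta$, since $d_{zig,e/o}=\mathrm{dist}(\lambda_{zig,e/o},\mathcal{I}_\delta^c)$. Bolzano--Weierstrass therefore gives a subsequence $L_i\to\infty$ with $\lambda_{zig,\mathcal{W}}^{L_i,e/o}\to\lambda_{zig,\mathcal{W}}^{e/o}$, and the limit stays in $\mathcal{I}_\delta$.

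\emph{Scattering part.} Set $w_L:=u_{zig,\mathcal{W}}^{L,e/o,(1)}$. By \eqref{eq_strip_interface_mode_uniform_bound},
\[
\|\mathfrak{R}^L w_L\|_{\ell^2(\Lambda_L)\otimes\mathbb{C}^6}\le C .
\]
Extend $\mathfrak{R}^L w_L$ by zero outside $\Lambda_L$ to get $\hat w_L\in\ell^2(\Lambda)\otimes\mathbb{C}^6$, so that $\|\hat w_L\|\le C$. The space $\ell^2(\Lambda)\otimes\mathbb{C}^6$ is a Hilbert space, so a further subsequence converges weakly to some $u_{zig,\mathcal{W}}^{e/o,(1)}$ with norm at most $C$. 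The norm bound places the limit in $\ell^2$, which is exactly the claimed localization of the scattering part. For a compactly supported $v$, the support lies in $\Lambda_{L_i}$ once $L_i$ is large, so $(w_{L_i},v)_{\mathcal X}=(\hat w_{L_i},v)\to(u^{(1)},v)$. This is the stated convergence. Since weak limits are pointwise limits on a discrete lattice, $w_{L_i}(\bm n)\to u^{(1)}(\bm n)$ for every $\bm n$. The parity of $w_{L_i}$ is preserved in the limit because $\mathcal{F}_x$ is unitary and maps compactly supported functions to compactly supported functions.

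\emph{Passing to the limit in the equation.} Fix $\bm n$. On the strip, $u^L:=u_{zig,e/o}+w_L$ satisfies
\[
\big[(\mathcal{H}_{zig}+\mathcal{W}^L-\lambda^{L})u^L\big](\bm n)=0 .
\]
The term $\mathcal{H}_{zig}u^L(\bm n)$ involves only the finitely many sites within range $N$ of $\bm n$, so it converges pointwise. The term $\lambda^L u^L(\bm n)$ converges as well.

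The delicate term is $(\mathcal{W}^Lu^L)(\bm n)$, and I expect this step to be the main obstacle. Here $\mathcal{W}$ is not finite range: condition (ii) only bounds row sums. For $\bm n\in\Lambda_{L/2}$ and $L$ large, the definition gives
\[
(\mathcal{W}^Lu^L)(\bm n)=\sum_{\bm m\in\Lambda_L}\mathcal{W}(\bm n,\bm m)u^L(\bm m).
\]
I would split this sum into $|\bm m|\le R$ and $|\bm m|>R$.
\begin{itemize}
\item The inner part converges to $\sum_{|\bm m|\le R}\mathcal{W}(\bm n,\bm m)u(\bm m)$ by pointwise convergence.
\item For the tail, $u_{zig,e/o}$ is bounded (it is $\ell^2$ in $n_1$ and periodic in $n_2$), and $\|w_L\|_{\ell^\infty}\le\|w_L\|_{\ell^2}\le C$. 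So $u^L$ is bounded in $\ell^\infty$ uniformly in $L$. The tail is then at most $C'\sum_{|\bm m|>R}\|\mathcal{W}(\bm n,\bm m)\|$, which tends to $0$ as $R\to\infty$ uniformly in $L$, because \eqref{eq_longi_loc_cond} makes the row sum finite.
\end{itemize}
The same tail bound handles the limit function $u$. Hence $(\mathcal{W}^Lu^L)(\bm n)\to(\mathcal{W}u)(\bm n)$, and \eqref{eq_limiting_full_wave_eigenmode} follows at each $\bm n$.

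Finally, nontriviality of $u_{zig,\mathcal{W}}^{e/o}$ is automatic. The component $u_{zig,e/o}$ is not in $\ell^2(\Lambda)$, being periodic along $\bm\ell_2$ and nonzero, while $u^{(1)}$ is in $\ell^2(\Lambda)$, so their sum cannot vanish.
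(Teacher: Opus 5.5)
Your proof is correct, and it takes a more direct route than the paper in two places.

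\emph{Compactness.} The paper proves a separate lemma (Lemma \ref{lem_weak_compact_L_sequences}). It restricts $u^{(L)}$ to nested strips $\Lambda_{L'}$, extracts weakly convergent subsequences strip by strip, and glues the limits with a diagonal argument, after a reduction to nonnegative real-valued sequences. You instead extend $\mathfrak{R}^{L}w_L$ by zero. This gives a bounded sequence in the single Hilbert space $\ell^2(\Lambda)\otimes\mathbb{C}^6$ that coincides with $w_L$ on any fixed compact set once $L$ is large. One application of weak sequential compactness then suffices, which makes the lemma unnecessary.

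\emph{Passing to the limit.} The paper works in weak form. It moves $\mathcal{H}_{zig}$ onto the test function via $(\mathcal{H}_{zig}u^L,v)_{\mathcal{X}}=(u^L,\mathcal{H}_{zig}v)_{\mathcal{X}}$ and uses that $\mathcal{H}_{zig}v$ is compactly supported. It leaves the $\mathcal{W}^L$ term to the reader. Done the same way, that term needs $\mathcal{W}v$ to be compactly supported, i.e.\ $\mathcal{W}$ of finite range. The standing Assumption \ref{asmp_short_range} appears to provide this, so your remark that $\mathcal{W}$ is not finite range is more general than the paper requires. Your pointwise argument splits the $\mathcal{W}^L$ sum into near and far parts. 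The far part is controlled by the uniform $\ell^\infty$ bound on $u^L$ and the row summability from \eqref{eq_longi_loc_cond}. This handles the term explicitly and would stay valid for infinite-range class (A) perturbations. It also gives the equation at each $\bm n$ directly, rather than through test functions.

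You also make explicit two points the paper only asserts. First, the limit eigenvalue stays in $\mathcal{I}_\delta$, because the strict bound $\tfrac12 d_{zig,e/o}<d_{zig,e/o}$ confines the sequence to a compact subset of the gap. Second, parity and nontriviality pass to the limit.

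The only slip is harmless. $\Lambda_L$ can exceed one period of $L\mathbb{Z}\bm{\ell}_2$ by a boundary row, so $\|\mathfrak{R}^L w_L\|$ is bounded by a fixed multiple of $C$ rather than by $C$ itself.
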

This concludes the proof of Theorem \ref{thm_robustness_interface_modes}. In the rest of this section, we prove Lemma \ref{lem_isolation_strip_spectrum} and Propositions \ref{prop_strip_interface_modes} and \ref{prop_convergence_scattering_modes}. We only consider the case of even parity, i.e., taking the index `e' in the above paragraphs. The proof for the odd-parity case is the same.

\subsection{Proof of Lemma \ref{lem_isolation_strip_spectrum} and Proposition \ref{prop_strip_interface_modes}} \label{sec_uniform_norm_estimate}

\begin{proof} [Proof of Lemma \ref{lem_isolation_strip_spectrum}]
The key observation is that, by the decomposition $\mathcal{X}_{L,\sharp}=\mathcal{X}_{L,\sharp}^{e}\oplus \mathcal{X}_{L,\sharp}^{o}$, we have
\begin{equation} \label{eq_isolation_strip_spectrum_proof_1}
\text{Spec}(\mathcal{H}_{zig}^{L,e})=\text{Spec}\big(\mathcal{H}_{zig}\big|_{\mathcal{X}_{L,\sharp}^{e}}\big)\subset \text{Spec}\big(\mathcal{H}_{zig}\big|_{\mathcal{X}_{L,\sharp}}\big).
\end{equation}
Thanks to the periodicity of $\mathcal{H}_{zig}$ along $\mathbb{Z}\bm{\ell}_2$, the spectrum $\text{Spec}\big(\mathcal{H}_{zig}\big|_{\mathcal{X}_{L,\sharp}}\big)$ is decomposed by the discrete Fourier transform:
\begin{equation*}
\text{Spec}\big(\mathcal{H}_{zig}\big|_{\mathcal{X}_{L,\sharp}}\big)=\bigcup_{\kappa_{\parallel}\in\mathcal{A}_L}\text{Spec}\big(\mathcal{H}_{zig}\big|_{\mathcal{X}_{\kappa_{\parallel}}}\big),
\end{equation*}
where $\mathcal{X}_{\kappa_{\parallel}}$ is defined in \eqref{eq_quasi-periodic_strip_space_def}, and the discrete set $\mathcal{A}_L$ of parallel momentum is defined as
\begin{equation*}
\mathcal{A}_L:=\big\{\kappa_{\parallel}=\frac{2n\pi}{L}:\,-\frac{L}{2}\leq n \leq \frac{L}{2},\, n\in\mathbb{Z} \big\}.
\end{equation*}
In particular, for each $\kappa_{\parallel}\in\mathcal{A}_L$, $\mathcal{H}_{zig}\big|_{\mathcal{X}_{\kappa_{\parallel}}}$ has only discrete spectrum within the bulk spectral gap $\mathcal{I}_{\delta}$ (see Figure \ref{fig_disctre_spectrum_strip_space}(a) for an illustration), which can be proved by a standard argument based on the Weyl criteria. Hence, \eqref{eq_isolation_strip_spectrum_proof_1} concludes that $\mathcal{H}_{zig}^{L,e}$ has only discrete spectrum within $\mathcal{I}_{\delta}$, which are also eigenvalues of $\mathcal{H}_{zig}\big|_{\mathcal{X}_{\kappa_{\parallel}}}$ for some $\kappa_{\parallel}\in \mathcal{A}_L$. Remarkably, in order to find an eigenvalue of $\mathcal{H}_{zig}\big|_{\mathcal{X}_{\kappa_{\parallel}}}$ with its associated eigenfunction attaining even parity, we must have $\kappa_{\parallel}=0,\pm\pi$. In fact, if $u\in \mathcal{X}_{\kappa_{\parallel}}$ is such an eigenfunction written as $u(\bm{n})=e^{i\kappa_{\parallel}n_2}u_{per}(\bm{n})$ with the periodic part $u_{per}(\bm{n}+\bm{\ell}_2)=u_{per}(\bm{n})$, then it holds
\begin{equation*}
\mathcal{F}_{x}u=u\quad \Longrightarrow \quad (\mathcal{F}_{x} u_{per})(\bm{n})=e^{i2\kappa_{\parallel}n_2}u_{per}(\bm{n}),
\end{equation*}
which immediately leads to a contradiction to the periodicity of $u_{per}$ if $2\kappa_{\parallel}\neq 0,\pm 2\pi$. Then, together with the Assumption \ref{eq_absence_eigenvalue_pi}, we conclude that the only in-gap eigenvalue of $\mathcal{H}_{zig}^{L,e}$ is that of $\mathcal{H}_{zig}\big|_{\mathcal{X}_{\sharp}}$ with even parity, i.e. \eqref{eq_isolation_strip_spectrum} holds. The rest of Lemma \ref{lem_isolation_strip_spectrum} follows by standard estimate of reduced resolvent of self-adjoint operators (cf., e.g., \cite[Chapter I, Section 3]{kato2013perturbation}).

\begin{figure}
\centering
\subfigure[$\text{Spec}\big(\mathcal{H}_{zig}\big|_{\mathcal{X}_{L,\sharp}}\big)$]{
\begin{tikzpicture}[scale=0.3]

\draw[thick,->] (-8,0)--(8,0);
\draw[thick] (0,0.3)--(0,-0.3);
\node[below] at (0,0) {$0$};
\node[right] at (8.2,0) {$\kappa_{\parallel}$};

\draw[white,line width=0pt, name path=one] plot [smooth] coordinates {(-8,9) (-6,8.8) (-2,8.2) (0,8) (2,8.2) (6,8.8) (8,9)};
\draw[white,line width=0pt, name path=two] plot [smooth] coordinates {(-8,3) (-6,3.2) (-2,3.8) (0,4) (2,3.8) (6,3.2) (8,3)};
\draw[white,line width=0pt, name path=three] (-8,9)--(8,9);
\draw[white,line width=0pt, name path=four] (-8,3)--(8,3);
\tikzfillbetween[
    of=one and three,split
  ] {pattern=north west lines};
\tikzfillbetween[
    of=two and four,split
  ] {pattern=north west lines};
\fill[pattern=north west lines] (-8,12) rectangle (8,9);
\fill[pattern=north west lines] (-8,3) rectangle (8,1);
\draw[red,line width=2pt,dashed] plot [smooth] coordinates {(-2,3.8) (-1.5,4) (-1,4.4) (0,6) (1,7.4) (1.5,8) (2,8.2)};
\draw[red,line width=2pt,dashed] plot [smooth] coordinates {(2,3.8) (1.5,4) (1,4.4) (0,6) (-1,7.4) (-1.5,8) (-2,8.2)};
\draw[fill=red] plot [smooth] (0,6) ellipse (0.2cm and 0.2cm);
\end{tikzpicture}
}
\subfigure[$\text{Spec}(\mathcal{H}_{zig}^{L,e})$]{
\begin{tikzpicture}[scale=0.3]

\draw[thick,->] (-8,0)--(8,0);
\draw[thick] (0,0.3)--(0,-0.3);
\node[below] at (0,0) {$0$};
\node[right] at (8.2,0) {$\kappa_{\parallel}$};

\draw[white,line width=0pt, name path=one] plot [smooth] coordinates {(-8,9) (-6,8.8) (-2,8.2) (0,8) (2,8.2) (6,8.8) (8,9)};
\draw[white,line width=0pt, name path=two] plot [smooth] coordinates {(-8,3) (-6,3.2) (-2,3.8) (0,4) (2,3.8) (6,3.2) (8,3)};
\draw[white,line width=0pt, name path=three] (-8,9)--(8,9);
\draw[white,line width=0pt, name path=four] (-8,3)--(8,3);
\tikzfillbetween[
    of=one and three,split
  ] {pattern=north west lines};
\tikzfillbetween[
    of=two and four,split
  ] {pattern=north west lines};
\fill[pattern=north west lines] (-8,12) rectangle (8,9);
\fill[pattern=north west lines] (-8,3) rectangle (8,1);
\draw[fill=red] plot [smooth] (0,6) ellipse (0.2cm and 0.2cm);
\node[right] at (0,6) {$\lambda_{zig,e}$};
\end{tikzpicture}
}
\caption{(a) Inside the bulk spectral gap, $\mathcal{H}_{zig}\big|_{\mathcal{X}_{L,\sharp}}$ has only discrete spectrum, which is exactly the interface eigenvalue curve $\lambda_{zig,n}(\kappa_{\parallel})$ (obtained in Corollary \ref{corol_interface_eigenvalue_curve}) sampled at $\kappa_{\parallel}\in\mathcal{A}_L$. (b) The only in-gap eigenvalue in (a) that is invariant under reflection is the one at $\kappa_\parallel=0,\pm \pi$, which is further seen to be $\lambda_{zig,e}$ because we have imposed the even parity on the considered space and assumed \eqref{eq_absence_eigenvalue_pi}.}
\label{fig_disctre_spectrum_strip_space}
\end{figure}
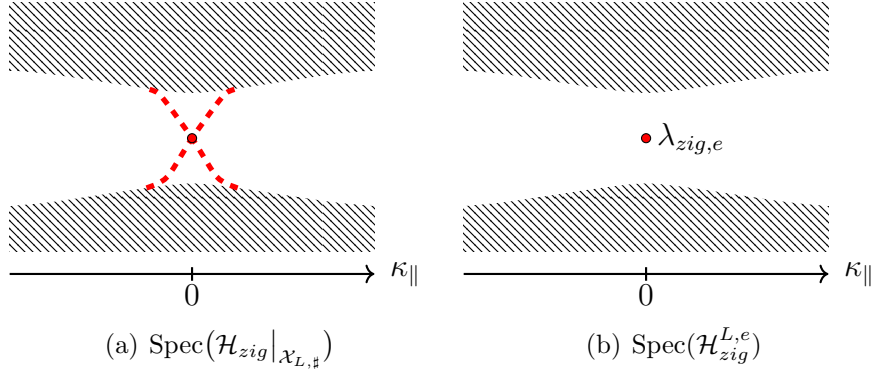

\end{proof}

\begin{proof}[Proof of Proposition \ref{prop_strip_interface_modes}]
Thanks to Lemma \ref{lem_isolation_strip_spectrum}, the existence of unique in-gap eigenvalue $\lambda_{zig,\mathcal{W}}^{L,e}$ of the perturbed strip Hamiltonian $\mathcal{H}_{zig,\mathcal{W}}^{L,e}$ follows directly by the perturbation theory (cf. \cite[Chapter VI]{kato2013perturbation}). As a consequence, the explicit construction of perturbed eigenmodes \eqref{eq_strip_interface_mode_ansatz} follows by the standard Lyapunov-Schmidt reduction argument, similar to \eqref{eq_sec3_13} in Section \ref{sec_gap_open}. Now, we prove that the perturbation part in \eqref{eq_strip_interface_mode_ansatz} is uniformly bounded. Since the projection $\mathbb{Q}_{zig}^{L,e}$ and its inverse is uniformly bounded by Lemma \ref{lem_isolation_strip_spectrum}, it is sufficient to prove that there exist $C_1,C_2>0$, which are independent of $L$, such that
\begin{equation} \label{eq_perturbed_strip_interface_modes_proof_1}
\|\mathcal{W}^{L}-\lambda_{zig,\mathcal{W}}^{L,e}+\lambda_{zig,e}\|_{\mathcal{X}_{L,\sharp}^{e}\to \mathcal{X}_{L,\sharp}^{e}}\leq C_1,
\end{equation}
and 
\begin{equation} \label{eq_perturbed_strip_interface_modes_proof_2}
\|\mathcal{W}^{L} u_{zig,e}\|_{\mathcal{X}_{L,\sharp}}\leq C_2.
\end{equation}
For \eqref{eq_perturbed_strip_interface_modes_proof_1}, we note the following inequality
\begin{equation*}
\begin{aligned}
\|\mathcal{W}^{L}\|_{\mathcal{X}_{L,\sharp}^{e}\to \mathcal{X}_{L,\sharp}^{e}}
&\overset{(i)}{\leq} \sup_{\bm{n}\in \Lambda_L}\sum_{\bm{m}\in \Lambda_L}\|\mathcal{W}(\bm{n},\bm{m})\| \leq \sup_{\bm{n}\in \Lambda}\sum_{\bm{m}\in \Lambda}\|\mathcal{W}(\bm{n},\bm{m})\| \\
&\leq \sup_{n_1\in\mathbb{Z}}\sum_{\substack{n_2\in\mathbb{Z} ,\bm{m}\in \Lambda}} \|\mathcal{W}(\bm{n},\bm{m})\| \overset{(ii)}{\leq} M_{\mathcal{W}},
\end{aligned}
\end{equation*}
where (i) is a consequence of Schur's estimate and the fact that $\mathcal{W}^{L}$ is self-adjoint, and (ii) follows by assumption \eqref{eq_longi_loc_cond}. Then \eqref{eq_perturbed_strip_interface_modes_proof_1} follows by noticing that $|\lambda_{zig,\mathcal{W}}^{L,e}-\lambda_{zig,e}|<\frac{1}{2}d_{zig,e}$ with the right side being independent of $L$. For \eqref{eq_perturbed_strip_interface_modes_proof_2}, the key is that the unperturbed interface mode $u_{zig,e}$ is $\mathbb{Z}\bm{\ell}_2$-periodic and transversely localized
\begin{equation} \label{eq_perturbed_strip_interface_modes_proof_3}
\sup_{n_2\in\mathbb{Z}}\sum_{n_1\in\mathbb{Z}}\|u_{zig,e}(n_1\bm{\ell}_1+n_2\bm{\ell}_2)\|_{\mathbb{C}^{6}}=\sum_{n_1\in\mathbb{Z}}\|u_{zig,e}(n_1\bm{\ell}_1)\|_{\mathbb{C}^{6}}<\infty,
\end{equation}
where the last inequality follows from the exponential decay of $u_{zig,e}$, proved by the standard Combes-Thomas estimate for eigenmodes associated with isolated eigenvalues (cf. \cite[Theorem 11.2]{kirsch2007invitation}). With this fact in mind, one can calculate that
\begin{equation*}
\begin{aligned}
\|\mathcal{W}^{L} u_{zig,e}\|_{\mathcal{X}_{L,\sharp}}^2
&=\sum_{\substack{n_1\in\mathbb{Z} \\ 0\leq n_2\leq L-1}} \big\|  (\mathfrak{S}^L \mathcal{W} \mathfrak{R}^Lu_{zig,e})(\bm{n})  \big\|^2
\overset{(i)}{\leq} \sum_{\bm{n}\in\Lambda} \big\|  ( \mathcal{W} \mathfrak{R}^Lu_{zig,e})(\bm{n})  \big\|^2 \\
&\leq \sum_{\bm{n}\in\Lambda} \big[ \sum_{\bm{m}\in \Lambda_L} \|\mathcal{W}(\bm{n},\bm{m})\|\|u_{zig,e}(\bm{m})\|  \big]^2 \\
&\leq \sum_{\bm{n}\in\Lambda} \big[ \sum_{m_2\in \mathbb{Z}}\sum_{m_1\in \mathbb{Z}} \|\mathcal{W}(\bm{n},\bm{m})\|\|u_{zig,e}(\bm{m})\|  \big]^2 \\
&\leq \sum_{\bm{n}\in\Lambda} \big[ \sum_{m_2\in \mathbb{Z}} \sup_{m_1\in\mathbb{Z}}\|\mathcal{W}(\bm{n},\bm{m})\|\sum_{m_1\in \mathbb{Z}}\|u_{zig,e}(\bm{m})\|  \big]^2 \\
&\leq \sum_{\bm{n}\in\Lambda} \Big\{ \big(\sum_{m_2\in \mathbb{Z}} \sup_{m_1\in\mathbb{Z}}\|\mathcal{W}(\bm{n},\bm{m})\| \big) \big( \sup_{m_2\in \mathbb{Z}}\sum_{m_1\in \mathbb{Z}}\|u_{zig,e}(\bm{m})\| \big) \Big\}^2 \\
&\overset{(ii)}{\lesssim} \sum_{\bm{n}\in\Lambda} \big(\sum_{m_2\in \mathbb{Z}} \sup_{m_1\in\mathbb{Z}}\|\mathcal{W}(\bm{n},\bm{m})\| \big)^2 \\
&\leq \big( \sum_{\bm{n}\in\Lambda} \sum_{m_2\in \mathbb{Z}} \sup_{m_1\in\mathbb{Z}}\|\mathcal{W}(\bm{n},\bm{m})\| \big)^2
\end{aligned}
\end{equation*}
where (i) is derived by the definition of periodization operator $\mathfrak{S}^L$, and (ii) follows from \eqref{eq_perturbed_strip_interface_modes_proof_3}. This estimate, together with the longitudinal localization of perturbation $\mathcal{W}$ as assumed in \eqref{eq_longi_loc_cond}, completes the proof of \eqref{eq_perturbed_strip_interface_modes_proof_2}.

\end{proof}

\subsection{Proof of Proposition \ref{prop_convergence_scattering_modes}} \label{sec_weak_compact}

The proof of Proposition \ref{prop_convergence_scattering_modes} is based on the following lemma on the weak compactness of uniformly bounded $L$-periodic sequences.
\begin{lemma} \label{lem_weak_compact_L_sequences}
Suppose that $u^{(L)}\in \mathcal{X}_{L,\sharp}$ is uniformly bounded, i.e., $\sup_{L}\|u^{(L)}\|_{\mathcal{X}_{L,\sharp}}<\infty$. Then, $u^{(L)}$ admits a subsequence $u^{(L_i)}$ that converges weakly to $u\in \mathcal{X}$ in the sense that
\begin{equation} \label{eq_weak_compact_L_sequences}
(u^{(L_i)},v)_{\mathcal{X}}\to (u,v)_{\mathcal{X}} \quad \text{for any compact supported $v\in\mathcal{X}$.}
\end{equation}
\end{lemma}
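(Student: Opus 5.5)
The plan is a diagonal (Cantor) argument with local uniform bounds. I will work site by site. Identify each $u^{(L)}$ with its values on the whole lattice $\Lambda$; it is $L\bm{\ell}_2$-periodic and square summable over one period. For any finite set $K\subset\Lambda$, take $L$ large enough that $K$ lies inside a single period window. This holds for all sufficiently large $L$, since $K$ is finite and the strips $\Lambda_L$ exhaust $\Lambda$. Then
\[
\sum_{\bm{n}\in K}\|u^{(L)}(\bm{n})\|^2\leq \|u^{(L)}\|_{\mathcal{X}_{L,\sharp}}^2\leq C:=\sup_L\|u^{(L)}\|_{\mathcal{X}_{L,\sharp}}^2 .
\]
In particular, $\|u^{(L)}(\bm{n})\|_{\mathbb{C}^6}\leq \sqrt{C}$ for every fixed $\bm{n}$ and all large $L$.

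Next comes the extraction. Enumerate $\Lambda=\{\bm{n}_1,\bm{n}_2,\dots\}$. By Bolzano--Weierstrass in $\mathbb{C}^6$, extract nested subsequences so that $u^{(L)}(\bm{n}_j)$ converges for $j=1,2,\dots$ in turn. The diagonal subsequence $L_i$ then gives pointwise convergence $u^{(L_i)}(\bm{n})\to u(\bm{n})$ for every $\bm{n}\in\Lambda$. This defines the limit $u$.

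It remains to show $u\in\mathcal{X}=\ell^2(\Lambda)\otimes\mathbb{C}^6$. For any finite $K$, pass to the limit in the bound above, which is legitimate because the sum has finitely many terms. This gives $\sum_{\bm{n}\in K}\|u(\bm{n})\|^2\leq C$, and taking the supremum over finite $K$ gives $\|u\|_{\mathcal{X}}^2\leq C$.

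Finally, let $v$ be compactly supported with support $K$. Then $(u^{(L_i)},v)_{\mathcal{X}}=\sum_{\bm{n}\in K}(u^{(L_i)}(\bm{n}),v(\bm{n}))$ is a finite sum. For $L_i$ large this agrees with the pairing in whichever convention is used for periodic functions, since $K$ lies in one period window. Pointwise convergence then yields \eqref{eq_weak_compact_L_sequences}.

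The only delicate point is the first step: checking that a fixed finite set eventually fits inside one period window $\Lambda_L$. Only then is the local $\ell^2$ mass controlled by the strip norm, and not by a sum that counts some sites twice. This follows because $\Lambda_L$ is a fundamental domain for $L\mathbb{Z}\bm{\ell}_2$ that grows with $L$. Everything else is routine.
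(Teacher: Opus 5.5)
Your proof is correct, but it runs the diagonal extraction at a finer scale than the paper does.

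\textbf{The paper's route.} The paper diagonalizes over the nested strips $\Lambda_{L'}$, which are infinite in the transverse direction, so a finite-dimensional argument is unavailable there. For each fixed $L'$ it uses weak sequential compactness of bounded sets in the Hilbert space $\mathcal{X}$ to obtain a weak limit $w_{L'}$ of $\mathcal{R}^{L'}u^{(L)}$. It then checks that these limits are compatible ($w_{L'+1}=w_{L'}$ on $\Lambda_{L'}$) and glues them into $u$. The bound on $\|u\|_{\mathcal{X}}$ comes from weak lower semicontinuity of the norm and monotone convergence. The paper also starts with a ``without loss of generality'' reduction to real, non-negative $u^{(L)}$.

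\textbf{Your route.} You diagonalize over individual sites, using only Bolzano--Weierstrass in $\mathbb{C}^6$. You get the $\ell^2$ bound on the limit by passing to the limit in finite partial sums, a Fatou-type step.

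\textbf{Comparison.} Your version is more elementary:
\begin{itemize}
\item compatibility of the limits is automatic;
\item no Hilbert-space compactness is needed;
\item the non-negativity reduction, which the paper asserts without argument, is not needed at all.
\end{itemize}
It is also the natural argument here, because convergence against every finitely supported $v$ is exactly pointwise convergence at every site. The paper's strip-wise extraction does directly give something slightly stronger: $\mathcal{R}^{L'}u^{(L_i)}\rightharpoonup\mathcal{R}^{L'}u$ weakly in $\mathcal{X}$ for each fixed $L'$, i.e.\ convergence against all $\ell^2$ test functions supported in a strip. That also follows from your conclusion in one line, since these restrictions are bounded in $\mathcal{X}$ and finitely supported functions are dense.

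\textbf{One point of precision.} With the closed inequality $|\bm{n}\cdot\bm{\ell}_2|\le L/2$, the set $\Lambda_L$ is not exactly a fundamental domain for $L\mathbb{Z}\bm{\ell}_2$: the two boundary rows $\bm{n}\cdot\bm{\ell}_2=\pm L/2$ are identified by the translation $L\bm{\ell}_2$. State your key step instead as ``no two points of $K$ are congruent modulo $L\bm{\ell}_2$ once $L$ is large.'' That is all your bound $\sum_{\bm{n}\in K}\|u^{(L)}(\bm{n})\|^2\le\|u^{(L)}\|_{\mathcal{X}_{L,\sharp}}^2$ requires, by periodicity.
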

Before working on the proof on Lemma \ref{lem_weak_compact_L_sequences}, which is postponed to the end of this section, we now apply Lemma \ref{lem_weak_compact_L_sequences} to prove Proposition \ref{prop_convergence_scattering_modes}. Recall that we have found a sequence of perturbed eigenvalues $\lambda_{zig,\mathcal{W}}^{L,e}$ and ($L\mathbb{Z}\bm{\ell}_2$-periodic) eigenmodes $u_{zig,\mathcal{W}}^{L,e}=u_{zig,e}+u_{zig,\mathcal{W}}^{L,e,(1)}$ for $L\in\mathbb{N}$, which satisfy
\begin{equation} \label{eq_convergence_scattering_modes_proof_1}
\Big[\big(\mathcal{H}_{zig}+\mathcal{S}^{L}\mathcal{W}\mathcal{R}^{L}-\lambda_{zig,\mathcal{W}}^{L,e}\big)u_{zig,\mathcal{W}}^{L,e}\Big](\bm{n})=0,\quad \forall \bm{n}\in \Lambda.
\end{equation}
Now, by the uniform bound of perturbed eigenpairs (\eqref{eq_perturbed_interface_eigenvalue_bound} and \eqref{eq_strip_interface_mode_uniform_bound}), and thanks to Lemma \ref{lem_weak_compact_L_sequences}, we can extract a subsequence of eigenpairs, still indexed by $L$ for ease of notation, such that
\begin{equation} \label{eq_convergence_scattering_modes_proof_2}
\lambda_{zig,\mathcal{W}}^{L,e}\to \lambda_{zig,\mathcal{W}}^{e}\in \mathcal{I}_{\delta},
\end{equation}
and
\begin{equation} \label{eq_convergence_scattering_modes_proof_3}
u_{zig,\mathcal{W}}^{L,e} \rightharpoonup u_{zig,\mathcal{W}}^{e}=u_{zig,e}+u_{zig,\mathcal{W}}^{e,(1)}\quad \text{weakly in the sense of \eqref{eq_weak_compact_L_sequences}}.
\end{equation}
We claim that the left side of \eqref{eq_convergence_scattering_modes_proof_1} converges weakly, for which we will show the following
\begin{equation} \label{eq_convergence_scattering_modes_proof_4}
(\mathcal{H}_{zig}u_{zig,\mathcal{W}}^{L,e},v)_{\mathcal{X}}\to (\mathcal{H}_{zig}u_{zig,\mathcal{W}}^{e},v)_{\mathcal{X}},
\end{equation}
\begin{equation} \label{eq_convergence_scattering_modes_proof_5}
(\mathcal{S}^{L}\mathcal{W}\mathcal{R}^{L}u_{zig,\mathcal{W}}^{L,e},v)_{\mathcal{X}}\to (\mathcal{W}u_{zig,\mathcal{W}}^{e},v)_{\mathcal{X}},
\end{equation}
\begin{equation} \label{eq_convergence_scattering_modes_proof_6}
\lambda_{zig,\mathcal{W}}^{L,e}(u_{zig,\mathcal{W}}^{L,e},v)_{\mathcal{X}}\to \lambda_{zig,\mathcal{W}}^{e}(u_{zig,\mathcal{W}}^{e},v)_{\mathcal{X}} ,
\end{equation}
for any compactly supported $v\in \mathcal{X}$. As a consequence, we have
\begin{equation*}
\Big(\big[\mathcal{H}_{zig}+\mathcal{W}-\lambda_{zig,\mathcal{W}}^{e}\big]u_{zig,\mathcal{W}}^{e},v\Big)_{\mathcal{X}}=0.
\end{equation*}
Taking the test function as any indicator function, we conclude the proof of \eqref{eq_limiting_full_wave_eigenmode}.

Now we prove \eqref{eq_convergence_scattering_modes_proof_4}-\eqref{eq_convergence_scattering_modes_proof_6}. Note that the last convergence \eqref{eq_convergence_scattering_modes_proof_6} follows directly from \eqref{eq_convergence_scattering_modes_proof_2} and \eqref{eq_convergence_scattering_modes_proof_3}. For \eqref{eq_convergence_scattering_modes_proof_4}, we first note that\footnote{\eqref{eq_convergence_scattering_modes_proof_7} justifies nothing but the `self-adjointness' of $\mathcal{H}_{zig}$ when applying on $\ell^1-\ell^{\infty}$ pairs.}
\begin{equation} \label{eq_convergence_scattering_modes_proof_7}
\begin{aligned}
(\mathcal{H}_{zig}u_{zig,\mathcal{W}}^{L,e},v)_{\mathcal{X}}
&=\sum_{\bm{n}\in\Lambda}\sum_{\bm{m}\in\Lambda}\big(\mathcal{H}_{zig}(\bm{n},\bm{m})u_{zig,\mathcal{W}}^{L,e}(\bm{m}),v(\bm{m}) \big)_{\mathbb{C}^6} \\
&\overset{(i)}{=}\sum_{\bm{n}\in\Lambda}\sum_{\bm{m}\in\Lambda}\big(u_{zig,\mathcal{W}}^{L,e}(\bm{m}),\mathcal{H}_{zig}(\bm{m},\bm{n})v(\bm{m}) \big)_{\mathbb{C}^6} \\
&\overset{(ii)}{=}\sum_{\bm{m}\in\Lambda}\sum_{\bm{n}\in\Lambda}\big(u_{zig,\mathcal{W}}^{L,e}(\bm{m}),\mathcal{H}_{zig}(\bm{m},\bm{n})v(\bm{m}) \big)_{\mathbb{C}^6} \\
&=(u_{zig,\mathcal{W}}^{L,e},\mathcal{H}_{zig}v)_{\mathcal{X}},
\end{aligned}
\end{equation}
where the Hermiticity property $\mathcal{H}_{zig}(\bm{n},\bm{m})=\overline{\mathcal{H}_{zig}(\bm{m},\bm{n})}^{\top}$ is applied to derive (i), and the interchanging of summation in (ii) is justified by the fact that $\mathcal{H}_{zig}$ is finite-range (hence the double sum in fact only involves a single infinite sum). Since the test function $v$ is compactly supported and the Hamiltonian $\mathcal{H}_{zig}$ is finite-range, it follows that the function $\mathcal{H}_{zig}v$ is also compactly supported; hence the weak convergence of $u_{zig,\mathcal{W}}^{L,e}$ implies
\begin{equation*}
(\mathcal{H}_{zig}u_{zig,\mathcal{W}}^{L,e},v)_{\mathcal{X}}
=(u_{zig,\mathcal{W}}^{L,e},\mathcal{H}_{zig}v)_{\mathcal{X}}
\to (u_{zig,\mathcal{W}}^{e},\mathcal{H}_{zig}v)_{\mathcal{X}}
=(\mathcal{H}_{zig}u_{zig,\mathcal{W}}^{e},v)_{\mathcal{X}},
\end{equation*}
where the last step follows from an argument similar to that in \eqref{eq_convergence_scattering_modes_proof_7}. This completes the proof \eqref{eq_convergence_scattering_modes_proof_4}. The proof of \eqref{eq_convergence_scattering_modes_proof_5} is similar, for which one just keeps in mind that for any $u\in \ell^\infty(\Lambda)\otimes \mathbb{C}^6$ and compactly supported $v\in\mathcal{X}$,
\begin{equation*}
(\mathcal{R}^Lu,v)_{\mathcal{X}}=(u,v)_{\mathcal{X}},\quad (\mathcal{S}^Lu,v)_{\mathcal{X}}=(u,v)_{\mathcal{X}},\quad \text{for sufficiently large $L>0$}
\end{equation*}
by the definitions of the restriction and periodization operator. The rest of the details are left to the interested reader.

\begin{proof}[Proof of Lemma \ref{lem_weak_compact_L_sequences}]
We assume that $u^{(L)}$ is real-valued and non-negative without loss of generality, i.e. $u^{(L)}(\bm{n})\in \mathbb{R}_{+}^6$ for all $\bm{n}\in\Lambda$ with $\mathbb{R}_{+}:=[0,\infty)$

Define $w^{(L^{\prime},L)}:=\mathcal{R}^{L^{\prime}}u^{(L)}\in\mathcal{X}$ for any $(L^{\prime},L)\in \mathbb{N}\times \mathbb{N}$, that is, the restriction of $u^{(L)}$ to the $L^{\prime}$-strip. Clearly, for any fixed $L^{\prime}$, $w^{(L^{\prime},L)}$ is uniformly bounded in $L$:
\begin{equation*}
\sup_{L}\|w^{(L^{\prime},L)}\|_{\mathcal{X}}\lesssim  \sup_{L}\|u^{(L)}\|_{\mathcal{X}_{L,\sharp}}<\infty .
\end{equation*}
Hence, the weak compactness of the Hilbert space $\mathcal{X}$ indicates that there exists a subsequence $w^{(L^{\prime},L_i)}$ and $w_{L^{\prime}}\in \mathcal{X}$ such that 
\begin{equation} \label{eq_weak_compact_L_sequences_proof_1}
(w^{(L^{\prime},L_i)},v)_{\mathcal{X}}\to (w_{L^{\prime}},v)_{\mathcal{X}} \quad \text{for any $v\in\mathcal{X}$.}
\end{equation}
It is clear that the subsequential limit $w_{L^{\prime}}$ is supported in the $L^{\prime}$-strip $\Lambda_{L^{\prime}}$. Next, we select a subsequence of $\{L_i\}_{i\geq 1}$, still denoted as $\{L_i\}$ for the ease of notation, such that $w^{(L^{\prime}+1,L_i)}$ converges weakly in as $i\to\infty$ in the same sense of \eqref{eq_weak_compact_L_sequences_proof_1}, with the limit denoted by $w_{L^{\prime}+1}$. By taking a real-valued $\Lambda_{L^{\prime}}$-supported test function in \eqref{eq_weak_compact_L_sequences_proof_1}, one can verify that the subsequential limit $w_{L^{\prime}}$ is non-negative, and $w_{L^{\prime}+1}$ agrees with $w_{L^{\prime}}$ on the strip $\Lambda_{L^{\prime}}$:
\begin{equation} \label{eq_weak_compact_L_sequences_proof_2}
w_{L^{\prime}+1}(\bm{n})=w_{L^{\prime}}(\bm{n})\in \mathbb{R}_{+}^6,\quad \forall \bm{n}\in \Lambda_{L^{\prime}} .
\end{equation}
Inductively, one can construct a sequence, denoted by $\{w^{(i,L_i)}\}_{i\in\mathbb{N}}$ by a diagonal argument, whose restriction on every $j-$strip $\Lambda_j$ converges weakly to $w_{j}$. We claim that $\{w_{i}\}_{i\in \mathbb{N}}$ converges strongly in $\mathcal{X}$, with its limit denoted by $u\in \mathcal{X}$. In fact, by \eqref{eq_weak_compact_L_sequences_proof_2} and the monotone convergence theorem, it is clear that $w_{i}$ has a pointwise limit $u$; hence it is sufficient to prove that this limit is $\ell^2$ bounded. This is a consequence of the uniform boundedness of $u^{(L)}$:
\begin{equation*}
\|w_{j}\|_{\mathcal{X}}\leq \liminf_{i\to\infty}\|\mathcal{R}^{j}w^{(i,L_i)} \|_{\mathcal{X}}
=\liminf_{i\to\infty}\|\mathcal{R}^{j}u^{(L_i)} \|_{\mathcal{X}}
\leq \sup_{L}\|u^{(L)}\|_{\mathcal{X}_L}<\infty,\quad \forall j\in\mathbb{N},
\end{equation*}
where the first inequality follows by the weak lower semi-continuity of norm. Based on the convergence of $w_{i}\to u$, it follows immediately that $u^{(L_i)}$ converges weakly to $u$. In fact, for any compact supported test function $v$, we select $j\in\mathbb{N}$ such that
\begin{equation} \label{eq_weak_compact_L_sequences_proof_3}
\text{supp}(v)\subset \Lambda_j .
\end{equation}
Then, we see that
\begin{equation*}
\begin{aligned}
\lim_{i\to\infty}(u^{(L_i)},v)_{\mathcal{X}}
&\overset{(i)}{=}\lim_{i\to\infty}(\mathcal{R}^{j}u^{(L_i)},v)_{\mathcal{X}}\overset{(ii)}{=}\lim_{i\to\infty}(\mathcal{R}^{j}\mathcal{R}^{i}u^{(L_i)},v)_{\mathcal{X}} \\
&=\lim_{i\to\infty}(\mathcal{R}^{j}w^{(i,L_i)},v)_{\mathcal{X}}=(w_j,v)_{\mathcal{X}}\overset{(iii)}{=} (u,v)_{\mathcal{X}},
\end{aligned}
\end{equation*}
where the equality (i) is a consequence of \eqref{eq_weak_compact_L_sequences_proof_3}, (ii) comes from the hierarchy of the restriction operators: $\mathcal{R}^{j}=\mathcal{R}^{j}\mathcal{R}^{i}$ for $i\geq j$, and (iii) follows from \eqref{eq_weak_compact_L_sequences_proof_2}-\eqref{eq_weak_compact_L_sequences_proof_3}. This concludes the proof.
\end{proof}

\footnotesize
\bibliographystyle{plain}
\bibliography{ref}

\end{document}